\definecolor{dkgreen}{rgb}{0,0.6,0}
\definecolor{gray}{rgb}{0.5,0.5,0.5}
\definecolor{mauve}{rgb}{0.58,0,0.82}
\tiny\color{gray},
\numberwithin{equation}{section}
\newtheorem{theorem}{Theorem}[section]
\newtheorem{lemma}[theorem]{Lemma}
\newtheorem{proposition}[theorem]{Proposition}
\newtheorem{corollary}[theorem]{Corollary}
\newtheorem*{qst}{Question}
\newtheorem*{strategy}{Strategy}
\newtheorem{definition}[theorem]{Definition}
\newtheorem{notation}[theorem]{Notation}
\newtheorem{remark}[theorem]{Remark}
\newtheorem{conjecture}[theorem]{Conjecture}
\newtheorem{observation}[theorem]{Observation}
\newtheorem{fact}[theorem]{Fact}
\newtheorem{property}[theorem]{Property}
\numberwithin{equation}{section}
\newcommand{\defeq}{\vcentcolon=}
\newcommand{\req}{\overset{!}{=}}
\newcommand{\minus}{\scalebox{1.0}[1.0]{$-$}}
\def\XXint#1#2#3{{\setbox0=\hbox{$#1{#2#3}{\int}$ }
\vcenter{\hbox{$#2#3$ }}\kern-.6\wd0}}
\newcommand{\x}{\mathbf{x}}
\newcommand{\la}{\ensuremath{\mathfrak{l}\text{ }{}}}
\newcommand{\A}{\ensuremath{\alpha{}}}
\newcommand{\RNum}[1]{\uppercase\expandafter{\romannumeral #1\relax}}
\newcommand{\estimates}{\overset{\scriptscriptstyle\wedge}{=}}
\newenvironment{psm}
  {\left(\begin{smallmatrix}}
  {\end{smallmatrix}\right)}
\DeclareMathOperator{\spn}{span}
\begin{document}
\pagenumbering{roman}
\title{Lattices: from roots to string compactifications}
\author{366142}
\includepdf[pages={1}]{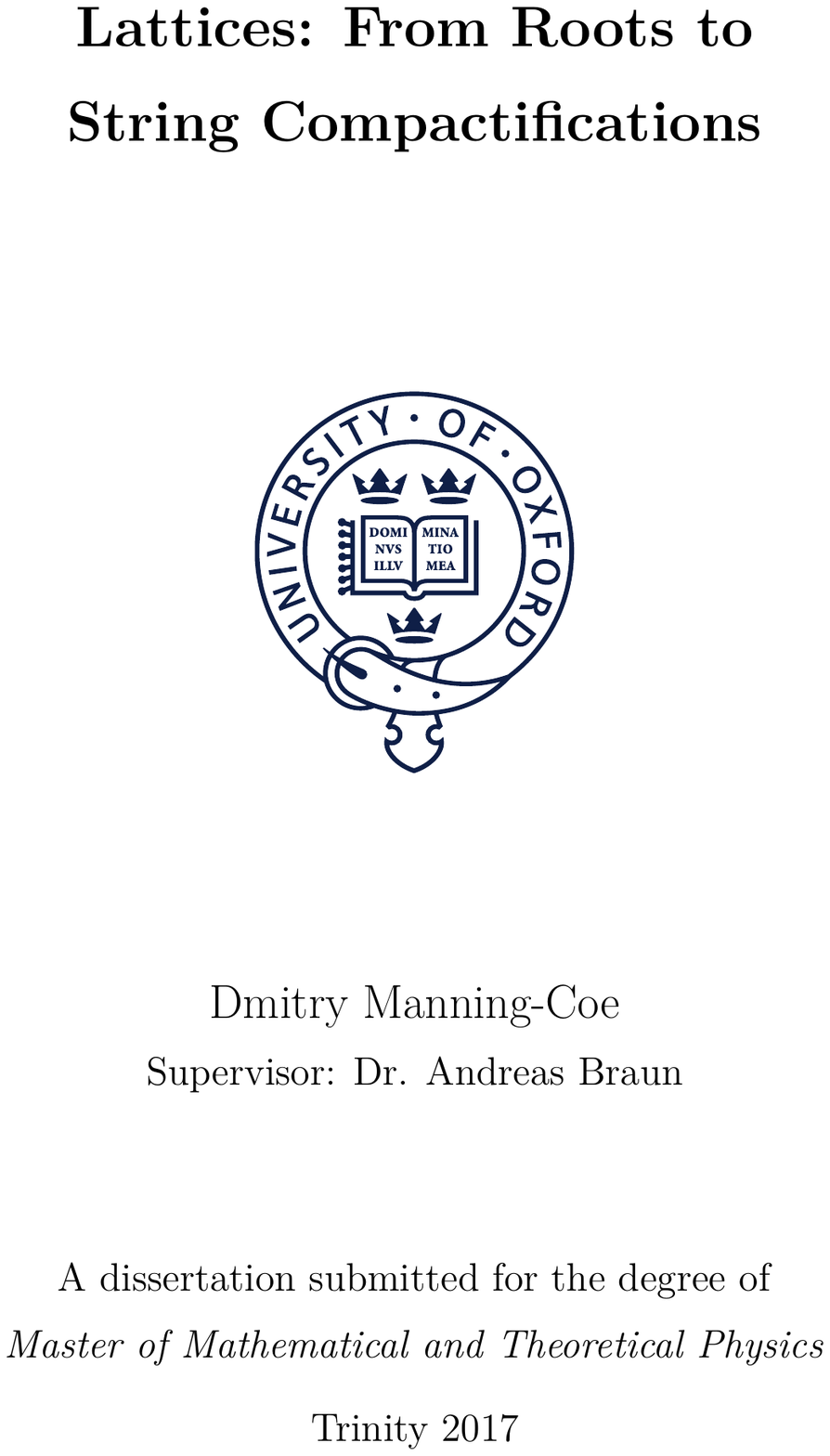}
\includepdf[pages={1}]{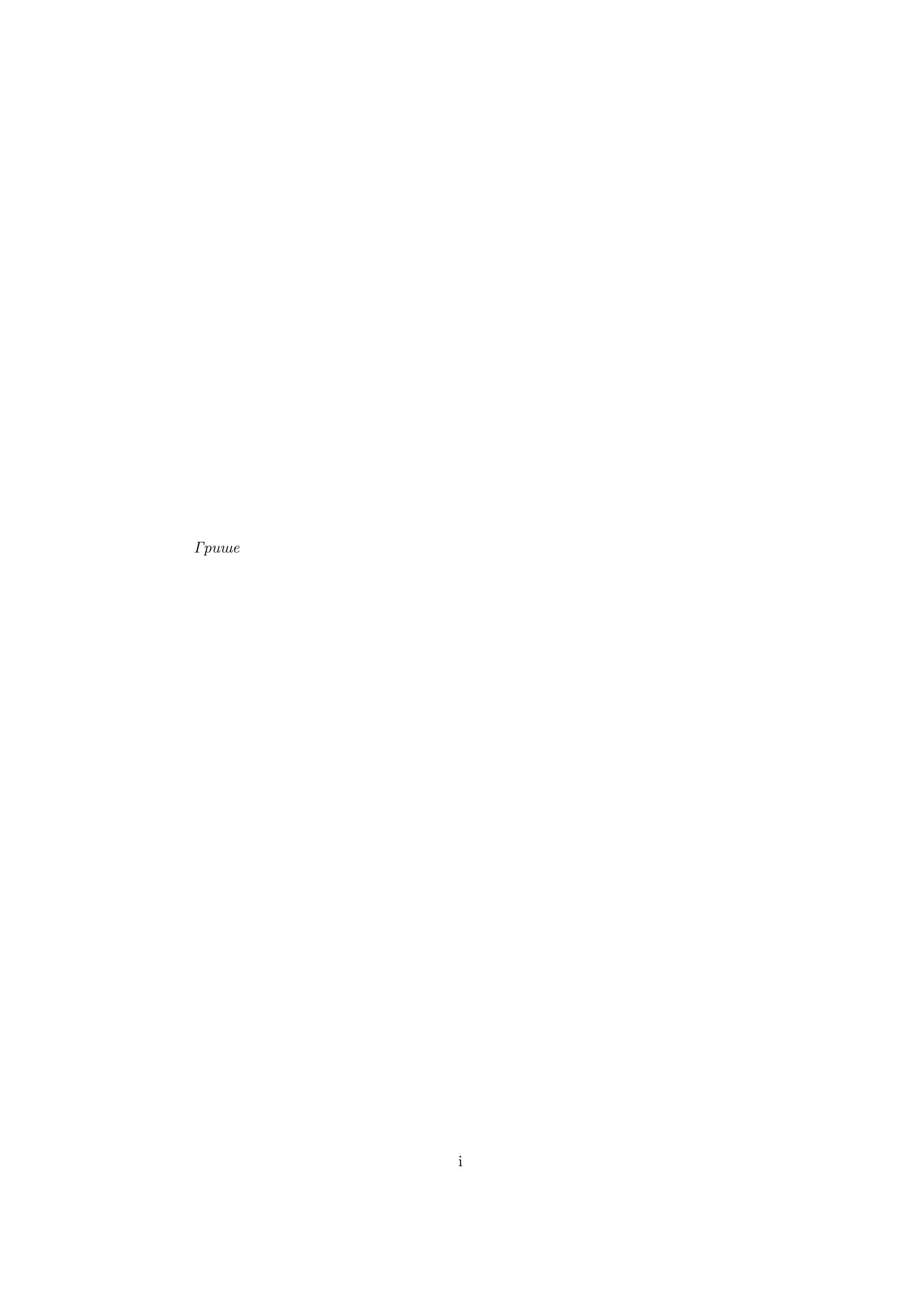}
\setcounter{page}{2}
\pagenumbering{roman}
\includepdf[pages=1]{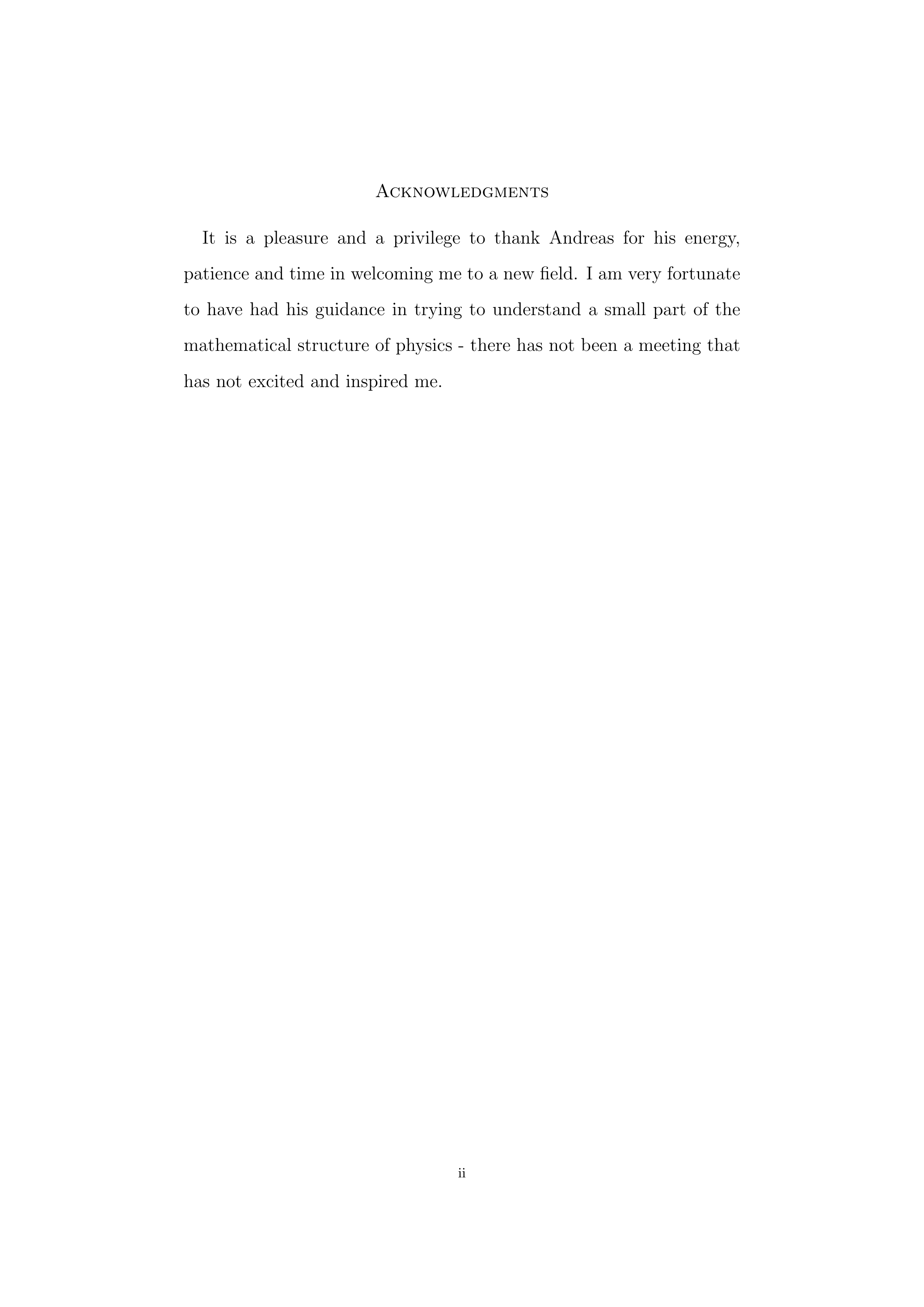}
\doublespacing
\pagebreak
\hspace{0pt}
\vfill
\begin{center}
\setcounter{page}{3}
{\fontfamily{qbk}\selectfont{Abstract}}
\end{center}
In this dissertation we build on the work of \cite{1312}, \cite{1401} and attempt to answer the question of whether non-abelian gauge groups occur in all F-theory models with Picard number 20 and with a Calabi-Yau four-fold $CY_4=K3\times K3$ 
 \cite{Dasgupta:1999ss}. To do so we employ the Kneser-Nishiyama method \cite{Nishiyama} to study the properties of elliptic fibrations on a $K3$ surface. In \cite{1312}, it was found that there are 34 lattices which must be considered and the question was partially answered for one of them. Here we use results in the sphere packing problem to completely answer this question for 21 cases, and show that a negative answer to this question would constitute a new optimal lattice sphere packing in dimension 18.
\vfill
\hspace{0pt}
\pagebreak
\tableofcontents

\clearpage
\pagenumbering{arabic}
\setcounter{page}{1}
\section{Introduction}
It has become canonical to define a quantum field theory algebraically by expressing the symmetries of it's Lagrangian as a Lie group. No doubt, this formulation has become standard because it has been so strikingly succesful. At first glance, it is similarly striking how different this mathematical structure is from the geometric framework employed by general relativity. It is the promise of string theory that these two mathematical traditions can be combined into a viable physical theory. The primary aim of this dissertation is to make a modest contribution to understanding the properties of such a theory, by asking the following question:
\begin{qst}
Does non-abelian gauge symmetry occur generically in an interesting and tractable class of string theories?
\end{qst}
A lattice is by intuition a geometric object, and by definition an algebraic one.\footnote{The concern of this dissertation is \textit{group} lattice theory, as distinct from \textit{order} lattice theory.} It is natural then that this question can be formulated in lattice theoretic terms. We shall make this translation, and in so doing discover a connection between the sphere packing problem and lattice embeddings. \\
\indent We begin on the side of algebra, showing how the Serre construction can be used to recover the entire Lie algebra from it's root lattice. This then puts us into a position to introduce the particular string theory with which we are concerned. In turn, this allows us to state more precisely the physical motivation of this project.\\
\indent Although our motivation is physical, our focus is on the mathematics. A discussion of the fundamentals of lattice theory leads into the classification of non-unimodular lattices. The  classic work of Nikulin on this subject \cite{Nikulin} provides the means to make the translation we require through the Kneser-Nishiyama \cite{Nishiyama} method. The first step in this method is the computation of 34 lattices which was begun in \cite{1312} and which is completed here. \\
\indent The second step requires the introduction of the rank 24 integral lattices with minimal determinant, of which there are 24. We present the constructions of 23 of them from glue theory. The twenty-fourth, the Leech lattice, is a miraculous object. An overview of relevant results in the sphere packing problem is given before it's construction in order to better appreciate it's optimality.\\
\indent We are then able to demonstrate that the lattice theoretic formulation of our question is exactly a restricted form of the sphere packing problem. This in turn allows us to establish conclusively that the answer to this question is affirmative for 21 of the 34 cases that must be considered. The results we will require to show this make it plausible that the conjecture is true for all 34 cases. Moreover, we show that the falsity of the conjecture  would entail a new result in lattice sphere packings that does not easily fit into the current known structure in dimensions less than 24. \\
\indent Lattice theory sprawls across a large swathe of modern mathematics: it uses and extends results in algebra, geometry and analysis. This forces a short work to be necessarily selective. The aim here has been to sacrifice comprehensiveness for clarity. This has led to some painful omissions: we do not give the connection to analysis through theta functions \cite{Thetaintro},\cite[Chapter~2]{CS}, we have had to omit Venkov's elegant proof  \cite{Venkov} that the Niemeier lattices are exhaustive and the discussion of Lie algebras is as short as coherence allows. \\
\indent This dissertation is self-contained. This provides an oppurtunity to offer an introduction to lattice theory that is accesible to physicists without a background in abstract algebra.  Lattice theory deserves to be wider known, and it is hoped that this dissertation is an easy to way to become familiar with it. 
\clearpage

\vspace{15mm}

\section{Lie groups, Lie algebras and physics.}
In this section, we review the classification of the finite-dimensional semi-simple Lie algebras and show how we may characterise a Lie algebra through a lattice associated to it - it's \textit{root lattice}.
The results are standard in Lie theory, so we will often omit proofs to focus on the connection to lattice theory.
\subsection{Lie groups and Lie algebras}\hfill

A Lie group can be thought of as a group of continuous symmetries. Ubiquitous examples are rotational and gauge symmetry. Formally:
\begin{definition}[Lie group]
A Lie group, $G$, is a group which is also a smooth differentiable manifold in which the group binary operations of multiplication $\circ$, and inversion $i$: 
$$ \circ : G\times G\rightarrow G $$
$$i : G\rightarrow G$$
are differentiable maps.
\end{definition}
\vspace{-1mm}
It is simpler, however, to work with a linear object associated with a Lie group - an algebra termed it's \textit{Lie algebra}. Every Lie group has an associated Lie algebra, but a Lie algebra can be defined independently of a Lie group as:
\begin{definition}[Lie algebra]
A Lie algebra $\mathfrak{l}$ is a vector space over a field $K$ with a binary operation $[\cdot,\cdot]$ satisfying the following conditions:\\
1. $\text{ }[ax+by,z]=a[x,z]+b[y,z]$ and $[z,ax+by]=a[z,x]+b[z,y],\text{  }\forall a,b \in K\text{ and }\forall x,y,z\in \mathfrak{l}$ (Bilinearity)\\
2. $\forall x,y\in \mathfrak{l},\text{ } [x,y]=-[y,x]$ (Anti-commutativity)\\
3. $\forall x,y,z\in \mathfrak{l},\text{ }[x,[y,z]]+[z,[x,y]]+[y,[z,x]]=0$ (Jacobi Identity)
\end{definition}
The correspondence between the group and the algebra is the \textit{exponential map}, which takes every element of a Lie algebra to an element in the Lie group:
\begin{equation}
Exp: \mathfrak{g}\rightarrow G
\end{equation}
\begin{remark}
The Lie algebra corresponds only to the \textit{connected component} of $G$ at the identity. It is not the case that the entire group structure can be recovered from the Lie algebra, as the group may have non-trivial topological structure. The Lie algebra is essentially the `linearised' group manifold at the identity. 
\end{remark}

	Physical theories are often defined on a vector space, and so we require a map from the elements of the Lie algebra to objects which can act on a vector space. These objects are elements of the general linear group, matrices, and the map is a \textit{representation}:

\begin{definition}[Representation of a Lie algebra]
A representation of a Lie algebra $\mathfrak{l}$ on a finite dimensional vector space $V$ over a field $K$ is a map $r$ from $\mathfrak{l}$ to the group of linear maps on $V$ over $K$:
$$r : \mathfrak{l}\rightarrow GL(V,K)$$
which preserves the Lie bracket. That is:
$$\forall S,T\in \mathfrak{l}\text{   }r[S,T]=[r(S),r(T)]$$
\end{definition}
\vspace{-3mm}
A particularly important class of representations are the \textit{irreducible representations}, or \textit{irreps}:
\begin{definition}[Irreducible representation]
An irreducible representation on a vector space $V$ is one with no invariant subspaces that are not ${0}$ or $V$.
\end{definition}

\begin{definition}[Invariant subspace]
An invariant subspace $U$ of a representation $r$ on a vector space $V$ is one for which $U\subseteq V$ and
\begin{equation}
\text{   }r(l)u\in U,\text{  }\forall u\in U\text{ }\mathrm{and}\text{ }\forall l\in \mathfrak{l}
\end{equation}
\end{definition}
If a representation is not irreducible it is said to be \textit{reducible} and in this case the linear maps (matrices) corresponding to that representation will be in block-diagonal form. 

\subsubsection{Physical application}
The important point for applications is that the symmetries of a physical theory 
 are the elements of its Lie group and we can study them by studying a Lie algebra corresponding to that group. Irreducible representations of the Lie algebra correspond to particles. Each irrep contains weight vectors and these constitute the spectrum of quantum states of the particle. The Lie group-Lie algebra correspondence is summarised as:
\usetikzlibrary{matrix}
\begin{figure}[h]

\begin{tikzpicture}
  \matrix (m) [matrix of math nodes,row sep=3em,column sep=4em,minimum width=2em] {
     G & \tilde{G} \\
     \mathfrak{l} & \tilde{\mathfrak{l}} \\};
  \path[-stealth]
    (m-2-1) edge node [left] {$Exp$} (m-1-1)
    (m-1-1) edge node [above] {$\mathcal{F}$} (m-1-2)       
    (m-2-1) edge node [above] {$\mathcal{M}$}  (m-2-2)
    (m-2-2) edge node [right] {$\widetilde{Exp}$} (m-1-2);
     
\end{tikzpicture}
\caption{Lie algebra-Lie group correspondence. Here $\mathcal{F}$ is a Lie group homomorphism, and $\mathcal{M}$ is a Lie algebra homomorphism.} \label{fig: Fig.1}
\end{figure}

\subsection{Lie algebra classification}\hfill

Given that Lie algebras correspond to the symmetries of a physical theory, a natural question to ask is ``What kinds of Lie algebras are there?". The answer to this question for a wide class of Lie algebras, the \textit{finite-dimensional semi-simple Lie algebras}, is contained in the celebrated classification of the \textit{simple Lie algebras}, which is the subject of this section.  The classification of the Lie algebras is constructive and it is exactly this construction through which lattices appear. In order to communicate the original work in this dissertation faster, we will omit proofs in this section. As recompense, we provide at least two references, at differing levels of formality, for the assertions made.
\subsubsection{To characterise a Lie algebra}
 An algebra has two components: its elements, and the operation which combines those two elements. Thus, a characterisation of an algebra must specify both what the algebra contains and all possible Lie brackets. In the sequel we are concerned to show that both are completely recoverable from the root system of the Lie algebra.
\subsubsection{The Cartan Weyl decomposition.}
A useful way to decompose a Lie algebra is by generalising the notion of `raising' and `lowering' operators. This is provided by the  \textit{Cartan-Weyl decomposition}. An essential object is the Cartan subalgebra (CSA henceforth):
\begin{definition}[Cartan subalgebra]
The Cartan subalgebra, $\mathcal{H}$, is the maximal set of generators $\{H_i\}$ such that:
\begin{equation}
[H_i,H_j]=0\text{   }\forall H_i,H_j\in \mathcal{H}
\end{equation}
\end{definition}
Introducing a basis we can say that, in a matrix lie algebra where the Lie bracket is just the matrix commutator, this would correspond to the set of maximally commuting matrices in that algebra. Hence, there is a set of simultaneous eigenvectors of all the elements of $\mathcal{H}$. 

To proceed further we introduce a particular representation, the \textit{adjoint representation}:

\begin{definition}
The adjoint representation, $ad$, is a representation which acts on the Lie algebra itself:
$$ad : \mathfrak{g}\rightarrow \mathfrak{g}$$
and with Lie bracket given by the operator commutator:
$$ad([S,T])=ad(S)\circ ad(T)-ad(T)\circ ad(S)\text{ }\forall S,T\in \mathfrak{l}$$
\end{definition}

Every Lie algebra has the adjoint representation as a representation. The action of the adjoint representation on an element of the Lie algebra is:
\begin{equation}
ad(S)T=[S,T]
\end{equation}
The adjoint representation allows us to introduce a symmetric bilinear form, the \textit{Killing form}, given as:
\begin{definition}[Killing form]
The Killing form $\Gamma$ is a symmetric, bilinear form on the Lie algebra:
$$\Gamma : \mathfrak{l}\times \mathfrak{l}\rightarrow \mathbb{R}$$
given by:
$$\Gamma(S,T)=\trace (ad(S)ad(T))$$
\end{definition}
This acts like a scalar product on the Lie algebra, and we can define: 
\begin{definition}
Let $\Gamma$ be the Killing form on the Lie algebra $\mathfrak{l}$. $\Gamma$ is said to be non-degenerate when for $S\in\mathfrak{l}$, $\Gamma(S,T)=0$  $\forall T\in\mathfrak{l}\implies S=0$.
\end{definition}
For which we have: \cite[p.~480,~C.10]{FH}, \cite[p.~22]{Humphreys} :
\begin{theorem}
$\Gamma$  is non-degenerate \textbf{iff} $\mathfrak{l}$ is semi-simple.
\end{theorem}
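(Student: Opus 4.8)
The plan is to prove both implications by contraposition, with the bridge in each direction being the interplay between solvable ideals and the vanishing of trace forms. A preliminary fact, used throughout, is that $\Gamma$ is \emph{invariant}: $\Gamma([X,Y],Z)=\Gamma(X,[Y,Z])$ for all $X,Y,Z\in\mathfrak{l}$, which follows at once from cyclicity of the trace and $ad([X,Y])=ad(X)ad(Y)-ad(Y)ad(X)$. Consequently the radical $\mathfrak{s}=\{S\in\mathfrak{l}:\Gamma(S,T)=0\ \forall T\in\mathfrak{l}\}$ is an ideal of $\mathfrak{l}$, and by definition $\Gamma$ is degenerate exactly when $\mathfrak{s}\neq0$.

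For the direction ``non-degenerate $\Rightarrow$ semi-simple'' I would argue contrapositively. If $\mathfrak{l}$ is not semi-simple, its radical is a nonzero solvable ideal; taking the last nonzero term of the derived series of the radical produces a nonzero \emph{abelian} ideal $\mathfrak{a}$ of $\mathfrak{l}$ (each derived subalgebra of an ideal is again an ideal of $\mathfrak{l}$, by the Jacobi identity). Now fix $X\in\mathfrak{a}$ and $Y\in\mathfrak{l}$: since $\mathfrak{a}$ is an ideal, $ad(X)$ maps $\mathfrak{l}$ into $\mathfrak{a}$, and since $\mathfrak{a}$ is abelian, $ad(X)$ annihilates $\mathfrak{a}$. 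Hence $(ad(X)ad(Y))^2$ carries $\mathfrak{l}\to\mathfrak{a}\to 0$, so $ad(X)ad(Y)$ is nilpotent and $\Gamma(X,Y)=\trace(ad(X)ad(Y))=0$. As $Y$ was arbitrary, $0\neq\mathfrak{a}\subseteq\mathfrak{s}$, so $\Gamma$ is degenerate.

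For the converse, ``semi-simple $\Rightarrow$ non-degenerate'', again contrapositively: assume $\Gamma$ is degenerate, so the ideal $\mathfrak{s}$ is nonzero. Because $X\in\mathfrak{s}$ forces $\Gamma(X,Y)=\trace(ad(X)ad(Y))=0$ for every $Y\in\mathfrak{l}$, the trace form of $\mathfrak{l}$ restricted to $\mathfrak{s}$ vanishes identically; applying Cartan's trace criterion for solvability to the image of $\mathfrak{s}$ under $ad$ (and noting that the kernel of $ad|_{\mathfrak{s}}$ is central, hence solvable, so an extension argument lifts solvability back to $\mathfrak{s}$) shows that $\mathfrak{s}$ is a nonzero solvable ideal of $\mathfrak{l}$. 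Therefore $\mathfrak{l}$ is not semi-simple.

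The one genuinely non-elementary ingredient, and the main obstacle, is Cartan's trace criterion for solvability itself, which rests on the Jordan decomposition in $\mathfrak{gl}(V)$ and the auxiliary lemma that an endomorphism whose products with a suitable family of endomorphisms all have zero trace must be nilpotent; the remaining steps (invariance of $\Gamma$, the nilpotency computation for the abelian ideal, and the descent from a solvable to an abelian ideal) are routine. In keeping with the expository, proof-omitting style of this section, I would simply cite \cite[p.~480,~C.10]{FH} and \cite[p.~22]{Humphreys} for the trace criterion rather than reprove it here.
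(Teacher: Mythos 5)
Your argument is correct: the invariance of $\Gamma$, the nilpotency computation showing an abelian ideal lies in the radical of the form, and the use of Cartan's trace criterion (with the central-kernel extension step) for the converse together give a complete and standard proof, and deferring the trace criterion itself to the literature is reasonable. Note that the paper supplies no proof at all for this theorem, merely citing \cite[p.~480,~C.10]{FH} and \cite[p.~22]{Humphreys}, and your sketch is essentially the argument given in those very references, so it is fully consistent with the paper's treatment.
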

Since we are only concerned with semi-simple Lie algebras here, for us it is always non-degenerate. If we have a basis, $\Gamma$ takes the form $\gamma_{ij}$ and we can define $\gamma_{ij}\gamma^{jk}\req\delta_{i}^{k}$.

This now allows us to give one definition of a \textit{root}, which is crucial in making the connection to lattice theory. 
\begin{definition}[Root]
A root $\alpha$ is a vector whose components are the solutions to the equation:
$$ad(H_i)(T)=\alpha_iT$$
where $T\in \mathfrak{l}$.
\end{definition}
Since every element of $\mathcal{H}$ commutes, $T$ is an eigenvector of every element of $\mathcal{H}$. Thus, the $i^{\text{th}}$ component of $\alpha$ is the solution to the above eigenvalue equation with the $i^{\text{th}}$ element of $\mathcal{H}$. The number of linearly independent elements in $\mathcal{H}$ is called the \textit{rank} of the Cartan subalgebra and so the number of entries in $\alpha$ is equal to the rank of $\mathcal{H}$. The whole vector $\alpha$ is just the solution to the eigenvalue equation for a \textit{given} eigenvector of $ad(H)$, $T$ for every element of $\mathcal{H}$. The eigenvector $T$ is called a \textit{generator} of the Lie algebra. We will use $\Delta$ to denote the set of roots which we term the \textit{root space}. We can generalise the concept of a root to give the definition of a \textit{weight}:
\begin{definition}
Let $r : \mathfrak{l}\rightarrow GL(V,K)$ be a representation of the Lie algebra $\mathfrak{l}$, let $H\in \mathcal{H}$, and let $v\in V$ satisfy:
$$r(H)v=\omega(H)v$$
Then, the vector whose entries are the eigenvalues $\omega_i=\omega(H_i)$ is called the weight vector for that representation.
\end{definition}
\begin{remark}
The roots are the weights of the adjoint representation.
\end{remark}
Arbitrarily choosing a basis for the Lie algebra, we can introduce the notion of a \textit{positive weight}:
\begin{definition}[Positive weight]
Let $\{e_{i}\}$ be an arbitrarily ordered basis for the vector space of weights. Then the set of positive weights $\Omega$ consists of the elements $\omega$ such that the first non-zero component of $\omega\in\Omega>0$. A weight that is not positive is said to be negative.
\end{definition}

Although this partition is basis dependent, the results we will find from it are basis independent. In particular, this definition gives us a `direction' in weight space, which we will use to introduce the analogs of `raising' and `lowering' operators. \\
\indent Adopting the notation that $\mathfrak{l}_{\alpha}$ is the set of elements of the Lie algebra satisfying the eigenvalue equation with root $\alpha$, we state two facts:
\begin{fact}\label{commalphabeta}
$$[\mathfrak{l}_{\alpha},\mathfrak{l}_{\beta}]\subset \mathfrak{l_{\alpha+\beta}}$$
\end{fact}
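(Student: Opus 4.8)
The plan is to verify the defining eigenvalue equation for $\mathfrak{l}_{\alpha+\beta}$ directly, using nothing beyond bilinearity and the Jacobi identity; no structure theory is required. Since $[\mathfrak{l}_\alpha,\mathfrak{l}_\beta]$ is by definition spanned by the elements $[x,y]$ with $x\in\mathfrak{l}_\alpha$ and $y\in\mathfrak{l}_\beta$, it is enough to show that each such $[x,y]$ lies in $\mathfrak{l}_{\alpha+\beta}$, and then take spans.

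First I would fix an arbitrary $H_i\in\mathcal{H}$ and convert the cyclic Jacobi identity stated in the excerpt into its Leibniz form. Substituting $(x,y,z)\mapsto(H_i,x,y)$ into $[x,[y,z]]+[z,[x,y]]+[y,[z,x]]=0$ and applying anticommutativity to the last two terms gives $[H_i,[x,y]]=[[H_i,x],y]+[x,[H_i,y]]$. Next I would use that $x\in\mathfrak{l}_\alpha$ and $y\in\mathfrak{l}_\beta$ mean $ad(H_i)x=[H_i,x]=\alpha_i x$ and $[H_i,y]=\beta_i y$; inserting these and using bilinearity yields $[H_i,[x,y]]=\alpha_i[x,y]+\beta_i[x,y]=(\alpha_i+\beta_i)[x,y]$. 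Since $H_i$ was an arbitrary element of $\mathcal{H}$, the vector $[x,y]$ is a simultaneous eigenvector of $ad(\mathcal{H})$ with eigenvalue vector $\alpha+\beta$, which is exactly the condition $[x,y]\in\mathfrak{l}_{\alpha+\beta}$.

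Finally I would dispose of the degenerate cases so that the inclusion is meaningful for every $\alpha,\beta$: if $\alpha+\beta\in\Delta$ the computation is the claim verbatim; if $\alpha+\beta=0$ then $\mathfrak{l}_0$ is the zero-eigenspace (containing $\mathcal{H}$) and the conclusion still holds; and if $\alpha+\beta$ is neither $0$ nor a root, the same computation forces $[x,y]$ to be an eigenvector with a forbidden eigenvalue, so $[x,y]=0$, consistent with the convention $\mathfrak{l}_{\alpha+\beta}=\{0\}$. Passing to spans over all $x,y$ then gives $[\mathfrak{l}_\alpha,\mathfrak{l}_\beta]\subset\mathfrak{l}_{\alpha+\beta}$. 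The only genuine obstacle here is bookkeeping — extracting the Leibniz identity from the cyclic form with the correct signs, and making sure the "eigenvalue vector" phrasing is consistent with the component-wise definition of a root given above; there is no analytic or conceptual difficulty.
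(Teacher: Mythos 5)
Your proof is correct: converting the cyclic Jacobi identity to the Leibniz form $[H_i,[x,y]]=[[H_i,x],y]+[x,[H_i,y]]$ and reading off the eigenvalue $(\alpha_i+\beta_i)$ is exactly the standard argument, and your handling of the degenerate cases ($\alpha+\beta=0$ or not a root) is fine. The paper itself states this as a Fact without proof, deferring to its references (Hall, Prop.\ 7.17; Lukas), and the argument given there is the same derivation-property computation you carry out, so there is nothing to add.
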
\cite[Proposition 7.17]{Hall2015}, or \cite[p.~58]{Lukas} 
%

\begin{fact}\label{eq: rootorth}
$T_{\beta}$ and $S_{\alpha}$ are orthogonal with respect to the scalar product $\Gamma$, unless $\alpha+\beta=0$. That is \cite[Appendix~D]{FH}, \cite[pg.58]{Lukas},:
\begin{equation}
\alpha+\beta\neq 0\implies\Gamma(T_{\alpha},S_{\beta})=0
\end{equation}
\end{fact}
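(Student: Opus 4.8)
The plan is to exploit the associativity (invariance) of the Killing form together with the root-space eigenvalue equations; no deeper structure theory is needed.

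First I would record the invariance identity $\Gamma([X,Y],Z) = -\Gamma(Y,[X,Z])$ for all $X,Y,Z\in\mathfrak{l}$. This is immediate from the definition $\Gamma(S,T)=\trace(ad(S)ad(T))$: since $ad$ is a representation, $ad([X,Y])=ad(X)ad(Y)-ad(Y)ad(X)$, so $\Gamma([X,Y],Z)=\trace\big(ad(X)ad(Y)ad(Z)-ad(Y)ad(X)ad(Z)\big)$, and cyclically permuting the trace in the second term turns the right-hand side into $\trace\big(ad(Y)ad(Z)ad(X)-ad(Y)ad(X)ad(Z)\big) \cdot(-1)$ appropriately rearranged, i.e. $-\trace\big(ad(Y)ad([X,Z])\big) = -\Gamma(Y,[X,Z])$.

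Next I would feed an arbitrary Cartan element into this identity. Take $T_{\alpha}\in\mathfrak{l}_{\alpha}$, $S_{\beta}\in\mathfrak{l}_{\beta}$, and $H\in\mathcal{H}$. Setting $X=H$, $Y=T_{\alpha}$, $Z=S_{\beta}$ and using $ad(H)T_{\alpha}=[H,T_{\alpha}]=\alpha(H)T_{\alpha}$ and $ad(H)S_{\beta}=\beta(H)S_{\beta}$ from the definition of a root, the invariance identity becomes $\alpha(H)\,\Gamma(T_{\alpha},S_{\beta}) = -\beta(H)\,\Gamma(T_{\alpha},S_{\beta})$, that is, $\big(\alpha(H)+\beta(H)\big)\,\Gamma(T_{\alpha},S_{\beta})=0$ for every $H\in\mathcal{H}$. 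Finally I would conclude: the hypothesis $\alpha+\beta\neq 0$ says precisely that $\alpha+\beta$ is not the zero functional on $\mathcal{H}$ (equivalently, not the zero vector once roots are identified with vectors via $\Gamma$), so there is some $H_{0}\in\mathcal{H}$ with $\alpha(H_{0})+\beta(H_{0})\neq 0$; dividing through gives $\Gamma(T_{\alpha},S_{\beta})=0$.

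The only real subtlety, and the step I would check most carefully, is the bookkeeping in the second paragraph: making sure the cyclic permutation of the trace acts on the correct factor and that the sign of the invariance identity comes out right, and keeping straight the dual roles of a root as a linear functional on $\mathcal{H}$ and as a vector in the ambient inner-product space. Everything after that is a one-line substitution. I would also remark that the same argument covers the case where $\alpha$ or $\beta$ is the zero functional (with $\mathfrak{l}_{0}=\mathcal{H}$), so the orthogonality statement in fact holds for all such eigenvalue functionals and not merely for nonzero roots.
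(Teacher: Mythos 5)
Your proof is correct, and it is essentially the argument the paper defers to its references (the paper states this as a Fact without proof, citing \cite[Appendix~D]{FH} and \cite[pg.58]{Lukas}): invariance of the Killing form, $\Gamma([X,Y],Z)=-\Gamma(Y,[X,Z])$, applied with $X=H\in\mathcal{H}$ and the eigenvalue relations gives $\big(\alpha(H)+\beta(H)\big)\Gamma(T_{\alpha},S_{\beta})=0$ for all $H$, and non-vanishing of $\alpha+\beta$ on some $H_{0}$ forces $\Gamma(T_{\alpha},S_{\beta})=0$. The sign bookkeeping in your trace computation checks out, so no gap remains.
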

%

We can then use the non-degeneracy of the Killing form to show \cite[Appendix~D]{FH}, \cite[p.~59]{Lukas}:
\begin{theorem}
If $\alpha$ is a root, then $-\alpha$ is also a root. 
\end{theorem}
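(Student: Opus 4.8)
The plan is to argue by contradiction, playing the non-degeneracy of the Killing form $\Gamma$ against the weight-space decomposition that underlies the Cartan--Weyl picture. The first ingredient I would make explicit is the decomposition
\[
\mathfrak{l} \;=\; \mathcal{H} \,\oplus\, \bigoplus_{\beta\in\Delta}\mathfrak{l}_{\beta},
\]
which exhibits $\mathfrak{l}$ as the direct sum of the simultaneous eigenspaces of $ad(\mathcal{H})$; here $\mathcal{H}=\mathfrak{l}_{0}$ is the ``root space'' attached to the zero eigenvalue. A root is by convention non-zero, so we may assume $\alpha\neq 0$, and we suppose for contradiction that $-\alpha\notin\Delta$.

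Since $\alpha$ is a root, $\mathfrak{l}_{\alpha}\neq\{0\}$, so I would fix some $S_{\alpha}\in\mathfrak{l}_{\alpha}$ with $S_{\alpha}\neq 0$ and pair it, via $\Gamma$, against an arbitrary $T\in\mathfrak{l}$. Decomposing $T=H+\sum_{\beta\in\Delta}T_{\beta}$ with $H\in\mathcal{H}$ and $T_{\beta}\in\mathfrak{l}_{\beta}$, bilinearity of $\Gamma$ gives $\Gamma(S_{\alpha},T)=\Gamma(S_{\alpha},H)+\sum_{\beta}\Gamma(S_{\alpha},T_{\beta})$. Because $\alpha\neq 0$ we have $\alpha+0\neq 0$, so Fact~\ref{eq: rootorth} (applied with $H$ regarded as an element of $\mathfrak{l}_{0}$) gives $\Gamma(S_{\alpha},H)=0$; and for every $\beta$ occurring in the sum the assumption $-\alpha\notin\Delta$ forces $\beta\neq-\alpha$, hence $\alpha+\beta\neq0$, so Fact~\ref{eq: rootorth} again gives $\Gamma(S_{\alpha},T_{\beta})=0$. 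Therefore $\Gamma(S_{\alpha},T)=0$ for every $T\in\mathfrak{l}$.

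Finally, since $\mathfrak{l}$ is semi-simple, $\Gamma$ is non-degenerate by the theorem above, and $\Gamma(S_{\alpha},\cdot)\equiv 0$ forces $S_{\alpha}=0$ --- contradicting the choice of $S_{\alpha}$. Hence $-\alpha\in\Delta$. I do not anticipate a genuine obstacle here: the only steps needing any care are the justification of the root-space decomposition itself (that $ad(\mathcal{H})$ is simultaneously diagonalisable, resting on $\mathcal{H}$ being abelian and acting by semisimple operators) and the bookkeeping that the Cartan subalgebra is exactly the $\beta=0$ summand, so that the pairing argument leaves no part of $\mathfrak{l}$ untouched.
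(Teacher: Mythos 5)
Your proof is correct and follows exactly the route the paper indicates (it omits the details, citing its references, but explicitly says the result follows from Fact~\ref{eq: rootorth} together with the non-degeneracy of the Killing form for semi-simple $\mathfrak{l}$): assuming $-\alpha\notin\Delta$, the root-space decomposition plus the orthogonality fact makes $\mathfrak{l}_{\alpha}$ orthogonal to all of $\mathfrak{l}$, contradicting non-degeneracy. The only point needing the care you already flag is extending the orthogonality statement to the $\beta=0$ summand $\mathcal{H}=\mathfrak{l}_{0}$, which holds by the same $ad$-invariance argument.
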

 We want to associate $\alpha$ with a unique element of the Lie algebra $T$, and this is provided by (\cite[p.~93-95]{Georgi}, \cite[p.~59-65]{Lukas}, or  \cite[p.~173]{Hall2015}): 
\begin{theorem}\label{rootsubalgebra}
Every Lie algebra contains a sub-algebra corresponding to the generators and Cartan elements of the roots $\alpha$ and $-\alpha$. Each root corresponds to a unique generator and a unique CSA element. These elements are the \textit{Chevalley generators} defined as:
\begin{equation}
E_{\alpha}=\sqrt{\frac{2}{(\alpha,\alpha)}}X_{\alpha}
\end{equation}

\begin{equation}
F_{\alpha}=\sqrt{\frac{2}{(\alpha,\alpha)}}X_{-\alpha}
\end{equation}

\begin{equation}
H_{\alpha}=\frac{2}{(\alpha,\alpha)}\alpha^{i}H_{i},\text{  }\alpha^{i}\defeq\gamma^{ij}\alpha_{j}
\end{equation}
Where $X_{\alpha}$ is the unique generator corresponding to the root $\alpha$, i.e it is the only element of the Lie algebra satisfying $[H_i,X]=\alpha_i X_{\alpha}$.
\end{theorem}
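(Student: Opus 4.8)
The plan is to realise, for each root $\alpha$, a copy of $\mathfrak{sl}_2$ inside $\mathfrak{l}$, and then to read off both the Chevalley relations and the uniqueness part of the statement (one generator and one CSA element per root) from the representation theory of that subalgebra. The normalisations in the displayed formulas are exactly the choices that turn the abstract triple into the standard $\mathfrak{sl}_2$ basis $\{e,h,f\}$.

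First I would produce a candidate triple. By the theorem above $-\alpha$ is also a root, so fix a nonzero $X_\alpha\in\mathfrak{l}_\alpha$; by Fact~\ref{eq: rootorth} it is $\Gamma$-orthogonal to $\mathcal{H}$ and to every $\mathfrak{l}_\beta$ with $\beta\neq-\alpha$, so non-degeneracy of $\Gamma$ forces the pairing of $\mathfrak{l}_\alpha$ with $\mathfrak{l}_{-\alpha}$ to be non-degenerate and lets me pick $X_{-\alpha}\in\mathfrak{l}_{-\alpha}$ with $\Gamma(X_\alpha,X_{-\alpha})\neq0$. By Fact~\ref{commalphabeta}, $t_\alpha\defeq[X_\alpha,X_{-\alpha}]$ lies in $\mathfrak{l}_0=\mathcal{H}$, and invariance of the Killing form gives $\Gamma(t_\alpha,H)=\Gamma(X_\alpha,[X_{-\alpha},H])=\alpha(H)\,\Gamma(X_\alpha,X_{-\alpha})$ for every $H\in\mathcal{H}$. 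Hence $t_\alpha$ is the (nonzero) element of $\mathcal{H}$ that is $\Gamma$-dual to the linear functional $\alpha$, which in a basis is precisely $\gamma^{ij}\alpha_jH_i=\alpha^iH_i$, the element appearing in the formula for $H_\alpha$.

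Next I would check that $(\alpha,\alpha)\neq0$, which is what makes the rescalings legitimate. If $(\alpha,\alpha)=\Gamma(t_\alpha,t_\alpha)$ vanished, then $\alpha(t_\alpha)=0$, so $X_\alpha$, $X_{-\alpha}$ and $t_\alpha$ would span a solvable (Heisenberg) subalgebra; by Lie's theorem $ad(t_\alpha)$ would be nilpotent, but $t_\alpha\in\mathcal{H}$ acts semisimply, forcing $ad(t_\alpha)=0$, i.e.\ $t_\alpha$ central, which is impossible since a semisimple Lie algebra has trivial centre. With $(\alpha,\alpha)\neq0$ in hand I would set $E_\alpha,F_\alpha,H_\alpha$ as in the statement and verify, using $[H_i,X_{\pm\alpha}]=\pm\alpha_iX_{\pm\alpha}$ and the identification of $t_\alpha$ above, the brackets $[H_\alpha,E_\alpha]=2E_\alpha$, $[H_\alpha,F_\alpha]=-2F_\alpha$ and $[E_\alpha,F_\alpha]=H_\alpha$, so that $\spn\{E_\alpha,F_\alpha,H_\alpha\}\cong\mathfrak{sl}_2$; this is routine once the constants are tracked.

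The main obstacle is the uniqueness claim, i.e.\ that $\dim\mathfrak{l}_\alpha=1$ (so that ``the'' generator attached to $\alpha$ is well defined up to the scale the normalisation pins down) and that $\pm\alpha$ are the only multiples of $\alpha$ that are roots. Here I would run the standard $\mathfrak{sl}_2$-module argument: the subspace
\[
M\;=\;\mathcal{H}\oplus\bigoplus_{c}\mathfrak{l}_{c\alpha}
\]
(sum over $c$ with $c\alpha\in\Delta$) is, by Fact~\ref{commalphabeta}, a module over the $\mathfrak{sl}_2$ just built, with $H_\alpha$ acting on $\mathfrak{l}_{c\alpha}$ by the scalar $2c$ and on $\mathcal{H}$ by $0$. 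Complete reducibility of $\mathfrak{sl}_2$-modules forces every weight to be an integer, so $2c\in\mathbb{Z}$; the weight-$0$ subspace of $M$ is exactly $\mathcal{H}$ and is already exhausted by $\ker\alpha$ (a sum of trivial submodules, of codimension one in $\mathcal{H}$) together with the adjoint copy $\spn\{F_\alpha,H_\alpha,E_\alpha\}$; and, since $[\mathfrak{l}_\alpha,\mathfrak{l}_\alpha]\subseteq\mathfrak{l}_{2\alpha}$ makes every vector of $\mathfrak{l}_\alpha$ a highest-weight vector of weight $2$, a weight-count bookkeeping shows there is no room for any further irreducible summand, which simultaneously rules out $c\alpha\in\Delta$ for $c\neq\pm1$ and forces $\dim\mathfrak{l}_\alpha=1$. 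I expect this last weight-counting step, where semisimplicity is genuinely used via complete reducibility, to be the delicate part; everything else is bracket bookkeeping.
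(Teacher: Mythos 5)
The paper does not actually prove Theorem \ref{rootsubalgebra}: it is quoted with references (Georgi, Lukas, Hall) and the surrounding text explicitly omits proofs in this section. Your proposal is essentially the standard argument from those sources: build the $\mathfrak{sl}_2$-triple from $X_\alpha$, a Killing-dual partner $X_{-\alpha}$ (using Fact~\ref{eq: rootorth} and non-degeneracy of $\Gamma$), and $t_\alpha=[X_\alpha,X_{-\alpha}]\in\mathcal{H}$ via Fact~\ref{commalphabeta}; rule out $(\alpha,\alpha)=0$ by the solvability/semisimplicity contradiction; then use $\mathfrak{sl}_2$-representation theory on $M=\mathcal{H}\oplus\bigoplus_c\mathfrak{l}_{c\alpha}$ to get $\dim\mathfrak{l}_\alpha=1$ and that $\pm\alpha$ are the only multiples of $\alpha$ in $\Delta$. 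That is the right route and it does deliver the statement, including the identification $t_\alpha\propto\alpha^iH_i$ (note you must normalise $\Gamma(X_\alpha,X_{-\alpha})=1$, not merely $\neq 0$, for $[E_\alpha,F_\alpha]$ to equal $H_\alpha$ on the nose rather than a multiple of it).

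Two points in the uniqueness step need repair. First, the claim that ``$[\mathfrak{l}_\alpha,\mathfrak{l}_\alpha]\subseteq\mathfrak{l}_{2\alpha}$ makes every vector of $\mathfrak{l}_\alpha$ a highest-weight vector of weight $2$'' is circular as stated: a vector of $\mathfrak{l}_\alpha$ is a highest-weight vector only if $\operatorname{ad}(E_\alpha)$ annihilates it, i.e.\ only once you already know $\mathfrak{l}_{2\alpha}=0$, which is part of what is being proved. The correct mechanism is the other one you invoke: every irreducible summand of $M$ with \emph{even} highest weight contains a weight-$0$ vector, and the weight-$0$ space of $M$ is exactly $\mathcal{H}=\ker\alpha\oplus\operatorname{span}\{H_\alpha\}$, already exhausted by the trivial summands and the adjoint copy $\operatorname{span}\{F_\alpha,H_\alpha,E_\alpha\}$; this is what kills any further even summand and yields both $\dim\mathfrak{l}_\alpha=1$ and $2\alpha\notin\Delta$. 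Second, the weight-$0$ count does \emph{not} by itself exclude odd-weight summands, i.e.\ the possibility that $c\alpha\in\Delta$ for a half-integer $c$; you need the extra observation that the even-weight analysis shows twice a root is never a root, applied to the putative root $\tfrac{1}{2}\alpha$ (whose double is $\alpha\in\Delta$), to rule out weight $1$ and hence all odd summands. With these two adjustments your outline closes up into the standard complete proof.
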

This one-one correspondence between $\mathcal{H}$ and $\Delta$ us to define an inner form on root space:
\begin{equation}
(\alpha,\beta)\defeq\Gamma(H_{\alpha},H_{\beta})
\end{equation}
We thus have that every Lie algebra contains a subalgebra spanned by a pair of roots $\pm\alpha$. Crucially, the set of roots can reconstruct the \textit{entire} Lie algebra.
 That this is so was proved by Dynkin \cite{Dynkin}, although we will later use the form due to Serre \cite[Chapter~6]{Serre} and our discussion largerly follows \cite{West}.
We can now define:
\begin{property}[Highest weight]
Every irreducible representation $r: \mathfrak{l}\rightarrow GL(V,K)$ has a \textbf{unique} \textit{highest weight vector} $\lambda\in V$ defined as:
$$E_{\alpha}\lambda=0\text{   }\forall \alpha\in\Delta_S$$
\end{property}
\begin{definition}[Positive root]
Arbitrarily choose some root $\beta$. The \textit{positive roots} $\alpha_+$ are those for which $\Gamma(H_{\alpha},H_{\beta})>0$. We denote the set of positive roots as $\Delta_{+}$.
\end{definition}
A basis for root space is given by the \textit{simple roots}:
\begin{definition}[Simple roots]
A simple root is a positive root that cannot be expressed as the sum of two positive roots.
\end{definition}
Given a partition into positive and negative roots, there is a \textit{unique} set of simple roots, $\Delta_S$, which also form a basis of $\Delta$.
\begin{definition}[Cartan matrix]
The \textit{Cartan matrix}, by convention $A$, is the matrix whose entries consists of the scalar products on the simple roots\footnote{We use the convention in \cite{FH} for the ordering of the indices $\alpha,\beta$. Note that \cite{West} uses the reverse convention.}
\begin{equation}
A_{\alpha\beta}=\frac{2(\alpha,\beta)}{(\beta,\beta)}; {\alpha,\beta}\in\Delta_S
\end{equation}
\end{definition}
%
%

The Cartan matrix is derived solely from the Killing form on the simple roots. Remarkably, the Cartan matrix alone is sufficient to  reconstuct the entire Lie algebra. The most elegant statement of this fact is \cite[Chapter~6]{Serre}:
\vspace{-1mm}
\begin{theorem}[Serre's theorem]
Let $\mathfrak{l}$ be a finite dimensional semi-simple Lie algebra. The five conditions:
\begin{equation}\label{eq:Serre1}
[H_{\alpha},H_{\beta}]=0
\end{equation}
\vspace{-12mm}
\begin{equation}\label{eq:Serre2}
[E_{\alpha},F_{\alpha}]=\delta_{\alpha\beta}H_{\alpha}
\end{equation}
\vspace{-12mm}
\begin{equation}\label{eq:Serre3}
\begin{aligned}
& [H_{\alpha},E_{\beta}]=A_{\beta\alpha}E_{\beta}\\
& [H_{\alpha},F_{\beta}]=-A_{\beta\alpha}F_{\beta}
\end{aligned}
\end{equation}
\vspace{-5mm}
\begin{equation}\label{eq:Serre4}
\begin{aligned}
& ad(E_{\alpha})^{1-A_{\beta\alpha}}E_{\beta}=0\text{ or :}\\
& [E_{\alpha},[E_{\alpha},[E_{\alpha}....,E_{\beta}]]\hdots]=0
\end{aligned}
\end{equation}
\vspace{-5mm}
\begin{equation}\label{eq:Serre5}
\begin{aligned}
& ad(F_{\alpha})^{1-A_{\beta\alpha}}F_{\beta}=0\text{ or :}\\
& [F_{\alpha},[F_{\alpha},[F_{\alpha},...,F_{\beta}]]...]=0
\end{aligned}
\end{equation}
uniquely reconstruct $\mathfrak{l}$.
\end{theorem}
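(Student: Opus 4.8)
The plan is to split the statement into an \textbf{existence} half --- the relations \eqref{eq:Serre1}--\eqref{eq:Serre5}, read as a presentation, define a Lie algebra $\mathfrak g(A)$ that is finite-dimensional and semisimple with Cartan matrix $A$ --- and a \textbf{uniqueness} half: any finite-dimensional semisimple $\mathfrak l$ with the same Cartan matrix is isomorphic to $\mathfrak g(A)$ via an isomorphism matching Chevalley generators to Chevalley generators.

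For existence I would proceed in the standard three stages. Start from the free Lie algebra on the $3n$ symbols $\{E_\alpha,F_\alpha,H_\alpha\}$, where $n=|\Delta_S|$, and impose only the lower-order relations \eqref{eq:Serre1}--\eqref{eq:Serre3}, obtaining an algebra $\widehat{\mathfrak g}$. The first key lemma is a triangular decomposition $\widehat{\mathfrak g}=\widehat{\mathfrak n}^-\oplus\mathfrak h\oplus\widehat{\mathfrak n}^+$ with $\mathfrak h=\spn\{H_\alpha\}$ abelian of dimension $n$ and $\widehat{\mathfrak n}^\pm$ the free Lie subalgebras on the $E_\alpha$, resp.\ the $F_\alpha$; one proves this by constructing an explicit action of $\widehat{\mathfrak g}$ on a tensor algebra in which the $H_\alpha$ act by linearly independent functionals, so no relation among them is forced. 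Next impose \eqref{eq:Serre4}--\eqref{eq:Serre5}: let $I$ be the ideal generated by the Serre elements $ad(E_\alpha)^{1-A_{\beta\alpha}}E_\beta$ and $ad(F_\alpha)^{1-A_{\beta\alpha}}F_\beta$, and set $\mathfrak g(A)=\widehat{\mathfrak g}/I$. The decisive point --- and the step I expect to be the main obstacle --- is that $I=(I\cap\widehat{\mathfrak n}^-)\oplus(I\cap\widehat{\mathfrak n}^+)$, i.e.\ $I\cap\mathfrak h=0$, so the triangular decomposition descends; this is the Gabber--Kac lemma, or in finite type can be done by Serre's more hands-on argument showing each Serre element is a highest- (resp.\ lowest-) weight vector for the relevant $\mathfrak{sl}_2$-triple $(E_\alpha,F_\alpha,H_\alpha)$ and hence lies in the appropriate nilpotent part. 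Granting this, a root-string analysis in $\mathfrak g(A)$ together with finiteness of the Weyl group $W(A)$ --- which holds exactly because $A$ is of finite type, i.e.\ positive definite after symmetrisation --- shows the root system $\Delta$ is finite and each root space is one-dimensional, so $\dim\mathfrak g(A)=n+|\Delta|<\infty$; indecomposability of $A$ on each connected component of the Dynkin diagram then gives non-degeneracy of the Killing form, hence semisimplicity via the theorem relating $\Gamma$ to semisimplicity.

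For uniqueness, let $\mathfrak l$ be finite-dimensional semisimple with Cartan matrix $A$: pick a CSA, extract $\Delta$, a base $\Delta_S$, and the Chevalley generators of Theorem~\ref{rootsubalgebra}. A direct check --- using Fact~\ref{commalphabeta}, Fact~\ref{eq: rootorth}, the $\mathfrak{sl}_2$-triple structure, and the definition of $A$ --- shows these satisfy \eqref{eq:Serre1}--\eqref{eq:Serre5}; relations \eqref{eq:Serre4}--\eqref{eq:Serre5} hold because the $\alpha$-root string through $\beta$ has length at most $1-A_{\beta\alpha}$, so iterating $ad(E_\alpha)$ that many times lands outside $\Delta$. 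This yields a surjection $\varphi:\mathfrak g(A)\twoheadrightarrow\mathfrak l$. Its kernel is an ideal of the semisimple algebra $\mathfrak g(A)$, compatible with the triangular grading and meeting $\mathfrak h$ trivially (the $H_\alpha$ map to linearly independent elements of $\mathfrak l$), which forces $\ker\varphi=0$; comparing with the analogous map for $\mathfrak l$ itself gives the asserted unique reconstruction. I would cite Serre \cite[Chapter~6]{Serre} and Humphreys \cite{Humphreys} for the triangular-decomposition lemmas and flag the $I\cap\mathfrak h=0$ step as the technical heart.
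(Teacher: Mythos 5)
The paper does not actually prove this theorem: the section announces that proofs of standard Lie-theoretic results are omitted, and Serre's theorem is stated with a citation to \cite[Chapter~6]{Serre} only, so there is no in-paper argument for your proposal to match or diverge from. What you have written is a sound sketch of the standard proof found in the cited sources (Serre, Humphreys \S 18, or Kac for the Gabber--Kac route): existence via the auxiliary algebra $\widehat{\mathfrak g}$ with only the relations \eqref{eq:Serre1}--\eqref{eq:Serre3}, its triangular decomposition proved through an explicit action on a tensor algebra, passage to the quotient by the ideal $I$ generated by the Serre elements with the crucial fact $I\cap\mathfrak h=0$, finite-dimensionality from finiteness of the Weyl group of a finite-type Cartan matrix, and uniqueness via the surjection $\mathfrak g(A)\twoheadrightarrow\mathfrak l$ whose kernel, being an ideal of a semisimple algebra meeting the Cartan subalgebra trivially, must vanish (or, more simply, by comparing dimensions $n+|\Delta|$). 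You correctly identify $I\cap\mathfrak h=0$ as the technical heart and you correctly derive \eqref{eq:Serre4}--\eqref{eq:Serre5} in $\mathfrak l$ from the fact that the $\alpha$-string through $\beta$ has $1-A_{\beta\alpha}$ terms. One imprecision worth fixing: the Serre element $ad(E_{\alpha})^{1-A_{\beta\alpha}}E_{\beta}$ lies in $\widehat{\mathfrak n}^{+}$ by construction; the substantive claim is that it is annihilated by \emph{every} $F_{\gamma}$ (an $\mathfrak{sl}_2$ computation for $\gamma=\alpha$, weight reasons otherwise), which is what forces the ideal it generates to remain inside $\widehat{\mathfrak n}^{+}$ and hence gives $I\cap\mathfrak h=0$; saying it ``is a highest-weight vector and hence lies in the nilpotent part'' conflates these two points. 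With that clarified, your outline is an accurate reconstruction of the proof the paper delegates to its references.
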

This is a map from $\Delta_S$ to $\mathfrak{l}$. Every  $\{\alpha,-\alpha\}\in\Delta_S$ is, by Theorem \ref{rootsubalgebra}, associated to a unique set of three Chevalley generators $\{E_{\alpha},F_{\alpha},H_{\alpha}\}$ that satisfy the appropriate eigenvalue relation. We will refer to this map as $\mathcal{S}: \Delta_S\rightarrow\mathfrak{l}$. We note in passing that the first three relations apply even to infinite-dimensional Lie algebras \cite{KacEDD}. \\
\indent From Fact \ref{commalphabeta} the commutator $[E_{\alpha},E_{\gamma}]\subset\mathfrak{l}_{\alpha+\gamma}$ and since the $\mathfrak{l}_{\alpha+\gamma}$ subalgebra is spanned by a single generator\footnote{This follows from the fact that we can associate a unique generator to every root and that $\mathfrak{l}_{\alpha}$ is defined as the set of elements satisfying the eigenvalue equation for $\alpha$ and theorem $3.23$ guarantees that these can only be scalar multiples of the generator $E_{\alpha}$} it corresponds to the generator associated to the root $\alpha+\gamma$. Taking all possibe commutators, we associate to every root a space $\mathfrak{\alpha}$ in the Lie algebra. Since every Lie algebra gives rise to a root system and Serre's theorem guarantees that the Chevalley generators reconstruct a unique Lie algebra, it follows that a classification of the root systems classifies the Lie algebras.
\subsection{Summary}\hfill

	\indent Let us summarise the results of this subsection. A Lie algebra gives rise to a set of roots $\Delta$, spanned by linear combinations of the simple roots $\Delta_S\subset\Delta_{+}$. The inner form on the simple roots gives rise to a Cartan matrix and the Cartan matrix then determines the Serre relations. The roots are able to reconstruct the Lie algebra, and so studying the roots give us a simple way to study the Lie algebras. Just like before, we summarise this with the commutative diagram:

\usetikzlibrary{matrix}
\begin{figure}[h]

\begin{tikzpicture}
  \matrix (m) [matrix of math nodes,row sep=3em,column sep=4em,minimum width=2em] {
     \mathfrak{l} & \mathfrak{\tilde{l}} \\
     \Phi & \tilde{\Phi} \\};
  \path[-stealth]
    (m-2-1) edge node [left] {$\mathcal{S}$} (m-1-1)
    (m-1-1) edge node [above] {$\mathcal{F}$} (m-1-2)       
    (m-2-1) edge node [above] {$\mathcal{M}$}  (m-2-2)
    (m-2-2) edge node [right] {$\tilde{\mathcal{S}}$} (m-1-2);
     
\end{tikzpicture}
\caption{Root system-Lie algebra Correspondence} \label{fig: Fig.2}
\end{figure}
Thus, to classify the Lie algebra's, we just need to classify the Cartan matrices arising from their simple roots. 
\subsection{Classification of Cartan Matrices}\hfill

To do so we follow \cite{West}. Here we provide proofs, since we will later  see that the classification of Cartan matrices is also a classification of lattices. The crucial lemma is:
\begin{lemma}\label{lem:HWV}
Let $\ket{\lambda}$ be the highest weight vector of an irreducible representation (labelled by the corresponding highest weight $\lambda$) of a finite dimensional Lie algebra $\mathfrak{l}$, and let $\omega$ denote a weight in that algebra. Then:
\begin{equation}
a_{\alpha}\defeq\frac{2(\lambda,\alpha)}{\alpha,\alpha}\in\mathbb{Z}
\end{equation}
\end{lemma}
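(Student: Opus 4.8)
The plan is to exploit the representation theory of the $\mathfrak{sl}_2$-triple $\{E_\alpha, F_\alpha, H_\alpha\}$ associated to the root $\alpha$ by Theorem \ref{rootsubalgebra}. Since $\mathfrak{l}$ is finite-dimensional, its restriction to this $\mathfrak{sl}_2$-subalgebra decomposes into finite-dimensional representations, and the quantity $a_\alpha$ will turn out to be precisely the $H_\alpha$-eigenvalue of a weight vector in such a representation — hence an integer by the classification of finite-dimensional $\mathfrak{sl}_2$-modules.

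First I would fix the root $\alpha \in \Delta_S$ and the corresponding Chevalley generators, and recall from the Serre relations \eqref{eq:Serre2} and \eqref{eq:Serre3} that $[H_\alpha, E_\alpha] = 2 E_\alpha$, $[H_\alpha, F_\alpha] = -2 F_\alpha$, $[E_\alpha, F_\alpha] = H_\alpha$, so that $\spn\{E_\alpha, F_\alpha, H_\alpha\}$ is a copy of $\mathfrak{sl}_2(\mathbb{C})$. Next I would take the highest weight vector $\ket{\lambda}$ of the given irreducible representation $r$ and consider the action of $r(H_\alpha)$ on it: using the definition $H_\alpha = \tfrac{2}{(\alpha,\alpha)}\alpha^i H_i$ together with $r(H_i)\ket{\lambda} = \lambda_i \ket{\lambda}$ and the form $(\lambda,\alpha) = \lambda_i \alpha^i$ (suitably raising/lowering indices with $\gamma^{ij}$), one computes $r(H_\alpha)\ket{\lambda} = \tfrac{2(\lambda,\alpha)}{(\alpha,\alpha)}\ket{\lambda} = a_\alpha \ket{\lambda}$. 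Then I would restrict $r$ to the $\mathfrak{sl}_2$-triple and apply the standard fact that in any finite-dimensional representation of $\mathfrak{sl}_2$ the eigenvalues of the Cartan element $H_\alpha$ are integers (one builds the finite chain $\ket{\lambda}, F_\alpha\ket{\lambda}, F_\alpha^2\ket{\lambda}, \dots$, notes it must terminate by finite-dimensionality, and reads off that the top eigenvalue equals the length of the chain); this forces $a_\alpha \in \mathbb{Z}$.

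The main obstacle — really the only nontrivial point — is justifying that the $\mathfrak{sl}_2$-representation obtained by restriction is itself finite-dimensional and behaves as expected, i.e. importing the $\mathfrak{sl}_2$ weight-integrality lemma cleanly; this is where one must be careful that $\ket{\lambda}$ generates a finite chain and that no subtlety arises from $\ket{\lambda}$ not being a highest weight vector for the $\mathfrak{sl}_2$-subalgebra (it is, since $E_\alpha\ket{\lambda}=0$ by the highest-weight property with $\alpha\in\Delta_S$). Everything else is bookkeeping with the Killing form and index raising. I would close by remarking that the same argument applied with $\lambda$ replaced by an arbitrary weight $\omega$ in the representation gives $\tfrac{2(\omega,\alpha)}{(\alpha,\alpha)}\in\mathbb{Z}$ as well, since $\omega$ sits inside a finite $\mathfrak{sl}_2$-chain whose eigenvalues differ by integers; this stronger statement is what is actually needed for the classification of Cartan matrices that follows.
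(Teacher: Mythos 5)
Your proposal is correct and is essentially the paper's own argument: both hinge on the $A_1$ (i.e.\ $\mathfrak{sl}_2$) triple $\{E_\alpha,F_\alpha,H_\alpha\}$ attached to $\alpha$, the fact that $E_\alpha\ket{\lambda}=0$ makes $\ket{\lambda}$ a highest weight vector for that subalgebra, and termination of the chain $F_\alpha^n\ket{\lambda}$ by finite-dimensionality, which identifies $a_\alpha=\tfrac{2(\lambda,\alpha)}{(\alpha,\alpha)}$ with the (integer) length of the chain. The only difference is presentational: the paper proves the $\mathfrak{sl}_2$ integrality fact in place, via the explicit recurrence $r_n=2n\tfrac{(\lambda,\alpha)}{(\alpha,\alpha)}-n(n-1)$ on the normalisation constants and the condition $r_{a_\alpha}=0$, whereas you import it as the standard classification of finite-dimensional $\mathfrak{sl}_2$-modules.
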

\begin{strategy}
Our strategy is to set up a recurrence relation on the normalisation constant of the weights acted on by $E_{\alpha}$, which contains information about the lowest weight vector.
\end{strategy} 
\begin{proof}
We have that:
\begin{equation}
(F_{\alpha})^{n}\ket{\lambda}=m_{n}\ket{\lambda-n\alpha}
\end{equation}
where $m_{n}$ is a normalisation constant. Since we may choose a normalisation for each of the states \textit{once}, we choose it such that $r_{n}=1$, for the states created when the highest weight vector $\ket{\lambda}$ is acted on by the `lowering operator' $F_{\alpha}$. Note that once we choose to do this we have to live with the fact that the action of the `raising operator' $E_{\alpha}$ on the states does not, in general, lead to a normalised vector. Then consider:
\begin{equation}\label{eq:normWest}
\begin{aligned}
& E_{\alpha}\ket{\lambda-n\alpha}=r_{n}\ket{\lambda-(n-1)\alpha}=E_{\alpha}F_{\alpha}\ket{\lambda-(n-1)\alpha}\\
& =([E_{\alpha},F_{\alpha}]+F_{\alpha}E_{\alpha})\ket{\lambda-(n-1)\alpha}\\
& =(H_{\alpha}+r_{n-1})\ket{\lambda-(n-1)\alpha}\text{ (using Serre relation }\ref{eq:Serre2})
\end{aligned}
\end{equation}.
From the definition of a weight we have that $r(H_{\alpha})v=\omega(H_{\alpha})v$ and we can choose a basis for a finite dimensional representation $r$ such that the weights are also eigenstates of $\mathcal{H}$ so that $H_i\ket{\omega}=\omega_i\ket{\omega}$. Then using the definition of the Chevalley generator $H_{\alpha}=\frac{2}{(\alpha,\alpha)}\alpha^iH_i$, we have that:
\begin{equation}
H_{\alpha}\ket{\omega}=\frac{2(\omega,\alpha)}{(\alpha,\alpha)}\ket{\omega}
\end{equation}
Since $(\omega,\alpha)=\Gamma({\omega,\alpha})=\sum\limits_{ij}\alpha_{i}\gamma^{ij}\omega_j=\alpha^{j}\omega_j$. Substituting into equation \ref{eq:normWest}:
\begin{equation}
\begin{aligned}
& r_{n}\ket{\lambda-(n-1)\alpha}=\bigg(\frac{2(\alpha,(\lambda-(n-1)\alpha))}{(\alpha,\alpha)}+r_{n-1}\bigg)\ket{\lambda-(n-1)\alpha}\text{ or :}\\
& r_{n}=r_{n-1}+\frac{2(\alpha,\lambda)}{(\alpha,\alpha)}-2(n-1)
\end{aligned}
\end{equation}
Since we must have $E_{\alpha}$ eliminate the highest weight vector, $r_0=0$. So $r_1=\frac{2(\alpha,\lambda)}{(\alpha,\alpha)}$ and $r_n$ will have $n$ such factors. $r_2=2\big(\frac{2(\alpha,\lambda)}{(\alpha,\alpha)}\big)-2$ and we see that the final term leads to an arithmetic progression:
\begin{equation}
2\sum_{i=1}^{n-1}i=n(n-1)
\end{equation}
and so:
\begin{equation}
r_{n}=2n\frac{(\alpha,\lambda)}{(\alpha,\alpha)}-n(n-1)
\end{equation}
Since the representation is finite dimensional, there can only be a finite number of weights, and hence the action of the lowering operator $F_{\alpha}$ must eliminate the state. Alternately put, there is some \textit{positive integer} $a_{\alpha}$ such that:
$$
\begin{aligned}
& F_{\alpha}^{a_\alpha+1}=0
& \implies r_{a_{\alpha}}=(a_{\alpha}+1)\big(2\frac{(\alpha,\lambda)}{(\alpha,\alpha)}-a_{\alpha}\big)\req0
& \implies a_{\alpha}=2\frac{(\alpha,\lambda)}{(\alpha,\alpha)}
\end{aligned}
$$
The set of positive integers $a_{\alpha}$ with $\alpha\in\Delta_S$ is the well known \textit{Dynkin label} of the highest weight of an irrep. Moreover, we have shown that it is an integer, as required.
\end{proof}
Lemma \ref{lem:HWV} can straightforwardly be extended from the highest weight $\lambda$ to hold for \textit{all} weights $\omega$:
\begin{theorem}
Let $r$ be a finite dimensional irreducible representation of a finite dimensional Lie algebra $\mathfrak{l}$ with a highest weight vector $\lambda$. Then let $\omega$ be an arbitrary weight with weight vector $\ket{\omega}$. Then:
\begin{equation}
2\frac{(\omega,\alpha)}{(\alpha,\alpha)}=q-p\in\mathbb{Z}
\end{equation}
\end{theorem}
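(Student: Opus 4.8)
The plan is to fix a simple root $\alpha\in\Delta_S$, restrict the representation $r$ to the $\mathfrak{sl}_2$ spanned by the Chevalley triple $\{E_\alpha,F_\alpha,H_\alpha\}$, and run the same recurrence as in the proof of Lemma \ref{lem:HWV}, but anchored at the top of the $\alpha$-\emph{string} through $\omega$ rather than at the global highest weight $\lambda$. First I would choose the basis of the representation space so that the weight vectors are simultaneous eigenvectors of $\mathcal{H}$, so that $H_\alpha\ket{\omega}=\frac{2(\omega,\alpha)}{(\alpha,\alpha)}\ket{\omega}$ exactly as computed in the lemma (using $H_\alpha=\frac{2}{(\alpha,\alpha)}\alpha^iH_i$ and $(\omega,\alpha)=\alpha^j\omega_j$). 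By Fact \ref{commalphabeta}, $E_\alpha$ shifts a weight by $+\alpha$ and $F_\alpha$ by $-\alpha$, so the span of the nonzero vectors $\ket{\omega+k\alpha}$, $k\in\mathbb{Z}$, occurring in $r$ is invariant under this $\mathfrak{sl}_2$. Since $r$ is finite dimensional, this string is finite: there are nonnegative integers $p,q$ with $\omega+p\alpha$ and $\omega-q\alpha$ weights while $\omega+(p+1)\alpha$ and $\omega-(q+1)\alpha$ are not.

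Next I would set $\mu\defeq\omega+p\alpha$, the top of the string. Because $E_\alpha\ket{\mu}$ would lie in the (zero) weight space of $\omega+(p+1)\alpha$, we have $E_\alpha\ket{\mu}=0$, so $\mu$ plays exactly the role of $\lambda$ for the restricted $\mathfrak{sl}_2$. Repeating the argument of Lemma \ref{lem:HWV} verbatim with $\lambda$ replaced by $\mu$ — writing $F_\alpha^n\ket{\mu}=m_n\ket{\mu-n\alpha}$, normalising the lowered states, expanding $E_\alpha\ket{\mu-n\alpha}=r_n\ket{\mu-(n-1)\alpha}$ via $E_\alpha F_\alpha=[E_\alpha,F_\alpha]+F_\alpha E_\alpha$ and Serre relation \ref{eq:Serre2} — gives the recurrence $r_n=r_{n-1}+\frac{2(\mu,\alpha)}{(\alpha,\alpha)}-2(n-1)$ with $r_0=0$, hence $r_n=2n\frac{(\mu,\alpha)}{(\alpha,\alpha)}-n(n-1)$. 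The string through $\mu$ closes after exactly $p+q$ lowering steps (its bottom is $\mu-(p+q)\alpha=\omega-q\alpha$), so the same termination condition as in the lemma forces $\frac{2(\mu,\alpha)}{(\alpha,\alpha)}=p+q$. Expanding $(\mu,\alpha)=(\omega,\alpha)+p(\alpha,\alpha)$ then yields $\frac{2(\omega,\alpha)}{(\alpha,\alpha)}+2p=p+q$, i.e. $\frac{2(\omega,\alpha)}{(\alpha,\alpha)}=q-p\in\mathbb{Z}$, which is the claim. (Equivalently, one may simply invoke the classification of finite-dimensional $\mathfrak{sl}_2$-representations: the eigenvalues of the Cartan generator are integers symmetric about $0$, so the top and bottom eigenvalues $\frac{2(\omega,\alpha)}{(\alpha,\alpha)}+2p$ and $\frac{2(\omega,\alpha)}{(\alpha,\alpha)}-2q$ are negatives of each other; the recurrence above is just a self-contained derivation of this.)

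The main obstacle is not any single calculation but the bookkeeping needed to justify that the $\alpha$-string genuinely forms an $\mathfrak{sl}_2$-invariant subspace on which $\mu$ acts as a highest weight vector, and to be careful that $(\alpha,\alpha)\neq 0$ (which holds since $\alpha$ is a root of a semisimple algebra, so the Killing form is definite on its span). A minor subtlety worth addressing is that weight spaces of $\mathfrak{l}$ may be more than one-dimensional; it is cleanest to pass to the cyclic $\mathfrak{sl}_2$-submodule generated by the single vector $\ket{\omega}$, or simply to note that the argument only ever tracks one string of vectors and the stated integrality conclusion is insensitive to multiplicities. Everything else is a direct transcription of the proof of Lemma \ref{lem:HWV}.
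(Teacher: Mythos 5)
Your proposal is correct and follows essentially the same route as the paper: both arguments run the $\alpha$-string through $\omega$, use finite-dimensionality to obtain the integers $p,q$, apply the recurrence of Lemma \ref{lem:HWV} at the top of the string, and then expand the inner product to get $\frac{2(\omega,\alpha)}{(\alpha,\alpha)}=q-p$. Your only deviation is a slightly cleaner piece of bookkeeping — anchoring the argument at $\mu=\omega+p\alpha$ with $E_\alpha\ket{\mu}=0$, rather than identifying the top of the string with the global highest weight $\lambda$ as the paper does — which does not change the substance of the proof.
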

\begin{proof}
There is some $p$ such that:
\begin{equation}
(E_{\alpha})^{p}\ket{\omega}=0
\end{equation}
and some $q$ such that:
\begin{equation}
(F_{\alpha})^{q}\ket{\omega}=0
\end{equation}
The first equation simply asserts the existence of a highest weight vector, reached by $p$ actions of the raising operator on $\ket{\omega}$. The second reflects the fact that we are dealing with a finite dimensional representation and so there must be a finite number of weights. The weight corresponding to $(E_{\alpha})^{n}\ket{\omega}$ is $\omega+n\alpha$. The sequence of weights formed in this way, by actions of raising and lowering operator of an $A_1$ subalagbera;
\begin{equation}
\lambda=\omega+p\alpha,\omega+(p-1)\alpha,...,\omega,\omega-\alpha,...,\omega-q\alpha=\lambda-a_{\alpha}
\end{equation}
is termed the \textit{$\alpha$ sequence through $\omega$}. It is a sequence of all the weights that can be formed by adding integer multiples of the root $\alpha$ to the weight $\omega$. Note that in the last equality we have used the fact shown in the proof of Lemma \ref{lem:HWV} that the number of lowering operators that must be applied to reach the lowest weight is equal to the Dynkin label $a_{\alpha}$. \\
We have:
\begin{equation}
a_\A=\frac{2(\lambda,\alpha)}{(\A,\A)}=\frac{2(\omega+p\A,\A)}{(\A,\A)}=\frac{2(\omega,\A)}{(\A,\A)}+2p\implies a_\A-2p=\frac{2(\omega,\A)}{(\A,\A)}
\end{equation}
Then, since $\lambda=\omega+p\A$ and $\omega-q\A=\lambda-a_\A$ we have that:
\begin{equation}
a_\A\alpha=\lambda-(\omega-q\A)=\lambda-(\lambda-p\A)+q\A
\end{equation}
\begin{equation}
\implies a_{\A}=q+p
\end{equation}
But since $p,q\in\mathbb{Z}$ this entails that:
\begin{equation}\label{eq:weightIP}
\frac{2(\omega,\alpha)}{(\A,\A)}=q-p\in\mathbb{Z}
\end{equation}
thus proving the theorem. 
\end{proof}
\begin{corollary}\label{cor:simplethings}
Since the adjoint representation is always a finite dimensional representation of a finite dimensional Lie algebra, and since the weights of the adjoint representation are the roots, we have that:
\begin{equation}
2\frac{(\alpha,\beta)}{(\A,\A)}=q-p\in\mathbb{Z}\text{  }\forall\A,\beta\in\Delta
\end{equation}
\end{corollary}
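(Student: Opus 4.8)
The plan is to obtain Corollary \ref{cor:simplethings} as an immediate specialisation of the Theorem just proved, since all the hypotheses are already available. First I would record two facts from the text: the adjoint representation $ad$ is finite dimensional because $\mathfrak{l}$ is, and (by the Remark identifying roots with weights of $ad$) every root $\beta\in\Delta$ is a weight of $ad$. Given $\alpha,\beta\in\Delta$, the root $\alpha$ comes equipped, via Theorem \ref{rootsubalgebra}, with an $A_1$ subalgebra $\{E_\alpha,F_\alpha,H_\alpha\}$, so the Theorem applies with $r=ad$, $\omega=\beta$, and yields $2(\beta,\alpha)/(\alpha,\alpha)=q-p$, where $p$ and $q$ are the lengths of the two legs of the $\alpha$-string through $\beta$; this is manifestly an integer, which is the assertion.

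The single point requiring care is that the Theorem is stated for \emph{irreducible} representations, whereas $ad$ is irreducible only when $\mathfrak{l}$ is simple; for semi-simple $\mathfrak{l}=\mathfrak{l}_1\oplus\cdots\oplus\mathfrak{l}_k$ it decomposes as $ad=\bigoplus_i ad|_{\mathfrak{l}_i}$. I would dispose of this by cases. If $\alpha,\beta$ lie in the same ideal $\mathfrak{l}_i$, then they are a root and a weight of the irreducible representation $ad|_{\mathfrak{l}_i}$ and the Theorem applies verbatim. If they lie in different ideals, then $[\mathfrak{l}_\alpha,\mathfrak{l}_\beta]\subseteq\mathfrak{l}_{\alpha+\beta}$ lies in the intersection of two distinct ideals and so vanishes; since the Killing form is block diagonal with respect to the ideal decomposition, $(\alpha,\beta)=\Gamma(H_\alpha,H_\beta)=0$ (compare Fact \ref{eq: rootorth}), and $2(\alpha,\beta)/(\alpha,\alpha)=0\in\mathbb{Z}$ trivially. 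In either case the quantity is an integer.

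There is essentially no obstacle here: the real work was done in Lemma \ref{lem:HWV} and the Theorem extending it, and the only thing needing a word is the reduction to simple factors above. If one prefers to avoid even that, an alternative is to restrict $ad$ to the $A_1$ subalgebra attached to $\alpha$, decompose $\mathfrak{l}$ into irreducible $A_1$-submodules, note that $H_\alpha$ acts on the $\beta$-weight space by the scalar $2(\beta,\alpha)/(\alpha,\alpha)$ computed in Lemma \ref{lem:HWV}, and invoke the integrality of $H_\alpha$-eigenvalues on each finite dimensional submodule — which is precisely the $\lambda$-independent content of that lemma applied summand by summand.
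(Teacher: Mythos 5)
Your proposal is correct and follows essentially the same route as the paper, which treats the corollary as an immediate specialisation of the preceding theorem to the adjoint representation (whose weights are the roots) with no further argument given. Your additional care about irreducibility — decomposing $ad$ over the simple ideals, or restricting to the $A_1$ subalgebra attached to $\alpha$ — addresses a point the paper silently glosses over, and is a welcome tightening rather than a different method.
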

\begin{corollary}\leavevmode
\begin{enumerate}
	\item If, $\beta\neq\pm\alpha$ then $(\beta,\alpha)\leq 0\implies(\alpha+\beta)\in\Delta$. \\
	\item If, $\beta\neq\pm\alpha$ then $(\beta,\alpha)\geq 0\implies(\alpha-\beta)\in\Delta$
\end{enumerate}
\end{corollary}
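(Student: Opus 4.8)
The plan is to run the classical $\alpha$-string argument. Fix roots $\alpha,\beta$ with $\beta\neq\pm\alpha$, so that $\beta$ is not a scalar multiple of $\alpha$, and let $\mathfrak{s}=\spn\{E_\alpha,F_\alpha,H_\alpha\}\cong\mathfrak{sl}_2$ be the subalgebra attached to $\alpha$ by Theorem~\ref{rootsubalgebra}. Consider
\[
V=\bigoplus_{n\in\mathbb Z}\mathfrak{l}_{\beta+n\alpha}.
\]
Because $\beta+n\alpha$ is never $0$, every nonzero summand is a genuine root space; $V$ is finite dimensional, and by Fact~\ref{commalphabeta} it is carried into itself by $E_\alpha$ (which raises $n$), $F_\alpha$ (which lowers $n$) and $H_\alpha$ (which preserves $n$); thus $V$ is a finite dimensional $\mathfrak{s}$-module. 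A short computation with the definition of $H_\alpha$ shows that $\mathfrak l_{\beta+n\alpha}$ lies in the $H_\alpha$-eigenspace of eigenvalue $\tfrac{2(\beta,\alpha)}{(\alpha,\alpha)}+2n$, which is an integer by Corollary~\ref{cor:simplethings}.

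Next I would show that $S=\{n\in\mathbb Z:\beta+n\alpha\in\Delta\}$ is an unbroken interval. Suppose $\mathfrak l_{\beta+n\alpha}\neq 0$ and $\mathfrak l_{\beta+(n+1)\alpha}=0$; then $E_\alpha$ annihilates $\mathfrak l_{\beta+n\alpha}$, so each of its vectors is a highest weight vector for $\mathfrak{s}$, and (exactly as in the proof of the last theorem) its weight $\tfrac{2(\beta,\alpha)}{(\alpha,\alpha)}+2n$ must be $\ge 0$. Symmetrically, if in addition $\mathfrak l_{\beta+(n+2)\alpha}\neq 0$ then its vectors are lowest weight vectors, so $\tfrac{2(\beta,\alpha)}{(\alpha,\alpha)}+2(n+2)\le 0$, contradicting the previous inequality. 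Hence no gap occurs, $S$ is an interval containing $0$, say $S=[-p,q]$ with $p,q\ge0$, and the symmetry of $\mathfrak{sl}_2$-weights about $0$ (lowest $=-$ highest) forces
\[
\frac{2(\beta,\alpha)}{(\alpha,\alpha)}=p-q.
\]

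The two assertions follow immediately. If $(\beta,\alpha)<0$ then $p-q<0$, so $q\ge 1$, hence $1\in S$, i.e.\ $\alpha+\beta\in\Delta$; this is (1). If $(\beta,\alpha)>0$ then $p\ge1$, hence $-1\in S$, i.e.\ $\beta-\alpha\in\Delta$, and since the negative of a root is a root this gives $\alpha-\beta\in\Delta$; this is (2). (Strictly speaking the inequalities here must be strict: the case $(\alpha,\beta)=0$ is a genuine exception, e.g.\ two orthogonal roots in $A_1\oplus A_1$, so ``$\le 0$'' and ``$\ge 0$'' in the statement should be read as ``$<0$'' and ``$>0$''.)

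I expect the unbroken-string step to be the only real obstacle: it is the one place where one needs honest $\mathfrak{sl}_2$-representation theory — namely that a highest weight vector of a finite dimensional module has nonnegative integer weight and a lowest weight vector nonpositive weight — rather than just the integrality statement of Corollary~\ref{cor:simplethings}. Everything after that is bookkeeping with the identity $p-q=\tfrac{2(\beta,\alpha)}{(\alpha,\alpha)}$ together with the fact that roots occur in $\pm$ pairs.
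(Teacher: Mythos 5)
Your proof is correct and follows essentially the same route as the paper: the $\alpha$-string through $\beta$ combined with the identity $\tfrac{2(\beta,\alpha)}{(\alpha,\alpha)}=q-p$, which the paper simply quotes from Corollary \ref{cor:simplethings} while you re-derive it (together with the unbrokenness of the string) from finite-dimensional $\mathfrak{sl}_2$-module theory. Your parenthetical caveat is also well taken: the inequalities in the statement must be read as strict, since for orthogonal roots (e.g.\ in $A_1\oplus A_1$) neither $\alpha+\beta$ nor $\alpha-\beta$ need be a root, a boundary case the paper's two-line argument glosses over.
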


\begin{proof}\leavevmode
\begin{enumerate}
	\item This follows since $2\frac{(\alpha,\beta)}{\A,\A}=q-p$, where the RHS is the number of times the weight is \textit{lowered} minus the number of times it is \textit{raised}. If the antecedent of $(1)$ is satisfied, then the weight is raised more times than it is lowered, and the $\alpha$ string through $\beta$ must contain $(\alpha+\beta)$. \\
	\item Similarly, if the antecedent of $(2)$ is satisfied, then the weight is raised more times than it is lowered so the $\alpha$ string through $\beta$ must contain $(\alpha-\beta)$.
\end{enumerate}
\end{proof}
\begin{lemma}\label{lem:simplethings2}
If $\alpha\neq\beta$ and $\alpha,\beta\in\Delta_S$, then $\alpha-\beta$ is not a root.
\end{lemma}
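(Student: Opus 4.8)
The plan is to argue by contradiction, using only the partition of $\Delta$ into positive and negative roots together with the defining property of a simple root. Suppose, for contradiction, that $\gamma \defeq \alpha-\beta$ is a root. Since $\alpha\neq\beta$ we have $\gamma\neq 0$, so $\gamma$ is a genuine (nonzero) root, and by the earlier theorem $-\gamma$ is a root as well. Because every root is either positive or negative, exactly one of $\gamma\in\Delta_+$ or $-\gamma\in\Delta_+$ holds.

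I would then split into these two cases. If $\gamma\in\Delta_+$, then from $\alpha=\beta+\gamma$ and the fact that $\beta\in\Delta_S\subset\Delta_+$, we would have written $\alpha$ as a sum of two positive roots, contradicting that $\alpha$ is simple. If instead $-\gamma=\beta-\alpha\in\Delta_+$, then from $\beta=\alpha+(\beta-\alpha)$ and $\alpha\in\Delta_+$ we would have written $\beta$ as a sum of two positive roots, contradicting that $\beta$ is simple. Either way we reach a contradiction, so $\alpha-\beta$ cannot be a root.

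I do not expect any real obstacle here: the only facts needed are that simple roots are positive (immediate from the definition of a simple root) and that every root is positive or negative (from the chosen ordering on weight space, or equivalently the sign of $\Gamma(H_\gamma,H_\beta)$). As an aside, one could instead run the argument through Corollary~\ref{cor:simplethings} and its companion on $\alpha$-strings — noting that if $\alpha-\beta$ were a root then adding or subtracting would force $(\alpha,\beta)$ to have a definite sign and hence $\alpha+\beta$ or $\alpha-\beta$ into $\Delta$ in a way incompatible with both being simple — but the positivity argument above is the shortest route and is what I would present.
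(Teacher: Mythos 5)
Your argument is correct and is essentially the same as the paper's: assume $\alpha-\beta$ is a root, note that one of $\alpha-\beta$ or $\beta-\alpha$ is then positive, and in either case write $\alpha=\beta+(\alpha-\beta)$ or $\beta=\alpha+(\beta-\alpha)$ as a sum of two positive roots, contradicting simplicity. No issues to flag.
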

\begin{proof}
Assume for contradiction that this is not so. Then, either $\alpha-\beta$ or $\beta-\alpha$ is a positive root. In the first case: $\beta+(\alpha-\beta)=\alpha$, and in the second: $\A+(\beta-\A)=\beta$. Both cases contradict the simple property of $\alpha$ or $\beta$ respectively. 
\end{proof}
\begin{corollary}
$$(\alpha,\beta)\in\Delta_S,\text{ }\A\neq\pm\beta\implies(\A,\beta)\leq 0$$
\end{corollary}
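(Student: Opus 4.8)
The plan is a one-step argument by contradiction, pitting the defining property of simple roots (packaged in Lemma~\ref{lem:simplethings2}) against the root-string criterion (part~(2) of the corollary immediately preceding Lemma~\ref{lem:simplethings2}). Nothing deeper is needed: the statement is essentially the logical conjunction of those two results.

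Concretely, I would suppose for a contradiction that $\alpha,\beta\in\Delta_S$ with $\alpha\neq\pm\beta$ but $(\alpha,\beta)>0$. Since the Killing form is symmetric, $(\beta,\alpha)=(\alpha,\beta)>0\geq 0$, and $\beta\neq\pm\alpha$ holds by hypothesis, so part~(2) of the corollary preceding Lemma~\ref{lem:simplethings2} applies and yields $\alpha-\beta\in\Delta$. On the other hand, Lemma~\ref{lem:simplethings2} asserts precisely that $\alpha-\beta$ is \emph{not} a root, since $\alpha\neq\beta$ and both lie in $\Delta_S$. This contradiction forces $(\alpha,\beta)\leq 0$ whenever $\alpha,\beta\in\Delta_S$ and $\alpha\neq\pm\beta$, which is the claim.

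I expect essentially no obstacle here. The only points worth a sentence of care are: (i) that strict positivity $(\alpha,\beta)>0$ certainly entails the non-strict hypothesis $(\beta,\alpha)\geq 0$ required to invoke the string corollary, so the cases line up cleanly; and (ii) that ``$\alpha-\beta\in\Delta$'' is equivalent to ``$\beta-\alpha\in\Delta$'', since roots occur in $\pm$ pairs (the theorem that $-\alpha$ is a root whenever $\alpha$ is), so Lemma~\ref{lem:simplethings2} genuinely contradicts the corollary's conclusion independently of which sign convention one reads into it.
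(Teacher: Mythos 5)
Your proof is correct and is essentially the paper's own argument: both deduce $\alpha-\beta\in\Delta$ from $(\alpha,\beta)>0$ via part~(2) of the root-string corollary and then contradict Lemma~\ref{lem:simplethings2}, the only cosmetic difference being that the paper phrases this as proving the contrapositive while you phrase it as a direct contradiction.
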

\begin{proof}
We prove the contrapositive, that $(\alpha,\beta)>0\implies(\alpha,\beta)\not\in\Delta_S$ for $\alpha\neq\pm\beta$. Assume that $(\alpha,\beta)>0$. Then by $(2)$ of Corollary \ref{cor:simplethings}, $(\alpha-\beta)$ is a root, contradicting Lemma \ref{lem:simplethings2}. So $(\alpha-\beta)\not\in\Delta_S$, as required.
\end{proof}
The consequence of the above is that the values of the Cartan matrix are very strongly constrained. For this we need the lemma \cite[Chapter~14,~14.22]{FH}:

\begin{lemma}\label{Gammagood}
$\Gamma_{\restriction(\mathcal{H}\times\mathcal{H})}$ is a symmetric, bilinear, positive definite inner form on $\mathcal{H}\times\mathcal{H}$ over the field $\mathbb{R}$.
\end{lemma}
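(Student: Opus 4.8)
\bigskip

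The claim splits into three parts: bilinearity, symmetry, and positive-definiteness over $\mathbb{R}$ of the restriction of $\Gamma$ to the real span $\mathcal{H}_{\mathbb{R}}\defeq\spn_{\mathbb{R}}\{H_\alpha:\alpha\in\Delta\}$ of the Chevalley Cartan elements (this real form is what is meant by ``$\mathcal{H}$ over $\mathbb{R}$''). Bilinearity and symmetry are inherited verbatim from the trace form $\Gamma$ on $\mathfrak{l}$, so all the content is in the positivity. The plan is to make $\mathrm{ad}(H)$ explicitly diagonal on the Cartan--Weyl decomposition, read off $\Gamma(H,H)$ as a sum of squares of the weights $\alpha(H)$, and then feed in the integrality of Corollary \ref{cor:simplethings} to control those weights.

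First I would record that $\Gamma$ is already non-degenerate on $\mathcal{H}$: since $\mathcal{H}=\mathfrak{l}_0$ in the decomposition $\mathfrak{l}=\mathcal{H}\oplus\bigoplus_{\alpha\in\Delta}\mathfrak{l}_\alpha$, Fact \ref{eq: rootorth} gives $\Gamma(\mathcal{H},\mathfrak{l}_\alpha)=0$ for every root $\alpha$, so any $H\in\mathcal{H}$ with $\Gamma(H,\mathcal{H})=0$ is $\Gamma$-orthogonal to all of $\mathfrak{l}$ and hence vanishes by the theorem equating non-degeneracy of $\Gamma$ with semisimplicity. Dually, the roots span $\mathcal{H}^*$ (otherwise a nonzero $H$ would satisfy $\alpha(H)=0$ for all $\alpha$, be central, and contradict semisimplicity), so the $H_\alpha$ span $\mathcal{H}$ and $\mathcal{H}_{\mathbb{R}}$ is a genuine real form.

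The key computation: for $H\in\mathcal{H}$ the operator $\mathrm{ad}(H)$ preserves each summand, acting by $0$ on $\mathcal{H}$ and by the scalar $\alpha(H)$ on the one-dimensional $\mathfrak{l}_\alpha$, so
\begin{equation}
\Gamma(H,H)=\trace\!\big(\mathrm{ad}(H)^2\big)=\sum_{\alpha\in\Delta}\alpha(H)^2 .
\end{equation}
Taking $H=H_\beta$ and using Serre relation \ref{eq:Serre3} (which says $\alpha(H_\beta)=A_{\alpha\beta}=\tfrac{2(\alpha,\beta)}{(\beta,\beta)}$) gives $(\beta,\beta)=\Gamma(H_\beta,H_\beta)=\sum_{\alpha\in\Delta}\big(\tfrac{2(\alpha,\beta)}{(\beta,\beta)}\big)^2$; by Corollary \ref{cor:simplethings} every summand is the square of an integer and the $\alpha=\beta$ term equals $4$, so $(\beta,\beta)$ is a positive integer for each root $\beta$ and hence $(\alpha,\beta)=\tfrac12\big(\tfrac{2(\alpha,\beta)}{(\beta,\beta)}\big)(\beta,\beta)\in\tfrac12\mathbb{Z}$ for all roots. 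Consequently, for $H=\sum_\alpha c_\alpha H_\alpha\in\mathcal{H}_{\mathbb{R}}$ with $c_\alpha\in\mathbb{R}$, each eigenvalue $\gamma(H)=\sum_\alpha c_\alpha\tfrac{2(\gamma,\alpha)}{(\alpha,\alpha)}$ is real, and the displayed formula shows $\Gamma(H,H)=\sum_{\gamma\in\Delta}\gamma(H)^2\geq 0$. For definiteness, $\Gamma(H,H)=0$ forces $\gamma(H)=0$ for every root, hence $\mathrm{ad}(H)=0$ on both $\mathcal{H}$ and every $\mathfrak{l}_\gamma$, so $H$ is central and therefore $0$ by semisimplicity (equivalently, $\gamma(H)=\Gamma(t_\gamma,H)$ then vanishes on a spanning set of $\mathcal{H}$, and one invokes the non-degeneracy established above).

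The one genuine obstacle is the middle paragraph: passing from the mere non-degeneracy of $\Gamma|_{\mathcal{H}}$ to strict positivity is not formal, and the mechanism that does it is precisely the arithmetic of Corollary \ref{cor:simplethings}, which turns $\Gamma(H_\beta,H_\beta)$ into a sum of squares of integers rather than just a nonzero scalar. One must also be careful to restrict to $\mathcal{H}_{\mathbb{R}}$ throughout, since on the full $\mathcal{H}$ (allowing complex coefficients) the form is only non-degenerate, never definite.
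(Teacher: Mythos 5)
Your proof is correct, and it fills in an argument the paper itself omits: Lemma \ref{Gammagood} is stated there without proof, quoted from the cited reference (Fulton--Harris, \S 14). Your route --- diagonalising $ad(H)$ on the Cartan--Weyl decomposition to get $\Gamma(H,H)=\sum_{\gamma\in\Delta}\gamma(H)^2$, using the integrality of Corollary \ref{cor:simplethings} to make the eigenvalues real on the real span of the $H_\alpha$, and killing the null case by semisimplicity (trivial centre / non-degeneracy on $\mathcal{H}$) --- is exactly the standard argument behind the cited result, so there is nothing to add beyond noting that your restriction to $\mathcal{H}_{\mathbb{R}}$ is indeed the reading the lemma's phrase ``over the field $\mathbb{R}$'' intends.
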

This allows us to identify the inner product on roots space as the standard inner product on Euclidean space, and so:
\begin{equation}
\frac{(\A,\beta)}{\sqrt{(\A,A)(\beta,\beta)}}=\cos(\theta)
\end{equation}
\begin{equation}
\implies A_{\alpha\beta}A_{\beta\A}=\frac{2(\A,\beta)}{(\beta,\beta)}\frac{2(\beta,\A)}{(\A,\A)}=4\cos^2(\theta)
\end{equation}
By Lemma \ref{Gammagood}, the LHS can only take integer values, and thus there are only six possibilities in which $\A\neq\pm\beta$ (the case $\cos(\theta)=1$) and  $\norm{\alpha} \defeq +\sqrt{(\alpha,\alpha)}$:\footnote{Note that $A_{\alpha\beta}$ and $A_{\beta\alpha}$ must have the same sign since $(\alpha,\beta)=(\beta,\alpha)$.}


\begin{table}[h]

\begin{tabular}{SSSS} \toprule
    {$A_{\alpha\beta}$} & {$A_{\beta\A}$} & {$\cos(\theta)$} & {$\theta$}\\ \midrule
    -3  & -1 & $\minus\frac{\sqrt{3}}{2}$ & $\frac{5\pi}{6}$ \\
    -2  & -1 & $\minus\frac{\sqrt{2}}{2}$ & $\frac{3\pi}{4}$   \\
    -1  & -1 & $\minus\frac{1}{2}$ & $\frac{2\pi}{3}$  \\ \midrule
    0 & 0&0& $\frac{\pi}{2}$    \\ \midrule
    1  & 1  & $\frac{1}{2}$ & $\frac{\pi}{6}$    \\ 
    2  & 1 & $\frac{\sqrt{2}}{2}$ & $\frac{\pi}{4}$   \\
    3  & 1 & $\frac{\sqrt{3}}{2}$ &  $\frac{\pi}{3}$   \\ \bottomrule
\end{tabular}
\\
\caption{A table of the possible values of the `angle' $\theta$ between the roots, assuming that $\norm{\alpha}\geq \norm{\beta}$, where $\norm{\alpha} \defeq +\sqrt{(\alpha,\alpha)}$}
\label{Dynkinvalues}
\end{table}
Since any \la corresponds to a unique Cartan matrix and the above table lists all possible Cartan matrices, this completes our stated aim of classifying the Lie algebras. The simple roots alone are sufficient to reconstruct the Lie algebra, so we can consider only the first four rows of Table \ref{Dynkinvalues}. Dynkin introduced \cite{Dynkin} an elegant notation for the Cartan matrix. The roots with larger norm are displayed as empty circles, and the `shorter' roots are displayed as filled circles. Connections between roots denote the angles between them. \\
\begin{figure}[h]
\begin{center}
  \begin{tikzpicture}[scale=.4]
    \draw (-1,0) node[anchor=east]  {$A_n$};
    \foreach \x in {0,...,5}
    \draw[xshift=\x cm,thick] (\x cm,0) circle (.3cm);
    \draw[dotted,thick] (0.3 cm,0) -- +(1.4 cm,0);
    \foreach \y in {1.15,...,4.15}
    \draw[xshift=\y cm,thick] (\y cm,0) -- +(1.4 cm,0);
  \end{tikzpicture}
\end{center}

\begin{center}
  \begin{tikzpicture}[scale=.4]
    \draw (-1,0) node[anchor=east]  {$B_n$};
    \foreach \x in {0,...,4}
    \draw[xshift=\x cm,thick] (\x cm,0) circle (.3cm);
    \draw[xshift=5 cm,thick,fill=black] (5 cm, 0) circle (.3 cm);
    \draw[dotted,thick] (0.3 cm,0) -- +(1.4 cm,0);
    \foreach \y in {1.15,...,3.15}
    \draw[xshift=\y cm,thick] (\y cm,0) -- +(1.4 cm,0);
    \draw[thick] (8.3 cm, .1 cm) -- +(1.4 cm,0);
    \draw[thick] (8.3 cm, -.1 cm) -- +(1.4 cm,0);
  \end{tikzpicture}
\end{center}

\begin{center}
  \begin{tikzpicture}[scale=.4]
    \draw (-1,0) node[anchor=east]  {$C_n$};
    \foreach \x in {0,...,4}
    \draw[xshift=\x cm,thick,fill=black] (\x cm,0) circle (.3cm);
    \draw[xshift=5 cm,thick] (5 cm, 0) circle (.3 cm);
    \draw[dotted,thick] (0.3 cm,0) -- +(1.4 cm,0);
    \foreach \y in {1.15,...,3.15}
    \draw[xshift=\y cm,thick] (\y cm,0) -- +(1.4 cm,0);
    \draw[thick] (8.3 cm, .1 cm) -- +(1.4 cm,0);
    \draw[thick] (8.3 cm, -.1 cm) -- +(1.4 cm,0);
  \end{tikzpicture}
\end{center}

\begin{center}
  \begin{tikzpicture}[scale=.4]
    \draw (-1,0) node[anchor=east]  {$D_n$};
    \foreach \x in {0,...,4}
    \draw[xshift=\x cm,thick] (\x cm,0) circle (.3cm);
    \draw[xshift=8 cm,thick] (30: 17 mm) circle (.3cm);
    \draw[xshift=8 cm,thick] (-30: 17 mm) circle (.3cm);
    \draw[dotted,thick] (0.3 cm,0) -- +(1.4 cm,0);
    \foreach \y in {1.15,...,3.15}
    \draw[xshift=\y cm,thick] (\y cm,0) -- +(1.4 cm,0);
    \draw[xshift=8 cm,thick] (30: 3 mm) -- (30: 14 mm);
    \draw[xshift=8 cm,thick] (-30: 3 mm) -- (-30: 14 mm);
  \end{tikzpicture}
\end{center}

\begin{center}
  \begin{tikzpicture}[scale=.4]
    \draw (-1,0) node[anchor=east]  {$G_2$};
    \draw[thick] (0,0) circle (.3 cm);
    \draw[thick,fill=black] (2 cm,0) circle (.3 cm);
    \draw[thick] (30: 3mm) -- +(1.5 cm, 0);
    \draw[thick] (0: 3 mm) -- +(1.4 cm, 0);
    \draw[thick] (-30: 3 mm) -- +(1.5 cm, 0);
  \end{tikzpicture}
\end{center}

\begin{center}
  \begin{tikzpicture}[scale=.4]
    \draw (-3,0) node[anchor=east]  {$F_4$};
    \draw[thick] (-2 cm ,0) circle (.3 cm);
    \draw[thick] (0,0) circle (.3 cm);
    \draw[thick,fill=black] (2 cm,0) circle (.3 cm);
    \draw[thick,fill=black] (4 cm,0) circle (.3 cm);
    \draw[thick] (15: 3mm) -- +(1.5 cm, 0);
    \draw[xshift=-2 cm,thick] (0: 3 mm) -- +(1.4 cm, 0);
    \draw[thick] (-15: 3 mm) -- +(1.5 cm, 0);
    \draw[xshift=2 cm,thick] (0: 3 mm) -- +(1.4 cm, 0);
  \end{tikzpicture}
\end{center}

\begin{center}
  \begin{tikzpicture}[scale=.4]
    \draw (-1,1) node[anchor=east]  {$E_6$};
    \foreach \x in {0,...,4}
    \draw[thick,xshift=\x cm] (\x cm,0) circle (3 mm);
    \foreach \y in {0,...,3}
    \draw[thick,xshift=\y cm] (\y cm,0) ++(.3 cm, 0) -- +(14 mm,0);
    \draw[thick] (4 cm,2 cm) circle (3 mm);
    \draw[thick] (4 cm, 3mm) -- +(0, 1.4 cm);
  \end{tikzpicture}
\end{center}

\begin{center}
  \begin{tikzpicture}[scale=.4]
    \draw (-1,1) node[anchor=east]  {$E_7$};
    \foreach \x in {0,...,5}
    \draw[thick,xshift=\x cm] (\x cm,0) circle (3 mm);
    \foreach \y in {0,...,4}
    \draw[thick,xshift=\y cm] (\y cm,0) ++(.3 cm, 0) -- +(14 mm,0);
    \draw[thick] (4 cm,2 cm) circle (3 mm);
    \draw[thick] (4 cm, 3mm) -- +(0, 1.4 cm);
  \end{tikzpicture}
\end{center}

\begin{center}
  \begin{tikzpicture}[scale=.4]
    \draw (-1,1) node[anchor=east]  {$E_8$};
    \foreach \x in {0,...,6}
    \draw[thick,xshift=\x cm] (\x cm,0) circle (3 mm);
    \foreach \y in {0,...,5}
    \draw[thick,xshift=\y cm] (\y cm,0) ++(.3 cm, 0) -- +(14 mm,0);
    \draw[thick] (4 cm,2 cm) circle (3 mm);
    \draw[thick] (4 cm, 3mm) -- +(0, 1.4 cm);
  \end{tikzpicture}
\end{center}
\caption{The classification of all finite dimensional semi-simple Lie algebras through Dynkin diagrams}
\label{DynkinDiagrams}
\end{figure}

\subsubsection{Review}
Let us review what we have shown in this section. We introduced the Cartan decomposition of a Lie algebra and, employing the adjoint representation, defined a symmetric, positive definite bilinear form, the Killing form, on the Lie algebra. Using the Killing form, we showed that every Lie algebra contains $A_1$ as a subalgebra (Theorem \ref{rootsubalgebra}), generated by the Chevalley generators.  The Serre construction showed that every Lie algebra could be uniquely reconstructed from the Cartan matrix consisting of inner products of simple roots. Thus, we can study Lie algebras from their root systems in an exactly analogous way to studying a Lie group by studying it's Lie algebra.\footnote{The motivation is even stronger in the latter since we recover \textit{all} of the algebra.} Classifying the Cartan matrices, then, classifies all the possible finite dimensional semi-simple Lie algebra.\\
\subsection{Lattices}\hfill\label{latticepp}

What, then, is the simplest way we can go about systematically studying the root systems of Lie algebras? Collecting our results and noting that equation \ref{eq:weightIP} implies:
\begin{equation}
W_{\alpha}(\omega)\defeq\omega-\frac{2(\omega,\A)}{(\A,\A)}\alpha=\omega-(q-p)\A=\lambda-q\A\in\Delta
\end{equation}
we can define a \textit{root system} independently of the Lie algebra:

\begin{definition}[Root system]
Let $V$ be a finite dimensional vector space. Then the root system $\Delta(V,R)$ is a finite set of non-zero vectors (the ``roots") with the standard Euclidean inner form $(\cdot,\cdot)$ that satisfy:\\
\begin{enumerate}
	\item The roots span $V$.
	\item The only scalar multiples of a root $\alpha$ belonging to $\Delta$ are $\alpha$ and $-\alpha$. 
	\item If $\alpha,\beta\in\Delta$, then $W_{\alpha}(\omega)\defeq\omega-\frac{2(\omega,\A)}{(\A,\A)}\alpha\in\Delta$ 
	\item $\alpha,\beta\in\Delta\implies \frac{2(\alpha,\beta)}{(\beta,\beta)}\in\mathbb{Z}$ 
\end{enumerate}
\end{definition}
Note that $s_{\alpha}(\alpha)=-\alpha$, so reflection of a root through the hyperplane orthogonal to it yields it's negative, which is another root. Previously the only reason for the roots to be priviliged in an even lattice was because they where the minimal possible length vectors - now we see that roots also generate automorphisms (in particular, reflections) of the lattice. 

Choosing the $\mathbb{Z}\text{-submodule}$ of $\Delta$ generates a \textit{lattice}, with an inner form that is precisely the Cartan matrix of the corresponding Lie algebra. 

\begin{definition}[root]
A \textit{root} of a lattice $L$ is some $\alpha\in L$ such that the mapping $s_{\alpha} : L\otimes\mathbb{R}$ which acts on $l\in L$ as:
\begin{equation}
s_{\alpha} : l\rightarrow l-\frac{2(l,\alpha)}{(\alpha,\alpha)}
\end{equation}
maps $L$ to itself - that is, the map $s_{\alpha}$ is a reflection isometry of the $L$.
\end{definition}

\subsection{Modelling}\hfill

We take a moment to consider how lattices can be used to construct theories which contain the standard model embedded in them.The classical concerns for modellers looking to extend the standard model are:
\begin{enumerate}
\item Which Lie algebras contain the standard model as a subalgebra?
\item Which irreducible representations are contained within the wider theory that is not in the standard model?
\item What is the phenomenology of these matter representations?
\end{enumerate}
\indent With what we have already we can use lattice theory to answer the first two questions - often in a systematic way. Consider two root lattices $R_1$, $R_2$. To what Lie algebra does $R_1\oplus R_2$ correspond? The Serre construction requires that it is the Lie algebra defined by the corresponding Cartan matrix of $R_1\oplus R_2$. Since we take the orthogonal sum, this is just the block diagonal matrix with blocks $A_1$ and $A_2$\footnote{Where $A_i$ is the Cartan matrix of $R_{i}$.}. This in turn corresponds to a Lie algebra that is the Cartesian product of the Lie algebra $\mathfrak{l}_1$ and $\mathfrak{l}_2$\footnote{The notation is the same as that of the Cartan matrices.}.
\begin{fact}
For Lie algebras $\mathfrak{l}_1,\hdots,\mathfrak{l}_n$ with root lattices $R_1,\hdots,R_n$, the direct sum of the root lattices is the root system of a Lie algebra that is the cartesian product of the $\mathfrak{l}_i$:
\begin{equation}
\bigoplus\limits_{i=1}^{n}R_i\estimates\bigtimes\limits_{i=1}^{n}\mathfrak{l}_i
\end{equation}
\end{fact}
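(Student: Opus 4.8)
The plan is to reduce to $n=2$ by an easy induction (both $\oplus$ on lattices and $\bigtimes$ on Lie algebras are associative up to isometry, resp. isomorphism) and then to establish the commuting square of Figure \ref{fig: Fig.2} for the specific pair $\mathfrak{l}_1\times\mathfrak{l}_2$ and $R_1\oplus R_2$, running it in both directions. So I fix $\mathfrak{l}_1,\mathfrak{l}_2$ with Cartan subalgebras $\mathcal{H}_1,\mathcal{H}_2$, root spaces $\Delta_1,\Delta_2$, simple roots $\Delta_{S,1},\Delta_{S,2}$ and root lattices $R_1,R_2$.

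First I would identify the root lattice of $\mathfrak{l}_1\times\mathfrak{l}_2$. I would check that $\mathcal{H}_1\oplus\mathcal{H}_2$ is a Cartan subalgebra of the product: it is abelian because the bracket acts componentwise, and maximal because a strictly larger commuting set would project to a strictly larger commuting set in one of the factors. Since $\mathrm{ad}_{(H_1,H_2)}(T_1,T_2)=([H_1,T_1],[H_2,T_2])$, the simultaneous eigenvectors of $\mathcal{H}_1\oplus\mathcal{H}_2$ are exactly the $(X_\alpha,0)$ with $\alpha\in\Delta_1$ and the $(0,X_\beta)$ with $\beta\in\Delta_2$, so as functionals on $\mathcal{H}_1\oplus\mathcal{H}_2$ the roots of the product are $\{\alpha\oplus 0:\alpha\in\Delta_1\}\cup\{0\oplus\beta:\beta\in\Delta_2\}$. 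Next I would compute the Killing form: $\mathrm{ad}_{(S_1,S_2)}$ is block diagonal, equal to $\mathrm{ad}_{S_1}\oplus\mathrm{ad}_{S_2}$ on $\mathfrak{l}_1\oplus\mathfrak{l}_2$, hence $\Gamma_{\mathfrak{l}_1\times\mathfrak{l}_2}=\Gamma_{\mathfrak{l}_1}\perp\Gamma_{\mathfrak{l}_2}$ and the induced inner form on root space is again an orthogonal sum; in particular $(\alpha\oplus 0,\,0\oplus\beta)=0$. Therefore the $\mathbb{Z}$-span of the product's roots is $R_1\perp R_2=R_1\oplus R_2$, and the Cartan matrix on its simple roots is block-diagonal $\mathrm{diag}(A_1,A_2)$, exactly as anticipated in the paragraph preceding the Fact.

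Then I would run the converse, i.e. show $\mathcal{S}(R_1\oplus R_2)\cong\mathfrak{l}_1\times\mathfrak{l}_2$. By Section \ref{latticepp} the lattice $R_1\oplus R_2$ carries Cartan matrix $A=\mathrm{diag}(A_1,A_2)$, and the Serre construction returns the unique Lie algebra presented by Chevalley generators $\{E_\gamma,F_\gamma,H_\gamma\}_{\gamma\in\Delta_S}$ subject to \eqref{eq:Serre1}--\eqref{eq:Serre5}. For $\alpha\in\Delta_{S,1}$ and $\beta\in\Delta_{S,2}$ one has $A_{\alpha\beta}=A_{\beta\alpha}=0$, so \eqref{eq:Serre1} gives $[H_\alpha,H_\beta]=0$, \eqref{eq:Serre2} gives $[E_\alpha,F_\beta]=\delta_{\alpha\beta}H_\alpha=0$, \eqref{eq:Serre3} gives $[H_\alpha,E_\beta]=[H_\alpha,F_\beta]=0$, and \eqref{eq:Serre4}--\eqref{eq:Serre5} collapse to $\mathrm{ad}(E_\alpha)^{1}E_\beta=[E_\alpha,E_\beta]=0$ and $[F_\alpha,F_\beta]=0$. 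Thus every generator indexed by a simple root of the first block centralises every generator indexed by one of the second block, so the algebra they generate is the direct product of the subalgebra on the first block — which is $\mathfrak{l}_1$ by Serre's theorem applied to $A_1$ — and the subalgebra on the second block, namely $\mathfrak{l}_2$. Combining the two directions closes the square of Figure \ref{fig: Fig.2} for $(R_1\oplus R_2,\ \mathfrak{l}_1\times\mathfrak{l}_2)$, which is precisely $\bigoplus_i R_i\estimates\bigtimes_i\mathfrak{l}_i$.

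The routine parts are the abelian/maximality check, the eigenvector bookkeeping, and the block structure of the Killing form. The part that most deserves care — and where I expect the real content to sit — is the converse step: one must verify that the vanishing of the \emph{cross-block} Cartan entries, propagated through \emph{all five} Serre relations and not merely the Chevalley--Serre bracket \eqref{eq:Serre4}, genuinely forces the two families of generators to commute, so that $\mathcal{S}(R_1\oplus R_2)$ is honestly a Cartesian product rather than some nontrivial extension with the right associated graded pieces. This is exactly the place where the uniqueness clause of Serre's theorem is indispensable.
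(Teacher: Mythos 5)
Your proposal is correct and takes essentially the same route as the paper, whose justification of the Fact is precisely that the orthogonal sum of root lattices gives the block-diagonal Cartan matrix $\mathrm{diag}(A_1,A_2)$ and that the Serre construction then forces the corresponding Lie algebra to be $\mathfrak{l}_1\times\mathfrak{l}_2$. You merely make explicit the details the paper leaves implicit (that the product algebra really has this Cartan matrix, and that the vanishing cross-block entries propagate through all five Serre relations before uniqueness is invoked), so there is nothing to correct.
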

We can reverse the argument, and observe that if a Cartan matrix $A$ contains a submatrix that is equivalent to a Cartan matrix $A'$, then the corresponding Lie algebra's satisfy $\mathfrak{l}'\subseteq\mathfrak{l}$.\\
\indent We can extend the Dynkin classification to give the \textit{extended Dynkin diagram} \cite[p.~53-55]{KacEDD} formulism which provides a straightforward way to see which subalgebras are contained within another, thus answering the first point. 

\subsection{Summary}\hfill

 Thus we have seen that a root lattice is a simple structure that we can use to study a Lie algebra and hence a Lie group. Given the central role that Lie algebras and their classification have played in theoretical physics, we can hope to gain physical insight if we are able to translate a physical problem into lattice terms. The rest of this dissertation is an attempt to do just that. 
%
\section{Physical motivation}\label{Ftheory}
We are now in a position to state the specific physical motivation of this dissertation. Since our concern is primarily mathematical, the following is a very coarse outline of the physical intepretation of the conjecture we will later give. We omit formal definitions, as to introduce the objects discussed properly would take us too far afield. A thorough review is given in \cite{1401}. 

\subsection{F-theory and the string theory landscape.}\hfill


A central concern for string theory is to find the low-energy effective action, and hence the phenomenology, corresponding to different string theoretic models. F-theory is a non-pertubative completion of Type IIB string theory and is a promising candidate for string phenomenology \cite{Fpheno1}, \cite{Fpheno2}.\\
\indent What is an F-theory model? Like other string theories F-theory can be considered as a limit of M-theory and so will contain equations of motion in eleven dimensions. To specify a model we must compactify the remaining dimensions, and to do so we require three things:
\begin{enumerate}
	\item A complex four-manifold $X_4$. The standard choice is a \textit{Calabi-Yau four-fold} $CY_4$. 
	\item An \textit{elliptic fibration} over the complex four-manifold: $X_{4}\rightarrow B3$. Where $B3$ is a complex three dimensional space.
	\item A quantised \textit{four-flux}  $G_4$. This can be thought of as a generalised `vector potential'. Introducing a basis, it will be a rank 4 anti-symmetric tensor which appears in the equations of motion.
\end{enumerate}
\subsection{Where is the physics?}\hfill

The physics of this model is contained within it's geometry. There is a particular class of ``well-behaved" deformations of the manifold - these are it's moduli and their space is the \textit{moduli space}. There are two types of deformations of a $K3$ surface: the \textit{K\"ahler} moduli and the \textit{complex structure moduli}. We are concerned only with the complex structure moduli. These objects can be interpreted as the \textit{fields} of a four-dimensional effective theory. The equations of motion dictate the structure of the manifold and the four-flux, $G_4$, enters these equations the higher dimensional theory. This causes the moduli, $\phi_i$ to take on certain values. The $\phi_i$ and the $G_4$ then define an effective potential in four-dimensions $V_{\textrm{eff}}(\phi_i,G_4)$. \\
\indent A $K3$ surface $K$ defines a lattice, the \textit{Neron-Severi lattice}, $S_K$. The rank of this lattice is the \textit{Picard number}. An elliptic fibration is given by specifying a primitive embedding of the hyperbolic lattice $U$ into $S_K$ \cite{Nishiyama}. We make the following choices:
\begin{enumerate}
\item $X_4=K3\times K3$.
\item The elliptic fibration is only non-trivial on one of the $K3$. That is, the elliptic fibrations we consider are of the form: $K3\times K3\rightarrow K3\times\mathbb{CP}^{1}$.
\item We consider \textit{extremal} $K3$ \textit{surfaces} which have $\rho_K=20$. These were introduced in \cite{Dasgupta:1999ss}. This choice is made in order that all of the complex structure moduli are fixed - which occurs for a family of $G_4$. A member of this family occurs as the four-flux in our models.
\end{enumerate}
\indent The manifold will in general contain \textit{Du Val} singularities. These singularities can be classified by the $ADE$ classification. The type of singularity gives rise to a gauge group of the effective theory \cite{ADEsingAsp}, \cite{1401}.
\subsection{Where are the lattices?}\hfill

Lattice theory is used here to study the \textit{homology group} of the singularities once they undergo \textit{resolution} to make them smooth. For our purpose, the homology group can be thought of as the group of submanifolds that both have no boundary and are not themselves a boundary. Technically, for a space $\mathcal{M}$, the $d$-th homology is:
\begin{equation}
\frac{\ker(\partial^{d-1})}{\mathrm{Im}(\partial^{d})}\defeq H^{d}(\mathcal{M})
\end{equation}
Taking the space of such sub-manifolds over $\mathbb{Z}$ forms a lattice - with each lattice point corresponding to an equivalence class of manifolds. In our case the manifold is $\frac{\mathbb{C}^2}{\Gamma}$, where $\Gamma\subseteq SU(2)$. This manifold has a singularity at the origin and so we consider the second homology of the resolution of $\frac{\mathbb{C}^2}{\Gamma}$. The positive roots of the resulting lattice correspond to $\mathbb{CP}^{1}$ surfaces in the resolved manifold.\\
\indent For a torus, the first homology, $H^1$, corresponds to the signature $(1,1)$ \textit{hyperbolic plane lattice} $U$ with Gramian:
\begin{equation}
G^{U}\defeq\begin{pmatrix}0&1\\1&0\end{pmatrix}
\end{equation}
The second homology of a $K3$ surface $X$, $H_2(X,\mathbb{Z})$, forms a signature $(3,19)$ lattice isometric to $L^{(3,19)}\defeq U^{\oplus 3}\oplus (E_{8})^{\oplus 2}$.
\subsubsection{Analysing the model}\footnote{In this discussion and in the discussion of the Kneser-Nishiyama method we will use a signature that differs by a sign to that which we will use in the rest of the dissertation. This is done for the convenience of the reader who wishes to obtain further detail in the references, where this is standard.}
Once the Calabi-Yau four-fold, the elliptic fibration and the four-form flux are given, the model is completely specified. Now that we have a class of models, we would like to analyse them. In particular, we want to ask the question:
\begin{qst}
Do all such F-theory models contain non-abelian gauge groups? That is, do non-abelian gauge groups occur generically in such models?
\end{qst}
To answer this question we need to classify the possible elliptic fibrations over $K3$. In \cite{1401} it was shown that the answer to this question is contained in a \textit{frame lattice} $W$. The theorem \cite{1401}: 
\begin{theorem}
The root system $W_{\mathrm{root}}$ of the frame lattice $W$ is a sum of root lattices which generate the non-abelian gauge groups: 
$$W_{\mathrm{root}}=\bigoplus_{i}R_i$$
\end{theorem}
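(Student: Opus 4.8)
The strategy is to reduce the claim to the classification of root systems already recorded in Figure~\ref{DynkinDiagrams}, the only extra inputs being the Hodge index theorem on the $K3$ surface and the $ADE$ dictionary for Du Val singularities set up in \S\ref{Ftheory}. First I would pin down which vectors of the frame lattice $W$ are roots: since $W$ inherits an even integral form from the middle cohomology of the $K3$, any $r\in W$ with $(r,r)=-2$ satisfies $2(r,\cdot)/(r,r)=-(r,\cdot)\in\mathbb{Z}$, so its reflection $s_r$ preserves $W$, and therefore the roots of $W$ in the sense of the lattice-root definition are exactly the $(-2)$-vectors, of which there are finitely many once negative-definiteness is known. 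Let $W_{\mathrm{root}}$ denote the sublattice they generate.

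Next comes the one piece of genuine geometric input. These $(-2)$-classes are the irreducible components of the reducible fibers of the elliptic fibration — the exceptional $\mathbb{CP}^{1}$'s produced by resolving the Du Val singularities — and, after selecting the components disjoint from the zero section, they all lie in the orthogonal complement of the copy of $U$ spanned by the fiber and the section inside the Neron-Severi lattice. By the Hodge index theorem that complement is negative definite, so $W_{\mathrm{root}}$ is a negative-definite even lattice generated by its norm $-2$ vectors: a root lattice in the classical sense, all of whose roots have equal length (the sign of the convention, cf.\ the footnote in \S\ref{Ftheory}, is immaterial here).

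Now the classification does the work. A negative-definite root lattice decomposes uniquely as an orthogonal direct sum of irreducible root lattices, and equal root lengths force each summand to be simply laced, i.e.\ of type $A_n$, $D_n$, $E_6$, $E_7$ or $E_8$ from Figure~\ref{DynkinDiagrams}. Writing $W_{\mathrm{root}}=\bigoplus_i R_i$ accordingly, Theorem~\ref{rootsubalgebra} together with the Serre construction attaches to each $R_i$ a unique simple Lie algebra $\mathfrak{l}_i$, and the Fact relating orthogonal sums of root lattices to products of Lie algebras identifies $\bigoplus_i R_i$ with the root system of $\bigtimes_i \mathfrak{l}_i$. Finally, the $ADE$ type of each cluster of fiber components is, through the resolution of the associated Du Val singularity, precisely the type of a non-abelian gauge factor of the four-dimensional effective theory, while the remaining data — the quotient of $W$ by $W_{\mathrm{root}}$, i.e.\ the Mordell--Weil part — is torsion-free and contributes only abelian $U(1)$'s, which by construction lie outside $W_{\mathrm{root}}$. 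Hence $W_{\mathrm{root}}=\bigoplus_i R_i$ is exactly the sum of root lattices that generate the non-abelian gauge groups.

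The main obstacle is the middle step: establishing that the sublattice generated by the $(-2)$-vectors of the frame lattice is negative definite — so that the classical classification applies at all — and, hand in hand with it, the Shioda--Tate--type bookkeeping that splits the Neron-Severi lattice into the hyperbolic piece $U$, the root part $W_{\mathrm{root}}$, and a Mordell--Weil contribution, identifying $W_{\mathrm{root}}$ with exactly the vertical curve classes responsible for gauge enhancement. Once negativity and this decomposition are in place, everything downstream is a translation table against Figure~\ref{DynkinDiagrams}.
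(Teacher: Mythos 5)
First, note that the paper itself contains no proof of this statement: it is quoted directly from \cite{1401} as an external input, so there is nothing internal to compare your argument against. Judged on its own terms, your sketch follows the standard Shioda--Tate route that the cited literature uses, but it has a genuine gap at exactly the point where the content lies. You assert that the $(-2)$-vectors of $W$ \emph{are} the irreducible components of reducible fibers disjoint from the zero section. The easy direction is that such components lie in $W$ and are roots; the substantive direction — that \emph{every} root of $W$ is an integral combination of such components, so that $W_{\mathrm{root}}$ is generated by exceptional curves of the Du Val resolutions and nothing else — is precisely what makes the gauge-group identification work, and you never argue it. The standard argument needs Riemann--Roch on the $K3$ (for a class $r$ with $r^2=-2$, either $r$ or $-r$ is effective), together with $r\cdot F=0$ forcing the effective representative to be vertical and $r\cdot O=0$ forcing it to avoid the zero section; without this, a root of $W$ could a priori be a class with no geometric interpretation as fiber components, and the dictionary ``irreducible summand $R_i$ $\leftrightarrow$ non-abelian gauge factor'' would not follow.

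Relatedly, you identify the ``main obstacle'' as establishing negative definiteness of $W_{\mathrm{root}}$, but that part is immediate and is already stated in the paper: the Neron--Severi lattice has signature $(1,\rho_X-1)$, the hyperbolic summand $U$ has signature $(1,1)$, so the frame lattice $W=U^{\perp}\subset S_X$ has signature $(0,\rho_X-2)$ (here $(0,18)$), and any sublattice of it is negative definite; the decomposition of an even negative-definite lattice generated by norm-$2$ vectors (up to sign convention) into $ADE$ summands is then Witt's theorem, also quoted in the paper. So the effort in your sketch is spent where no effort is needed, while the step that actually requires geometry (effectivity and verticality of the roots, i.e.\ the Shioda--Tate bookkeeping you defer to at the end) is the one that must be supplied, either by proving it via Riemann--Roch as above or by citing it explicitly from \cite{1401} or \cite{Nishiyama}, which is what the paper itself does.
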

entails that the frame lattice $W$ contains the answer to this question, and $W$ will be the ultimate object of our study.
In particular, if $W_{\mathrm{root}}=\emptyset$ then there are no non-abelian gauge groups in the low-energy effective action, and so the equivalent question in lattice theoretic terms is:
\begin{qst}
Do all frame lattices $W_{\mathrm{frame}}$ corresponding to an elliptic fibration over a $K3$ surface contain roots?
\end{qst}
Crucially, there is not an infinite number of possible lattices $W$. If there was, then the answer would be no. For the F-theory model to be anomaly free the \textit{tadpole cancellation condition} for $D3$ branes, first introduced in \cite{SETHI1996213}, with the connection to the current setting spelled out in \cite[pg.~7]{1401}\footnote{This is explained in more detail in \cite{1401pap28}.} must be satisfied. This requires that there be only a finite number of such lattices, namely the 34 possibilities found in \cite{1312}. Remarkably, we will see that this physical constraint coincides with a result in the theory of sphere packings. 

\section{Lattice theory}

Lattices arise naturally out of the Cartan classification of Lie algebras. However, by introducing the root system axiomatically, we also saw that lattices can be studied as an independent object in their own right, and we give here an introduction into the basic concepts of lattice theory. 
\subsection{Fundamentals of lattice theory}\hfill

Our first notion is that of a \textit{lattice}:
\begin{definition}[Lattice]
A lattice, $L$, is a \textit{finitely generated free abelian group} with a rank $r\in\mathbb{N}$, that is $L\cong\mathbb{Z}^r$, with a non-degenerate symmetric pairing $(\cdot,\cdot) : L\times L\rightarrow\mathbb{R}$. When expressed as a matrix, the symmetric pairing is termed an \textit{interesection form}.
\end{definition}

This contains the `intuitive definition' as the $\mathbb{Z}^n$ span of $n$ linearly independent basis vectors embedded in a Euclidean space $\mathbb{R}^{n}$: simply map each basis vector $e_i$ to $1\in\mathbb{Z}$, let the group action be addition, and the symmetric pairing the standard Euclidean scalar product. The \textit{ambient space}  bestows a lattice defined in this way with an intersection form given simply by the inner form on that space (for the common case of $R^{n}$ this is just the standard Euclidean scalar product). Of course, it is not necessary to think of the lattice as embedded in any space, but it is often helpful to do so. It can also be convenient to write the basis as a \textit{generating matrix}:
\begin{definition}[Generating matrix]
The generating matrix $M$ of a lattice $L$ is a matrix each of whose rows is a basis vector $e_{i}$, of $L$, and all of the rows of which form a complete basis of $L$:
\begin{equation}
M_{ij}=(e_i)_j
\end{equation}
\end{definition}
We supply definitions of the hypercubic and $ADE$ root lattices in this form, along with a conventional choice of generating matrix in Table \ref{Latticetable}.

\begin{table}
\setlength{\tabcolsep}{0.5em} 
\scalebox{0.6}{
\begin{tabular}{|c|c|c|c|c|}
\hline
Lattice & Ambient space & Condition & Generating matrix & $discr(L)$\\
\hline
\\[-1em]
$\mathbb{Z}^{n}$ & $\mathbb{R}^{n}$ & $x_i\in\mathbb{Z}\text{ }\forall x\in\mathbb{Z}^{n}$ & $\mathds{1}_n$ & $1$\\
\hline
\\[-0.75em]
$A_n, n\geq $ \footnote{Note that the Generating matrix is $n\times n+1$ dimensional} & $\mathbb{R}^{n+1}$ &  $\sum\limits_{i=1}^{n+1}x_i=0\text{ with } x\in\mathbb{Z}^{n+1}$ & $
\begin{psmallmatrix} 
1 & -1 &  0& 0& \dotsm & 0\\
0&1& -1 & 0 & \dotsm & 0\\
\multicolumn{6}{c}{$\vdots$}     \\
0&0& \dotsm & 0&1 & -1  
    \end{psmallmatrix} 
   $ & $n+1$\\[3ex]
\hline
\\[-1em]
$D_n,n\geq 4$\footnote{Note that $D_3=A_3$, as can be seen from the Dynkin diagram. } & $\mathbb{R}^{n}$ & $\sum\limits_{i=1}^{n}x_i=0\mod 2\text{ for } x\in D_n,x\in\mathbb{Z}^{n}$ & $\begin{psm} 
1 & -1 & 0&\dotsm & 0& 0\\
0& 1 & -1 & 0& \dotsm & 0\\
 \multicolumn{6}{c}{$\vdots$}  \\
0 & \dotsm & 0& 1 & -1 & 0\\
0& 0& \dotsm & 0& 1 & -1\\
0&0& \dotsm & 0& 1 & 1
\end{psm}$ & $4$ \\[5ex]
  \hline
\\[-1em]
$E_8$\footnote{The definiton we use is that of the \textit{even coordinate system} of $E8$.} & $\mathbb{R}^8$ & $\sum\limits_{i=1}^{8} x_i = 0\mod 2$, with $\forall x_i\in x(x_i\in\mathbb{Z})$ OR (\textbf{exclusive}) $\forall x_i\in x(x_i\in\mathbb{Z}+\frac{1}{2})$ & $\begin{psm} 
1 & -1 & 0& 0& 0& 0& 0& 0\\
0& 1 & -1 & 0& 0& 0& 0& 0\\
 \multicolumn{8}{c}{$\vdots$}  \\
0& 0& 0& 0& 0& 0& 1 & 1\\
\frac{1}{2} & \frac{1}{2} & \frac{1}{2} & -\frac{1}{2} & -\frac{1}{2} & -\frac{1}{2} & -\frac{1}{2} & \frac{1}{2}
\end{psm}$ & 1\\[4ex]
 \hline
\\[-1em]
$E_7$\footnote{Similarly to $A_n$ this matrix has dimension $7\times 8$} & $\mathbb{R}^8$ & $\{x\in E_8 | (x,v)=0\text{ for a choice of} v\in E_8 \}$ & $\begin{psm} 
1 & -1 & 0& 0& 0& 0& 0& 0\\
0& 1 & -1 & 0& 0& 0& 0& 0\\
 \multicolumn{8}{c}{$\vdots$}  \\
0& 0& 0& 0& 0& 1 & -1& 0\\
\frac{1}{2} & \frac{1}{2} & \frac{1}{2} & -\frac{1}{2} & -\frac{1}{2} & -\frac{1}{2} & -\frac{1}{2} & \frac{1}{2}
\end{psm}$  & 2\\[4ex]
\hline
\\[-1em]
$E_6$ & $\mathbb{R}^8$ & $\{x\in E_8 | (x,v)=0\} \text{ }\forall v\in A_2\text{ for a choice of } A_2\xhookrightarrow{} E_8 \}$ & $\begin{psm} 
1 & -1 & 0& 0& 0& 0& 0& 0\\
0& 1 & -1 & 0&0& 0& 0& 0\\
 \multicolumn{8}{c}{$\vdots$}  \\
0& 0& 0& 0& 1 & -1 & 0& 0\\
\frac{1}{2} & \frac{1}{2} & \frac{1}{2} & -\frac{1}{2} & -\frac{1}{2} & -\frac{1}{2} & -\frac{1}{2} & \frac{1}{2}
\end{psm}$ & 3 \\[4ex]
\hline
\end{tabular}}
\caption{A table showing standard definitions of the hypercubic and even unimodular lattices generated by vectors of norm $2$ (the $ADE$ root lattices). We essentially summarise the definitions given in ch.4 of \cite{CS} and impose the conventions employed in this dissertation.}
\label{Latticetable}
\end{table}
	It is a pleasant and reasurring activity to calculate the Gramian's of the above definitions, and see that they are exactly the Cartan matrices of the corresponding Dynkin Diagrams in Figure \ref{DynkinDiagrams}. The particular choices in Table \ref{Latticetable} have been made for two reasons. First, a theorem due to Witt\cite{Witt}, \cite[Chapter~4]{CS} states that every integral lattice whose generators have norm one or two can be expressed as a direct sum of the lattices in Table \ref{Latticetable}. Second, the gauge groups that we consider in this dissertation occur from the Du-Val singularities which correspond to $ADE$ gauge groups only.\\

\indent To put some colour on the requirment that $L\cong\mathbb{Z}^{r}$, we note that this is equivalent to the lattice being both a finitely generated abelian group and a \textit{torsion free} abelian group.

\begin{definition}[Torsion element]
An element $g$ of a group $G$ with identity element $\mathds{1}$ is said to be a torsion element if there is some $n\in\mathbb{N}$ such that $g^n=\mathds{1}$.
\end{definition}

A group is then said to be \textit{torsion free} if the only torsion element is the identity\footnote{Note that we give the group theoretic definition, which is our concern. An element $m$ of a module $M$ over a ring $R$ is said to be a torsion element if there exists a regular element $r$ of the ring such that $r\circ m=\mathds{1}$}. Our intuitive picture of a lattice once again confirms this requirement: in a Euclidean space, integer combinations of linearly independent vectors can only point away from the additive identity ($\vec{0}$).\\
We will need to distinguish:

\begin{definition}[Integral lattice]
A lattice is said to be integral \textbf{iff} $(x,y)\in\mathbb{Z}\text{ }\forall x,y\in L$.
\end{definition}
\begin{definition}[Even lattice]
A lattice is said to be \textit{even} (\textit{odd})\footnote{An alternative nomenclature sometimes employed is \textit{Type \RNum{2}} and \textit{Type \RNum{1}} respectively.} \textbf{iff} $\text{ }\forall x,y \in L,\text{ }(x,y)\in 2\mathbb{Z}\text{ }(\in 2\mathbb{Z}+1)$.
\end{definition}
\begin{definition}[Unimodular lattice]
A lattice that is both integral and for which $\mathrm{det}(L)=\pm 1$ is said to be \textit{unimodular}.
\end{definition}
A convenient representation of the symmetric pairing of a lattice is it's \textit{Gramian matrix}:
\begin{definition}[Gramian]
Let $L$ be a lattice of rank $r$ with a basis consisting of the vectors $e_i\text{ }; i=1,...,r$. The \textit{Gramian}, $G^{L}$ of $L$ is:
$$G^{L}_{ij}\defeq (e_i,e_j)$$
\end{definition}
Notice that in a Euclidean space $G^{L}=M_{L}^{T}M_{L}$. A useful quantity that will be important to much of what we do is the:
\begin{definition}[Determinant]\label{defdet}
The determinant of a lattice $L$, $\det(L)$,\footnote{Often called the \textit{discriminant}. We prefer this term to avoid confusion with the \textit{discriminant form}.} is the determinant of the intersection form on the lattice. 
\end{definition}
The determinant is linked to the \textit{fundamental region} or \textit{Voronoi cell} of the lattice :
\begin{definition}[Fundamental region]\label{thm: FR}
The \textit{fundamental region} of a lattice $L$ embedded in a space $V$ is the smallest possible subspace $V'\subset V$ in $V$ that tiles the entire space such that every motif contains exactly one lattice point of $L$.
\end{definition}
The connection to the determinant is given in the following statement:
\begin{theorem}
\begin{equation}
|\sqrt{\det(L)}|=(\mathrm{vol}_{FR}(L))
\end{equation}
\end{theorem}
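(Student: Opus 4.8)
The plan is to prove that the volume of the fundamental region (Voronoi cell, or equivalently any fundamental parallelepiped) of a lattice $L$ equals $|\sqrt{\det(L)}|$ by working in the ambient Euclidean space $\mathbb{R}^n$ and relating the determinant of the Gramian to the determinant of the generating matrix. First I would fix a basis $e_1,\dots,e_r$ of $L$ with generating matrix $M$ whose rows are the $e_i$, and take as the fundamental region the half-open parallelepiped $P = \{\sum_{i=1}^r t_i e_i : 0 \le t_i < 1\}$; one checks this tiles $\spn(L)$ with exactly one lattice point per tile, so it is a legitimate choice of fundamental region in the sense of Definition \ref{thm: FR}. The volume of $P$ is then, by the change-of-variables formula for the linear map sending the unit cube $[0,1)^r$ to $P$, equal to $|\det M|$ (when $r = n$; when $r < n$ one uses the $r$-dimensional Hausdorff measure and the same map restricted to $\spn(L)$).

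The next step is the linear-algebra identity $G^L = M M^T$ (noted in the excerpt just after the definition of the Gramian), which gives $\det(G^L) = \det(M M^T) = (\det M)^2$ in the full-rank case, hence $|\det M| = \sqrt{\det(G^L)} = \sqrt{|\det(L)|}$, and this is exactly $\mathrm{vol}_{FR}(L) = |\sqrt{\det(L)}|$. In the degenerate embedding case $r < n$ one instead writes $M$ as an $r \times n$ matrix and uses the Cauchy–Binet formula, $\det(MM^T) = \sum_{S} (\det M_S)^2$ over $r$-element column subsets $S$, together with the fact that the squared $r$-dimensional volume of $P$ is that same sum of squared minors; this is the standard Gram determinant computation of $r$-dimensional volume.

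Finally I would argue that the choice of fundamental region does not matter: any two fundamental regions of the same lattice have equal volume, because translating one tiling onto the other by the (countable) set of lattice vectors and using countable additivity of Lebesgue (or $r$-dimensional Hausdorff) measure shows $\mathrm{vol}_{FR}$ is well-defined independently of the chosen motif. In particular the Voronoi cell has the same volume as the parallelepiped $P$. Assembling these pieces yields $|\sqrt{\det(L)}| = \mathrm{vol}_{FR}(L)$.

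The main obstacle is conceptual rather than computational: making the claim precise and rigorous when the lattice is not full-rank in its ambient space, i.e. correctly invoking $r$-dimensional Hausdorff measure and Cauchy–Binet so that the statement $|\det M| = \sqrt{\det(G^L)}$ still makes sense when $M$ is rectangular. The genuinely nontrivial input is that the fundamental region's volume is independent of which fundamental region one picks; everything else is the textbook identity $\det(MM^T) = (\det M)^2$ and the change-of-variables formula. Since the paper works throughout with lattices embedded in Euclidean space and has already recorded $G^L = M_L^T M_L$, I would expect the intended proof to be the short full-rank version, with the degenerate case either suppressed or handled by passing to $\spn(L) \cong \mathbb{R}^r$.
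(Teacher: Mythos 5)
Your proposal is correct, and its core is the same as the paper's: both arguments reduce the claim to the identity $\det(G^L)=\det(MM^T)=(\det M)^2$ together with the identification of $\mathrm{vol}_{FR}(L)$ with $|\det M|$. Where you differ is in how that identification is made and in how much care you take at the edges. The paper gets the volume by claiming an orthogonal matrix $R$ (``by Gram--Schmidt'') turns the lattice basis into an orthogonal one and then multiplying edge lengths; as written this step is shaky, since an orthogonal transformation preserves inner products and so cannot orthogonalize a non-orthogonal basis --- what is really meant is a rotation bringing $M$ to triangular form (a QR-type decomposition), after which the volume is the product of diagonal entries and $\tilde G = G$ because $R$ is an isometry. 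Your route via the change-of-variables formula for the linear map taking the unit cube to the fundamental parallelepiped is cleaner and avoids that issue entirely. You also supply two points the paper silently assumes: that the volume of a fundamental region is independent of which fundamental region (parallelepiped versus Voronoi cell) one takes, via the tiling and countable-additivity argument, and the treatment of the rank-deficient case $r<n$ via Hausdorff measure and Cauchy--Binet, which is needed for several of the lattices the paper actually works with (e.g.\ $A_n\subset\mathbb{R}^{n+1}$, $E_7\subset\mathbb{R}^8$), where $M$ is rectangular and $\det M$ alone is meaningless. So your proof buys rigor on exactly the points the paper glosses over, at the cost of invoking slightly heavier measure-theoretic language; no gap.
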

\begin{proof}
Let $\{l_i\}$ be the set of basis vectors of a lattice $L$ of rank $n$ and let $R$ be an orthogonal matrix such that $RR^{T}=\mathds{1}$. Then the Gram-Schmidt procedure guarantees that there will be some $R$, such that $Rl_i=\tilde{l}_i$ leads to an orthogonal basis $\{\tilde{l}_i\}$. In this basis, the volume of the fundamental region is just the volume of the hypercube:
\begin{equation}
\prod\limits_{i=1}^{n}l_i=\det(\tilde{M})=\det(RM)
\end{equation}
From $MM^{T}=G$ have that:
\begin{equation}
\tilde{G}=\tilde{M}^{T}\tilde{M}=M^{T}M\implies\det(\tilde{G})=\det(G)=\big(\det(\tilde{M})\big)^{2}=(\mathrm{vol_{FR}})^{2}
\end{equation}
\end{proof}
$L\cong\mathbb{Z}^r$. The fundamental theorem for finitely generated Abelian groups states that every such group $G$ is isomorphic to a product of cyclic groups of prime power - that is, every finite group admits of a decompositon: 
$$G\cong\mathbb{Z}^n\oplus\bigoplus_{q_i}Z_{q_i}$$
where the $q_i$ are prime and, with $n$, determine $G$. This clearly contains $G\cong\mathbb{Z}^n$ as a limiting case. What then, can we gain from lattice theory that we do not already gain from the theory of finite groups? The answer lies in the extra structure provided by the inner form, and we will be concerned with the inner form on a lattice for much of the rest of this dissertation. We note that it was precisely the Cartan matrix of a root system, which is the inner form on the lattice of simple roots corresponding to a Lie algebra, that characterised the Lie algebras. The importance of the inner form motivates the introduction of a stronger `similarity' requirement than isomorphism:

\begin{definition}[Isometry]
Let $L$ and $L'$ be two lattices.  A map $\phi : L\rightarrow L'$ is said to be an isometry \textbf{iff}:
\begin{enumerate}
	\item  It is an isomorphism $\phi : L\cong L'$ and,
	\item It preserves the inner form on the lattices: $l,m\in L\implies (\phi(l),\phi(m))=(l,m)$
\end{enumerate} 
\end{definition} 
This allows us to introduce the crucial notion of an \textit{embedding}:
\begin{definition}[Embedding]
An embedding of a lattice $M$ into a lattice $L$, denoted $M\xhookrightarrow{\phi}L$, is an injective homomorphism of Abelian groups\footnote{Which is just to say that we preserve the addition of vectors under $\phi : m_1,m_2\in M, \phi(m_1+m_2)=\phi(m_1)+\phi(m_2)$} $\phi : M\rightarrow L$ such that $M$ is isometric to $\phi(M)\subset L$, and the symmetric pairing of $M$ is recovered by restricting the symmetric pairing on $L$ to $\phi(M)\subset L$. We note that two embeddings $M\xhookrightarrow{\phi}L$ and $M'\xhookrightarrow{\phi'}L'$ are said to be \textit{isomorphic} if there exists an isometry $\chi : L\rightarrow L'$ such that $\phi'=\chi\circ\phi : M\rightarrow L'$.
\end{definition}
$\phi(M)$ is then a \textit{sublattice} of $L$. More generally:
\begin{definition}[Sublattice]
A lattice $M$ is said to be a \textit{sublattice} of $L$ with symmetric pairing $(\cdot,\cdot)$ when $M$ is isomorphic to a free abelian subgroup of $L$ and the symmetric pairing is $(\cdot,\cdot)_M : M\times M\rightarrow\mathbb{R} = (\cdot,\cdot)_{\restriction M\times M} $.
\end{definition} 
Embeddings and sublattices are a central concern for us. In particular, we will use the notion of a:
\begin{definition}[Primitive sublattice]
A sublattice $M\subset L$\footnote{We will often use $\subset$ to make explicit the lattice that is embedded into. Often, the most helpful reading of this symbol will be `is a sublattice of'.} is said to be a \textit{primitive sublattice} if the quotient $\frac{L}{M}$ is a torsion free Abelian group - that is, $\mathds{1}_{\frac{L}{M}}$ is the only torsion element. Naturally, a lattice embedding is said to be \textit{primitive} if $\phi(M)$ is a primitive sublattice of $L$.
\end{definition}
For the purposes of computation, a more `practical' definition is useful. Recall that $\frac{L}{M}$ introduces an equivalence relation $l_m$ which identifies all the vectors in:  $\{l+m | m\in M\}$. It is sometimes said that the quotient is `$L\mod M$'. Thus, an element $l\in L,\not\in M$ but such that there is some integer $k$ for which $kL\in M$, is a torsion element or order $k$ since we must identify these two in the quotient. An equivalent way of stating this is to say that there is some non-integer rational $q$ for which $qm\in L$. Thus we have the following useful equivalence:
\begin{equation}\label{eq: primitive_def}
(M\otimes\mathbb{R})\cap L = M\iff \mathrm{Tor}(\frac{L}{M})=\mathds{1}
\end{equation}
An alternative terminology sometimes used is that of a \textit{section}. A $k$-dimensional section $L_{k}$ of a lattice $L$ is defined as: $L_k\defeq\mathbb{R}^{k}\cap L$. Since any lattice of dimension $k$ must have lattice basis vectors which over $\mathbb{R}$ span a k-dimensional Euclidean space, this definition is equivalent to \ref{eq: primitive_def}.
Perhaps the simplest example of a non-primitive embedding is shown in Figure \ref{torsionpicture}. A more interesting example is Figure \ref{BCC_torsion}. 
\begin{figure}
\includegraphics[ scale=0.3]{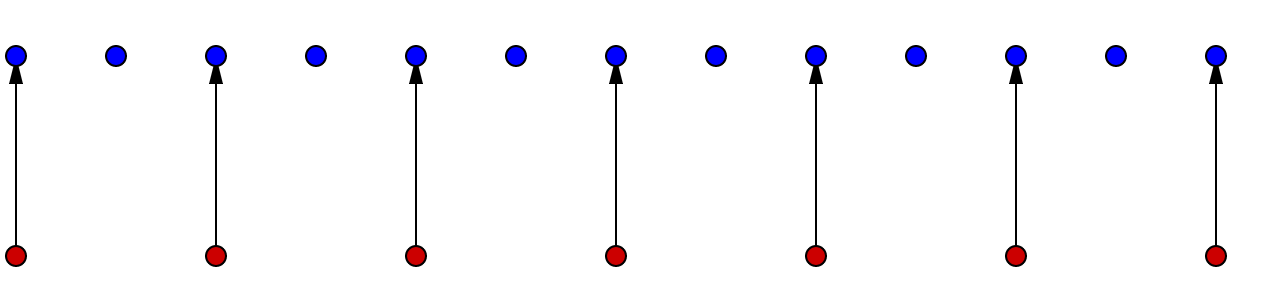}
\caption{A simple example of torsion, in this case the torsion group is $\mathbb{Z}_2$. The lattice consisting of the red circles, $L_1$, is embedded into the lattice consisting of the blue circles, $L_2$. The embedding is non-primitive, since every second lattice vector, $l_2$ say, is missed out and twice such a lattice vector results in an element of the embedded lattice. Hence, in the quotient $\frac{L_2}{L_1}$, $l_{2}^{2}=\mathds{1}$.}
\label{torsionpicture}
\end{figure}
\begin{figure}
\includegraphics[scale=0.7]{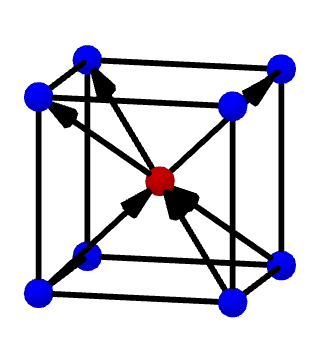}
\caption{Here we show torsion occuring from an embedding of the simple cubic lattice (blue) into the BCC lattice (blue and red). The vectors are possible torsion vectors, and the torsion group is once again $\mathbb{Z}_2$.The central point is a \textit{deep hole} - a point furthest away from any lattice point, of the simple cubic lattice. The $E_8$ lattice can be constructed analogously, by adding in such a central point to the $D_8$ lattice, whereupon we say that $D_8$ is \textit{self-glued} to form $E_8$.  A more involved construction allows one to recover all 23 Niemeier lattices from the deep holes of the Leech \cite[Chapter~23]{CS}.}
\label{BCC_torsion}
\end{figure}

Another useful construction is the \textit{orthogonal complement}, $M^{\perp}$, of a lattice $M$ embedded into another lattice $L$, defined as:
\begin{definition}[Orthogonal complement]
$$M^{\perp}\defeq \{l\in L | (l,m)=0\text{ }\forall m\in M\}$$
\end{definition}
\begin{remark}\label{rem:orthprimitive}
We note that the orthogonal complement of a sublattice $M$ of $L$ is \textit{always} primitive. A proof is had by contradiction: Assume that $M^{\perp}$ is not primitive. Then there is some rational $q$ such that for $m^{\perp}\in M^{\perp}$, $qm^{\perp}\in L\not\in M^{\perp}$. But by linearity $(m^{\perp},m)=0\implies(qm^{\perp},m)=0)$, so the previous two conditions cannot be simulatenously satisfied.
\end{remark}
For convenience, we also define something close to the opposite of a sub-lattice:
\begin{definition}[Overlattice]
A lattice $L$ is said to be an \textit{overlattice} of a lattice $M$ if $M$ is a sublattice of L with the index $[L:M]$ being finite. 
\end{definition}
\subsection{Discriminant Group and form}\hfill

 The canonical concern of classical lattice theory is to solve the following three problems:
\begin{enumerate}
	\item The classification of lattices.
	\item The existence of an embedding of one lattice into another.
	\item The uniqueness of a lattice embedding.
\end{enumerate}

These problems have essentially been solved for the case of even and odd \textit{unimodular} lattices \cite{Nikulin}. The \textit{discriminat group} and the \textit{discriminant form} are essential tools for strengthening these results to non-unimodular lattices. Since we will need the more general case to develop the Kneser-Nishiyama method, we introduce them now. 
	First, we require the definition of the \textit{dual lattice}:
\begin{definition}[Dual lattice]
A lattice $L^{*}$ is said to be \textit{dual} to a non-degenerate integral lattice $L$ of rank $r$ whose basis vectors span the Euclidean space $R^{r}$ \textbf{iff}
$$L^{*}=\{x\in\mathbb{R}^{r} | (x,y)\in\mathbb{Z}\text{ }\forall y\in L \}$$
A lattice is said to be \textit{self-dual} \textbf{iff} $L=L^{*}$.
\end{definition}
If $\{e_{1},\cdots,e_{n}\}$ is a basis for a lattice $L$, then a dual basis $\{e_{1}^{*},\cdots,e_{n}^{*}\}$ can be taken so as to satisfy $e_{i}^{*}\cdot e_{j}^{*}=\delta_{ij}$ since then $l\cdot l^{*}\in\mathbb{Z}$ for any $l\in L, l^{*}\in L^{*}$. Alternately put:
\begin{equation}
G^{*}G=\mathds{1}\iff (G^{*})^{-1}=G
\end{equation}
which in turn yields:
\begin{theorem}\label{thm: invdets}
\begin{equation}
\det(L)=\frac{1}{\det(L^{*})}\iff\mathrm{vol_{FR}}(L)=\frac{1}{\mathrm{vol_{FR}}(L^{*})}
\end{equation}
\end{theorem}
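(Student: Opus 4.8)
The plan is to prove $\det(L) = 1/\det(L^{*})$ by an explicit matrix computation starting from the relation between the Gramians of $L$ and $L^{*}$, and then to invoke the previously established link between the lattice determinant and the volume of the fundamental region (\autoref{thm: FR} and the subsequent theorem $|\sqrt{\det(L)}| = \mathrm{vol}_{FR}(L)$) to obtain the second equivalence.

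First I would fix a basis $\{e_1,\dots,e_n\}$ of $L$ with generating matrix $M$, so that $G = G^{L} = MM^{T}$. The key observation recorded just above the statement is that the dual basis $\{e_1^{*},\dots,e_n^{*}\}$ can be chosen so that $G^{*}G = \mathds{1}$, i.e. $G^{*} = G^{-1}$ where $G^{*} = G^{L^{*}}$ is the Gramian of the dual lattice. From this the determinant claim is immediate: taking determinants of $G^{*} = G^{-1}$ gives $\det(G^{*}) = 1/\det(G)$, and since by \autoref{defdet} $\det(L) = \det(G^{L})$ and $\det(L^{*}) = \det(G^{L^{*}})$, this is exactly $\det(L) = 1/\det(L^{*})$.

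Next I would deduce the volume statement. By the theorem preceding Definition~\ref{defdet}, $\mathrm{vol}_{FR}(L) = |\sqrt{\det(L)}|$ and likewise $\mathrm{vol}_{FR}(L^{*}) = |\sqrt{\det(L^{*})}|$. Multiplying these and using $\det(L)\det(L^{*}) = 1$ (note both determinants are positive, as the pairing restricted to the relevant positive-definite setting makes $G$ a Gram matrix of genuinely independent vectors, or more carefully: $\det(G) = \det(M)^2 > 0$) yields $\mathrm{vol}_{FR}(L)\,\mathrm{vol}_{FR}(L^{*}) = 1$, which is the second equivalence. The biconditional in the statement then holds because each side is obtained from the other purely by the volume--determinant dictionary, which is an equivalence.

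The main obstacle, such as it is, is really just bookkeeping about signs and the relationship between $M$, $M^{*}$ and the two Gramians: one must be careful that the dual generating matrix is $M^{*} = (M^{-1})^{T}$ (equivalently $M^{*}M^{T} = \mathds{1}$), so that $G^{L^{*}} = M^{*}(M^{*})^{T} = (M^{T}M)^{-1} = (G^{L})^{-1}$, and to note that taking $\sqrt{\cdot}$ and absolute values is harmless because the determinants in question are positive. There is no genuine difficulty here; the content is entirely in the identity $G^{L^{*}} = (G^{L})^{-1}$, which has already been set up in the text, so the proof is a two-line consequence plus an appeal to the earlier volume theorem.
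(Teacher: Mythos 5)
Your argument is correct and is essentially the paper's own: the paper derives the theorem directly from the dual-basis relation $G^{*}G=\mathds{1}$ stated immediately beforehand, taking determinants and then invoking the earlier result $|\sqrt{\det(L)}|=\mathrm{vol}_{FR}(L)$, exactly as you do. Your additional care with $M^{*}=(M^{-1})^{T}$ and positivity of the determinants is harmless bookkeeping that the paper leaves implicit.
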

We further have the result that \cite[Lemma~1.1]{Nishiyama}:
\begin{theorem}\label{thm: torsiondet}
For a lattice given as $L=M|K$, we have that $\det(L)=\frac{\det(M)}{|K|^2}$
\end{theorem}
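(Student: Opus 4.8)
The plan is to unwind the glue notation and reduce the claim to the classical fact that passing to a finite-index overlattice scales the determinant by the square of the index. Recall that $L = M \,|\, K$ denotes the overlattice of $M$ obtained by adjoining the glue group $K$, a subgroup of $M^{*}/M$; in particular $M \subseteq L \subseteq L^{*} \subseteq M^{*}$ and $[L:M] = |K|$. So it suffices to prove $\det(M) = [L:M]^{2}\det(L)$.

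First I would fix a square $r\times r$ generating matrix $B$ for $M$ and $C$ for $L$. Since $M$ is a sublattice of $L$, every basis vector of $M$ is an integral combination of basis vectors of $L$, so $B = AC$ for an integer matrix $A$; as $[L:M]$ is finite, $A$ is invertible over $\mathbb{Q}$. The one point deserving genuine care, rather than being mere bookkeeping, is the identity $|\det A| = [L:M] = |K|$: this follows from the Smith normal form of $A$, whose invariant factors simultaneously give the cyclic decomposition of $L/M$ (so that their product is $[L:M] = |K|$) and the absolute value of $\det A$. Granting this, and using that the determinant of a lattice is the determinant of its Gramian while the Gramian of a square generating matrix $B$ relative to the ambient form is $G = BJB^{T}$ (with $J$ the signature matrix, $J = \mathds{1}$ in the definite case and common to $M$ and $L$ since they span the same real quadratic space), one computes
\[
\det(M) = \det(B)^{2}\det(J) = \det(A)^{2}\det(C)^{2}\det(J) = |K|^{2}\det(L),
\]
because $\det(L) = \det(C)^{2}\det(J)$ and the signature factor cancels in the ratio. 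Rearranging gives $\det(L) = \det(M)/|K|^{2}$.

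A matrix-free variant, closer to the geometric material just developed, is also available: a fundamental region of the coarser lattice $M$ is tiled by exactly $[L:M] = |K|$ translates (by coset representatives of $M$ in $L$) of a fundamental region of $L$, so $\mathrm{vol}_{FR}(M) = |K|\,\mathrm{vol}_{FR}(L)$; squaring and invoking the theorem $|\sqrt{\det(L)}| = \mathrm{vol}_{FR}(L)$ established above yields the claim at once. In either route the only step that is not purely formal is pinning down $|\det A| = |K|$ (equivalently, the index count $[L:M] = |K|$), which is where I would concentrate the write-up.
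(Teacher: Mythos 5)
Your argument is correct. Note, however, that the paper does not prove this statement at all: it is quoted directly as Lemma~1.1 of the cited work of Nishiyama, so there is no in-paper proof to compare against, and your write-up supplies the missing justification. Your reduction is the standard one, and you correctly identify the only non-formal step: if $B=AC$ expresses a basis of $M$ in terms of a basis of the overlattice $L$, then $|\det A|=[L:M]=|K|$, which the Smith normal form of $A$ delivers because its invariant factors give both the cyclic decomposition of $L/M$ and $|\det A|$. Two small remarks. First, you can avoid the ambient signature matrix $J$ and the need for a square generating matrix entirely by working with Gramians: the change of basis gives $G_{M}=A\,G_{L}\,A^{T}$ directly, hence $\det(M)=\det(A)^{2}\det(L)=|K|^{2}\det(L)$, which is cleaner and covers the indefinite case (relevant here, since the paper applies the result to lattices such as the Niemeier lattices and their negative-definite relatives, where ``volume'' language is only figurative). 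Second, your matrix-free variant via fundamental regions, $\mathrm{vol}_{FR}(M)=|K|\,\mathrm{vol}_{FR}(L)$, is fine but only literally valid for definite lattices, so it should be presented as an illustration rather than the proof. With the Gramian formulation as the main line and the Smith-normal-form index count spelled out, your proof is complete.
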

We note that the number of cosets of $M$ in $L$ is exactly the order of the group $K$. In particular, if $L$ is the dual lattice :
\begin{corollary}\label{orderdet}
\begin{equation}
[L^{*}:L]=|\det(L)|
\end{equation}
\end{corollary}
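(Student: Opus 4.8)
For a non-degenerate integral lattice $L$, one has $[L^{*}:L]=|\det(L)|$.

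The plan is to obtain this immediately from Theorem \ref{thm: torsiondet} together with Theorem \ref{thm: invdets}, after checking that the hypotheses of Theorem \ref{thm: torsiondet} apply. First I would observe that $L$ is a sublattice of $L^{*}$: indeed $L$ is integral, so $(x,y)\in\mathbb{Z}$ for all $x,y\in L$, which by the definition of the dual lattice says exactly that $L\subseteq L^{*}$. Moreover $L$ and $L^{*}$ have the same rank $r$ (both span the same Euclidean space $\mathbb{R}^{r}$), so the quotient $L^{*}/L$ is a \emph{finite} abelian group; call it $K$, so that in the notation of Theorem \ref{thm: torsiondet} we may write $L^{*}=L\,|\,K$. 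The order $|K|$ is by definition the number of cosets of $L$ in $L^{*}$, i.e. $|K|=[L^{*}:L]$.

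Next I would apply Theorem \ref{thm: torsiondet} with $M=L$ (the overlattice here being $L^{*}$), which gives
\begin{equation}
\det(L^{*})=\frac{\det(L)}{[L^{*}:L]^{2}}.
\end{equation}
On the other hand, Theorem \ref{thm: invdets} gives $\det(L)\det(L^{*})=1$. Substituting,
\begin{equation}
1=\det(L)\cdot\frac{\det(L)}{[L^{*}:L]^{2}}=\frac{\det(L)^{2}}{[L^{*}:L]^{2}},
\end{equation}
hence $[L^{*}:L]^{2}=\det(L)^{2}$, and since an index is a non-negative integer, $[L^{*}:L]=|\det(L)|$.

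The only genuine subtlety — the step I would be most careful about — is the bookkeeping that lets us invoke Theorem \ref{thm: torsiondet} and Theorem \ref{thm: invdets} in the first place: one must know that $L^{*}$ really is a lattice of the same rank as $L$ (so that $\det(L^{*})$ is defined and $L^{*}/L$ is finite rather than having positive free rank), and that the index $[L^{*}:L]$ coincides with the group order $|K|$ appearing in Theorem \ref{thm: torsiondet}. Both are standard: the matrix identity $(G^{*})^{-1}=G$ recorded just above forces $\det(L^{*})\neq 0$ and $L^{*}$ to have full rank $r$, and the identification $[L^{*}:L]=|L^{*}/L|$ is just the definition of the index of a finite-index subgroup. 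Everything else is the short algebraic manipulation above.
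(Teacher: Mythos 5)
Your proof is correct and follows essentially the same route as the paper: apply Theorem \ref{thm: torsiondet} to the overlattice $L^{*}=L\,|\,K$ to get $\det(L^{*})=\det(L)/[L^{*}:L]^{2}$, then combine with Theorem \ref{thm: invdets} to conclude $[L^{*}:L]^{2}=\det(L)^{2}$. The extra checks you include (that $L\subseteq L^{*}$ has full rank and finite index) are sound bookkeeping that the paper leaves implicit.
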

\begin{proof}
We have that $\det(L^{*})=\frac{det(L)}{([L^{*}:L])^{2}}$ from Thoerem \ref{thm: torsiondet}, and combining this with Theorem \ref{thm: invdets} shows that $(\det(L))^{2}=([L^{*}:L])^2\implies\det(L)=[L^{*}:L]$
\end{proof}
We will later interpret $[L^{*}:L]$ as the order of the \textit{discriminant group}.\footnote{Note that since a lattice is an \textit{abelian} group the subgroups are normal, and so the index $[M:N]=|\frac{M}{N}|$.}
To develop an intuition for the dual lattice\footnote{An informative description of the dual lattice is given in \cite[Lecture~8]{Regevlattices}.}, consider the following argument. Assume that $L\neq L^{*}$. For example, suppose that $L$ is simply $\mathrm{span}_{\mathbb{Z}}(e_1+e_2)$ in $\mathbb{R}^2$. Such a lattice is clearly not equal to it's dual, which can be taken as $\mathrm{span}_{\mathbb{Z}}(e_1,e_2)$. This last lattice is `finer' than $L$ - it misses less points in the underlying space. Thus, the condition $L=L^{*}$ can be understood as a maximality condition. The self dual lattices are as `fine' as possible. \\
\indent What does it mean for a lattice to be fine? Recall that the fundamental region of a lattice is the smallest possible subvolume of the lattice which can tile the entire lattice with exactly one lattice point in each tile. We would expect therefore that if an integral lattice is as fine as possible, if it is self-dual that is, then it's fundamental region should be minimal. Since for an integral lattice the Gramian consists of integers, we must have that $| \mathrm{det}(L) | = | \mathrm{det}(G) |\geq1$. 
\begin{theorem}\label{thm: uni-s.d}
For an integral lattice $L$:
\begin{equation}
L=L^{*}\iff |\det(L)|=1
\end{equation}
\end{theorem}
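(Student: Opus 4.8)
The plan is to read this off almost immediately from Corollary~\ref{orderdet}. First I would note that integrality of $L$ forces $L\subseteq L^{*}$: the defining condition $(x,y)\in\mathbb{Z}$ for all $x,y\in L$ says precisely that every element of $L$ already satisfies the membership criterion for $L^{*}$. Since $L$ is non-degenerate, $L^{*}$ is genuinely a lattice whose $\mathbb{R}$-span is the same $\mathbb{R}^{r}$ as that of $L$, so the quotient $L^{*}/L$ is a finite abelian group and the index $[L^{*}:L]$ is a well-defined positive integer. Corollary~\ref{orderdet} gives $[L^{*}:L]=|\det(L)|$. Now $L=L^{*}$ holds if and only if $[L^{*}:L]=1$, which by the previous sentence is equivalent to $|\det(L)|=1$; this is exactly the claimed equivalence.

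As a self-contained alternative that does not invoke Corollary~\ref{orderdet}, I would argue directly with the Gramian. Let $M$ be a generating matrix of $L$ and let $M^{*}$ be the matrix whose rows are the dual basis vectors $e_{i}^{*}$, characterised by $e_{i}^{*}\cdot e_{j}=\delta_{ij}$, i.e. $M^{*}=(M^{T})^{-1}$. Then the inclusion $L\subseteq L^{*}$ is equivalent to $M=G\,M^{*}$ with $G=MM^{T}$ the Gramian, and $G$ has integer entries precisely when $L$ is integral; symmetrically, the reverse inclusion $L^{*}\subseteq L$ is equivalent to $M^{*}=G^{-1}M$, i.e. to $G^{-1}$ having integer entries. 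Hence for an integral lattice $L=L^{*}$ holds if and only if $G\in GL_{r}(\mathbb{Z})$, which (since $G$ and $G^{-1}$ are then both integral and $\det(G)\det(G^{-1})=1$) is equivalent to $\det(G)=\pm1$, that is $|\det(L)|=1$.

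There is essentially no real obstacle: both routes are short. The only point that deserves a moment's care is ensuring that $L^{*}$ is a lattice of the \emph{same} rank as $L$, so that $L^{*}/L$ is finite and the index (equivalently, Corollary~\ref{orderdet}) is applicable; this is guaranteed by the standing assumption that $L$ is non-degenerate. I would present the Corollary~\ref{orderdet} route as the main proof for brevity, since it reuses machinery already established in the excerpt, and perhaps mention the Gramian computation as a remark.
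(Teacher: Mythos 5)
Your argument is correct, but it takes a different route from the paper's own proof. The paper argues geometrically: it notes $L\subseteq L^{*}$, claims that a strict inclusion would force the fundamental region of $L^{*}$ to be strictly smaller than that of $L$ (so $\det(L)=\det(L^{*})$ iff $L=L^{*}$), and then invokes the reciprocity $\det(L)=1/\det(L^{*})$ of Theorem \ref{thm: invdets} to conclude $|\det(L)|^{2}=1$. You instead quote Corollary \ref{orderdet}, $[L^{*}:L]=|\det(L)|$, together with the elementary observation that a subgroup of finite index equals the whole group precisely when the index is $1$; since that corollary is established before the theorem (via Theorems \ref{thm: invdets} and \ref{thm: torsiondet}), there is no circularity, and your version replaces the paper's somewhat informal volume comparison with a crisp group-theoretic step. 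Your Gramian alternative ($L=L^{*}$ iff $G\in GL_{r}(\mathbb{Z})$ iff $\det G=\pm1$) is even more self-contained, not relying on the determinant reciprocity at all, though the phrasing ``the inclusion $L\subseteq L^{*}$ is equivalent to $M=GM^{*}$'' is slightly loose: that identity always holds, and the inclusion is really equivalent to the change-of-basis matrix $G$ being integral, as your next clause makes clear. What the paper's route buys is the geometric picture of self-dual lattices as ``maximally fine'' that it is developing for the reader; what yours buys is a shorter and more rigorous argument that reuses the index machinery already in place, and you also correctly isolate the one point needing care, namely that non-degeneracy makes $L^{*}/L$ finite so the index is meaningful.
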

\begin{proof}
Observe that $L\subset L^{*}$ always. This entails that $\det(L)\leq\det(L^{*})$, since $L^{*}$ must contain all points in $L$ and may contain more, allowing a smaller fundamental region. Moreover, if $L\neq L^{*}$, then the fundamental region of $L^{*}$ must be smaller than that of $L$. This is because the latter will tile the entire ambient space common to both with exactly one lattice point of $L$ in each motif, and by assumption $L^{*}$ has at least one more lattice point, which entails that there is some motif of the fundamental region containing more than one lattice point of $L^{*}$. So we have that: $\det(L)=\det(L^{*})\iff L=L^{*}$. From Theorem \ref{thm: invdets}, we have that $L=L^{*}\iff |\det(L)|^{2}=1\iff |\det(L)|=1$.




\end{proof}
The dual lattice allows us to give the definition of the \textit{discriminant group}:
\begin{definition}[Discriminant group]
For an even lattice $L$, the \textit{discriminant group} is the quotient:
$$G_L\defeq\frac{L^{*}}{L}$$
\end{definition}
A very simple calculation of the discriminant group is afforded by $A_1$. $A_1$ can be taken as $\mathrm{span}_{\mathbb{Z}}\{(1,-1)\}$. The dual is then just $\frac{1}{2}(1,-1)$. Notice that if $\{\vec{b_1}\hdots,\vec{b_r}\}$ is basis for $L$,  $L^{*}$ must be a subset of the Euclidean space $\mathrm{span}_{\mathbb{R}}\{\vec{b_1}\hdots,\vec{b_r}\}$, so that we cannot, for example, have $(0,1)\in A_{1}^{*}$.
\indent A more involved computation can be carried out for $E_7$. For convenience we work in $R^{8}$ so that a generating matrix for $E_7$ is that in Table \ref{Latticetable}. We now need to find a set of vectors that give integer inner form with all of the basis vectors, lie in the subspace of $R^{8}$ spanned by the basis vectors of $E_7$, and are not in $E_7$. There is only one such vector and it is: $v=\begin{psm}(\frac{1}{4})^6,((-\frac{3}{4})^{2})\end{psm}$. To see that only one (non-unique) vector may be chosen, observe that $\det(E_7)=2$, and use the fact that the order of the quotient group $\frac{L^{*}}{L}$ is the lattice determinant (Corollary \ref{orderdet}).  Since, the quotient group $G_{E_7}=\frac{E_{7}^{*}}{E_7}$ always contains the identity element, there is only one other element. Hence, we have that $(E_7)^{*}$ is generated by the overcomplete\footnote{Over $\mathbb{R}^n$, but not over $\mathbb{Z}^n$.} basis given by $M_{E_7}+\mathrm{span}_{\mathbb{Z}}\{v\}$.
\indent The discriminant form is then a quadratic form on the discriminant group:
\begin{definition}[Discriminant form]
For a lattice $L$ with lattice points $l\in L$ and discriminant group $G_L$, the discriminant form, $q_L$, is defined as:
\begin{equation}
\begin{aligned}
& q_L : G_L\rightarrow\frac{\mathbb{Q}}{2\mathbb{Z}}\\
& q(x+l)\equiv q(x)\mod 2\mathbb{Z}
\end{aligned}
\end{equation}
which maps:
\begin{equation}
l\rightarrow l^2
\end{equation}
\end{definition}
The discriminant form acts as a natural `inner form' on the discriminant group induced by the inner form of $L^{*}$. To see this, consider some $l^{*}\in L^{*}$ and some $l\in L$. This $l^{*}$ is a  \textit{representative} of the equivalence class $[l^{*}]=  \{l^{*} + l | l\in L\}$ in the discriminant group.  $L^{*}$ contains an inner form, so we may take $q=(l^{*},l^{*})\in\mathbb{Q}$. We must have:
\begin{equation}
q(l^{*})\equiv q'(l^{*}+l)=(l^{*},l^{*})+2(l^{*},l)+(l,l))=q+Z,\text{ with }Z\in 2\mathbb{Z}
\end{equation}
where the last equality follows from the definition of the dual lattice. Thus, introducing the mod $2\mathbb{Z}$ equivalence as required by our definition, we see that we can calculate the discriminant form for a lattice $L$ if we know $L^{*}$. We can use the discriminant form to introduce a bilinear form between two elements of $G_L$ as:
\begin{definition}[Bilinear form]
The \textbf{bilinear form} on the discriminant group is a map $b(x,y): G_L\times G_L\rightarrow \frac{\mathbb{Q}}{\mathbb{Z}}$ defined by:
\begin{equation}
b(x,y)=\frac{1}{2}(q(x+y,x+y)-q(x,x)-q(y,y))\mod\mathbb{Z}
\end{equation}
\end{definition}
Where the definition is mod $\mathbb{Z}$ for the same reason as the discriminant form.
\subsection{Classification theorems for non-unimodular lattices}\hfill

The discriminant group and the discriminant form are at the heart of Nikulin's seminal work\cite{Nikulin}\footnote{Although for the readers convenience we mostly follow the notation of \cite{1312}}. In this section we will give the four results from that paper which are essential to the Kneser-Nishiyama method. 


\begin{proposition} (\cite[Proposition~1.4.1]{Nikulin})]\label{Nik1}
Let $M$ be an even lattice, and let $H$ be a subgroup of $G_M$. An even sublattice, $(M;H)$ of $M^{*}$ which itself has $M$ as a sublattice:
\begin{equation}
(M;H)\defeq \{m^{*}\in M^{*} | (m^{*}\text{ mod }M)\in H\}
\end{equation}
can be constructed from $H$ \textbf{iff} there is a $H$ such that $q_{M\restriction H}=0$. 
\end{proposition}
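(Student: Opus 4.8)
The plan is to characterise exactly when the set
\[
(M;H) = \{m^{*}\in M^{*} \mid (m^{*}\bmod M)\in H\}
\]
is an \emph{even} lattice, and show this happens precisely when the discriminant form $q_M$ restricted to $H$ vanishes. First I would check that $(M;H)$ is always a sublattice of $M^{*}$ lying between $M$ and $M^{*}$: it is closed under addition because $H$ is a subgroup of $G_M=M^{*}/M$ (the reduction map $M^{*}\to G_M$ is a homomorphism, so the preimage of a subgroup is a subgroup), it is finitely generated and torsion-free as a subgroup of $M^{*}\cong\mathbb{Z}^{r}$, and it inherits the symmetric pairing of $M^{*}$ (equivalently, of $M\otimes\mathbb{R}$). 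Since $M\subseteq (M;H)\subseteq M^{*}$, the pairing on $(M;H)$ takes values in $\tfrac12\mathbb{Z}$ at worst; the content of the proposition is deciding when it is integral and, in fact, even.

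The key computation is the following. Take $x,y\in (M;H)$, with images $\bar x,\bar y\in H\subseteq G_M$. I would expand
\[
(x+y,x+y) = (x,x) + 2(x,y) + (y,y).
\]
By definition of the discriminant form, $(x,x)\equiv q_M(\bar x)\bmod 2\mathbb{Z}$ and similarly for $y$ and for $x+y$. Hence
\[
2(x,y) \equiv q_M(\bar x+\bar y) - q_M(\bar x) - q_M(\bar y) \pmod{2\mathbb{Z}},
\]
so $(x,y)\equiv b_M(\bar x,\bar y)\bmod\mathbb{Z}$, where $b_M$ is the associated bilinear form. Therefore $(M;H)$ is integral iff $b_M$ vanishes on $H\times H$, and $(M;H)$ is even iff additionally $q_M(\bar x)\equiv 0\bmod 2\mathbb{Z}$ for all $\bar x\in H$, i.e. $q_M|_H=0$. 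The remaining point is that $q_M|_H=0$ already forces $b_M|_{H\times H}=0$: this is immediate from the displayed identity for $2(x,y)$, since all three terms on the right vanish mod $2\mathbb{Z}$, giving $(x,y)\in\mathbb{Z}$. So the single hypothesis $q_M|_H=0$ delivers both integrality and evenness, and conversely evenness of $(M;H)$ forces every element of $H$ to have vanishing $q_M$ (pick $x\in M^{*}$ mapping to a given $\bar x\in H$; then $(x,x)\in 2\mathbb{Z}$ says exactly $q_M(\bar x)=0$).

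For the converse direction I would argue: if $(M;H)$ is an even lattice, then for each $h\in H$ choose a representative $m^{*}\in M^{*}$ with $m^{*}\bmod M = h$; by construction $m^{*}\in(M;H)$, so $(m^{*},m^{*})\in 2\mathbb{Z}$, which by the definition of $q_M$ means $q_M(h)=0$ in $\mathbb{Q}/2\mathbb{Z}$. Since $h$ was arbitrary, $q_{M\restriction H}=0$. Finally I would note that $M$ is automatically a sublattice of $(M;H)$ (the zero coset lies in any subgroup $H$), and that $(M;H)$ is even \emph{containing} $M$ as claimed, completing the equivalence.

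\textbf{Main obstacle.} There is no deep obstacle here; the proposition is essentially a bookkeeping exercise with the definitions of $M^{*}$, $G_M$, $q_M$ and $b_M$. The one place to be careful is the modular arithmetic — keeping straight that $q_M$ is valued in $\mathbb{Q}/2\mathbb{Z}$ while $b_M$ is valued in $\mathbb{Q}/\mathbb{Z}$, and checking that the vanishing of $q_M|_H$ is genuinely stronger than (and implies) the vanishing of $b_M|_{H\times H}$, so that no separate integrality hypothesis is needed. I would also make explicit that well-definedness of $q_M$ on cosets (already established in the text preceding the proposition) is what licenses replacing $(x,x)$ by $q_M(\bar x)\bmod 2\mathbb{Z}$ independently of the chosen representative.
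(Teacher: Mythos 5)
Your argument is correct: the coset-independence of $q_M$ (which rests on $M$ being even and $(M,M^{*})\subseteq\mathbb{Z}$), the identity $2(x,y)\equiv q_M(\bar x+\bar y)-q_M(\bar x)-q_M(\bar y)\bmod 2\mathbb{Z}$ forcing integrality from $q_{M\restriction H}=0$ alone, and the converse by picking representatives of each $h\in H$ are exactly the standard (Nikulin) proof of this statement. Note that the paper itself supplies no proof to compare against --- it quotes the result from Nikulin's Proposition 1.4.1 and only remarks that $M\subseteq(M;H)$ because the zero coset lies in $H$ --- so your write-up in fact fills in more than the text does, and it does so correctly, including the point that vanishing of $q_M$ on the subgroup $H$ (closure under addition being needed here) already yields vanishing of $b_M$ on $H\times H$, so no separate integrality hypothesis is required.
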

We will mostly use the ``$\impliedby$'' direction of this theorem to introduce just such an even lattice $(M;H)$. Notice that if we let $m_{1}^{*}\in M$, then $m_{1}^{*}\text{ mod }M=\mathds{1}\in H$ always, so we have $M\subseteq (M;H)$ as required. 


\begin{proposition} \cite[Proposition~1.6.1]{Nikulin}\label{Nik2}
Let $M$ and $N$ be even lattices. There exists an isomorphism $\gamma : G_M\cong G_N$ with $q_M=-q_N\circ\gamma$ \textbf{iff} $M$ is a primitive sublattice of an even unimodular lattice $L$ with $N$ given by $N=M^{\perp}\subset L$. 
\end{proposition}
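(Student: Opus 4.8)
The plan is to prove both directions of the biconditional, constructing the required objects explicitly in each case. For the ``$\Longleftarrow$'' direction: suppose $M$ is a primitive sublattice of an even unimodular lattice $L$ with $N = M^\perp \subset L$. Since $L$ is unimodular, $L = L^*$, and the primitivity of $M$ ensures that $N$ is also primitive (cf. Remark~\ref{rem:orthprimitive}) and that $M \oplus N$ has finite index in $L$. First I would show that the natural projections $L \to M^*$ and $L \to N^*$ (dual to the inclusions) induce maps whose restrictions to $L/(M\oplus N)$ give isomorphisms onto $G_M = M^*/M$ and $G_N = N^*/N$ respectively; this uses unimodularity of $L$ to identify $L$ with the set of pairs $(m^*, n^*) \in M^* \oplus N^*$ that are ``compatible'' modulo $M \oplus N$. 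Composing these two isomorphisms yields $\gamma : G_M \cong G_N$. The key computational point is that for $x \in L$ projecting to $(m^*, n^*)$, one has $x^2 = (m^*)^2 + (n^*)^2 \in 2\mathbb{Z}$ since $L$ is even, which forces $q_M(m^* \bmod M) + q_N(n^* \bmod N) \equiv 0 \pmod{2\mathbb{Z}}$, i.e. $q_M = -q_N \circ \gamma$.

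For the ``$\Longrightarrow$'' direction: given even lattices $M$, $N$ and an isomorphism $\gamma : G_M \cong G_N$ with $q_M = -q_N \circ \gamma$, I would build $L$ as an overlattice of $M \oplus N$. Consider the subgroup $\Gamma_\gamma \subset G_M \oplus G_N = G_{M \oplus N}$ given by the graph $\{(x, \gamma(x)) : x \in G_M\}$. The condition $q_M = -q_N \circ \gamma$ says precisely that the discriminant form $q_{M\oplus N} = q_M \oplus q_N$ vanishes on $\Gamma_\gamma$, so by Proposition~\ref{Nik1} we obtain an even lattice $L := (M\oplus N; \Gamma_\gamma)$ with $M \oplus N \subseteq L \subseteq (M\oplus N)^*$. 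Then I would verify: (i) $L$ is unimodular, by computing $\det L = \det(M\oplus N)/|\Gamma_\gamma|^2 = (\det M \cdot \det N)/|G_M|^2 = \pm 1$ using Theorem~\ref{thm: torsiondet} together with $|\Gamma_\gamma| = |G_M| = |G_N|$ and $|G_M| = |\det M|$, $|G_N| = |\det N|$ from Corollary~\ref{orderdet}, plus $|\det M| = |\det N|$ since $\gamma$ is an isomorphism; (ii) $M$ is primitive in $L$, because an element of $(M\otimes\mathbb{R})\cap L$ projects to $0$ in $G_N$ hence to $0$ in $G_M$ under $\gamma$, so lies in $M$; (iii) $N = M^\perp$ in $L$, where ``$\supseteq$'' is immediate and ``$\subseteq$'' follows since $M^\perp$ is primitive and $M^\perp \otimes \mathbb{R} = N \otimes \mathbb{R}$, so $M^\perp = (N\otimes\mathbb{R})\cap L$, which reduces to showing nothing outside $N$ in the $N$-direction is orthogonal to all of $M$ — this uses that $M$ and $N$ span complementary subspaces and the gluing is along the graph.

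The main obstacle I anticipate is making the identification of $L$ with compatible pairs in $M^* \oplus N^*$ fully rigorous in the forward direction — specifically, showing that the projection $L/(M\oplus N) \to G_M$ is not just well-defined but bijective, which requires both that distinct cosets map to distinct elements (injectivity, from primitivity of $M$ and $N$ and a rank/determinant count) and surjectivity. A clean way to handle the counting is: primitivity of both $M$ and $N$ gives $[L : M\oplus N]^2 = \det(M\oplus N)/\det L$, and unimodularity forces this to equal $|G_M|\cdot|G_N| = |G_M|^2$, so $[L:M\oplus N] = |G_M|$, matching the size of the image; combined with injectivity this gives the isomorphism. I would also need to check that $\gamma$ so obtained is a genuine group isomorphism (clear) and that the sign in $q_M = -q_N\circ\gamma$ comes out correctly (it does, precisely because evenness of $L$ gives a vanishing \emph{sum} of the two discriminant-form contributions). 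The rest is routine bookkeeping with determinants and the results already established in the excerpt.
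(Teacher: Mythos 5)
Your two-directional argument is correct in outline, and it does strictly more than the text does: the paper never proves this proposition at all, it quotes it from Nikulin, and the only argument supplied in the text is the proof of the Observation that follows it, namely that the graph $H_{M\times N}$ of $\gamma$ is isotropic (because $q_M=-q_N\circ\gamma$) and that the overlattice $L=(M\oplus N);H_{M\times N}$ furnished by Proposition~\ref{Nik1} is unimodular, via $\det L=\det M\det N/\lvert H_{M\times N}\rvert^{2}$ from Theorem~\ref{thm: torsiondet} and Corollary~\ref{orderdet}. Your ``$\Longrightarrow$'' construction coincides with that Observation, and you then complete what the paper leaves to the citation: the verification that $M$ is primitive in $L$ (an element of $(M\otimes\mathbb{R})\cap L$ has trivial $N^{*}$-component, so its class in $G_N$, hence in $G_M$, vanishes) and that $N=M^{\perp}\cap L$ (an element of $L$ orthogonal to all of $M$ has vanishing $M^{*}$-component by non-degeneracy, and the graph condition then puts its $N^{*}$-component in $N$). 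You also supply the entire ``$\Longleftarrow$'' direction --- splitting $L\subset M^{*}\oplus N^{*}$, showing the induced maps $L/(M\oplus N)\to G_M$ and $L/(M\oplus N)\to G_N$ are bijective, and reading off $q_M=-q_N\circ\gamma$ from evenness of $L$ --- which does not appear in the paper in any form; this is essentially Nikulin's own route, correctly reproduced.

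One small repair in the counting step: you write $[L:M\oplus N]^{2}=\lvert G_M\rvert\,\lvert G_N\rvert=\lvert G_M\rvert^{2}$, but $\lvert G_M\rvert=\lvert G_N\rvert$ is not yet available at that point (it is part of the conclusion, or else Theorem~\ref{rem:dets}, whose usual proof rests on this very proposition, so invoking it would be circular). You do not need it: injectivity of the two projections --- which uses $L\cap(M\otimes\mathbb{R})=M$ and $L\cap(N\otimes\mathbb{R})=N$, the latter automatic for $N=M^{\perp}$ by Remark~\ref{rem:orthprimitive} --- gives $[L:M\oplus N]\le\lvert G_M\rvert$ and $[L:M\oplus N]\le\lvert G_N\rvert$, while Theorem~\ref{thm: torsiondet} with $\lvert\det L\rvert=1$ gives $[L:M\oplus N]^{2}=\lvert G_M\rvert\,\lvert G_N\rvert$; these three relations force $[L:M\oplus N]=\lvert G_M\rvert=\lvert G_N\rvert$, so both projections are bijective and $\gamma$ is obtained by composing them, with the sign in $q_M=-q_N\circ\gamma$ coming from $x^{2}\in 2\mathbb{Z}$ exactly as you say.
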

\begin{observation}
If $M$ and $N$ have discriminant groups related by the isomorphism $\gamma$, then:
$$H_{M\times N}\defeq\{(m',\gamma(m'))\in G_M\times G_N | \forall m'\in G_M\}$$
is an isotropic subgroup of $G_M\times G_N$, which appears in the construction of the even unimodular lattice  $L=(M\oplus N);H_{M\times N}$.
\end{observation}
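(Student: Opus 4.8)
The plan is to establish the two claims in the Observation separately: (a) that $H_{M\times N}$ is an isotropic subgroup of $G_M\times G_N$, and (b) that the lattice $L=(M\oplus N;H_{M\times N})$ produced from it via Proposition~\ref{Nik1} is even and unimodular.

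For (a), I would first note that $H_{M\times N}$ is simply the graph of the group isomorphism $\gamma\colon G_M\to G_N$, and the graph of any group homomorphism is a subgroup of the product; moreover it has order $|G_M|$. To even speak of isotropy I need the discriminant data of $M\oplus N$: since $(M\oplus N)^{*}=M^{*}\oplus N^{*}$, one gets $G_{M\oplus N}\cong G_M\oplus G_N$ with discriminant form the orthogonal sum $q_{M\oplus N}=q_M\oplus q_N$. Then for a typical element $(m',\gamma(m'))\in H_{M\times N}$,
$$q_{M\oplus N}(m',\gamma(m'))=q_M(m')+q_N(\gamma(m'))=q_M(m')+(q_N\circ\gamma)(m'),$$
which is $q_M(m')-q_M(m')=0$ by the hypothesis $q_M=-q_N\circ\gamma$ (the condition in Proposition~\ref{Nik2}). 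Hence $q_{M\oplus N}$ vanishes on all of $H_{M\times N}$, i.e. it is isotropic.

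For (b), since $q_{M\oplus N}$ vanishes on $H_{M\times N}$, Proposition~\ref{Nik1} — applied with its ``$M$'' taken to be $M\oplus N$ and its ``$H$'' taken to be $H_{M\times N}$ — produces an even lattice $L\defeq(M\oplus N;H_{M\times N})$ with $M\oplus N\subseteq L\subseteq(M\oplus N)^{*}$ and $L/(M\oplus N)\cong H_{M\times N}$; in particular $[L:M\oplus N]=|H_{M\times N}|=|G_M|$. To see $L$ is unimodular I would count determinants. As $\gamma$ is an isomorphism, $|G_M|=|G_N|$, and Corollary~\ref{orderdet} turns this into $|\det M|=|G_M|$ and $|\det N|=|G_N|=|G_M|$. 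Then Theorem~\ref{thm: torsiondet}, applied to $L$ regarded as an overlattice of $M\oplus N$ with quotient $H_{M\times N}$, gives
$$|\det L|=\frac{|\det(M\oplus N)|}{[L:M\oplus N]^{2}}=\frac{|\det M|\,|\det N|}{|G_M|^{2}}=\frac{|G_M|^{2}}{|G_M|^{2}}=1.$$
Since $L$ is even it is in particular integral, so Theorem~\ref{thm: uni-s.d} gives $L=L^{*}$; together with evenness this is precisely what it means for $L$ to be an even unimodular lattice, and it was built from the isotropic subgroup $H_{M\times N}$, as claimed.

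The hard part here is bookkeeping rather than genuine difficulty: one must record explicitly that the discriminant group and form of an orthogonal direct sum are the orthogonal direct sums of the pieces (so that $q_{M\oplus N}=q_M\oplus q_N$), and keep straight which direction of Nikulin's Propositions~\ref{Nik1} and~\ref{Nik2} is being invoked and with which lattices. It is worth remarking that $|H_{M\times N}|^{2}=|G_M|\,|G_N|=|G_{M\oplus N}|$, so $H_{M\times N}$ is actually a maximal isotropic (``Lagrangian'') subgroup of the discriminant group — this is the conceptual reason the determinant count collapses to $1$ and $L$ comes out unimodular.
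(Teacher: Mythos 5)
Your proposal is correct and follows essentially the same route as the paper: isotropy of $H_{M\times N}$ from $q(m')+q(\gamma(m'))=q(m')-q(m')=0$ using $q_M=-q_N\circ\gamma$, and unimodularity via Theorem~\ref{thm: torsiondet} together with Corollary~\ref{orderdet} to get $\det(L)=\det(M)\det(N)/|H_{M\times N}|^{2}=1$. Your extra bookkeeping (identifying $q_{M\oplus N}=q_M\oplus q_N$ explicitly, and noting that $H_{M\times N}$ is maximal isotropic) is a welcome clarification but not a different argument.
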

\begin{proof}
That $H_{M\times N}$ is isotropic follows from the fact that for $h\in H_{M\times N}$, $q(h)=q((m,n))=q(m)+q(\gamma(m))=q(m)-q(m)=0$. From Theorem \ref{thm: torsiondet}, $\det(L)=\frac{\det(M)\det(N)}{|H_{M\times N}|^{2}}$. But, $G_M\cong G_N\implies |H_{M\times N}|=|G_{M}|=|G_{N}|$. From Corollary \ref{orderdet} $\det(M)=|G_M|$, for general $G_M$. Hence, $\det(L)=1$. 
\end{proof}
Thus we can use Propositions \ref{Nik1} and \ref{Nik2} in conjunction. The latter states necessary and sufficient conditions for the existence of an isotropic subgroup $H_{M\times N}$ of two even lattices $M$ and $N$, and the former uses this to introduces an even unimodular lattice $L=(M\oplus N);H_{M\times N}$.\\

	Before stating the final two theorems, we require the definition of the \textit{signature} of a lattice:
\begin{definition}[Signature of a lattice]
Let $L$ be a non-degenerate lattice with symmetric pairing $(\cdot,\cdot)$. Then, if $r_+$ and $r_-$ are, respectively, the number of positive and negative eigenvalues, then the \textit{signature} of $L$ is the pair $(r_+,r_-)$. A common notation is to write a lattice $L$ of signature $(r_+,r_-)$ as $L^{(r_+,r_-)}$.
\end{definition}

Nikulin also showed that (\cite[Theorem 1.12.4]{Nikulin}):
\begin{theorem}\label{Nik3}
A sufficient condition for the existence of a primitive embedding of an even lattice M of signature $(m_+,m_-)$ into \textit{some} even unimodular lattice $L$ of signature $(l_+,l_-)$, $M\xhookrightarrow{\phi}L$ is: $\rank(M)\leq\frac{\rank(L)}{2}$ and $l_+=l_- \mod 8$ 
\end{theorem}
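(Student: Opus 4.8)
The plan is to turn the desired primitive embedding into a problem about \emph{realising a prescribed discriminant form}, which is precisely what the gluing machinery of Propositions~\ref{Nik1} and~\ref{Nik2} is designed to handle, and then to invoke Nikulin's existence theorem for even lattices with a given signature and discriminant form. First I would reduce: by Proposition~\ref{Nik2} together with the Observation following it, if there is an even lattice $N$ admitting an isomorphism $\gamma\colon G_M\cong G_N$ with $q_N=-q_M\circ\gamma^{-1}$, then $L:=(M\oplus N);H_{M\times N}$ is even unimodular, $M$ sits primitively inside it, and $N=M^{\perp}$. Matching signatures forces $N$ to have signature $(l_+-m_+,\,l_--m_-)$; in particular $l_+\ge m_+$ and $l_-\ge m_-$, which I take as part of the hypothesis (a sublattice of signature $(m_+,m_-)$ cannot embed into $L$ otherwise). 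So the theorem reduces to the single claim: \emph{there exists an even lattice $N$ of signature $(l_+-m_+,\,l_--m_-)$ with discriminant form $-q_M$.}

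To produce $N$ I would cite Nikulin's realisation theorem (\cite[Theorem~1.10.1]{Nikulin}): an even lattice with signature $(n_+,n_-)$ and finite quadratic form $q$ exists provided $n_\pm\ge 0$, the signature congruence $n_+-n_-\equiv\operatorname{sign}(q)\pmod 8$ holds, and the length bound $n_++n_-\ge\ell(G_q)$ holds (with local conditions at each prime that become automatic once there is a little room in the length bound). Take $q=-q_M$ and $n_\pm=l_\pm-m_\pm$. Nonnegativity is the observation above. For the congruence, the Milgram signature formula gives $\operatorname{sign}(q_M)\equiv m_+-m_-\pmod 8$, hence $\operatorname{sign}(-q_M)\equiv -(m_+-m_-)$, while $n_+-n_-=(l_+-l_-)-(m_+-m_-)\equiv -(m_+-m_-)\pmod 8$ exactly because $l_+\equiv l_-\pmod 8$; the two agree. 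For the length bound, $G_M=M^{*}/M$ is a quotient of the free group $M^{*}\cong\mathbb{Z}^{\operatorname{rank}(M)}$, so $\ell(G_M)\le\operatorname{rank}(M)$, and then $n_++n_-=\operatorname{rank}(L)-\operatorname{rank}(M)\ge\operatorname{rank}(M)\ge\ell(G_M)$ using the hypothesis $\operatorname{rank}(M)\le\tfrac12\operatorname{rank}(L)$. This yields $N$, and the reduction of the first paragraph then closes the argument.

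The genuine content sits entirely in the second step, i.e.\ in Nikulin's realisation theorem, whose proof rests on the classification of quadratic forms over the $p$-adic integers and a Hasse-type local–global principle; I would treat it as a cited black box. The one place needing care \emph{inside our argument} is the extremal case $\operatorname{rank}(M)=\tfrac12\operatorname{rank}(L)$ with $\ell(G_M)=\operatorname{rank}(M)$: then $N$ has rank exactly equal to the length of its discriminant form, the length inequality is tight, and the $p$-local conditions in \cite[Theorem~1.10.1]{Nikulin} must be verified directly rather than invoked as automatic. They do hold, because $-q_M$ is itself the discriminant form of an honest even lattice of rank $\ell(G_M)$ (obtained from $M$ after discarding a hyperbolic summand, or from the reversed form), so its $p$-adic invariants are exactly of the type a lattice of that rank can carry. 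This compatibility check is the step I would write out in full; the two congruence verifications and the generator count are routine.
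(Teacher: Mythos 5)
The paper never proves this statement: Theorem~\ref{Nik3} is quoted verbatim from Nikulin as \cite[Theorem~1.12.4]{Nikulin} and used as a black box, so there is no internal proof to compare against. What you have written is, in substance, Nikulin's own derivation of 1.12.4: build the prospective orthogonal complement $N$ with signature $(l_+-m_+,\,l_--m_-)$ and discriminant form $-q_M$, glue $M$ and $N$ along the graph of $\gamma$ as in Proposition~\ref{Nik1} and Proposition~\ref{Nik2}, and get existence of $N$ from Nikulin's realisation theorem (his Theorem~1.10.1), with the signature congruence supplied by Milgram and the length bound by $\ell(G_M)\le\operatorname{rank}(M)\le\operatorname{rank}(L)-\operatorname{rank}(M)$. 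Those checks are correct. You are also right to add the hypothesis $l_\pm\ge m_\pm$: the statement as printed in the paper omits it and is false without it (a positive definite $M$ cannot embed into a negative definite unimodular $L$ even though the rank and mod-8 conditions can hold), and Nikulin's original theorem does include it. So your route is not ``different'' from the paper's so much as it is the proof the paper chose not to give, resting on 1.10.1, which is of comparable depth to the quoted result.

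The one place where your argument is not yet complete is exactly the step you flag: the extremal case $\operatorname{rank}(M)=\tfrac12\operatorname{rank}(L)$ with $\ell(G_M)=\operatorname{rank}(M)$. Your justification --- that $-q_M$ is the discriminant form of some even lattice of rank $\ell(G_M)$ --- is not by itself sufficient, because the $p$-adic conditions in Theorem~1.10.1 when the length bound is tight refer not only to the form $q$ but to the determinant $(-1)^{n_-}\lvert G_M\rvert$ of the lattice being sought, hence to the target signature $(n_+,n_-)$, which need not be the signature $(m_-,m_+)$ of the witness $M(-1)$. The gap closes, but you must say why: in the extremal case $n_++n_-=m_++m_-$, and the mod-8 congruence $n_+-n_-\equiv m_--m_+\pmod 8$ then gives $n_-\equiv m_+\pmod 4$, so $(-1)^{n_-}\lvert G_M\rvert=(-1)^{m_+}\lvert G_M\rvert=\det\bigl(M(-1)\bigr)$; thus $M(-1)$ is a rank-$\ell(G_M)$ even lattice with exactly the discriminant form \emph{and} determinant class demanded by the local conditions, which are therefore satisfied at every prime. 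With that computation written out, your reduction plus the cited Theorem~1.10.1 constitutes a complete proof of Theorem~\ref{Nik3}.
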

This theorem establishes only that there is a primitive embedding into \textit{some} even unimodular lattice. We will be considering this for the rank 24 even unimodular lattices. There are 24 of these, and an embedding is only certain to exist into one of them. This theorem can be strenghtened to show (\cite[Theorem 1.14.4]{Nikulin}):
\begin{theorem} \label{Nik4}
A sufficient condition for the existence \textit{and} uniqueness of a primitive embedding of an even integral lattice $M$ into an even unimodular lattice $L$ is that: $l_{+}>M_{+}$, $l_{-}>m_{-}$ and that the least possible number of generators of the discriminant group, $N(G_M)\leq rank(L)-rank(M)-2$.
\end{theorem}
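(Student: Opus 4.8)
The plan is to reduce the assertion to two purely arithmetic statements about the orthogonal complement $N = M^{\perp} \subset L$ — that a lattice with the forced invariants exists, and that it is essentially unique — and then to feed the numerical hypothesis into the strong-approximation theory of indefinite quadratic forms. \textbf{Reduction to the orthogonal complement.} By Proposition~\ref{Nik2}, together with Proposition~\ref{Nik1} and the observation following it, specifying a primitive embedding $M \hookrightarrow L$ into an even unimodular lattice of signature $(l_+,l_-)$ amounts to specifying an even lattice $N$ with $\operatorname{sign}(N) = (l_+ - m_+,\, l_- - m_-)$ and an anti-isometry $\gamma\colon q_M \xrightarrow{\sim} -q_N$ of discriminant forms; then $L$ is recovered as $(M\oplus N);H_\gamma$ with $H_\gamma$ the graph of $\gamma$, and the ambient $L$ is forced up to isometry because an indefinite even unimodular lattice is determined by its signature. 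Two such data $(N,\gamma)$ yield $O(L)$-equivalent embeddings as soon as $N$ is replaced by an isometric lattice and $\gamma$ is altered by an element of $O(N)$. Hence the theorem follows from: (i) the genus of even lattices of signature $(t_+,t_-):=(l_+-m_+,\,l_--m_-)$ with discriminant form $-q_M$ is non-empty; (ii) that genus contains a single isometry class; and (iii) the canonical map $O(N)\to O(q_N)$ is surjective, so the choice of $\gamma$ does not matter. Note that $l_\pm > m_\pm$ forces $t_\pm \ge 1$, so every such $N$ is indefinite, and $t_+ + t_- = \operatorname{rank}(L) - \operatorname{rank}(M) \ge N(G_M) + 2 = \ell(G_N) + 2$, where $\ell(G_N)$ is the minimal number of generators of $G_N \cong G_M$.

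\textbf{Existence, uniqueness in genus, and surjectivity.} For (i) I would invoke Nikulin's existence criterion for even lattices with prescribed signature and discriminant form: the obstructions are local (at each prime, plus the signature at $\infty$), and the slack $\operatorname{rank}(N) \ge \ell(G_N) + 2$ guarantees that the $p$-adic genus symbols can always be completed — informally, there is always room to absorb the local data into additional hyperbolic summands. For (ii) and (iii) I would appeal to the Eichler--Kneser theory: an even \emph{indefinite} lattice $N$ with $\operatorname{rank}(N) \ge \ell(G_N) + 2$ is unique in its genus, and the natural homomorphism $O(N) \to O(q_N)$ is onto. Both facts rest on strong approximation for the spin group together with Eichler's theorem that an indefinite form of rank $\ge 3$ has a single class in its spinor genus; the inequality between rank and number of generators is precisely what makes the relevant quotient of adelic orthogonal groups connected, so that one class and a surjective reduction map result. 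This is the genuine obstacle: the reduction above and the existence statement (i) are essentially formal consequences of the discriminant-form machinery already set up, whereas (ii)--(iii) require real number theory, and it is exactly here that the bound $N(G_M) \le \operatorname{rank}(L) - \operatorname{rank}(M) - 2$ is consumed.

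\textbf{Assembly.} By (i) a lattice $N$ as above exists, producing at least one primitive embedding $M \hookrightarrow L$ (with $L$ the unique even unimodular lattice of its indefinite signature, so the hypothesis that such $L$ is given is automatically compatible); by (ii) the isometry type of $N$ is determined; and by (iii) any two gluing maps $\gamma$ differ by an element of $O(N)$, hence the corresponding embeddings are intertwined by an isometry of $L$. Therefore the primitive embedding exists and is unique up to $O(L)$, which is the claim. The only delicate point throughout is the appeal to strong approximation in the second step; everything else is bookkeeping with the discriminant form.
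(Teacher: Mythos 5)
Your sketch cannot be compared against a proof in the dissertation itself, because the dissertation does not prove this statement: Theorem \ref{Nik4} is imported verbatim (as Theorem 1.14.4 of \cite{Nikulin}) and used as a black box. What you have written is, in outline, a faithful reconstruction of Nikulin's own argument, and it is correct at the level of a sketch. The reduction to the orthogonal complement via Propositions \ref{Nik1} and \ref{Nik2} (a primitive embedding of $M$ into an even unimodular $L$ is equivalent to the data of an even lattice $N$ of signature $(l_+-m_+,\,l_--m_-)$ together with an anti-isometry $q_M\cong -q_N$, with $L$ recovered as the overlattice glued along the graph) is exactly the right bookkeeping, and your identification of where the hypothesis $N(G_M)\leq \operatorname{rank}(L)-\operatorname{rank}(M)-2$ is actually consumed --- uniqueness of the indefinite lattice $N$ in its genus and surjectivity of $O(N)\rightarrow O(q_N)$, both resting on Eichler--Kneser strong approximation (Nikulin's Theorem 1.14.2) --- is the genuine mathematical content; the strict inequalities $l_\pm>m_\pm$ are indeed there precisely to make $N$ indefinite.

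Two points are glossed over and should be made explicit if this were written out. First, for the genus in step (i) to be non-empty you also need the signature of the candidate discriminant form to match modulo $8$ (Milgram/Gauss-sum condition); here this is automatic, since $L$ even unimodular forces $l_+-l_-\equiv 0 \pmod 8$ and $\operatorname{sign}(q_M)\equiv m_+-m_- \pmod 8$, so $\operatorname{sign}(-q_M)\equiv (l_+-m_+)-(l_--m_-)\pmod 8$, but it is a hypothesis of the existence criterion, not a consequence of the rank slack alone. Second, "there is always room to absorb the local data" is doing real work: the correct citation is Nikulin's existence theorem for even lattices with prescribed signature and discriminant form (his Theorem 1.10.1 and its corollaries), where the condition $\operatorname{rank}(N)\geq \ell(G_N)+2$ comfortably kills the possible local obstructions at each prime, including the delicate ones at $p=2$. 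With those two references supplied, your assembly step --- unique $N$ in its genus, any two gluings differing by an element of $O(q_N)$ lifted to $O(N)$ and extended to an isometry of $L$ --- is the standard and complete argument for uniqueness of the embedding up to $O(L)$.
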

\section{The Kneser-Nishiyama method}
We saw in Section \ref{Ftheory} that the non-abelian low energy energy effective physics of an interesting and tractable class of F-theory models is contained within the frame lattice $W$. The Kneser-Nishiyama method gives a way to find this lattice, and in this section we give an outline of it's implementation.\footnote{For the convenience of the reader who wishes to find more details in the references, the discussion here will be in negative signature, since this is standard in those articles. In this dissertation, we \textbf{do not} subsequently work in negative signature.}
\subsection{The direct approach}\hfill

 In the $K3$ surfaces which we study, the surface comes with a complex structure, and in particular can be used to define a signature $(1,\rho_X-1)$ lattice called the \textit{Neron-Severi lattice}.
The Neron-Severi lattice admits of an orthogonal decomposition:
\begin{equation}\label{eq:S,U,W}
S_X=U\oplus W
\end{equation}
Under this `direct' approach, to find the physics contained in $W$, one would find the Neron-Severi lattice defined by the $K3$ surface $X$ by calculating it's orthogonal decomposition into a hyperbolic plane lattice and the frame lattice. This approach, however, is technically difficult because we are not working in a space of definite signature. Since in thisdissertation we consider the \textit{extremal} $K3$ \textit{surfaces} - that is, those with $\rho_X=20$, and $W$ is 18 dimensional with signature $(0,18)$. 
\subsection{The indirect approach}\hfill

\label{Indirectapproach}
We want to work in definite signature. A way to do this is found in \cite{Nishiyama} and we give a quick review here. \\
\indent Recalling that $L^{(3,19)}$ is isometric to $U^{\oplus3}\oplus (-E_8)^{\oplus 2}$ we have:
\begin{definition}[Transcendental lattice]
The transcendental lattice for a given $K3$ surface $X$ is defined as:
$T_X=\{l\in L^{(3,19)} | (l,s)=0\text{ }\forall s\in S_X\}$
\end{definition}

$T_X$ has signature $(2,20-\rho_X)$, and so the lattice obtained by taking the negative of all of its elements, $-T$ has signature $(20-\rho_X,2)$. By application of Proposition \ref{Nik2} to this definition we see that $G_{T_X}\cong S_X$.\\
\indent Consider the lattice $E_8\oplus U^{\oplus 20-\rho_X}$. This has signature $(20-\rho,28-\rho_X)$ and in our case where $\rho_X=20$, this lattice is just the negative definite $E_8$. We define the lattice $T_0$ as:
\begin{definition}[$T_0$]
\begin{equation}
-T_0\defeq (-T^{\perp}\subset -E_8\oplus U^{\oplus 20-\rho_X})
\end{equation}
for a primitive embedding $T\xhookrightarrow{}-E_8\oplus U^{\oplus 20-\rho_X}$.
\end{definition}
We notice that: $\frac{1}{2}\rank(\mathrm{overlattice})=\frac{8+2(20-\rho_X)}{2}=4+(20-\rho_X)\geq (2+20-\rho_X)$; $20-\rho_X=28-\rho_X\mod 8$; $20-\rho_X\leq20-\rho_X$ and $2\leq 28-\rho_X$. Thus, by Theorem \ref{Nik3}, such an embedding always exists and can be used to form a signature $(0,26-\rho_X)$, for us $(0,6)$, lattice. \\
\indent The next stage is then to primitively embed $-T_0$ into a negative definite even unimodular lattice of rank 24. There are 24 such lattice, these being the 23 \textit{Niemeier lattices}, and the Leech lattice\footnote{It is conventional to not include the Leech lattice under the name ``Niemeier lattice".}. Nishiyama then showed that the frame lattice $W$ is the orthogonal complement of $T_0$ in one of the even unimodular rank 24 lattices $\mathcal{N}$:
\begin{theorem}
\begin{equation}
W=(T_{0}^{\perp}\subset\mathcal{N})
\end{equation}
\end{theorem}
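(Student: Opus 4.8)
Throughout, let $T_0$ denote the negative definite lattice of signature $(0,26-\rho_X)$ built in the definition (the object written $-T_0$ there), so that it is a candidate for a primitive sublattice of a negative definite rank $24$ even unimodular lattice. The plan is to construct $W$ directly by gluing, recognise the glued lattice as such an $\mathcal{N}$, and then identify $W$ with $(T_0^\perp\subset\mathcal{N})$; since there are exactly twenty-four rank $24$ even unimodular negative definite lattices (the twenty-three Niemeier lattices and the Leech lattice), this automatically supplies the $\mathcal{N}$ of the statement.

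The first step is to pin down the discriminant form of $T_0$ by composing two instances of Proposition \ref{Nik2}. As already observed in the text, $T_X=S_X^{\perp}$ inside the even unimodular lattice $L^{(3,19)}$ gives $q_{T_X}\cong -q_{S_X}$, and since $S_X=U\oplus W$ with $U$ unimodular this reads $q_{T_X}\cong -q_W$. Applying Proposition \ref{Nik2} again to the primitive embedding $-T_X\hookrightarrow -E_8\oplus U^{\oplus 20-\rho_X}$ defining $T_0$ (so that $T_0=(-T_X)^{\perp}$) gives $q_{-T_X}\cong -q_{T_0}$; as $q_{-T_X}=-q_{T_X}$ this is $q_{T_0}\cong q_{T_X}\cong -q_W$. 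In particular there is an isomorphism $\gamma:G_W\xrightarrow{\ \sim\ }G_{T_0}$ with $q_{T_0}\circ\gamma=-q_W$.

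The second step assembles $\mathcal{N}$. The orthogonal direct sum $W\oplus T_0$ is even, negative definite, of rank $18+6=24$, with discriminant group $G_W\oplus G_{T_0}$, and by the Observation following Proposition \ref{Nik2} the graph $H:=\{(x,\gamma(x)) : x\in G_W\}$ is an isotropic subgroup of it. Proposition \ref{Nik1} then furnishes an even overlattice $\mathcal{N}:=\big((W\oplus T_0);H\big)$ of $W\oplus T_0$, and combining Theorem \ref{thm: torsiondet} with Corollary \ref{orderdet} (which give $|\det W|=|G_W|$ and $|\det T_0|=|G_{T_0}|=|G_W|=|H|$) forces $|\det\mathcal{N}| = |\det(W\oplus T_0)|/|H|^2 = 1$. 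Hence $\mathcal{N}$ is one of the twenty-four lattices above, and by construction $W$ and $T_0$ are mutually orthogonal sublattices of it.

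The last step is to check $W=(T_0^\perp\subset\mathcal{N})$. Writing a general $n\in\mathcal{N}\subseteq W^{*}\oplus T_0^{*}$ as $(n_W,n_{T_0})$, its class in $G_W\oplus G_{T_0}$ lies in $H$, hence equals $(x,\gamma(x))$ for some $x$; if $kn\in T_0$ for some integer $k\ge1$ then $kn_W=0$ and so $n_W=0$, forcing $x=0$, hence $\gamma(x)=0$ and $n\in T_0$. So $T_0$ is saturated (primitive) in $\mathcal{N}$, and the same argument with the two factors exchanged shows $W$ is saturated. Therefore $T_0^\perp$ has rank $24-6=18=\operatorname{rank} W$; since $W\subseteq T_0^\perp$ with $W$ saturated in $\mathcal{N}$, the finite quotient $T_0^\perp/W$ is torsion-free and hence trivial, so $W=T_0^\perp$. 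The step I expect to be most delicate is the first --- the sign bookkeeping in composing the two uses of Proposition \ref{Nik2} so that $H$ comes out isotropic --- together with the two primitivity verifications in the last step; granting these, the theorem is an assembly of Propositions \ref{Nik1}, \ref{Nik2}, Theorem \ref{thm: torsiondet} and Corollary \ref{orderdet}.
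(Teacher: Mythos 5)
The paper does not actually prove this theorem: it is quoted from Nishiyama \cite{Nishiyama}, so there is no internal proof to compare yours against. Judged on its own terms, your argument is sound, and it is essentially the standard lattice-theoretic mechanism behind the Kneser--Nishiyama method. The sign bookkeeping you flag as delicate does come out right: from $T_X=S_X^{\perp}\subset L^{(3,19)}$ and $S_X=U\oplus W$ (with $U$ unimodular) you get $q_{T_X}\cong -q_W$, and since your $T_0$ is the orthogonal complement of the sign-reversed transcendental lattice inside the even unimodular lattice $-E_8\oplus U^{\oplus(20-\rho_X)}$, Proposition \ref{Nik2} gives $q_{T_0}\cong q_{T_X}\cong -q_W$. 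The two lattices are both even and negative definite with ranks summing to $24$, so the graph of the anti-isometry is isotropic and the gluing via Proposition \ref{Nik1} (your explicit rerun of the argument in the Observation following Proposition \ref{Nik2}) produces an even unimodular negative definite rank-$24$ lattice, hence one of the $24$ lattices in question. Your saturation checks and the rank-plus-torsion-free argument for $W=T_0^{\perp}$ are correct; the only unstated ingredients are background facts the dissertation already assumes, namely that $S_X$ is primitively embedded in $L^{(3,19)}$ (needed to apply Proposition \ref{Nik2} in the first step) and that $W$ is even because $S_X$ is.

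One caveat on scope: what you prove is the forward direction --- every frame lattice $W$ arises as $T_0^{\perp}$ inside some rank-$24$ even unimodular lattice $\mathcal{N}$ --- which is exactly what the displayed equation asserts. The way the result is exploited later (classifying all elliptic fibrations by ranging over \emph{all} primitive embeddings $T_0\hookrightarrow\mathcal{N}_i$) also relies on the converse, that every such orthogonal complement is realised as the frame of an actual elliptic fibration on the $K3$ surface; that direction needs Torelli-type geometric input from Nishiyama's paper and cannot be obtained from the gluing argument alone, so your proof should not be read as covering it.
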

Specialising to the case $\rho_X=20$ we again have: $6\leq 12$, $24=0\mod 8$, $0\leq 0$ and $6\leq 24$ so such an embedding always exists into \textit{some} Niemeier lattice although not necessarily into every Niemeier lattice. \\
\indent In general, different embeddings of $T$ will lead to different $T_0$. For a given choice of primitive embeddings, however, the list of $W$ lattices produced will be the same \cite[pg.19]{1312}. Since we only wish to find the $W$ lattices, we may choose any set of primitive embeddings $T\xhookrightarrow{}E_8$.
\subsection{Extracting Physics}\hfill

We cannot guarantee that such an embedding is unique, so the picture that results is the following: for every primitively embedding of $T_0\xhookrightarrow{}\mathcal{N}_i$ we generate some $W_{\mu}^{T_0}$ for \textit{all possible} primitive embeddings $T_0\xhookrightarrow{}\mathcal{N}_i$ into \textit{all possible} $\mathcal{N}_i$. Just one $T_0$ will generate a set of frame lattices $W^{T_0}=\{W_{\mu}\}$, and the set of all possible $W$'s is the union of all such sets $W^{T_0}$. The classification of elliptic fibrations on our choice of  $CY_4=K3\times K3$, and hence the possible non-abelian massless scalar fields of these F-theory models,  is given in all of these frame lattices $W$.
\section{Computing $T_0$}
The first task, then, is to compute the  $\text{ }T_0\defeq T^{\perp}\subset E_8$ lattices for the list of transcendental lattices, $T$, in \cite{1312}. We adopt the notation that $T_{[abc]}$ corresponds to a lattice with Gramian:
$$T_{[abc]}\estimates
\begin{pmatrix}
2a & b \\
b & 2c
\end{pmatrix} 
$$
The procedure for finding $T_0$ is as follows:
\begin{enumerate}
	\item Find a \textit{primitive} embedding of $T_{[abc]}$ into $E_8$.
	\item Find a $\mathbb{Z}$-submodule orthogonal to $T_{[abc]}$ in $E_8$.
	\item Check that this $\mathbb{Z}$-submodule is torsion free in $E_8$.  
\end{enumerate}
The diagonal entries of the root lattice correspond to the norms of the generating vectors, and the off-diagonal elements are just the result of their dot product. This imposes strong constraints on the possible choice of a primitive sublattice since the embedding, $\phi$, must be an isometry. There are, for example, only $2160$ norm $4$ roots in $E_8$, and we would have to choose twice from this set to find a primitive embedding of $T_{[111]}$.\\
\indent Once a primitive embedding has been chosen, our procedure is to look for six linearly independent vectors, denoted $\tilde{e}_i$, that span the orthogonal complement of the primitively embedded lattice. We adopt the notation that $L|K$\footnote{In \cite{1312} the notation $L;K$ is used - we do not follow this usage to avoid confusion with the notation of \ref{Nik1} where the quotient is with the dual lattice of $M$.} denotes that, in the chosen basis, the lattice is given by $L$, with quotient $\frac{T_0}{L}=K$.\footnote{This does \textbf{not} mean that $T_0$ has torsion - it is just an artefact of the basis we have chosen. This notation is more convenient for our purpose essentially due to Witt's theorem, since our final use for these lattices will be to embed them into the Niemeier lattices, all but one of which are direct sums of the A-D-E root lattices.}. Since the transcendental lattice $T_{[abc]}$  is primtively embedded in $E_8$, the ``$\impliedby$" direction of Proposition \ref{Nik2} entails that $T_{[abc]}$ and $T_0$ have isomorphic discrimininant groups. The theorem (Proposition 1.2 of \cite{LandCWE}):
\begin{theorem}\label{rem:dets}
If $L$ is a unimodular lattice, with $M$ and $N$ primitively embedded mutually othogonal sublattices in $L$, $M\oplus N\subseteq L$, then $\det(M)=\det(N)$
\end{theorem}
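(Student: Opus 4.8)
The plan is to establish the equality of determinants by an orthogonal‑projection argument, with unimodularity of $L$ entering at exactly one point; when $L$ happens to be even there is also a shorter route through Proposition~\ref{Nik2}. First I would pin down that the hypotheses actually force $N=M^{\perp}$ (and symmetrically $M=N^{\perp}$), since that is what makes the statement true. From $(M,N)=0$ we get $N\subseteq M^{\perp}$; because $M$ is non‑degenerate and — as in all our applications — $M\oplus N$ has finite index in $L$, a rank count gives $\operatorname{rank}(M^{\perp})=\operatorname{rank}(L)-\operatorname{rank}(M)=\operatorname{rank}(N)$, so $N$ is a full‑rank sublattice of $M^{\perp}$; and since $N$ is primitive in $L$ while $M^{\perp}$ is automatically primitive (Remark~\ref{rem:orthprimitive}), $N=(N\otimes\mathbb{Q})\cap L=(M^{\perp}\otimes\mathbb{Q})\cap L=M^{\perp}$.

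Next, let $\pi\colon L\otimes\mathbb{Q}\to M\otimes\mathbb{Q}$ be the orthogonal projection onto $M\otimes\mathbb{Q}$; by the rank count, the kernel of its restriction to $L$ is $L\cap(N\otimes\mathbb{Q})=N$ (primitivity of $N$). For $l\in L$ and $m\in M$ one has $(\pi(l),m)=(l,m)\in\mathbb{Z}$ since $L$ is integral, so $\pi(L)\subseteq M^{*}$. The key step — the only place the hypotheses are genuinely used — is that unimodularity upgrades this inclusion to an equality: for $x\in M\otimes\mathbb{Q}$ we have $(x,\pi(l))=(x,l)$, hence the dual lattice of $\pi(L)$ inside $M\otimes\mathbb{Q}$ is $(M\otimes\mathbb{Q})\cap L^{*}=(M\otimes\mathbb{Q})\cap L=M$, using $L^{*}=L$ and primitivity of $M$; dualizing back gives $\pi(L)=M^{*}$. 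Thus $\pi$ induces an isomorphism $L/N\cong M^{*}$ carrying $(M\oplus N)/N\cong M$ onto $M\subseteq M^{*}$, so that
\begin{equation}
[L:M\oplus N]=[M^{*}:M]=|\det(M)|,
\end{equation}
the last equality by Corollary~\ref{orderdet}.

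To conclude I would either run the identical argument with $M$ and $N$ interchanged — legitimate since $M=N^{\perp}$ and $M$ is primitive — obtaining $[L:M\oplus N]=|\det(N)|$ as well, or simply substitute $\det(L)=\pm1$ and $[L:M\oplus N]=|\det(M)|$ into Theorem~\ref{thm: torsiondet}, which gives $\det(L)=\det(M)\det(N)/[L:M\oplus N]^{2}$ and hence $|\det(N)|=|\det(M)|$ at once. Either way $|\det(M)|=|\det(N)|$; in the settings we use, $M$ and $N$ are definite lattices whose ranks make the sign $(-1)^{\operatorname{rank}}$ of the determinant coincide (for instance both summands have even rank when $\rho_X=20$), so in fact $\det(M)=\det(N)$. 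As an aside, when $L$ is even one may skip the projection lemma entirely: $N=M^{\perp}$ together with Proposition~\ref{Nik2} yields an isomorphism $G_M\cong G_N$, and Corollary~\ref{orderdet} then gives $|\det M|=|G_M|=|G_N|=|\det N|$. I do not anticipate a real obstacle; the single non‑bookkeeping point is the unimodularity step $\pi(L)=M^{*}$, which is exactly what fails for non‑unimodular $L$.
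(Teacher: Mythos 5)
Your argument is correct, but it is worth noting that the paper itself gives no proof of this statement at all: it is quoted as Proposition~1.2 of \cite{LandCWE} and used only as a consistency check. Your orthogonal-projection argument is essentially the standard proof of that cited proposition, and it is sound: the identification $\pi(L)=M^{*}$ via $L=L^{*}$ and primitivity of $M$, the resulting isomorphism $L/N\cong M^{*}$, and the conclusion $[L:M\oplus N]=[M^{*}:M]=|\det(M)|$ (by Corollary~\ref{orderdet}), combined either with the symmetric argument or with Theorem~\ref{thm: torsiondet}, all go through. Two of your observations are genuine improvements on the statement as printed. First, as written the theorem is false without the additional hypothesis that $M\oplus N$ has full rank in $L$ (equivalently $N=M^{\perp}$): in $L=\mathbb{Z}^{3}$ take $M=\mathbb{Z}e_{1}$ and $N=\mathbb{Z}(e_{2}+e_{3})$, both primitive and mutually orthogonal, yet $\det(M)=1\neq 2=\det(N)$; you correctly diagnose this and show that primitivity plus the finite-index condition forces $N=M^{\perp}$. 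Second, the equality really is $|\det(M)|=|\det(N)|$ in general (the signs can differ for indefinite $L$ of odd negative index), and your remark that the definite, even-rank situations arising in the paper make the signed equality hold is the right qualification. Your alternative route for even $L$ via Proposition~\ref{Nik2} and Corollary~\ref{orderdet} is also valid and is in fact the mechanism the paper itself invokes elsewhere (e.g.\ for $G_{T_{[abc]}}\cong G_{T_{0}}$ and $\det(W)=\det(T_{0})$), though it is restricted to even lattices, whereas your projection argument covers any unimodular $L$.
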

affords us a check on our result when finished. To do this check we will often employ Theorem \ref{thm: torsiondet}.
We now give two examples of this procedure - the first is a simpler case, and the second slightly more involved.
\subsubsection{$T_{[111]}$}
Our first example is the lattice $T_{[111]}\estimates\begin{psm}2 & 1\\1& 2 \end{psm}$. The first step requires us to find a primitive embedding: $T_{[111]}\xhookrightarrow{} E_8$. Practically, this works out to choosing two norm two vector with the appropriate inner form (here 1) that are primitive in $E_8$. The following remark will be helpful in this:
\begin{remark}\label{thm:1/2_1}
Any sublattice $T_p\subset E_8$ spanned by two \textit{individually} primitive vectors $\phi(e_1)=\tilde{e}_1$ and $\phi(e_2)=\tilde{e}_2$ such that $\tilde{e}_1\in \{(x_1,\hdots, x_8)| x_i\in\mathbb{Z}\}$,  $\tilde{e}_2\in \{(x_1,\hdots, x_8)| x_i\in\mathbb{Z}+\frac{1}{2}\}$ with some component $k$ for which $(\tilde{e_1})_k=0$ and $(\tilde{e_2})_k=\frac{1}{2}$ is a \textit{primitive sublattice} of $E_8$.
\end{remark}
\begin{proof}
Assume for contradiction that $T_p$ is not a primitive sublattice, and let $z_i\in\mathbb{Z}$. Then, from equation \ref{eq: primitive_def} there is some $q\in\mathbb{Q}$ and some $t=z_1\tilde{e}_1+z_2\tilde{e}_2\in T_p$ such that $qt\not\in T_p$ but $qt\in E_8$. Then, $(qt)_k=q\frac{z_2}{2}$. From the definition of $E_8$, $qz_2\overset{!}{\in}\mathbb{Z}\implies qz_2\tilde{e}_2\in E_8$ always. But then the lattice can only fail to be primitive if $qz_1\tilde{e}_1\in E_8,\not\in T_p$ - which would contradict the primitivity of $\tilde{e}_1$. Hence, there is no such $q$ and $T_p$ is a primitive sublattice of $E8$.
\end{proof}
This remark allows us to reduce the problem of finding a primitive \textit{sublattice} to the simpler one of finding two individually primitive \textit{vectors}, so long as we choose from the two different set of vectors allowed in $E_8$. \\
\indent The norm-2 vectors of $E_8$ are it's $240$ roots. They are the permutations of the vectors satisfying the conditions in Table \ref{Latticetable}\footnote{We abuse notation slightly to allow arbitrary sign in each $\frac{1}{2}$ term in the second vector.}:
\vspace{-2mm}
\begin{equation}
(\pm1,\pm 1,0,0,0,0,0,0),\text{  },
\bigg(\pm\frac{1}{2},\pm\frac{1}{2},\pm\frac{1}{2},\pm\frac{1}{2},\pm\frac{1}{2},\pm\frac{1}{2},\pm\frac{1}{2},\pm\frac{1}{2}\bigg)
\end{equation}

There are many choices of primitive embedding, each leading to a different $T_0$, but from the discussion at the end of subsection \ref{Indirectapproach} \textit{any} embedding is sufficient for us. Let us choose:
\vspace{-2mm}
\begin{equation}
e'_1=(1,1,0,0,0,0,0,0),\text{  }e'_2=\bigg(\frac{1}{2},\frac{1}{2},\frac{1}{2},\frac{1}{2},\frac{1}{2},\frac{1}{2},\frac{1}{2},\frac{1}{2}\bigg)
\end{equation}

Remark \ref{thm:1/2_1} guarantees that this embedding is primitive. \\
\indent To aid us in finding the orthogonal complement, we make the observation that the contrapositive of Theorem \ref{rem:dets} states that if the determinants of a sublattice of $M\subset E_8$ and $T_{[111]}$ do not agree, then they cannot be orthogonal complements. This will be a helpful guide in choosing an ansatz for the orthogonal complement.\\
\indent There are six zeros in $e'_1$, $\det(T_{[111]}=3)$ and $\det(E_6)=3$. The first six zeros allow us to find an $A_5$ sublattice as follows:\footnote{We adopt the standard notation that $a^{n}$ inside a vector denotes $n$ copies of the entry $a$: $\begin{psm} 1,&2^{3},&4\end{psm}=\begin{psm}1,&2,&2,&2,&4\end{psm}$}
\begin{equation}
(0^2,1,-1,\text{}0^4),\text{ }(0^3,1,-1,0^4),\text{ }\text{ }(0^4,1,-1,0^2),\text{ }(0^5,1,-1,0),\text{ }(0^6,1,1)
\end{equation}
Seeing that $\det(E_6)=\det(T_{[111]})$, and noting that adding any vector consisting of $1/2$'s will keep the lattice primitive (again from Remark \ref{thm:1/2_1}), we are motivated to add the final ansatz vector $\tilde{e}_6=\begin{psm}-\frac{1}{2},&\frac{1}{2},&-\frac{1}{2}&-\frac{1}{2}&-\frac{1}{2},&\frac{1}{2},&\frac{1}{2},&\frac{1}{2}\end{psm}$. To check our result, we compute the Gramian of this lattice:
\begin{equation}
G^{T_0}=\begin{psm}
2&-1&0&0&0&0\\
-1&2&-1&0&0&0\\
0&-1&2&-1&0&-1\\
0&0&-1&2&-1&0\\
0&0&0&-1&2&0\\
0&0&-1&0&0&2
\end{psm}
\implies \det(T_0)=3
\end{equation}
as required.\\
\indent Of course, this result simply follows from the definition of $E_6$ as the orthogonal complement of $A_2\cong T_{[111]}$ in $E_8$. The above calculation was simply to show how to employ the procedure for finding $T_0$ and the steps involved. So armed, we can move on to extending the known results in Table 4.2 of \cite{1312} with the next example of $T_{[555]}$.
\subsubsection{$T_{[555]}$}
We have:
\begin{equation}
T_{[555]}\estimates\begin{pmatrix}
10&5\\
5&10
\end{pmatrix}\implies\det(T_{[555]})=75
\end{equation}
Carrying out Step 1 as before, we list the norm-10 vectors in $E_8$. They are the permutations of:
\begin{equation}
(\pm 3,\pm 1, 0^6),\text{ }(\pm 2,\pm 2,\pm 1,\pm 1,0^4),\text{ }(\pm2, (\pm 1)^6,0),\text{ } \bigg(\pm\frac{5}{2},\pm\frac{3}{2},(\pm\frac{1}{2})^{6}\bigg),\text{ }\bigg((\pm\frac{3}{2})^{4},(\frac{1}{2})^{4})\bigg)
\end{equation}
satisfying the condition to be in $E_8$ given in Table \ref{Latticetable}. There are many choices that will lead to a primitive embedding but we see from Table \ref{Latticetable} that $\det(A_4)=5$ which is a prime factor of $\det(T_{[555]})=75$. This prompts us to look for a primitive embedding that will have $A_4$ as a sublattice of the orthogonal complement. Practically, this means we look for vectors with five of the components equal.  Remark \ref{thm:1/2_1} entitles us to choose:
\begin{equation}
e'_1=(1, 3,0^{6}),\text{ } e'_2=\bigg(\frac{1}{2},\frac{3}{2},\frac{5}{2},(-\frac{1}{2})^5\bigg)
\end{equation}
The $A_4$ lattice is then found as:
\begin{equation}
\tilde{e}_1=(0^3,1,-1,0^3),\text{  }\tilde{e}_2=(0^4,1,-1,0^2),\text{  }\tilde{e}_3=(0^5,1,-1,0),\text{  }\tilde{e}_4=(0^6,1,-1)
\end{equation}
The only other root lattice with determinant equal to a prime factor of $75$ with the appropriate rank is $A_2$, but this would take up the last two vectors and $5\times 3 =15\neq 75$. So we look for two more vectors orthogonal\footnote{Again, we emphasise that this is just a helpful ansatz, it is not a requirement.} to all of the others, such that their norm contains a prime factor of $75$. At this point the code given in Figure \ref{orthcode} can be useful. The fact that the first two components of $e'_1$ and $e'_2$ are the same up to scaling allows us to find a norm-10 vector as: 
\begin{equation}
\tilde{e}_5=(3,-1,0^6) 
\end{equation}
We do not yet have a factor of $3$, and the lowest norm vector with this factor has norm 6. There is just such a vector orthogonal to all of the ones we have chosen so far, namely:
\begin{equation}
\tilde{e}_6=(0^2,1^6)
\end{equation}
Such a vector does not contain all of the vectors in $T_0$ - we miss every second vector when taking the sum $\tilde{e}_5+\tilde{e}_6=(3,-1,1^6)$ since $\frac{1}{2}(\tilde{e}_5+\tilde{e}_6)=(\frac{3}{2},-\frac{1}{2},(\frac{1}{2})^6)\in E_8$. Thus, we also need to add in these elements which corresponds to a group $\mathbb{Z}_2$ for the cosets of $\mathrm{span}_{\mathbb{Z}}\{\tilde{e}_1,\hdots,\tilde{e}_6\}$ in $T_0$. Thus, the final $T_0$ is given as:
\begin{equation}
T_{0}=A_4\oplus(10)\oplus(6)|\mathbb{Z}_2
\end{equation}
Using Theorem \ref{thm: torsiondet} we check that $\det(T_0)=75$, as required. 

\subsubsection{Using SAGE}
The advantage of doing these computations by hand is that we can choose a basis for $T_0$ that makes the sub-root lattices in it explicit. 23 of the Niemeier lattices $\mathcal{N}$ are expressible as a direct sum of root lattices, and since our final task is to carry out $T_0\xhookrightarrow{}\mathcal{N}$, it is advantageous to express $T_0$ in this form. \\
\indent It is also possible, however, to calculate these embeddings using computer software. This has been done by the author in the SAGE program, the code for which is given in the Supplementary Information as Figure \ref{fig:SAGECODE}. Using both of these methods, we can compute all of the unsolved cases in Table 2 of \cite{1312}. The lattices calculated in that paper and this one by hand are shown in Table \ref{tab:T0}. Those calculated in SAGE are given in \ref{tab:T0SAGE}. The calculation shows that every $T_0$ contains a root. \\
\indent This is significant because our final task is to primitively embed the $T_0$ lattices into the Niemeier lattices, and determine whether the orthogonal complemenent contains a root. There is one Niemeier lattice with no roots, the \textit{Leech lattice}. Tables \ref{tab:T0} and \ref{tab:T0SAGE} tell us that there is no $T_0$ embeddable into the Leech lattice since every $T_0$ has a root.\footnote{For this choice of primitive embedding. There may be other choices of primitive embeddings of a $T$ that contains a $T_0$ embeddable into the Leech lattice.} This gives us a practical benefit, namely: that we need not consider embeddings into the Leech lattice and so only work with the 23 Niemeier lattices for this choice of $T_0$. It also removes a simple way to find a $W$ with no roots - since if it where possible to embed a $T_0$ lattice into the Leech lattice, the orthogonal complement could not, a fortriori, contain any roots.

\begin{table}[h]
\begin{center}
\scalebox{1.0}{
\setlength{\tabcolsep}{0.5em} 
{\renewcommand{\arraystretch}{1.5}
\begin{tabular}{|c|c|c|c|}
\hline
$[a\text{ }b\text{ }c]$ & $T_0$ & Contains root? & $\det(T_{[abc]})=\det(T_0)$\\
\hline
${[101]}$ & $D_6$ & Yes & 4 \\
\hline
${[111]}$ & $E_6$ & Yes & 3 \\
\hline
${[201]}$ & $D_5\oplus A_1$ & Yes & 8\\
\hline
${[202]}$ & $A_3\oplus A_3$ & Yes & 16 \\
\hline
${[222]}$ & $A_5\oplus(10)|\mathbb{Z}_2$ & Yes & 12 \\
\hline
${[301]}$ & $A_5\oplus A_1$ & Yes & 12 \\
\hline
${[311]}$ & $D_5\oplus (44)| \mathbb{Z}_4$ & Yes & 11 \\
\hline
${[312]}$ & $A_4\oplus A_1\oplus (230)|\mathbb{Z}_10$ & Yes & 23 \\
\hline
${[302]}$ & $A_5\oplus (4)$ & Yes & 24 \\
\hline
${[322]}$ & $D_5\oplus (20)|\mathbb{Z}_2$ & Yes & 20\\
\hline
${[303}$ & $A_2\oplus A_2\oplus A_1\oplus A_1$ & Yes & 36 \\
\hline
${[401]}$ &$A_5\oplus (24)|\mathbb{Z}_3$& Yes & 16\\
\hline
${[411]}$ & $A_4\oplus A_2$ & Yes & 15 \\
\hline
${[402]}$ & $A_3\oplus A_1\oplus A_1\oplus (8)|\mathbb{Z}_3$ & Yes & 32 \\
\hline
${[422]}$ & $A_3\oplus A_2\oplus (84)|\mathbb{Z}_6$ & Yes & 28 \\
\hline
${[501]}$ & $D_4\oplus A_1\oplus (10)|\mathbb{Z}_2$ & Yes & 20 \\
\hline
${[511]}$ & $A_5\oplus (114)|\mathbb{Z}_6$ & Yes  & 19\\
\hline
${[505]}$ & $A_4\oplus (4)\oplus (20)\mathbb{Z}_2$ & Yes & 100\\
\hline
${[555]}$ & $A_4\oplus (10)\oplus (6)|\mathbb{Z}_2$ & Yes & 75 \\
\hline
${[601]}$ & $A_3\oplus A_2\oplus A_1$ & Yes & 24\\
\hline
${[602]}$ & $A_3\oplus A_2\oplus (4)$ & Yes & 48\\
\hline
${[603]}$ & $A_3\oplus A_1^{\oplus 3}\oplus (12)|\mathbb{Z}_2$ & Yes & 72\\
\hline
${[606]}$ & $A_2\oplus A_2\oplus (4)\oplus(4)$ & Yes & 144\\
\hline
${[611}$ & $A_4\oplus A_1\oplus (230)|\mathbb{Z}_{10}$ & Yes & 23 \\
\hline
${[622]}$ & $A_2\oplus A_1^{\oplus 3}\oplus (66)|\mathbb{Z}_6$ & Yes & 44\\
\hline

\end{tabular}}}
\caption{This table shows the $T_0$ lattices computed by hand.}
\label{tab:T0}
 \end{center}
\end{table}
\clearpage
\begin{table}
\begin{center}
\setlength{\tabcolsep}{0.5em} 
\scalebox{0.7}{
\begin{tabular}{|c|c|c|c|}
\hline
$[abc]$ & $T_0$ & Contains Root? & $\det(T_{[abc]})=\det(T_0)$\\
\hline
\\[-1em]
$[333]$ & $\begin{psm}
\frac{1}{2}&\frac{1}{2}&-\frac{3}{2}&\frac{1}{2}&\frac{1}{2}&\frac{1}{2}&\frac{5}{2}&\frac{5}{2}\\
0&1&-1&0&0&0&1&1\\
0&0&0&1&0&0&2&1\\
0&0&0&0&1&0&2&1\\
0&0&0&0&0&1&2&3\\
0&0&0&0&0&0&3&1
\end{psm}$ & Yes & 27\\
\hline
\\[-0.75em]
$[404]$ & $\begin{psm}
\frac{1}{2}&\frac{1}{2}&\frac{1}{2}&\frac{1}{2}&-\frac{5}{2}&\frac{1}{2}&\frac{1}{2}&-\frac{9}{2}\\
0&1&0&0&-1&0&0&-6\\
0&0&1&0&-1&0&0&0\\
0&0&0&1&-1&0&0&0\\
0&0&0&0&0&1&0&-1\\
0&0&0&0&0&0&1&-1
\end{psm}$ & Yes & 64\\[3ex]
\hline
\\[-1em]
$[444]$ & $\begin{psm}
\frac{1}{2}&\frac{1}{2}&\frac{1}{2}&\frac{1}{2}&-\frac{5}{2}&\frac{1}{2}&\frac{1}{2}&\frac{11}{2}\\
0&1&0&0&-1&0&0&6\\
0&0&1&0&-1&0&0&2\\
0&0&0&1&-1&0&0&2\\
0&0&0&0&0&1&0&-1\\
0&0&0&0&0&0&1&-1
\end{psm}$ &Yes&48\\[5ex]
 \hline
\\[-1em]
$[666]$& $\begin{psm}
\frac{1}{2}&\frac{1}{2}&\frac{1}{2}&-\frac{5}{2}&\frac{1}{2}&\frac{1}{2}&\frac{1}{2}&\frac{3}{2}\\
0&1&0&-1&0&0&0&0\\
0&0&1&-1&0&0&0&0\\
0&0&0&0&1&2&0&-1\\
0&0&0&0&0&3&0&-1\\
0&0&0&0&0&0&1&1
\end{psm}$ & Yes&108\\[4ex]
\hline
\\[-1em]
$[777]$ & $\begin{psm}
\frac{1}{2}&\frac{1}{2}&-\frac{5}{2}&\frac{1}{2}&\frac{1}{2}&\frac{1}{2}&\frac{1}{2}&-\frac{1}{2}\\
0&1&-2&0&0&0&2&-1\\
0&0&0&1&0&0&4&-1\\
0&0&0&0&1&0&4&-1\\
0&0&0&0&0&1&4&-1\\
0&0&0&0&0&0&5&-1
\end{psm}$ & Yes&147\\[4ex]
\hline
\\[-1em]
$[888]$ & $\begin{psm}
\frac{1}{2}&\frac{1}{2}&\frac{1}{2}&\frac{1}{2}&-\frac{7}{2}&\frac{1}{2}&\frac{1}{2}&-\frac{3}{2}\\
0&1&0&0&-2&0&0&1\\
0&0&1&0&-1&0&0&4\\
0&0&0&1&-1&0&0&0\\
0&0&0&0&0&1&0&-1\\
0&0&0&0&0&0&1&-5
\end{psm}$ & Yes&192\\[4ex]
\hline
\end{tabular}}
\caption{This figure shows the generating matrices for all the $T_0$ lattice calculated using SAGE.}
\label{tab:T0SAGE}
\end{center}
\end{table}

\section{Niemeier lattices}

Niemeier showed that there are $24$, unique (up to isometry) even unimodular lattices of rank $24$ \cite{Niemeier}. An alternative proof that Niemeier's list is exhaustive was given by Venkov in \cite{Venkov} and amounts to showing first that there are 24 unique (up to isometry) root systems $\mathcal{N}(2)$ in dimension 24, and second that each of these gives rise to a unique even unimodular lattice $\mathcal{N}_{(\mathcal{N}(2))}$. In this section, the constructions of the Niemeier lattices from $ADE$ root lattices are given. This is necessary to find the primitive embeddings $T_0\xhookrightarrow{}\mathcal{N}$. 
\subsection{Gluing theory}\hfill

Witt's theorem guarantees that an integral lattice $L$ of rank $r$ whose generators have norm $2$ always has a decomposition of the form
\begin{equation}\label{adeglue}
L=\bigoplus_{i=0}^{r}A_i^{\oplus a_i}\oplus D_i^{\oplus d_i}\oplus E_i^{\oplus e_i}
\end{equation}
where $d_i=0\text{ for }i\leq 3$, $e_i =0\text{ for } i\neq 6,7,8$. We can often use these lattices to define a base lattice to which glue vectors are then added. Indeed, this is exactly how the construction of the Niemeier lattices proceeds. \\
\indent Gluing theory provides a way of decomposing a lattice into component lattices, but with more structure than that allowed in the direct sum.  Let the lattice $L$ be such that it has a sublattice composed of a direct sum of lattices $L_i$, $i\in \{1,\hdots,m\}$. In general, we have a lattice $\tilde{L}\defeq\bigoplus\limits_{i=1}^{m}L_i\subseteq L$. An arbitrary vector $v\in L$ will take the form:
\begin{equation}\label{eq:glue}
v=\sum_{i=1}^{m}v_i
\end{equation}
Every $v_i\in R_i$, where $R_i\defeq \spn_{\mathbb{R}}\{(b^{i})_1,\hdots,(b^{i})_{\rank{L_i}}\}$ with $b^{i}_{j}$ the $j$-th basis vector of $L_i$. That is, the $v_i$ are in the \textit{Euclidean} space whose basis is the lattice basis and are not, in general, in $L_i$. A vector $v_i\not\in l_i$, but $v_i\in L$ or $v=\vec{0}$ is called a \textit{glue vector}. Since $L$ is a lattice, we may add any $a_i\in l_i$ to $v_i$ to form another glue vector. As addition is the group operation for a lattice, the set of glue vectors obtained as $v_i + a_i$ form a coset of $L_i$ in $R_i$. A set of glue vectors containing exactly one vector from each orthogonal summand is called a \textit{transversal} of the cosets of $v_i$ in $L$. The glue vectors are often chosen so that they have minimal norm and the complete list of glue vectors for a lattice is a group that is termed the \textit{glue code}.
\begin{observation}
We observe that any glue vector $g_i$ in space $R_i$ is $g_i\in (L_i)^{*}$. This follows because the $L_i$ are mutually orthogonal, so we must have that $(a_i,L)=(a_i,L_i)$. $L$ is also integral, and so $(a_i,L)\in\mathbb{Z}$ - which is just the condition for $a_i\in (L_i)^{*}$.  
\end{observation}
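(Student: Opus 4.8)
The plan is to unwind the definitions directly: beyond the integrality of $L$ and the mutual orthogonality of the summands $L_j$, nothing is needed. Let $g_i \in R_i$ be a glue vector. By definition there is a vector $v \in L$ whose decomposition $v = \sum_{j=1}^{m} v_j$ into components $v_j \in R_j$ has $v_i = g_i$. To conclude $g_i \in (L_i)^{*}$ I must verify that $(g_i, x) \in \mathbb{Z}$ for every $x \in L_i$, which is exactly the defining condition of the dual lattice (the ambient-space requirement $g_i \in R_i$ being part of the hypothesis).

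First I would record the elementary fact that the real spans $R_j$ are mutually orthogonal. Since the sublattices $L_j$ are mutually orthogonal by hypothesis, and $(\cdot,\cdot)$ is bilinear, the identity $(a,b) = 0$ for lattice vectors $a \in L_j$, $b \in L_k$ with $j \neq k$ extends to $(y,z) = 0$ for all $y \in R_j$, $z \in R_k$, because $R_j$ and $R_k$ are precisely the $\mathbb{R}$-spans of those lattice vectors.

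Now fix $x \in L_i$. Then $x \in L$ since $L_i \subseteq L$, so integrality of $L$ gives $(v,x) \in \mathbb{Z}$. Expanding along the decomposition, $(v,x) = \sum_{j=1}^{m} (v_j, x)$; for $j \neq i$ we have $v_j \in R_j$ and $x \in L_i \subseteq R_i$, so $(v_j,x) = 0$ by the orthogonality just noted. Hence $(v,x) = (v_i,x) = (g_i,x)$, and therefore $(g_i,x) \in \mathbb{Z}$. As $x \in L_i$ was arbitrary, $g_i \in (L_i)^{*}$, as claimed.

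The only step that requires any care — and it is a mild one — is the passage from orthogonality of the lattices $L_j$ to orthogonality of their real spans $R_j$; once that observation is in hand, the result is a one-line consequence of integrality, so I do not expect a genuine obstacle here.
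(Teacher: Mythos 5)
Your proposal is correct and follows essentially the same route as the paper: integrality of $L$ gives $(v,x)\in\mathbb{Z}$ for $x\in L_i$, and mutual orthogonality of the summands kills all components of $v$ except $v_i=g_i$, so $(g_i,x)\in\mathbb{Z}$. You merely spell out the two steps the paper compresses — pairing the full lattice vector $v$ rather than $g_i$ alone with $L_i$, and extending orthogonality of the $L_j$ to their real spans $R_j$ by bilinearity — which is a welcome tightening but not a different argument.
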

There is a standard notation used for the glue vectors which we will employ in the sequel. This is given in Table \ref{gluetable}. 
\begin{table}
\begin{centering}
\setlength{\tabcolsep}{0.5em} 
{\renewcommand{\arraystretch}{1.5}
\begin{tabular}{|c|c|c|c|c|}
\hline
Root lattice $R$ & Glue vector notation $[a]$ & Glue vector & Norm & Glue group\\
\hline
\multirow{2}{*}{$A_n$} & $[0]$ & $\vec{0}$ & $0$ & \\
										& $[a]$ & $\begin{psm}(\frac{a}{n+1})^{n+1-a},-(\frac{n+1-a}{n+1})^{a}\end{psm}$ & $\frac{a(n+1-a))}{n+1}$ & $\mathbb{Z}_{n+1}$\\
\hline
 \multirow{4}{*}{$D_n$} & $[0]$ & $\vec{0}$ & $0$ & \\
 									& $[1]$ & $\begin{psm}(\frac{1}{2})^{n}\end{psm}$ & $\frac{n}{4}$ & $V_n\text{ if } n\in 2\mathbb{Z}$  \\
                                 & $[2]$ &  $\begin{psm}0^{n-1},1\end{psm}$ & $1$ & $\mathbb{Z}_n\text{ if } n\in 2\mathbb{Z}+1$ \\					
                                 & $[3]$ & $\begin{psm}(\frac{1}{2})^{n-1},-\frac{1}{2}\end{psm}$ & $\frac{n}{4}$ & \\\hline	
\multirow{3}{*}{$E_6$} & $[0]$ & $\vec{0}$ & $0$ & \\
										& $[1]$ & $\begin{psm}0;(-\frac{2}{3})^{2},\frac{1}{3};0\end{psm}$ & $\frac{4}{3}$ & $\mathbb{Z}_3$ \\
										& $[2]$ & $-[1]$ & $\frac{4}{3}$ & \\\hline
										
\multirow{2}{*}{$E_7$} & $[0]$ & $\vec{0}$ & $0$ & \\
										&  $[1]$ & $\begin{psm}(\frac{1}{4})^6,(-\frac{3}{4})^2\end{psm}$ & $\frac{3}{2}$ & $\mathbb{Z}_2$
\\\hline
$E_8$ & $[0]$ & $\vec{0}$ & $0$ & $\mathbb{Z}_1$\\
\hline
\end{tabular}}
\caption{In this table we summarise the standard notation used in \cite{CS} for the glue vectors. The glue group is isomorphic to the set of coset representatives of the quotients of the orthgonal sublattices $L_i$ with their dual $\frac{L_i^{*}}{L_i}$ - that is, it is ismorphic to the set of glue vectors not in $L_i$. }
\label{gluetable}
\end{centering}
\end{table}
\subsection{Niemeier lattices}\hfill

The construction of the Niemeier lattices is a well-known application of gluing theory. We begin with an orthogonal direct sum of $ADE$ root lattices as in equation \ref{adeglue}. There are $23$ such sums of rank $24$ and the empty root system \cite{Venkov}.Venkov then showed \cite{Venkov} that:
\begin{theorem}\label{thm:Ncomplete}
Every root system in 24 dimensions can be used to construct a unique even unimodular lattice by adding glue vectors to the root system.
\end{theorem}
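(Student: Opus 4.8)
The plan is to turn the statement into a question about glue codes and then handle existence and uniqueness separately. Fix one of the twenty-four root systems $R$ on Venkov's list, let $Q$ be the lattice it generates, and write $Q=\bigoplus_i Q_i$ for its decomposition into irreducible $ADE$ root lattices (Witt's theorem); then $Q$ is an even lattice of rank $24$ whose norm-$2$ vectors are exactly the elements of $R$, and by Corollary \ref{orderdet} its determinant is $|G_Q|$. The first step is the observation that any even unimodular $N$ with $N(2)=R$ must contain $Q$, so $Q\subseteq N=N^{*}\subseteq Q^{*}$ and $N$ is recovered as $(Q;C)$, in the sense of Proposition \ref{Nik1}, from the subgroup $C\defeq N/Q\le G_Q$. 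Next I would translate the three defining properties of $N$ into conditions on the glue code $C$: by Proposition \ref{Nik1}, $N$ is even iff $q_{Q}|_{C}=0$, which forces $C\subseteq C^{\perp}$; by Theorem \ref{thm: torsiondet} applied to $\det(N)=\det(Q)/|C|^{2}$, $N$ is unimodular iff $|C|^{2}=|G_Q|$, which combined with the previous point forces $C=C^{\perp}$, i.e. $C$ is a maximal isotropic (``Lagrangian'') subgroup of $(G_Q,q_Q)$; and $N(2)=R$ iff every nonzero coset in $C$ has minimal norm at least $4$ --- because $N$ is even a nonzero vector outside $Q$ has even norm, and norm $2$ would give a root not lying in $R$. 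The theorem is therefore equivalent to the assertion that this family of Lagrangians with no short glue vectors is nonempty and constitutes a single orbit under the group $\overline{\mathrm{Aut}(Q)}$ of automorphisms of $(G_Q,q_Q)$ that are induced by isometries of $Q$.

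Existence I would establish by exhibiting one such $C$ in each of the $23$ nonempty cases; this is exactly what the explicit glue-code constructions occupying the remainder of this section provide, and checking the three conditions for a written-down code is a finite computation with the norms recorded in Table \ref{gluetable}. The empty root system is genuinely exceptional: there $Q=0$, nothing can be glued, and the corresponding even unimodular lattice --- the Leech lattice, with minimal norm $4$ --- must be produced by an independent construction, which I would defer to its own treatment.

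For uniqueness I would first note that $R=N(2)$ is intrinsic to $N$, so any isometry between two such lattices carries roots to roots and hence maps $Q$ onto $Q$; thus $N\cong N'$ exactly when their glue codes are $\overline{\mathrm{Aut}(Q)}$-conjugate. Since each reflection $s_{\alpha}$ with $\alpha\in R$ satisfies $s_{\alpha}(x)-x\in Q$ for all $x\in Q^{*}$, the Weyl group acts trivially on $G_Q$, so $\overline{\mathrm{Aut}(Q)}$ is generated by the diagram automorphisms of the individual components together with permutations of isometric components; the task reduces to checking, case by case, that this finite group acts transitively on the set of admissible Lagrangians. I expect this last case analysis to be the principal obstacle: a priori there may be several Lagrangians of the right order, and one must verify for each root system either that all but one acquire a coset of norm $2$ --- removing them via the third condition --- or that the survivors are permuted by a diagram or component symmetry. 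What makes this feasible, and where the real leverage lies, is the rigidity already exploited to reduce the list of admissible $R$ to twenty-four (Venkov's theta-series and harmonic-polynomial argument, forcing all components to share a single Coxeter number $h$ with $|G_Q|$ tightly tied to $h$); without that input one would confront twenty-three unrelated computations rather than a controlled family.
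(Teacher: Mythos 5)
You should first note that the paper does not actually prove this statement: it is quoted as Venkov's theorem, and the introduction explicitly says Venkov's proof is among the omissions, so there is no in-paper argument to measure your proposal against. Your reduction is the standard glue-theoretic route and, as a framework, it is sound: any even unimodular $N$ with $N(2)=R$ satisfies $Q\subseteq N\subseteq Q^{*}$ for the root lattice $Q$ of $R$, so $N$ is determined by the code $C=N/Q\subseteq G_Q$; evenness is $q_Q|_C=0$ (Proposition \ref{Nik1}), unimodularity together with isotropy forces $C=C^{\perp}$ via $\det(N)=\det(Q)/|C|^2$ (Theorem \ref{thm: torsiondet} and Corollary \ref{orderdet}), the root condition is that no nonzero coset contain a norm-$2$ vector, and the Weyl group acts trivially on $G_Q$ (your computation $s_\alpha(x)-x=-(x,\alpha)\alpha\in Q$ is correct), so isomorphism of lattices corresponds to conjugacy of codes under diagram and component symmetries. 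Your handling of the empty root system as a separate case (the paper's Theorem \ref{thm:Leechunique}) is also the right move, since nothing can be glued to $Q=0$.

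The genuine gap is that the proposal stops exactly where the theorem's content begins. Existence is delegated to "the explicit glue-code constructions" (Table \ref{Niemeiertable}) without verifying the three conditions, and uniqueness is reduced to the assertion that the admissible Lagrangian subgroups form a single orbit under $\overline{\mathrm{Aut}(Q)}$, which you yourself flag as "the principal obstacle" and then do not carry out. This is not a routine finite check: for $R=A_1^{\oplus 24}$ the admissible codes are precisely the doubly even self-dual binary codes of length $24$ with minimum weight $8$, and transitivity of $S_{24}$ on these is equivalent to the uniqueness of the extended Golay code --- a substantial theorem which the paper itself has to import separately (as Theorem 2.6 of the cited lecture notes) in its Leech construction; the cases $A_2^{\oplus 12}$, $A_3^{\oplus 8}$, $A_4^{\oplus 6}$, etc.\ require analogous classification results for ternary and other self-dual codes. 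Likewise, the appeal to Venkov's Coxeter-number rigidity constrains which $R$ can occur but does not by itself bound or organise the set of Lagrangians for a fixed $R$. So what you have is a correct and useful translation of the statement into a classification problem for glue codes, plus a correct criterion for when two gluings give isometric lattices, but the existence-and-uniqueness verification that constitutes the theorem is asserted rather than proved.
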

Figuratively, we begin with an $ADE$ root lattice. Only one of these is unimodular, namely: $E_8^{\oplus 3}$. For a lattice to be unimodular, recall from the proof of Theorem \ref{thm: uni-s.d} that the fundamental region must be minimal. We can make the fundamental region smaller by adding in lattice points through the addition of glue vectors. Theorem \ref{thm:Ncomplete} tells us that all the even unimodular lattices (up to isometry) are constructed in this way. This is very similar to the notation we used in tables \ref{tab:T0} and \ref{tab:T0SAGE}, where we wrote $T_0$ as a root lattice $L$ and extra vectors that form an additive group $K$ in the quotient $\frac{T_0}{L}$. 
The complete list of glue vectors is given on \cite[p.~407]{CS} and we reproduce this as Table \ref{Niemeiertable}.
\clearpage
\begin{table}
\setlength{\tabcolsep}{0.5em} 
{\renewcommand{\arraystretch}{1.5}
\begin{tabular}{|c|c|c|c|}
\hline
Root system $R$ & Glue code generators & Number of glue vectors\footnote{Note that we always have the case identity vector as a glue vector} & Number of roots\\
\hline
$D_{24}$ & $[1]$ & 2 & 1104\\
\hline
$E_8^{\oplus 3}$ & $[000]$ & 1 & 720\\
\hline
$D_{16}\oplus E_8$ & $[10]$ & 2 & 720\\
\hline
$A_{24}$ & $[5]$ & 5 & 600\\
\hline
$D_{12}\oplus D_{12}$ & $[(12)]$ & 4 & 628 \\
\hline
$A_{17}\oplus E_7$ & $[31]$ & 6 & 432 \\
\hline
$D_{10}\oplus E_7\oplus E_7$ & $[110],[301]$ & 4 & 432 \\
\hline
$A_{15}\oplus D_9$ & $[21]$ & 8 & 384 \\
\hline
$D_8^{\oplus 3}$ & $[(122)]$ & 8 & 336 \\
\hline
$A_{12}\oplus A_{12}$ & $[15]$ & 13 & 312 \\
\hline
$A_{11}\oplus D_7\oplus E_6$ & $[111]$ & 12 & 288 \\
\hline
$E_6^{\oplus 4}$ &$[1(012)]$& 9 & 288  \\
\hline
$A_9\oplus A_9\oplus D_6 $ & $[240],[501],[053]$ & 20& 240\\
\hline
$(D_6)^{\oplus 4}$ & $A[0123]$ & 16 & 240\\
\hline
$A_8^{\oplus 3}$ & $[(114)]$ & 27 & 216 \\
\hline
$A_7^{\oplus 2}\oplus D_5^{\oplus 2}$ & $[1112],[1721]$ & 32 & 192 \\
\hline
$A_6^{\oplus 4}$ & $[1(216)]$ & 49 & 168 \\
\hline
$A_5^{\oplus 4}\oplus D_4$ & $[2(024)0],[33001],[30302],[30033]$ & 72 & 144 \\
\hline
$D_4^{\oplus 6}$ & $[111111],[0(02332)]$ & 64 & 144 \\
\hline
$A_4^{\oplus 6}$ & $[1(01441)]$ & 125 & 120\\
\hline
$A_3^{\oplus 8}$ & $[3(2001011)]$ & 256 & 96 \\
\hline
$A_2^{\oplus 12}$ & $[2(11211122212)]$ & 729 & 72 \\
\hline
$A_1^{\oplus 24}$ & $[1(00000101001100110101111)]$ & 4096 & 48 \\
\hline
$\Lambda_{24}$ & - & - & None \\
\hline
\end{tabular}}
\caption{This table gives the glue vectors that must be added to the 23 root systems of rank 24 to obtain the Niemeier lattices. The Leech lattice is included for completeness, and $A[\hdots]$ indicates that the glue code is formed by the action of the alternating group on the glue vectors.}
\label{Niemeiertable}
\end{table}
\section{Sphere packing}
The sphere packing problem is the second part of  Hilbert's eighteenth problem \cite{Hilbert}, and is still one of the famous open problems in mathematics. The modern version of the problem is the following:
\begin{qst}
In an arbitrary dimension $n$, which arrangement of spheres of fixed radius in $R^{n}$ has the highest possible number of spheres per unit volume?
\end{qst}
Many variations of the problem have been considered: the spheres may be allowed to have different radii \cite{differentradius}; a variety of different geometric objects may be used \cite{otherpacking} and different ambient spaces to $\mathbb{R}^{n}$, most notably hyperbolic spaces, have been considered \cite{hyperbolicpacking}. Our immediate focus is a weaker form of the standard problem\footnote{An excellent reference for the sphere packing problem and it's applications to communication theory and physics is \cite{Cohn}.}. \\
\indent What arrangements of spheres are possible? A natural approach is to place sphere centers on the points of a lattice. Such an arrangement is called a \textit{lattice sphere packing}. A lattice packing imposes structure on the sphere packing, and so it is unsuprising that there exists denser sphere packings than lattice packings in certain dimensions. More generally we can consider:
\begin{definition}[Periodic packing]
Let $\mathcal{C}$ be a set whose entries are vectors in a Euclidean space $\mathbb{R}^{n}$ from the origin to centers of spheres of a given radius. $\mathcal{C}$ is called a \textit{periodic} packing if there is some collection of lattices $L_{1},\cdots,L_{k}$ such that $\mathcal{C}=\bigcup\limits_{i=1}^{k}L_{i}$.
\end{definition}
That we need not weaken further is the content of the following theorem:
\begin{theorem}
The densest periodic sphere packing on a region approaches the densest possible packing as the region becomes infinite. 
\end{theorem}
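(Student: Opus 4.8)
The plan is to prove that the supremum $\Delta$ of the density over \emph{all} packings of balls of a fixed radius $r$ in $\R^n$ is already attained, in the limit, by periodic packings whose period tends to infinity. Fix notation: for a packing $\mathcal{P}$ --- equivalently a set of ball-centers with pairwise distances $\geq 2r$ --- let $C_R = [-R/2,R/2]^n$, let $N(\mathcal{P},R)$ be the number of centers of $\mathcal{P}$ in $C_R$, and let $V$ be the volume of one ball; the density of $\mathcal{P}$ is $\overline{\delta}(\mathcal{P}) = \limsup_{R\to\infty} V\,N(\mathcal{P},R)/R^n$ and $\Delta := \sup_{\mathcal{P}}\overline{\delta}(\mathcal{P})$. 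Given $\epsilon>0$, I would first pick a packing $\mathcal{P}$ with $\overline{\delta}(\mathcal{P})>\Delta-\epsilon$, and then, using the definition of $\limsup$, pick a side length $R$ --- which may be taken arbitrarily large --- with $V\,N(\mathcal{P},R)\geq(\Delta-\epsilon)R^n$.

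Next I would truncate and periodize. Let $\mathcal{P}'$ be the finite set of centers of $\mathcal{P}$ lying in $C_R$, and set $L=(R+4r)\mathbb{Z}^n$ and $\mathcal{Q}=\mathcal{P}'+L$. The set $\mathcal{Q}$ is a packing: the balls centered at $\mathcal{P}'$ are pairwise disjoint since $\mathcal{P}$ is, and each of them is contained in $[-R/2-r,R/2+r]^n$, hence stays at distance $\geq r$ from every face of the fundamental cube $C_{R+4r}$ of $L$, so balls in different $L$-translates of that cube are disjoint as well; and $\mathcal{Q}$ is by construction $L$-periodic. Its density is computed exactly from one fundamental cell:
\[
\overline{\delta}(\mathcal{Q}) = \frac{V\,N(\mathcal{P},R)}{(R+4r)^n} \;\geq\; (\Delta-\epsilon)\,\frac{R^n}{(R+4r)^n}.
\]
Since $R$ was free to be large and $R^n/(R+4r)^n\to 1$, we obtain a periodic packing with $\overline{\delta}(\mathcal{Q})>\Delta-2\epsilon$; as every packing has density $\leq\Delta$, letting $\epsilon\to 0$ shows that the supremum of densities over periodic packings equals $\Delta$, which is the assertion.

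The elementary parts --- choosing the margin $4r$ so that translated cubes do not interfere, and the limit $R^n/(R+4r)^n\to 1$ --- are routine. The one point I would be careful about is the passage from the $\limsup$ definition of density to the existence of arbitrarily large \emph{good} cubes: one must not assume $V N(\mathcal{P},R)/R^n$ converges, only extract a divergent sequence of side lengths along which it exceeds $\Delta-\epsilon$, and then note that periodizing along that sequence loses no density because we count \emph{centers inside} $C_R$ (not balls meeting $C_R$) and the added margin is a fixed $O(r)$ independent of $R$. This is the main --- and essentially only --- obstacle.
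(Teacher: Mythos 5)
Your proof is correct and follows essentially the same route as the paper's sketch (which is taken from Cohn): truncate a near-optimal packing to a large cube, repeat it periodically, and note that the boundary correction is negligible as the cube grows. Your only deviations are cosmetic improvements in rigor --- adding a fixed margin $4r$ to the period instead of deleting overlapping spheres at the tile edges, and working with $\epsilon$-optimal packings and the $\limsup$ definition rather than presupposing an optimal packing exists.
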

We provide a sketch due to \cite{Cohn} - a formal proof can be found in reference (37) of that work, provided here for convenience as \cite{Cohn37}. Consider an optimal sphere packing in a given dimension and take a large `tile' of that packing. The sphere packing in that tile can be repeated to cover the space. Such a packing would be periodic and only at the edges of the tile could the spheres in the periodic packing possibly overlap. These can always be removed, and by taking the tile to be large enough, we see that a periodic packing is arbitrarily close to the optimal packing. \\

The \textit{density of a sphere packing}, $\rho_S$, is simply the proportion of volume of a space contained within the spheres. In a lattice periodic packing there is exactly one sphere per fundamental region and so the density for spheres of radius $R$ is given by:
\begin{definition}[Density of a sphere packing]
\begin{equation}
\rho_S\defeq\frac{\mathrm{volume \text{ }of\text{ }sphere}}{\mathrm{volume\text{ }of\text{ }fundamental\text{ }region}}=\frac{(\pi)^{\frac{n}{2}}}{\Gamma(\frac{n}{2}+1)}R^{n}\frac{1}{\sqrt{\det(L)}}
\end{equation}
\end{definition}
For a given dimension, the volume prefactor is a constant, and so a dividing by the volume of the $n$-dimensional unit sphere, gives a more transparent measure - the \textit{center density}, $\delta_L$
\begin{definition}
\begin{equation}
\delta_L=\frac{(R)^{\frac{n}{2}}}{4}\frac{1}{\sqrt{\det(L)}}=\frac{(\lambda_1)^{\frac{n}{2}}}{4}\frac{1}{\sqrt{\det(L)}}
\end{equation}
\end{definition}
Where in the last equality, $\lambda_1$ is the shortest length vector in the lattice. 
\subsection{Laminated lattices}\hfill

We want to make a connection between lattice sphere packgings and the possible frame lattices $W$. To do so we need to introduce the best known, and sometimes best possible, lattice sphere packings. We require the notion of a \textit{laminated lattice}:
\begin{definition}[Laminated lattice]
A \textit{laminated lattice} is one constructed as follows: Begin with the lattice consisting of a single point $\Lambda_0$. Then, from all one-dimensional lattices of minimal norm 4 that have the one-point lattice as a sublattice choose those with the smallest determinant. We then repeat this procedure to form $\Lambda_n$ from $\Lambda_{n+1}$.
\end{definition}
We denote the laminated lattices as $\Lambda_n$, and they form a sequence with all $\Lambda_n$ being a primtive sublattice of $\Lambda_{n+1}$. The Leech lattice is the unique laminated lattice in dimension 24, and so all of the laminated lattices appearing in the Table \ref{tab:Packings} appear as sublattices of the Leech. There is another sequence of lattices which contain some of the densest known lattice sphere packings. These are the $K_{n}$ lattices, which are \textit{also} primitive sublattices of the Leech lattice \cite[Chapter~6]{CS}. 
We have the result that: (\cite[Corollary~7,~Chapter~7]{CS})
\begin{theorem}
Each of the laminated lattices $\Lambda_0,\cdots,\Lambda_8$ and $\Lambda_{16},\cdots,\Lambda_{23}$ are the densest primitive sublattices of the Leech lattice in their dimension. Moreover, each of these lattice sphere packings is the densest \textbf{known} lattice packings in their dimension.
\end{theorem}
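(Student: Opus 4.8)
The plan is to convert the density statement into a statement about determinants and then to invoke the classification of low-dimensional sections of the Leech lattice $\Lambda_{24}$. First, $\Lambda_{24}$ has minimal norm $4$, so every sublattice $M\subseteq\Lambda_{24}$ has $\lambda_1(M)\ge 4$; the lamination construction makes $\lambda_1(\Lambda_n)=4$ exactly, and by transitivity of primitivity ($\Lambda_n$ is primitive in $\Lambda_{n+1}$, which is primitive in $\dots$, which is primitive in $\Lambda_{24}$) each $\Lambda_n$ is itself a primitive sublattice of $\Lambda_{24}$ and hence a legitimate competitor. For lattices of a fixed minimal norm the centre density is a strictly decreasing function of the determinant, so --- granting, as follows from the section analysis below, that the densest primitive sublattices always realise the minimal norm $4$ of $\Lambda_{24}$ --- the first assertion is equivalent to: for $n\in\{0,\dots,8\}\cup\{16,\dots,23\}$ and every primitive $n$-dimensional sublattice $M\subset\Lambda_{24}$ one has $\det M\ge\det\Lambda_n$, with equality attained by $\Lambda_n$.

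I would next record the values $\det\Lambda_n$. Running the lamination recursively while tracking the squared height $h_{n+1}^2$ of each successive layer gives $\det\Lambda_{n+1}=h_{n+1}^2\,\det\Lambda_n$, from which the standard table is read off (for instance $\det\Lambda_1=4$, $\det\Lambda_8=256$, and at the upper end $\det\Lambda_{16}=256$, $\det\Lambda_{23}=4$). The ``mirror symmetry'' of these values about dimension $12$ is not accidental: since $\Lambda_{24}$ is unimodular and both $\Lambda_n$ and $\Lambda_n^{\perp}$ are primitive with $\Lambda_n\oplus\Lambda_n^{\perp}\subseteq\Lambda_{24}$, Theorem \ref{rem:dets} gives $\det\Lambda_n^{\perp}=\det\Lambda_n$, exhibiting a $(24-n)$-dimensional primitive sublattice of the same determinant.

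The crux --- and the step I expect to be the main obstacle --- is the lower bound $\det M\ge\det\Lambda_n$. This is not accessible to soft counting arguments; it requires the explicit enumeration of the sections $\mathbb{R}^k\cap\Lambda_{24}$ of small dimension. The strategy is an induction on $n$: one shows that a minimal-determinant primitive $n$-dimensional sublattice must be obtained from a minimal-determinant primitive $(n-1)$-dimensional section by stacking a single layer, and that the layer minimising the resulting determinant is precisely the one selected by the lamination procedure. Carrying this out uses the deep-hole structure of $\Lambda_{24}$ (the covering-radius $\sqrt 2$ theorem together with the list of deep holes) and a finite case check over the admissible gluings. For $0\le n\le 8$ and $16\le n\le 23$ the induction closes with $\Lambda_n$ as the (essentially unique) minimiser; for $9\le n\le 15$ it branches and genuinely denser non-laminated sections appear --- those built from the Coxeter--Todd lattice $K_{12}$ and its neighbours --- which is exactly why those dimensions are excluded from the statement.

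Finally, the clause that these lattices furnish the densest \emph{known} lattice packings in their respective dimensions needs no further argument: it is the assertion that, in each of the tabulated records of lattice packings in dimensions $\le 24$ compiled in \cite{CS}, the entry for dimension $n$ (with $n$ in the stated ranges) is the packing coming from $\Lambda_n$, no denser lattice having been exhibited to date.
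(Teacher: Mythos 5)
The paper does not actually prove this statement: it is imported wholesale as a citation to \cite[Corollary~7, Chapter~7]{CS}, so there is no in-paper argument to compare yours against. Judged on its own terms, your proposal correctly handles the soft reductions --- every sublattice of $\Lambda_{24}$ has minimal norm at least $4$, primitivity of $\Lambda_n$ in $\Lambda_{n+1}$ follows from the determinant-minimality in the lamination definition and sections compose transitively, and Theorem \ref{rem:dets} does give the determinant symmetry $\det\Lambda_n^{\perp}=\det\Lambda_n$ --- and the ``densest known'' clause is indeed just a pointer to the tabulated records. But the crux you identify, the lower bound $\det M\ge\det\Lambda_n$ for every primitive $n$-dimensional sublattice $M\subset\Lambda_{24}$, is not established by what you write. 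Your inductive claim that a minimal-determinant primitive $n$-section must arise by stacking a single layer on a minimal-determinant primitive $(n-1)$-section is exactly the kind of greedy assertion that the subject shows to be unreliable: the existence of the $K_{11}$--$K_{13}$ sections, denser than the corresponding laminated lattices, demonstrates that optimal sections of the Leech lattice are not in general reachable by layer-by-layer optimisation, and nothing in your sketch explains why the induction nevertheless ``closes'' precisely for $n\le 8$ and $16\le n\le 23$ rather than branching there too. Invoking the covering-radius-$\sqrt{2}$ theorem and ``a finite case check over the admissible gluings'' names tools without supplying the argument; the actual determination in \cite{CS} rests on the detailed determinant inequalities for the $\Lambda_n$ and $K_n$ sequences together with the known optimality results in dimensions up to $8$, none of which you reproduce or substitute for.

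There is also a smaller unresolved point you flag but then lean on: the reduction from centre density to determinant assumes the optimal primitive sublattice attains minimal norm exactly $4$, whereas a candidate section with larger minimal norm and larger determinant could a priori be denser; ``as follows from the section analysis below'' defers this to the very step that is missing. So the proposal is a reasonable blueprint for how one would organise a proof, but the mathematical content of the theorem --- the classification of the densest sections of the Leech lattice in the stated dimensions --- is left as an unproven black box, and the one structural mechanism you propose for it (greedy lamination of optimal sections) would fail if applied uniformly. The honest fix, and what the paper itself does, is to cite the result from \cite{CS} rather than claim a self-contained proof.
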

Further:
\begin{theorem}
$\Lambda_0,\cdots,\Lambda_8$, are the \textit{unique} best possible lattice packings in their dimension. 
\end{theorem}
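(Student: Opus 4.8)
The plan is to reduce the statement to the classical low-dimensional classification of \emph{extreme} lattices via Voronoi's theory. Recall that a lattice is extreme if its center density $\delta_L$ is a local maximum as the Gram matrix varies, and that in each fixed dimension $n$ a Mahler-type compactness argument guarantees that $\delta_L$ attains a global maximum over all lattices; being a global maximum, that lattice is in particular extreme. So it is enough to list the extreme lattices in dimensions $1$ through $8$ and check that in each dimension exactly one of them --- namely $\Lambda_n$, which for $n=1,\dots,8$ is $\mathbb{Z}, A_2, A_3, D_4, D_5, E_6, E_7, E_8$ up to scaling --- achieves the maximal center density; the case $n=0$ is trivial.

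First I would invoke Voronoi's criterion: a lattice is extreme if and only if it is both \emph{perfect} (the rank-one symmetric forms $x\mapsto(x,v)^2$, with $v$ ranging over minimal vectors, span the space of symmetric bilinear forms) and \emph{eutactic} (the identity form lies in the positive cone spanned by those rank-one forms). This confines the candidate maximizers to the finite set of perfect forms in each dimension, which Voronoi's neighbour-walking algorithm on the Ryshkov polytope enumerates completely: one perfect form for $n\le 3$, two for $n=4$, three for $n=5$, seven for $n=6$, thirty-three for $n=7$, and $10916$ for $n=8$ (classical for $n\le 5$, Barnes for $n=6$, Jaquet-Chiffelle for $n=7$, Dutour Sikiri\'c--Sch\"urmann--Vallentin for $n=8$).

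Next, for each $n\in\{1,\dots,8\}$ I would compute the center density of every perfect form on the list and verify that $\Lambda_n$ strictly dominates all of the others, so that no two perfect forms tie for the maximum; this strict domination is precisely what delivers \emph{uniqueness}. For $n\le 5$ this is a short explicit computation; for $n=6,7,8$ it is exactly Blichfeldt's theorem together with the earlier Korkine--Zolotareff work, which I would quote rather than reprove. The identifications $\Lambda_4=D_4$, $\Lambda_6=E_6$, $\Lambda_7=E_7$, $\Lambda_8=E_8$, and all the minima and determinants needed for the density comparison, can be read off from \cite[Chapter~4]{CS}; checking eutaxy of each $\Lambda_n$ is an optional cross-check that they are genuinely extreme.

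The main obstacle is dimension $8$: the perfect-form enumeration there produces $10916$ lattices, so even the ``routine'' step of tabulating center densities and eutaxy is a genuine computer-assisted computation, and establishing that this list of $10916$ is complete is itself the deep part of the argument for that case. Dimension $7$, with $33$ perfect forms, is large but manageable; dimensions $\le 6$ are classical and brief. Everything else --- the existence of a maximizer, Voronoi's criterion, and the dictionary between the $\Lambda_n$ and the $ADE$ lattices --- is standard and can be cited from \cite[Chapter~6]{CS}.
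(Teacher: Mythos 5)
Your proposal is essentially correct, but note that the paper itself offers no proof of this statement: it is quoted as a known fact about the laminated lattices (the companion statement just above it is cited to \cite[Corollary~7,~Chapter~7]{CS}, and Table \ref{tab:Packings} merely records the conclusion), so you are supplying an argument where the paper supplies a citation. Your route --- existence of a maximizer by compactness, any maximizer is extreme and hence perfect by Voronoi's criterion, then enumerate the perfect forms in dimensions $1$ through $8$ and check that $\Lambda_n$ (i.e.\ $\mathbb{Z}, A_2, A_3, D_4, D_5, E_6, E_7, E_8$ up to scaling) strictly dominates all others in center density --- is a legitimate and standard way to get both optimality and uniqueness up to similarity, and your perfect-form counts ($1,1,1,2,3,7,33,10916$) are the right ones. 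What this buys is a uniform conceptual framework; what it costs is that the completeness of the lists in dimensions $7$ and $8$ is itself a deep computer-assisted classification (Jaquet-Chiffelle; Dutour Sikiri\'c--Sch\"urmann--Vallentin), so in practice your proof is an assembly of citations of comparable weight to simply quoting the classical optimality and uniqueness results, which is what the paper does via \cite{CS}. Two small attributional slips that do not break the outline, since you quote rather than reprove those steps: Blichfeldt's optimality proofs in dimensions $6$--$8$ do not go through perfect forms, and the uniqueness statements in those dimensions are due to Vet\v{c}inkin rather than to Blichfeldt or Korkine--Zolotareff; if you keep this route, adjust the references accordingly.
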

These facts are summarised in the Table \ref{tab:Packings}.\\
\indent Moreover, we can see that the Leech lattice contains \textit{every} densest known lattice in dimension $n\leq 24$ as a primitive sublattice. Looking at Table \ref{tab:Packings}, we find a striking coincidence:
\begin{observation}
The densest known lattice packing in 18 dimensions has a determinant exactly equal to that required by the F-theory tadpole cancellation condition. The author does not know of any reason for this to be so, and this fact appears as an intriguing connection between mathematics and physics that wants explanation.
\end{observation}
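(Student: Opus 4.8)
The statement is a numerical coincidence, so the plan is simply to compute the two determinants and compare them. First I would pin down the left-hand number. By the theorems quoted above the densest known lattice packing in dimension $18$ is the laminated lattice $\Lambda_{18}$ (dimension $18$ lies in the range $16,\dots,23$, and $\Lambda_{18}$ is a primitive sublattice of the Leech lattice), and its invariants are listed in Conway--Sloane's table of laminated lattices, reproduced here as Table~\ref{tab:Packings}. I would read off $\det(\Lambda_{18})$ in the normalisation under which laminated lattices have minimal norm $4$ --- which is precisely the normalisation under which an even lattice with a root has minimal norm $2$ and a rootless even lattice has minimal norm $\ge 4$, so that the center density of a rootless rank-$18$ even lattice $L$ is $1/\sqrt{\det(L)}$.

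Next I would pin down the right-hand number. By Proposition~\ref{Nik2} the transcendental lattice $T$, the auxiliary lattice $T_0$ and the frame lattice $W$ all share the same discriminant group, hence the same determinant; applying Theorem~\ref{rem:dets} to the mutually orthogonal primitive pair $T_0,\,W$ inside the unimodular Niemeier lattice $\mathcal{N}$ confirms $\det(W)=\det(T_0)=\det(T)$ for each of the $34$ cases of \cite{1312}. The $D3$-brane tadpole cancellation condition of \cite{1312,1401} is exactly what cuts the admissible transcendental lattices down to those $34$, and I would rewrite it as an upper bound $\det(T)\le D$ on the discriminant. The extremal case is $T_{[888]}$ of Table~\ref{tab:T0SAGE}, with $\det(T_{[888]})=192$, so the ``determinant required by the tadpole cancellation condition'' is $D=192$. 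The identity to be checked is then the arithmetic statement $\det(\Lambda_{18})=192=D$, which is a single line against Table~\ref{tab:Packings}.

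Granting the coincidence, the advertised consequence is immediate: a negative answer to the guiding question would supply a frame lattice $W$ with empty root system, i.e.\ (after the sign flip) a positive-definite even lattice of rank $18$ with minimal norm $\ge 4$ and $\det(W)=\det(T)\le D=\det(\Lambda_{18})$, hence center density $\ge 1/\sqrt{\det(W)}\ge 1/\sqrt{\det(\Lambda_{18})}=\delta(\Lambda_{18})$; such a $W$ would equal or beat the current record in dimension $18$, and since it would not lie in the laminated/Leech tower that produces $\Lambda_{18}$ it would be a genuinely new optimal lattice packing, not easily slotted into the known structure below dimension $24$.

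The step I expect to be the real obstacle is extracting the clean bound $\det(T)\le D$ from the physics and identifying $D$ without a normalisation slip: \cite{1312,1401} work in the opposite signature and with their own scaling of the flux quantum, so matching their tadpole inequality to the ``minimal norm $4$ / no roots'' conventions on the sphere-packing side --- and making sure no spurious factor of $2$ or $4$ creeps in between $\det(T)$, the packing radius, and the center density --- is where the care is needed. Everything else is a table look-up and one comparison.
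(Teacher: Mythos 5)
Your verification is correct and is essentially what the paper itself does (implicitly, since the Observation carries no proof): read $\det(\Lambda_{18})=192$ from Table~\ref{tab:Packings} in the minimal-norm-$4$ normalisation, note that the tadpole cancellation condition caps the admissible transcendental lattices at the $34$ cases whose largest determinant is $\det(T_{[888]})=192$ in Table~\ref{tab:T0SAGE}, carry that determinant to $W$ via Proposition~\ref{Nik2} and Theorem~\ref{rem:dets}, and compare. Your added care about signature and normalisation conventions is sensible but does not change the argument.
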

\pagebreak
\begin{table}
\begin{center}
\setlength{\tabcolsep}{0.5em} 
{\renewcommand{\arraystretch}{1.5}
\begin{tabular}{|c|c|c|c|}
\hline
$Lattice$ &determinant&Optimality proof?&Uniqueness proof? \\
\hline
$\Lambda_0$ & - & Yes & Yes \\
\hline
$A_1'$ & $4$ & Yes & Yes \\
\hline
$A_2'$ & $12$ & Yes & Yes \\
\hline
$A_3'$ & $32$ & Yes & Yes \\
\hline
$D_4'$ & $64$ & Yes & Yes\\
\hline
$D_5'$ & $128$ & Yes & Yes\\
\hline
$E_6'$ & $192$ & Yes & Yes\\
\hline
$E_7'$ & $256$ & Yes & Yes\\
\hline
$E_8'$ & $256$ & Yes & Yes \\
\hline
$\Lambda_9$ & $512$ & No & No\\
\hline
$\Lambda_{10}$ & $768$ & No & No\\
\hline
$K_{11}$ & $972$ & No & No\\
\hline
$K_{12}$ & $729$ & No & No\\
\hline
$K_{13}$ & $972$ & No & No\\
\hline
$\Lambda_{14}$ & $768$ & No & No\\
\hline
$\Lambda_{15}$ & $512$ & No & No\\
\hline
$\Lambda_{16}$ & $256$ & No & No\\
\hline
$\Lambda_{17}$ & $256$ & No & No\\
\hline
$\Lambda_{18}$ & $192$ & No & No \\
\hline
$\Lambda_{19}$ & $128$ & No & No\\
\hline
$\Lambda_{20}$ & $64$ & No & No\\
\hline
$\Lambda_{21}$ & $32$ & No & No\\
\hline
$\Lambda_{22}$ & $12$ & No & No\\
\hline
$\Lambda_{23}$ & $4$ & No & No\\
\hline
$\Lambda_{24}$ & $1$ & Yes & Yes\\
\hline
\end{tabular}}
\caption{This table shows some of the best known lattice sphere packings. The third coloumn indicates whether there is a proof that the lattice packing shown is the best possible. A sphere packing with spheres of radius 1 has been assumed, and the prime indicates that the lattices have been scaled from their conventional form.}
\label{tab:Packings}
\end{center}
\end{table}
\section{Leech Lattice}
The optimality of the Leech is such that it warrants it's own section and provides a good illustration of \textit{coding theory}, which has been crucial in constructing dense lattice packings.
Conway has given an elegant proof of the fact that: \cite[Chapter~12]{CS}
\begin{theorem}\label{thm:Leechunique}
Up to isometry, there is a unique even unimodular lattice in $24$ dimensions, $\Lambda_{24}$, with an empty root system. 
\end{theorem}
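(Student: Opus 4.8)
The statement has two halves: \emph{existence} of an even unimodular rank-$24$ lattice with no roots, and \emph{uniqueness} up to isometry. For existence I would simply exhibit one: take the binary Golay code $\mathcal{C}\subseteq\mathbb{F}_2^{24}$ and form the Conway--Sloane coordinate lattice $\Lambda_{24}\subseteq\frac{1}{\sqrt 8}\mathbb{Z}^{24}$ (integer vectors, scaled by $1/\sqrt 8$, all of whose coordinates are congruent mod $2$, whose mod-$4$ reduction is a Golay codeword up to a global shift, and which satisfy the appropriate coordinate-sum parity condition), and check by hand that its Gram form is even, that $\det=1$, and that it has no vector of norm $2$. This is the coordinate construction carried out below, so here I would only cite it. The real content is uniqueness, and that is what the proof should be about; it is precisely the one case of the Niemeier classification (the empty root system) that is \emph{not} resolved by the ``root system determines the lattice'' mechanism available for the other $23$ lattices, so a direct argument is unavoidable.

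For uniqueness, let $\Lambda$ be \emph{any} even unimodular lattice of rank $24$ with no vector of norm $2$; the goal is $\Lambda\cong\Lambda_{24}$. \textbf{Step 1 (theta series).} Since $\Lambda$ is even and unimodular, $\Theta_\Lambda(\tau)=\sum_{m\geq0}N_{2m}(\Lambda)\,q^m$ (with $q=e^{2\pi i\tau}$, $N_{2m}$ the number of vectors of norm $2m$) is a modular form of weight $12$ for $\mathrm{SL}_2(\mathbb{Z})$. That space is two-dimensional, spanned by $E_4^3$ and the cusp form $\Delta$, so $\Theta_\Lambda=E_4^3+(N_2-720)\Delta$; the hypothesis $N_2=0$ pins this down completely, giving minimal norm $4$ and $N_4=196560$, $N_6=16773120$, $N_8=398034000$, etc. Also $\Lambda=\Lambda^{*}$. \textbf{Step 2 (rigidity).} Reduce mod $2$: $\bar\Lambda\defeq\Lambda/2\Lambda$ is a $24$-dimensional $\mathbb{F}_2$-space carrying the nonsingular quadratic form $Q(\bar v)=\tfrac12(v,v)\bmod 2$ (well defined because $\Lambda$ is even), on which every minimal vector is isotropic. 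For two minimal vectors $v,w$ with $v-w\in 2\Lambda$ one gets $(v-w,v-w)\equiv 0\bmod 8$, forcing $(v,w)\in\{0,\pm4\}$; since $(v,w)=0$ would give a root, each nonzero coset of $2\Lambda$ contains either $0$ or exactly one $\pm$-pair of minimal vectors, and a count of how the $196560$ minimal vectors distribute among the cosets shows $(\bar\Lambda,Q)$ is of plus type. Next I would locate a \emph{coordinate frame}: $24$ pairwise-orthogonal norm-$4$ vectors spanning a sublattice $N\cong\sqrt 8\,\mathbb{Z}^{24}$ of index $2^{24}$, whose existence among the $196560$ minimal vectors follows from an averaging/pigeonhole argument together with the constraint $(v,w)\in\{0,\pm1,\pm2,\pm4\}$ on minimal vectors forced by the absence of roots. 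Then $N\subseteq\Lambda\subseteq N^{*}$ with $N^{*}/N\cong(\mathbb{Z}/4)^{24}$, and $\Lambda/N$ is a subgroup of order $2^{24}$ self-dual for the induced discriminant form; decoding the ``even'' and ``rootless'' conditions forces its mod-$2$ reduction to be a doubly-even self-dual $[24,12]$ binary code of minimum distance $8$, and a finer analysis of the $\bmod\,4$ entries forces the remaining glue. Invoking the uniqueness of the binary Golay code (equivalently of the Steiner system $S(5,8,24)$) then identifies the data with the Conway--Sloane construction, so $\Lambda\cong\Lambda_{24}$.

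\textbf{Where the difficulty lies.} Step 1 is routine modular-form bookkeeping and the existence construction is a direct verification, so the main obstacle is the rigidity argument of Step 2 — concretely (i) producing the coordinate frame $N$ out of the $196560$ minimal vectors, and (ii) proving that the mod-$2$ image of $\Lambda/N$ \emph{must} be the $[24,12,8]$ Golay code and that the mod-$4$ glue is then forced. Both lean on substantial combinatorial input that one would prove separately or cite: uniqueness of the binary Golay code, the classification of length-$24$ doubly-even self-dual codes, the structure of $S(5,8,24)$ and its automorphism group $M_{24}$, and careful bookkeeping of inner products among the minimal vectors. This is why the argument, elementary in each individual step, is long; I would organize it exactly as Conway does in \cite[Chapter~12]{CS}.
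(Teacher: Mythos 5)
First, note what the paper actually does with this statement: it offers no proof at all, but states it as Conway's characterization theorem and cites \cite[Chapter~12]{CS}; existence is then supplied separately in the following subsection by the Golay-code construction (the lattice $D$ built from $L_{\mathcal{G}_{24}}$). So your overall plan --- existence by the coordinate/Golay construction, uniqueness by theta-series bookkeeping plus Conway's rigidity argument, with the heavy combinatorial inputs (uniqueness of the $[24,12,8]$ code and of $S(5,8,24)$) cited --- is exactly the route the paper implicitly relies on, and Step~1 of your sketch (weight-$12$ modular forms, $\Theta_\Lambda=E_4^3+(N_2-720)\Delta$, so $N_2=0$ forces $N_4=196560$, $N_6=16773120$, $N_8=398034000$) is correct and is the standard first move.

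There is, however, a concrete problem in your Step~2 as written. The coordinate frame in Conway's argument is not extracted from the $196560$ minimal vectors by an averaging/pigeonhole argument; it consists of $48$ \emph{norm-$8$} vectors ($24$ mutually orthogonal $\pm$-pairs), and its existence is forced by the exact count of congruence classes modulo $2\Lambda$: norm-$4$ and norm-$6$ vectors contribute at most one $\pm$-pair per class, norm-$8$ vectors at most $48$ per class (since for $x\equiv y$ one gets $(x,y)\in\{0,\pm 8\}$), and the identity $1+196560/2+16773120/2+398034000/48=2^{24}$ shows every class is saturated, so some class is an orthogonal $48$-frame. Your version is also internally inconsistent: $24$ pairwise orthogonal norm-$4$ vectors span a copy of $2\mathbb{Z}^{24}$ (i.e.\ $\sqrt{4}\,\mathbb{Z}^{24}$), not $\sqrt{8}\,\mathbb{Z}^{24}$; and if $N\cong\sqrt{8}\,\mathbb{Z}^{24}$ then $[\Lambda:N]=8^{12}=2^{36}$ with $N^{*}/N\cong(\mathbb{Z}/8)^{24}$, whereas the index $2^{24}$ and $(\mathbb{Z}/4)^{24}$ you quote belong to the $2\mathbb{Z}^{24}$ frame. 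The subsequent decoding of $\Lambda/N$ into a doubly-even self-dual $[24,12,8]$ code, and hence the appeal to Golay uniqueness, is set up for the norm-$8$ frame; with the frame step repaired along Conway's lines (as in \cite[Chapter~12]{CS}, which you already name as your organizing template) the argument goes through, but as stated that step would not.
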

The Leech lattice has a number of remarkable properties: 
\begin{enumerate}
	\item It was proven in 2016 \cite{Leechunique} to be the densest periodic sphere packing possible in dimension $24$. 
	\item There are exactly 23 types of inequivalent regions of $R^{n}$ maximally distant from any lattice point, the \textit{deep holes} of the lattice. The norm of the difference between the lattice vectors forming the vertices of the hole is one of $4,6\text{ or }8$ and so we can construct a Dynkin diagram from the hole. Each of the 23 Dynkin diagrams can be associated to glue vectors to reconstruct exactly one Niemeier lattice. 
	\item The best known lattice sphere packing in every dimension $\leq 24$ is a primitive sublattice of the Leech lattice.\cite[Chapter~6]{CS}
	\end{enumerate}
	
Here we give two constructions of $\Lambda_{24}$. The first has the advantage of being explicit and proceeds from the \textit{extended binary Golay Code}. The second has the advantages of being brief and that it can be used to give a uniform proof of point $(2)$ above. \cite{LeechBOR}

\subsection{Coding theory}\hfill

Coding theory is the most commonly used tool to construct dense lattice packings, and is also a common source of applications of lattice theory. We will largely follow \cite{LandCWE} in providing an overview. 
Classically, information is transported as a finite string of symbols. The string is a \textit{code} and there is an intimate connection between coding theory and lattice theory. A code can be used to construct a lattice and a lattice also defines a code. Formally:
\begin{definition}[Code]
A \textit{code} $C$ of length $n$ is any non-empty proper subset of the finite field $\mathbb{F}_{q}^{n}$. 
\end{definition}
The prime number $q$ is the number of values a single entry in the code can take. The binary code standard in digital communication would have $q=2$, but ternary ( with $q=3$) and other codes have also been used to construct lattices. \\
\indent We shall use a \textit{binary linear code} to construct the Leech lattice:
\begin{definition}
A \textit{linear code} $C_L$ is a subspace of $\mathbb{F}_{q}^{n}$ such that $f_1,f_2\in C_L\implies f_1+f_2\in C_L$.
\end{definition}
An important concept in coding theory is the \textit{weight} of a code:
\begin{definition}[weight]
The \textit{weight} of a code $c$, $w(c)$, is the number of non-zero entries in the code. 
\end{definition}
This can be used to give a notion of \textit{distance} between two codes: 
\begin{definition}[Hamming distance]
The \textit{Hamming distance} between two codes $c_1$ and $c_2$ is $w(c_1-c_2)$.
\end{definition}
This gives a measure of how easy it is to distinguish the two codes. Classically, a communication error in a binary code can be modelled as a \textit{bit-flip} which takes $x_i\rightarrow x_i+1\mod 2$. Thus the distance can be thought of as the extent to which an error is liable to produce the code $c_1$ from $c_2$ and conversely. 
We have the following notation:
\begin{notation}
A linear $[n,k,d]$ code is one with $k$ codewords (strings of symbols, or blocks)of length $n$ with a minimum Hamming distance $d$.
\end{notation}
Given a bilinear form $(\cdot,\cdot)$ on the field $\mathbb{F}_{q}^{n}$ of a code $C$, we can define it's \textit{dual code} $C^{\perp}$ as:
\begin{definition}[Dual code]
\begin{equation}
C^{\perp}\defeq\{c^{*}\in\mathbb{F}_{q}^{n} | (c^{*},c)=0\text{ }\forall c\in C \}
\end{equation}
\end{definition}
\subsubsection{Lattices from codes}
There are a number of different constructions from a code to a lattice  (\cite[Chapters~5,7]{CS}). Our construction will use the following scheme. Let $L_{r}$ be a rank $r$ lattice and consider the map $\rho : L_{r}\rightarrow \frac{L_{r}}{2L_{r}}\cong\mathbb{F}_{2}^{n}$. Conversely, $\rho^{-1}:\mathbb{F}_{2}^{r}\rightarrow L_{r}$.\\
\indent We denote the lattice defined by a code $C$ as $L_C$. Explicitly:
\begin{definition}
For a binary $[n,k,d]$ code $C$, we can define the rank $n$ lattice:
\begin{equation}
L_C\defeq \frac{1}{\sqrt{2}}\rho^{-1}(C)
\end{equation}
with:
\begin{equation}
l\in L_C=\frac{1}{\sqrt{2}}(c+2z),\text{ }z\in\mathbb{Z}^{n},c\in C
\end{equation}
\end{definition}
We have the following useful theorems relating the linear code $[n,k,d]$ to the lattice which may be constructed from it \cite[Chapter~7]{CS}:
\begin{theorem}
$L_C$ is a lattice $\iff$ $C$ is a linear code.
\end{theorem}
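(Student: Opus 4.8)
The plan is to reduce the statement to one observation: after identifying $L_r$ with $\mathbb{Z}^n$ (here the rank $r$ equals the code length $n$) and $L_r/2L_r$ with $\mathbb{F}_2^n$, the set $\sqrt2\,L_C = \rho^{-1}(C)$ is, by construction, the full preimage of $C$ under the surjective group homomorphism $\rho:\mathbb{Z}^n\to\mathbb{F}_2^n$ given by reduction mod $2$. Scaling by $1/\sqrt2$ is a homothety of $\mathbb{R}^n$, which only rescales the pairing and hence neither creates nor destroys the property of being a lattice, so it is enough to prove that $\rho^{-1}(C)$ is a lattice if and only if $C$ is linear.

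For the direction where $C$ is assumed linear, I would argue that $C$ is then an $\mathbb{F}_2$-subspace, in particular a subgroup of $\mathbb{F}_2^n$, so $\rho^{-1}(C)$ is the preimage of a subgroup under a homomorphism and hence a subgroup of $\mathbb{Z}^n$; a subgroup of a free abelian group is free abelian. To see it has full rank $n$, note that $\vec0\in C$ forces $2\mathbb{Z}^n=\rho^{-1}(\vec0)\subseteq\rho^{-1}(C)\subseteq\mathbb{Z}^n$, and the outer terms both have rank $n$, so the middle one does too. Finally the symmetric pairing on $L_C$ is the restriction of the positive-definite Euclidean form on $\mathbb{R}^n$, hence non-degenerate, so $L_C$ satisfies the paper's definition of a lattice.

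For the converse, suppose $L_C$ is a lattice. Since a lattice is in particular a free abelian group, $\rho^{-1}(C)=\sqrt2\,L_C$ is closed under addition and contains $0$; being a subset of $\mathbb{Z}^n$ stable under the group law, it is a subgroup of $\mathbb{Z}^n$. As $\rho$ is surjective, $\rho\big(\rho^{-1}(C)\big)=C$, and the homomorphic image of a subgroup is a subgroup, so $C$ is a subgroup of $\mathbb{F}_2^n$; a subgroup of an $\mathbb{F}_2$-vector space is automatically a subspace, the only scalars being $0$ and $1$. Concretely: given $c_1,c_2\in C$, lift them to their $\{0,1\}$-valued representatives in $\mathbb{Z}^n$, add there, and reduce mod $2$ to read off $c_1+c_2\in C$; hence $C$ is linear.

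I do not expect a genuinely hard step here; the only thing to be careful about is to work from the paper's \emph{definition} of a lattice — a finitely generated free abelian group with a non-degenerate symmetric pairing, realised inside $\mathbb{R}^n$ — rather than an informal picture. The implication from $L_C$ a lattice to $C$ linear uses only the free-abelian-group clause, i.e.\ closure under addition, while the converse must additionally produce the full rank and the non-degeneracy, both of which come for free from the sandwich $2\mathbb{Z}^n\subseteq\rho^{-1}(C)\subseteq\mathbb{Z}^n$ together with positive-definiteness of the Euclidean form. The conceptual point worth making transparent is that discreteness and spanning are never at stake: the property being traded against linearity of $C$ is precisely closure under the group operation.
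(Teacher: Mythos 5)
Your argument is correct: the preimage--image correspondence under the reduction map $\rho:\mathbb{Z}^n\to\mathbb{F}_2^n$ is exactly the right mechanism, the sandwich $2\mathbb{Z}^n\subseteq\rho^{-1}(C)\subseteq\mathbb{Z}^n$ settles full rank and (via positive-definiteness of the Euclidean form) non-degeneracy, and over $\mathbb{F}_2$ a nonempty subset closed under addition is automatically a subspace, which handles the converse. Note that the paper itself gives no proof of this statement --- it is quoted from Conway--Sloane, Chapter 7 --- so there is nothing to compare against; your write-up is the standard ``Construction A'' argument and supplies the missing details correctly.
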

\begin{theorem}
$(L_C)^{*}=L_{C^{*}}$. In particular $L_C=(L_C)^{*}\iff C=C^{\perp}$
\end{theorem}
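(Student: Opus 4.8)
The plan is to prove the lattice-level identity $(L_C)^{*}=L_{C^{\perp}}$ by a direct double-inclusion argument, and then obtain the ``in particular'' clause as an immediate corollary, using that the assignment $C\mapsto L_C$ is injective. Throughout I would fix the convention identifying a codeword of $\mathbb{F}_2^n$ with the vector in $\{0,1\}^n\subset\mathbb{Z}^n$ given by its standard representatives, so that $\sqrt{2}\,L_C=\rho^{-1}(C)=C+2\mathbb{Z}^n$, and I would isolate the one elementary fact that does all the work: for $c,c'\in\{0,1\}^n$ one has $(c,c')\in 2\mathbb{Z}$ as an integer if and only if $\bar c\cdot\bar c'=0$ in $\mathbb{F}_2$.

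For the inclusion $L_{C^{\perp}}\subseteq (L_C)^{*}$ I would take $x=\tfrac{1}{\sqrt2}(c'+2z')$ with $c'\in C^{\perp}$ and $l=\tfrac1{\sqrt2}(c+2z)\in L_C$, expand $(x,l)=\tfrac12\big[(c',c)+2(c',z)+2(z',c)+4(z',z)\big]$, and observe that each of the last three terms is an even integer while $(c',c)\in2\mathbb{Z}$ by orthogonality of $\bar c'$ and $\bar c$ in $\mathbb{F}_2$; hence $(x,l)\in\mathbb{Z}$, which is exactly membership of $x$ in $(L_C)^{*}$.

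The slightly more delicate direction is $(L_C)^{*}\subseteq L_{C^{\perp}}$. Here I would take $x\in(L_C)^{*}$ and first pair it with $\sqrt2\,e_i=\tfrac1{\sqrt2}(2e_i)\in L_C$ to conclude $\sqrt2\,x_i\in\mathbb{Z}$ for every $i$, so that $v\defeq\sqrt2\,x\in\mathbb{Z}^n$. Then, pairing $x$ with $\tfrac1{\sqrt2}c\in L_C$ for each $c\in C$ gives $\tfrac12(v,c)\in\mathbb{Z}$, i.e.\ $(v,c)\in2\mathbb{Z}$ for all $c\in C$; reducing mod $2$ this says precisely $\rho(v)\in C^{\perp}$, and therefore $x=\tfrac1{\sqrt2}v\in\tfrac1{\sqrt2}\rho^{-1}(C^{\perp})=L_{C^{\perp}}$. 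Combining the two inclusions yields $(L_C)^{*}=L_{C^{\perp}}$.

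For the final equivalence I would note that $\rho(\sqrt2\,L_C)=C$, so $C\mapsto L_C$ is injective on linear codes; together with $(L_C)^{*}=L_{C^{\perp}}$ this gives $L_C=(L_C)^{*}\iff L_C=L_{C^{\perp}}\iff C=C^{\perp}$ (one should also remark that $C^{\perp}$ is automatically linear, so $L_{C^{\perp}}$ really is a lattice and the earlier theorem applies). I do not expect a genuine obstacle: the content is elementary linear algebra over $\mathbb{Z}$. The only point requiring care is the $\mathbb{F}_2$-versus-$\mathbb{Z}$ bookkeeping — consistently lifting codewords to $\{0,1\}$-representatives and verifying that the ``mod $2$'' passage in both inclusions is compatible with the ambient integer inner product.
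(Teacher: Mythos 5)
Your proof is correct: both inclusions and the mod-$2$ bookkeeping check out, and the injectivity remark legitimately upgrades the lattice identity to the stated equivalence $L_C=(L_C)^{*}\iff C=C^{\perp}$. The paper itself gives no proof of this theorem (it is quoted from \cite[Chapter~7]{CS}), and your double-inclusion computation is essentially the standard argument for Construction~A found there, so nothing further is needed beyond the small convention point you already flag, namely that $(L_C)^{*}$ must be read with the general definition $\{x\in\mathbb{R}^{n}\mid (x,l)\in\mathbb{Z}\ \forall l\in L_C\}$ since $L_C$ need not be integral.
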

\begin{theorem}
$L_C$ is integral $\iff$ $C\subseteq C^{*}$
\end{theorem}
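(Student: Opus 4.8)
The plan is to compute the Euclidean inner product of two arbitrary vectors of $L_C$ directly and to isolate the single term that can obstruct integrality; this obstruction will turn out to be exactly the parity of the integer dot product of two codewords, which is precisely the $\mathbb{F}_2$-bilinear form defining $C^{*}=C^{\perp}$.

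Concretely, I would take arbitrary $l_1,l_2\in L_C$ and write $l_i=\tfrac{1}{\sqrt2}(c_i+2z_i)$ with $c_i\in C$ a fixed $\{0,1\}$-representative and $z_i\in\mathbb{Z}^n$. Expanding the inner product gives
\begin{equation}
(l_1,l_2)=\tfrac12\,(c_1\cdot c_2)+\big(c_1\cdot z_2+z_1\cdot c_2+2\,z_1\cdot z_2\big),
\end{equation}
where the bracketed group is patently an integer. Hence, as $z_1,z_2$ range over $\mathbb{Z}^n$, one has $(l_1,l_2)\in\mathbb{Z}$ for all choices precisely when $c_1\cdot c_2$ is even, i.e. when $(c_1,c_2)=0$ in $\mathbb{F}_2^n$ (recall $c_1,c_2$ are $\{0,1\}$-vectors, so $c_1\cdot c_2\bmod 2$ is exactly the value of the bilinear form defining $C^{*}$). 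From this single identity both directions follow: if $C\subseteq C^{*}$ then $c_1\cdot c_2$ is even for every pair $c_1,c_2\in C$, so $(l_1,l_2)\in\mathbb{Z}$ for all $l_1,l_2\in L_C$ and $L_C$ is integral; conversely, specialising to $z_1=z_2=0$ shows that if $L_C$ is integral then $\tfrac12(c_1\cdot c_2)=\big(\tfrac{1}{\sqrt2}c_1,\tfrac{1}{\sqrt2}c_2\big)\in\mathbb{Z}$ for all $c_1,c_2\in C$, whence $C\subseteq C^{*}$.

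There is no genuinely hard step here; the only point needing a word of care is well-definedness, namely that the residue $\tfrac12(c_1\cdot c_2)\bmod\mathbb{Z}$ controlling integrality depends only on the codewords $c_i\in\mathbb{F}_2^n$ and not on the chosen integer lifts, which is immediate since replacing $c_i$ by $c_i+2z_i$ merely alters the integer part of $(l_1,l_2)$. One could alternatively deduce the claim from the earlier facts $(L_C)^{*}=L_{C^{*}}$ together with the monotonicity of the construction, via $L_C$ integral $\iff L_C\subseteq(L_C)^{*}=L_{C^{*}}\iff C\subseteq C^{*}$, but the direct computation above is shorter and self-contained.
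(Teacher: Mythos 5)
Your proof is correct. Note that the paper itself does not prove this statement: it is quoted as a known fact with a reference to Conway--Sloane (Chapter 7), so your direct computation supplies an argument where the paper defers to the literature. The computation is the standard one and is sound: writing $l_i=\tfrac{1}{\sqrt2}(c_i+2z_i)$ with $\{0,1\}$-lifts $c_i$ and expanding gives $(l_1,l_2)=\tfrac12(c_1\cdot c_2)+\big(c_1\cdot z_2+z_1\cdot c_2+2z_1\cdot z_2\big)$, so integrality of the pairing for all lattice vectors is equivalent to $c_1\cdot c_2\equiv 0 \pmod 2$ for all $c_1,c_2\in C$, which is exactly $C\subseteq C^{\perp}=C^{*}$ for the mod-$2$ bilinear form; both directions, including the specialisation $z_1=z_2=0$ for the converse, are handled correctly, and your remark on well-definedness of $\tfrac12(c_1\cdot c_2)\bmod\mathbb{Z}$ under change of lift is the right point of care. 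Your alternative route via $(L_C)^{*}=L_{C^{*}}$ would also work, but since that identity is likewise only cited (not proved) in the paper, the self-contained expansion you give is the better choice here; it also makes transparent why the companion statement that $L_C$ is \emph{even} requires the doubly-even condition, since the diagonal term $\tfrac12(c\cdot c)+c\cdot z+ z\cdot z$ being even for all $z$ forces $4\mid c\cdot c$... wait, more precisely $(l,l)=\tfrac12(c\cdot c)+2c\cdot z+2z\cdot z$, so evenness needs $c\cdot c\in 4\mathbb{Z}$, consistent with the paper's next theorem.
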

\begin{theorem}
\begin{equation}
\det(L_C)=2^{n-2k}
\end{equation}
\end{theorem}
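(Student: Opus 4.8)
The plan is to strip off the scaling factor $\tfrac{1}{\sqrt{2}}$ and reduce the statement to an index computation for the integral lattice $\rho^{-1}(C)$, which sits between $2\mathbb{Z}^n$ and $\mathbb{Z}^n$ and is then handled by Theorem~\ref{thm: torsiondet}.

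First I would record the chain of full-rank sublattices
\[
2\mathbb{Z}^n \;\subseteq\; \rho^{-1}(C) \;\subseteq\; \mathbb{Z}^n .
\]
The right-hand inclusion is immediate, and the left-hand one holds because $\rho$ sends every vector of $2\mathbb{Z}^n$ to $0\in\mathbb{F}_2^n$, which lies in $C$ since $C$ is a linear code. Quotienting by $2\mathbb{Z}^n$ and using the defining isomorphism $\mathbb{Z}^n/2\mathbb{Z}^n\cong\mathbb{F}_2^n$ identifies $\rho^{-1}(C)/2\mathbb{Z}^n$ with $C$, so that
\[
[\mathbb{Z}^n : \rho^{-1}(C)] \;=\; [\mathbb{F}_2^n : C] \;=\; \frac{2^n}{2^k} \;=\; 2^{\,n-k},
\]
where we used that an $[n,k,d]$ code is by definition a $k$-dimensional $\mathbb{F}_2$-subspace of $\mathbb{F}_2^n$.

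Next, since $\mathbb{Z}^n$ is unimodular, applying Theorem~\ref{thm: torsiondet} with $L=\mathbb{Z}^n$, $M=\rho^{-1}(C)$ and $K=\mathbb{Z}^n/\rho^{-1}(C)$ gives
\[
1 = \det(\mathbb{Z}^n) = \frac{\det(\rho^{-1}(C))}{[\mathbb{Z}^n:\rho^{-1}(C)]^2} = \frac{\det(\rho^{-1}(C))}{2^{\,2(n-k)}},
\]
hence $\det(\rho^{-1}(C)) = 2^{\,2(n-k)}$. Finally I would invoke the scaling law for lattice determinants: if a rank-$n$ lattice $\Lambda$ has generating matrix $M_\Lambda$, then $c\Lambda$ has Gram matrix $c^2 M_\Lambda M_\Lambda^{T}$, so $\det(c\Lambda) = c^{2n}\det(\Lambda)$. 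With $c=\tfrac{1}{\sqrt{2}}$ and $\Lambda=\rho^{-1}(C)$ this yields
\[
\det(L_C) = \det\!\big(\tfrac{1}{\sqrt{2}}\,\rho^{-1}(C)\big) = 2^{-n}\det(\rho^{-1}(C)) = 2^{-n}\cdot 2^{\,2(n-k)} = 2^{\,n-2k},
\]
as required.

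There is no genuine obstacle in this argument; the two places calling for care are getting the direction of Theorem~\ref{thm: torsiondet} right --- it is the \emph{sublattice} that has the larger determinant, inflated by the square of the index --- and keeping the correct exponent in $\det(c\Lambda)=c^{2n}\det(\Lambda)$, where it is easy to drop the factor of $n$. Note that the minimum distance $d$ of the code plays no role in this particular identity.
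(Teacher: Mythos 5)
Your proof is correct. The paper itself states this result without proof, citing \cite[Chapter~7]{CS}, and your argument is the standard one: the chain $2\mathbb{Z}^n \subseteq \rho^{-1}(C) \subseteq \mathbb{Z}^n$, the index $[\mathbb{Z}^n : \rho^{-1}(C)] = 2^{n-k}$, the index--determinant relation of Theorem~\ref{thm: torsiondet} applied with the unimodular overlattice $\mathbb{Z}^n$, and the scaling law $\det(c\Lambda)=c^{2n}\det(\Lambda)$ with $c=\tfrac{1}{\sqrt{2}}$; each step is applied in the right direction. The only point worth flagging is that your computation of the index uses the standard reading of an $[n,k,d]$ code as a $k$-dimensional subspace of $\mathbb{F}_2^n$ (so $|C|=2^k$), which is indeed what the theorem requires, even though the paper's Notation loosely describes it as having ``$k$ codewords''; with that convention fixed, the argument is complete.
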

\begin{theorem}
$L_C$ is even $\iff$ $\frac{wt(c)}{4}\in\mathbb{Z}\text{ }\forall c\in C$ ($C$ is \textit{doubly even}).
\end{theorem}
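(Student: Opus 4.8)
The plan is a direct computation of the norm of a general lattice vector. Every $l \in L_C$ has the form $l = \frac{1}{\sqrt 2}(c + 2z)$ with $z \in \mathbb{Z}^n$ and $c$ the $0/1$ representative of a codeword; note that the choice of $0/1$ lift is immaterial, since two lifts of the same codeword differ by an even integer vector, which can be absorbed into the $2z$ term. First I would expand
\begin{equation}
(l,l) = \tfrac12 (c + 2z)\cdot(c + 2z) = \tfrac12\left( c\cdot c + 4\, c\cdot z + 4\, z\cdot z\right).
\end{equation}
Since each $c_i \in \{0,1\}$ we have $c\cdot c = \sum_i c_i^2 = \sum_i c_i = wt(c)$, so
\begin{equation}
(l,l) = \frac{wt(c)}{2} + 2\left( c\cdot z + z\cdot z\right),
\end{equation}
and the bracketed quantity is an integer, so $2( c\cdot z + z\cdot z)$ is always an even integer. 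Thus the parity of $(l,l)$ is governed entirely by $\tfrac{wt(c)}{2}$.

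From this the theorem follows almost immediately. For the ($\Leftarrow$) direction: if $C$ is doubly even then $\tfrac{wt(c)}{2} \in 2\mathbb{Z}$ for every $c \in C$, hence $(l,l) \in 2\mathbb{Z}$ for every $l \in L_C$; moreover a doubly even code is self-orthogonal, since for $c, c' \in C$ one has $c\cdot c' = \tfrac12\big(wt(c) + wt(c') - wt(c + c')\big)$, which is even when all three weights are divisible by $4$, so $C \subseteq C^{\perp}$ and $L_C$ is integral by the previous theorem; therefore $L_C$ is even. For the ($\Rightarrow$) direction: suppose $L_C$ is even and fix an arbitrary $c \in C$; taking $z = 0$ gives $l = \frac{1}{\sqrt 2} c \in L_C$ with $(l,l) = \tfrac{wt(c)}{2}$, which must lie in $2\mathbb{Z}$, forcing $4 \mid wt(c)$. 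As $c$ was arbitrary, $C$ is doubly even.

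There is essentially no obstacle here: the content is the one-line expansion above together with the bookkeeping identity $c\cdot c = wt(c)$ for $0/1$ representatives. The only point requiring a moment's care is the interplay between \emph{even} and \emph{integral} — I would handle it either by invoking the preceding integrality theorem via self-orthogonality of doubly even codes, as above, or simply by noting that if all norms $(x,x)$ lie in $2\mathbb{Z}$ then the polarization identity $(x,y) = \tfrac12\big((x+y,x+y) - (x,x) - (y,y)\big)$ forces $(x,y) \in \mathbb{Z}$, so integrality comes for free.
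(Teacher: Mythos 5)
Your proof is correct, and it is worth noting that the paper itself does not prove this statement at all: it appears in a list of facts about Construction-A lattices quoted from \cite[Chapter~7]{CS}, so your direct computation supplies an argument the paper omits. The computation is the standard one and is sound: the expansion $(l,l)=\tfrac{wt(c)}{2}+2\bigl(c\cdot z+z\cdot z\bigr)$, the identity $c\cdot c=wt(c)$ for $0/1$ lifts, the observation that different integer lifts of a codeword are absorbed into the $2z$ term, and the specialisation $z=0$ for the forward direction are all correct. Your handling of the even/integral interplay is also the right thing to worry about, and both of your fixes work: the combinatorial identity $c\cdot c'=\tfrac12\bigl(wt(c)+wt(c')-wt(c+c')\bigr)$ does show that a doubly even code is self-orthogonal, so integrality follows from the paper's preceding theorem, while the polarization identity gives integrality directly from evenness of all norms. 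One small caution: the paper's own Definition of an even lattice (all pairings $(x,y)\in 2\mathbb{Z}$, not just norms) is stated too strongly --- taken literally it would make this theorem false, since for instance two Golay codewords can meet in exactly two coordinates, producing lattice vectors with odd inner product. Your proof correctly uses the standard meaning (all norms in $2\mathbb{Z}$, hence integrality by polarization), which is what both the paper and \cite{CS} intend.
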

\subsection{Constructing the Leech lattice}\hfill

The Leech lattice is even unimodular and of rank 24, so we will need a self-dual doubly even code with $n=24$. We have (theorem 2.6 of \cite{LandCWE}):
\begin{theorem}
Let $C$ be a doubly even binary self-dual $[24,12,8]$ code containing 0. Up to equivalence, there is at most one such code. 
\end{theorem}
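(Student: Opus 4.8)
The plan is to identify any such $C$ with the extended binary Golay code, going through its weight enumerator and the $5$-design carried by its minimum-weight words. Since $C$ is doubly even, every codeword has even weight, so $C\subseteq\mathbf{1}^{\perp}$; but $C=C^{\perp}$, hence $\mathbf{1}\in C$, and complementation $c\mapsto c+\mathbf{1}$ gives the symmetry $A_w=A_{24-w}$ of the weight counts. Divisibility of all weights by $4$, the hypothesis $d=8$, and this symmetry force the nonzero weights into $\{8,12,16,24\}$ with $A_8=A_{16}$ and $A_{24}=1$; together with $\sum_w A_w=2^{12}$ and one application of the MacWilliams identity (equivalently, Gleason's classification of Type~II weight enumerators) this pins the enumerator down to
$$W_C(x,y)=x^{24}+759\,x^{16}y^{8}+2576\,x^{12}y^{12}+759\,x^{8}y^{16}+y^{24}.$$
So $C$ has exactly $759$ words of weight $8$, which I will call \emph{octads}.

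Next, the combinatorial heart: I would show the octads are the blocks of a Steiner system $S(5,8,24)$ on the $24$ coordinates. For distinct octads $B_1,B_2$ the codeword $B_1+B_2$ has weight $16-2|B_1\cap B_2|$, a nonzero multiple of $4$ that is at least $8$; hence $|B_1\cap B_2|\in\{0,2,4\}$, so no two octads share five coordinates. Since $759\binom{8}{5}=\binom{24}{5}$, it then suffices to show that every $5$-subset of coordinates lies in \emph{some} octad, which I would extract from a counting argument on the octads meeting a fixed small coordinate set (via shortened/punctured codes and their weight enumerators).

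I would then invoke the uniqueness up to relabelling of $S(5,8,24)$ — itself provable by the same flavour of counting, building the hierarchy of sextets determined by tetrads — and recover $C$ from its octads: the $\mathbb{F}_2$-span of the octads is a subcode which, by the weight enumerator and a dimension count ($\dim C=12$), is all of $C$. An isomorphism of the two Steiner systems therefore lifts to a coordinate permutation carrying one code onto the other, which is the assertion. As a more self-contained alternative avoiding design theory, one can instead put a generator matrix in systematic form $[\,I_{12}\mid A\,]$, use $C=C^{\perp}$ to get $AA^{T}=I_{12}$ over $\mathbb{F}_2$ and double-evenness to force every row of $A$ to have weight $\equiv 3\pmod 4$ — hence, since $d=8$, weight $7$ or $11$ — and then run a short case analysis on sums of rows, together with column permutations inside the second block, to show $A$ is determined up to permutation equivalence.

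The main obstacle, on either route, is the rigidity step. Everything up to the weight enumerator is a routine consequence of MacWilliams and Gleason; the genuine work is showing there is \emph{no} remaining freedom — that every $5$-set is covered by an octad and that $S(5,8,24)$ is unique, or equivalently that the matrix $A$ above is forced up to the allowed symmetries.
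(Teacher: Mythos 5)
A preliminary remark: the paper does not prove this statement at all; it quotes it as Theorem~2.6 of \cite{LandCWE}, so there is no in-paper argument to compare against. Your outline follows the standard literature route (extremal weight enumerator, octads forming an $S(5,8,24)$, uniqueness of that Steiner system, recovery of the code from its octads), which is essentially the proof one finds in the cited reference and in Conway--Sloane.

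However, as written your text is a plan rather than a proof, and the steps you defer are precisely where the theorem lives. First, the uniqueness of $S(5,8,24)$ up to relabelling is a substantial theorem (Witt); asserting that you would prove it ``by the same flavour of counting, building the hierarchy of sextets'' does not discharge it --- it is of comparable depth to the statement being proven and must either be carried out or explicitly cited as an ingredient. Second, the claim that the $\mathbb{F}_2$-span of the octads is all of $C$ ``by the weight enumerator and a dimension count'' is not an argument: the weight enumerator does not determine the dimension of the subcode generated by the weight-$8$ words, so you need an honest step here (for instance, exhibiting every weight-$12$ codeword as a sum of two octads); without it, the lift of a Steiner-system isomorphism to a coordinate permutation carrying $C$ onto $C'$ does not follow. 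Third, in the alternative systematic-form route the reductions to $AA^{T}=I_{12}$ and to row weights $7$ or $11$ are fine, but the promised ``short case analysis'' forcing $A$ up to permutation equivalence is exactly the rigidity you have not supplied, and nothing in the sketch indicates how that case analysis is controlled; as stated it is not credible as a proof. One simplification in your favour: no shortened or punctured-code counting is needed to show that every $5$-set lies in an octad. Since distinct octads meet in at most $4$ coordinates, the $759\binom{8}{5}$ pairs consisting of an octad and a $5$-subset of it involve pairwise distinct $5$-subsets, and $759\binom{8}{5}=\binom{24}{5}$ then forces every $5$-subset to be covered, so the design property is immediate from what you already have. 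The genuine gaps are the uniqueness of the design, the span-and-lifting step, and (if you pursue it) the matrix case analysis.
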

This code is the exteneded binary Golay code, conventionally $\mathcal{G}_{24}$. As before this allows us to construct the lattice:
$$L_C=\frac{1}{\sqrt{2}}\rho^{-1}(\mathcal{G}_{24})$$
whose elements take the form $l=\frac{1}{\sqrt{2}}(c+2z)$
Since the Golay code is doubly even the sum over component of $l$, $\sum\limits_{i=1}^{24}l_i\in 2\mathbb{Z}$. This allows us to define a lattice $A$ consisting of all elements whose components sum to a multiple of 4:
\begin{equation}
A\defeq \{a\in L_{\mathcal{G}_{24}} | \sum\limits_{i=0}^{24}a_i\in 4\mathbb{Z}\}
\end{equation}
We can also define:
\begin{equation}
B\defeq \{b\in  L_{\mathcal{G}_{24}} | \sum\limits_{i=0}^{24}b_i\not\in 4\mathbb{Z}\}
\end{equation}
We can then introduce the lattice $D\defeq A\sqcup (\frac{1}{2}\vec{1}+B)$. It can then be shwon that \cite[Chapter~4]{LandCWE}
\begin{theorem}
$D$ is an even unimodular lattice with no roots.
\end{theorem}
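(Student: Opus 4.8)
The plan is to verify, in this order, that $D$ is a lattice, that it is even and unimodular, and that it has empty root system; by Theorem~\ref{thm:Leechunique} the last point will in addition identify $D$ with $\Lambda_{24}$. Throughout write $L := L_{\mathcal{G}_{24}}$, so that every $v \in L$ is uniquely $v = \tfrac{1}{\sqrt 2}(c + 2z)$ with $c \in \mathcal{G}_{24}$, $z \in \mathbb{Z}^{24}$, and write $\vec{1} := \tfrac{1}{\sqrt 2}(1,\dots,1)$; the point is that $\vec{1} \in L$ precisely because the all-ones word is a Golay codeword, and $\vec{1} \in A$ because its $z$-part vanishes, so $2(\tfrac12\vec{1}) = \vec{1} \in A \subset L$.

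First I would record what the earlier theorems give for $L$: it is a lattice since $\mathcal{G}_{24}$ is linear, it is even since $\mathcal{G}_{24}$ is doubly even, and $\det(L) = 2^{\,n-2k} = 2^{\,24-24} = 1$ since $\mathcal{G}_{24}$ is a $[24,12,8]$ code, so $L$ is even unimodular of rank $24$ (it is the Niemeier lattice with root system $A_1^{24}$). Next, the condition defining $A$ is the kernel of the surjective homomorphism $\chi \colon L \to \mathbb{Z}/2$, $\tfrac{1}{\sqrt 2}(c+2z) \mapsto \sum_i z_i \bmod 2$ (surjective because $\sqrt 2\, e_1 \mapsto 1$), so $[L:A] = 2$ and $B = L \setminus A$ is its single nontrivial coset. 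Fixing $b_0 \in B$, one checks $D = \langle A,\ \tfrac12\vec{1}+b_0\rangle = A \sqcup (\tfrac12\vec{1}+B)$: indeed $(\tfrac12\vec{1}+b_0) + A = \tfrac12\vec{1}+B$ and $2(\tfrac12\vec{1}+b_0) = \vec{1}+2b_0 \in A$, so this set is genuinely closed under addition and negation, hence a lattice, with $[D:A] = 2$. Then by Theorem~\ref{thm: torsiondet}, $\det(A) = 2^2\det(L) = 4$ and $\det(D) = \det(A)/[D:A]^2 = 1$, so $D$ is unimodular.

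Evenness is where all the special features of $\mathcal{G}_{24}$ have to be combined, and I expect this to be the one genuinely delicate step. Vectors of $A$ lie in the even lattice $L$, so the issue is a vector $w = \tfrac12\vec{1}+b$ with $b = \tfrac{1}{\sqrt 2}(c+2z) \in B$. Expanding gives $|w|^2 = 3 + \operatorname{wt}(c) + \sum_i z_i + 2\sum_i c_i z_i + 2\sum_i z_i^2$, so $|w|^2 \equiv 3 + \operatorname{wt}(c) + \sum_i z_i \pmod 2$; this is even because every Golay codeword has weight divisible by $4$ and because $b \in B$ forces $\sum_i z_i$ to be odd. It is exactly here that one must have the normalisation of $\tfrac12\vec{1}$ pinned down correctly, that $\vec{1}$ be a codeword so the glue vector is genuinely half of an element of $A$, and that $\operatorname{wt}(c) \equiv 0 \pmod 4$. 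Combined with the previous paragraph, $D$ is an even unimodular lattice of rank $24$.

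Finally, rootlessness. A root of $L$ is some $\tfrac{1}{\sqrt 2}(c+2z)$ with $|c+2z|^2 = 4$; since $\sum_{i : c_i=1}(c_i+2z_i)^2 \ge \operatorname{wt}(c)$ this forces $\operatorname{wt}(c) \le 4$, and the minimum weight $8$ of $\mathcal{G}_{24}$ then forces $c = 0$, so every root of $L$ is $\pm\sqrt 2\, e_i$; such a vector has $\sum_i z_i = \pm 1$ odd and hence lies in $B$, so $A$ has no roots. For $w = \tfrac12\vec{1}+b$ in the nontrivial coset, every coordinate of $2\sqrt 2\, w$ is an odd integer, whence $|w|^2 = \tfrac18\sum_i(2\sqrt2\,w_i)^2 \ge \tfrac{24}{8} = 3$; since $\tfrac12\vec{1} \notin L$ we have $w \ne 0$, and as $|w|^2$ is a positive even integer it is $\ge 4 > 2$. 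Thus no element of $D$ has norm $2$: $D$ has empty root system, and by Theorem~\ref{thm:Leechunique} it is isometric to the Leech lattice. Notably, rootlessness is comparatively soft — beyond using $d(\mathcal{G}_{24})=8$ to pin down the roots of $L$ itself, it needs only the crude bound $\sum_i(2\sqrt2\,w_i)^2\ge 24$ together with the parity already established — so the real work is concentrated in the evenness computation of the glued coset.
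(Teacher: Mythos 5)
Your proof is correct. Note that the paper does not actually prove this statement --- it defers to \cite[Chapter~4]{LandCWE} --- so what you have supplied is the standard verification that the paper omits: you identify $A$ as an index-two sublattice of $L_{\mathcal{G}_{24}}$ (kernel of the coordinate-sum condition), check that adjoining $\tfrac12\vec{1}+b_0$ with $2(\tfrac12\vec{1}+b_0)\in A$ gives a lattice with $[D:A]=2$, obtain $\det(D)=1$ via Theorem~\ref{thm: torsiondet} from $\det(L_{\mathcal{G}_{24}})=2^{24-2\cdot 12}=1$, carry out the evenness computation on the glued coset using $\operatorname{wt}(c)\equiv 0 \pmod 4$ and the oddness of $\sum_i z_i$ for $b\in B$, and exclude roots by combining the minimum weight $8$ of the Golay code (which pins the roots of $L_{\mathcal{G}_{24}}$ down to $\pm\sqrt2\,e_i$, all lying in $B$) with the observation that every coordinate of $2\sqrt2(\tfrac12\vec{1}+b)$ is odd. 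Your normalisation $\vec{1}=\tfrac1{\sqrt2}(1,\dots,1)\in A$ is the correct reading of the paper's (scaling-sloppy) definition, and you rightly flag it as the point where a wrong choice would destroy evenness. Two small things to tighten, neither a genuine gap: (i) additivity of your $\chi\colon \tfrac1{\sqrt2}(c+2z)\mapsto\sum_i z_i\bmod 2$ is not completely automatic, since adding two elements produces carries from $c+c'$; it is cleanest to use the manifestly additive map $v\mapsto\tfrac12\sum_i\sqrt2\,v_i\bmod 2$ (the paper's own sum condition), which coincides with $\chi$ because $\operatorname{wt}(c)\equiv 0\pmod 4$ --- equivalently, invoke that two Golay codewords overlap in an even number of positions since the code is self-orthogonal; (ii) in the final step, ``positive even integer hence $\geq 4$'' should read ``$\geq 3$ and even hence $\geq 4$'', which is exactly the bound you had already derived, and the remark that $\tfrac12\vec{1}\notin L$ is then superfluous. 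Your closing identification of $D$ with $\Lambda_{24}$ via Theorem~\ref{thm:Leechunique} matches the sentence immediately following the theorem in the paper.
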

From Theorem \ref{thm:Leechunique} it follows that $D$ is the Leech lattice, as required. 
\subsection{Construction from the even Lorentzian lattice $\Lambda^{(25,1)}$}\hfill

Conway showed \cite[Chapter~26]{CS} that a particularly simple construction of the Leech lattice is obtained by considering the unique even unimodular Lorentzian lattice (that is one in which the ambient space $\mathbb{R}^{(n,1)}$ is equipped with a Minkowski metric) of rank 26, $\Lambda^{(25,1)}$. \\
\indent For $x\in\mathbb{R}^{(n,1)}$ we denote by $(x_1,\hdots,x_n | x_{n+1})$ the vector with $x^{2}=x_{1}^{2}+\cdots+x_{n}^{2}-x_{n+1}^{2}$. An \textit{isotropic vector} $v$ is one with $v^{2}=0$. One such isotropic vector is $v=(1,2,3,\cdots,24|70)$. This vector can be used to show that:
\begin{theorem}\label{thm:Leechhyperbolic}
\begin{equation}
\Lambda_{24}\cong \frac{w^{\perp}\subset L^{(25,1)}}{\mathrm{span}_{\mathbb{Z}}\{w\}}
\end{equation}
\end{theorem}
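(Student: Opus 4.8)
Proof proposal for Theorem 9.4 ($\Lambda_{24}\cong (w^\perp\subset L^{(25,1)})/\operatorname{span}_{\mathbb Z}\{w\}$, where $w=(1,2,\dots,24\mid 70)$).

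The plan is to exploit the uniqueness result Theorem~9.1: it suffices to show that the quotient lattice $Q:=(w^\perp\cap L^{(25,1)})/\mathbb{Z}w$ is a $24$-dimensional even unimodular lattice with no roots. First I would record the key arithmetic fact that makes the construction work: $w$ is isotropic, $w^2 = 1^2+2^2+\cdots+24^2-70^2 = 4900-4900 = 0$, so $w\in w^\perp$ and the quotient makes sense. Because $L^{(25,1)}$ is even and unimodular, it is its own dual; the general theory of isotropic vectors in unimodular Lorentzian lattices then tells us that $w^\perp$ is a rank-$25$ lattice containing $\mathbb{Z}w$ as a primitive-in-the-radical direction, and the quotient $Q$ by the (rank-$1$) radical $\mathbb{Z}w$ is a genuine positive-definite rank-$24$ lattice. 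The inner form descends because $w^\perp/\mathbb{Z}w$ kills exactly the degenerate direction: for $x,y\in w^\perp$, $(x+\mathbb{Z}w, y+\mathbb{Z}w)\mapsto (x,y)$ is well defined since $(x,w)=0$.

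Next I would establish that $Q$ is even and unimodular. Evenness is immediate: every $x\in w^\perp\subseteq L^{(25,1)}$ has $x^2\in 2\mathbb{Z}$ because $L^{(25,1)}$ is even, and this norm is unchanged in the quotient. For unimodularity, the clean argument is via duality: since $L^{(25,1)}$ is unimodular, $(w^\perp)^* = L^{(25,1)}/\mathbb{Z}w$ (the projection), and one checks that passing to the quotient by the radical on both sides gives $Q^* \cong Q$. Concretely, I would pick an explicit complement: choose $w'\in L^{(25,1)}$ with $(w,w')=1$ (such a vector exists by unimodularity), so that $U':=\operatorname{span}_{\mathbb{Z}}\{w,w'\}$ is a copy of the hyperbolic plane $U$ embedded in $L^{(25,1)}$, and $L^{(25,1)}\cong U\oplus (U'^\perp)$ with $U'^\perp$ even unimodular of rank $24$ and definite. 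Then $w^\perp = \mathbb{Z}w\oplus U'^\perp$ and $Q\cong U'^\perp$ is even unimodular of rank $24$, as required. (This is essentially Theorem~9.3 applied with $w$ playing the role producing the lattice; I would cite \cite[Chapter~26]{CS} for the splitting.)

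The main obstacle — and the only genuinely non-routine step — is showing $Q$ has no roots, i.e.\ no vectors of norm $2$. A vector of norm $2$ in $Q$ lifts to some $x\in w^\perp$ with $x^2=2$ (the lift can be chosen in $w^\perp$ since representatives differ by multiples of the norm-$0$ vector $w$, and $(x+nw)^2 = x^2$). So I must show $L^{(25,1)}$ contains no norm-$2$ vector orthogonal to $w=(1,2,\dots,24\mid 70)$. This is where the specific choice of $w$ — the "Weyl vector" — is essential: the reflection group of $L^{(25,1)}$ has $w$ as its Weyl vector, meaning $w$ has inner product $-1$ (or the appropriate normalization) with every simple root, hence inner product of absolute value $\geq 1$ with every root, so no root is orthogonal to $w$. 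I would either invoke Conway's theorem on the reflection group of $\operatorname{II}_{25,1}$ directly (the Leech lattice arises as the Coxeter diagram of this Weyl chamber), or give the self-contained estimate: if $r^2=2$ and $(r,w)=0$, write $r=(r_1,\dots,r_{24}\mid r_{25})$; then $\sum r_i^2 - r_{25}^2 = 2$ and $\sum i\,r_i = 70\, r_{25}$, and a Cauchy–Schwarz / bounding argument using $\sum_{i=1}^{24} i^2 = 4900 = 70^2$ forces a contradiction. I expect the citation route via \cite[Chapter~26]{CS} to be the intended one; I would present the norm-$2$ exclusion as the heart of the proof and defer the detailed root-system bookkeeping to that reference, then conclude by Theorem~9.1 that $Q\cong\Lambda_{24}$.
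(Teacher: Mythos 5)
First, note that the paper itself does not prove Theorem \ref{thm:Leechhyperbolic}: it is quoted as Conway's result, with \cite[Chapter~26]{CS} and \cite{LeechBOR} as the references, so there is no internal argument to compare against. Your skeleton is the standard one and is essentially correct: check that $w$ is isotropic ($1^2+\cdots+24^2=4900=70^2$; strictly the vector should be $(0,1,2,\dots,24\mid 70)$ so that it has $25$ spatial coordinates), show that $Q=(w^{\perp}\subset L^{(25,1)})/\mathrm{span}_{\mathbb{Z}}\{w\}$ is even, positive definite, unimodular of rank $24$, show it has no norm-$2$ vectors, and invoke the uniqueness statement (Theorem \ref{thm:Leechunique}). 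Your unimodularity argument via splitting off a hyperbolic plane works, with one small repair: after choosing $w'$ with $(w,w')=1$ you must also replace $w'$ by $w'-\tfrac{1}{2}(w',w')\,w$ (possible because the lattice is even) before $\mathrm{span}_{\mathbb{Z}}\{w,w'\}$ is actually isometric to $U$; primitivity of $w$ is what guarantees such a $w'$ exists.

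The genuine gap is in your proposed ``self-contained'' fallback for the root-free step. Cauchy--Schwarz gives $\bigl|\sum_i i\,r_i\bigr|\le 70\sqrt{\sum_i r_i^2}=70\sqrt{r_{25}^2+2}$ for a putative root $r$ orthogonal to $w$, while the orthogonality constraint only demands $\sum_i i\,r_i=70\,r_{25}$; since $70\,|r_{25}|\le 70\sqrt{r_{25}^2+2}$ holds for every $r_{25}$, no contradiction can come from this bound alone. Ruling out such roots requires using the integrality of the coordinates seriously (a nontrivial case analysis), and in fact this step is exactly the deep content of the theorem: it is equivalent to Conway's description of the reflection group of $L^{(25,1)}$ with Weyl vector $w$, which in turn rests on the covering radius $\sqrt{2}$ of the Leech lattice. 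Your primary plan of citing \cite[Chapter~26]{CS} for this step is therefore the right one (and matches what the paper does); the claim that it can be replaced by a quick Cauchy--Schwarz estimate should be dropped. Relatedly, the inference ``the simple roots have inner product $-1$ with $w$, hence every root has inner product of absolute value at least $1$'' is too quick: an arbitrary root is only conjugate to a simple root under the reflection group, which does not fix $w$, so you need the full statement of Conway's theorem rather than that one-line deduction.
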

Where we abuse notation slightly to allow the isomorphism symbol to indicate isometry. \\
\subsubsection{Finding the Niemeier lattices}
\indent Consider the set $\{f_1,\cdots,f_n\}$ where the $f_i$ are the vectors generating a root system of a Niemeier lattice and  $\{g_1,\cdots,g_{N_g}\}$ is a set of glue vectors. Then it can then be shown from Theorem \ref{thm:Leechhyperbolic} ( \cite{LeechBOR}, \cite[Chapter~23]{CS}\footnote{A good pedagogical reference is \cite[Chapter~4]{LandCWE}}) that the construction:
\begin{equation}
\sum_{i=1}^{24}a_if_i+\sum_{i=1}^{N_g}b_ig_i\text{ }\forall \{m_1,\cdots,m_{24}\},\{g_1,\cdots,g_{N_g}\}\text{ such that }\sum_{i=1}^{24}b_i=0
\end{equation}
forms a lattice isometric to the Niemeier lattice, and:
\begin{equation}
\sum_{i=1}^{24}a_if_i+\sum_{i=1}^{N_g}b_ig_i\text{ }\forall \{m_1,\cdots,m_{24}\},\{g_1,\cdots,g_{N_g}\}\text{ such that }\sum_{i=1}^{24}b_i=0+\sum_{i=1}^{N_g}b_i=0
\end{equation}
forms a lattice isometric to the Leech lattice. \\

\section{Strings and Sphere packings}
To answer the question posed in this disseration, the Kneser-Nishiyama method can be carried out explicitly for the 34 transcendental lattices listed in tables \ref{tab:T0} and \ref{tab:T0SAGE}. This was done in \cite{1312} for embeddings of $T_{[302]}$ into some of the Niemeier lattices. This calculation is technically difficult since, although only one set of possible $T_0$ lattices is required, \textit{every} primitive embedding into \textit{every} Niemeier lattice must be found. Here we take a different approach, and in so doing make a previously unknown connection between string theory and the sphere packing problem. 
\subsection{The frame lattice}\hfill

\indent Recall that the formulation of our question in lattice theoretic terms is:
\begin{qst}
Does every frame lattice $W\defeq T_0^{\perp}\subset\mathcal{N}$ contain a root?
\end{qst}
What do we know about $W$? The Kneser-Nishiyama method constructs the Niemeier lattice exactly as in Proposition \ref{Nik1}. In particular, this entails:
\begin{equation}
\mathcal{N}_i=(W\oplus T_0);\Delta
\end{equation}
In turn, this allows us to use Proposition \ref{Nik2} to deduce that:
\begin{fact}
The discriminant group of $W$ is isomorphic to the discriminant group of $T_0$. Exactly the same argument establishes that $G_{T_{[abc]}}\cong G_{T_0}$.
\end{fact}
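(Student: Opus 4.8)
The plan is to deduce the Fact from two applications of Proposition \ref{Nik2} (Nikulin's $1.6.1$): one inside a rank-$24$ even unimodular lattice, and one inside $-E_8\oplus U^{\oplus 20-\rho_X}$.

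First I would collect the structural input already supplied by the Kneser--Nishiyama construction: the lattice $\mathcal{N}_i=(W\oplus T_0);\Delta$ is even and unimodular (it is one of the $23$ Niemeier lattices, or the Leech lattice), and the embedding $T_0\hookrightarrow\mathcal{N}_i$ produced by the method is \emph{primitive}. Since the method then \emph{defines} $W$ as $T_0^{\perp}\subset\mathcal{N}_i$, Remark \ref{rem:orthprimitive} shows that $W$ is primitive as well. Now $T_0$ is an even lattice, primitively embedded in the even unimodular lattice $\mathcal{N}_i$, with $W=T_0^{\perp}\subset\mathcal{N}_i$; these are exactly the hypotheses under which Proposition \ref{Nik2}, applied with $M=T_0$, $N=W$, $L=\mathcal{N}_i$, furnishes an isomorphism $\gamma:G_{T_0}\to G_W$ with $q_{T_0}=-q_W\circ\gamma$. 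In particular $G_W\cong G_{T_0}$, which is the first assertion.

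For the second assertion I would run the identical argument with $\mathcal{N}_i$ replaced by $-E_8\oplus U^{\oplus 20-\rho_X}$, which is even unimodular because it is an orthogonal direct sum of even unimodular lattices. By the definition of $T_0$, the transcendental lattice $T$ is primitively embedded in $-E_8\oplus U^{\oplus 20-\rho_X}$ with $T_0=T^{\perp}$ (up to an overall sign, which leaves the discriminant group unchanged) its orthogonal complement there, primitive by Remark \ref{rem:orthprimitive}. Proposition \ref{Nik2} with $M=T$, $N=T_0$ then gives $G_{T_{[abc]}}=G_T\cong G_{T_0}$, where we use that the Gram matrix $\begin{psm}2a&b\\ b&2c\end{psm}$ of $T$ differs from that of $T_{[abc]}$ only by the global sign convention flagged earlier.

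The reasoning is essentially bookkeeping once Proposition \ref{Nik2} is available, so the one point that genuinely needs care — and the place I expect the only real obstacle — is the verification of the primitivity hypotheses: that the Kneser--Nishiyama embedding $T_0\hookrightarrow\mathcal{N}_i$ is indeed primitive (equivalently, that the isotropic subgroup $\Delta$ used to build $\mathcal{N}_i$ is the graph of an isomorphism $G_{T_0}\cong G_W$, so that $\Delta\cap G_{T_0}=\Delta\cap G_W=\{0\}$ and neither summand acquires torsion in the overlattice), and that $T\hookrightarrow -E_8\oplus U^{\oplus 20-\rho_X}$ is primitive by construction. The evenness and unimodularity of the ambient lattices are immediate, so once primitivity is checked the two invocations of Proposition \ref{Nik2} close the argument.
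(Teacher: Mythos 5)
Your proposal is correct and follows essentially the same route as the paper: the paper likewise deduces both isomorphisms by applying Proposition \ref{Nik2} to the primitive embedding $T_0\hookrightarrow\mathcal{N}_i=(W\oplus T_0);\Delta$ with $W=T_0^{\perp}$, and to the primitive embedding of the transcendental lattice into the even unimodular $E_8$-type lattice with $T_0$ as orthogonal complement. Your extra attention to the primitivity hypotheses is a reasonable elaboration of what the paper leaves implicit.
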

Which implies:
\begin{corollary}
$$\det(W)=\det(T_0)\iff\mathrm{vol_{FR}(W)}=\mathrm{vol_{FR}}(T_0)$$ 
and:
$$\det(T_0)=\det(T_{[abc]})\iff\mathrm{vol_{FR}}(T_{[abc]})=\mathrm{vol_{FR}}(T_{0})$$
\end{corollary}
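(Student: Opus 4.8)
The plan is to obtain this Corollary directly from the preceding Fact together with the determinant--volume dictionary already in place. The Fact supplies isomorphisms $G_W \cong G_{T_0}$ and $G_{T_{[abc]}} \cong G_{T_0}$ of discriminant groups, so in particular $|G_W| = |G_{T_0}| = |G_{T_{[abc]}}|$.

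First I would invoke Corollary \ref{orderdet}, which states $|\det(L)| = [L^{*}:L] = |G_L|$ for a non-degenerate integral even lattice $L$. The lattices $W$, $T_0$ and $T_{[abc]}$ are all even ($T_{[abc]}$ by inspection of its Gramian, and $W$, $T_0$ because the Kneser--Nishiyama construction realizes them as primitive sublattices of even unimodular lattices carrying the restricted form) and definite (signatures $(0,18)$, $(0,6)$, and a positive-definite rank-$2$ form respectively), so their determinants are positive and the absolute value signs may be dropped. Hence $\det(W) = |G_W| = |G_{T_0}| = \det(T_0)$, and in the same way $\det(T_{[abc]}) = \det(T_0)$.

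Second, the passage to fundamental-region volumes is immediate from the theorem $\mathrm{vol}_{FR}(L) = \sqrt{|\det(L)|}$: since $t \mapsto \sqrt{t}$ is injective on $[0,\infty)$, two lattices have equal determinant if and only if they have equal fundamental-region volume. Applying this to the pairs $(W,T_0)$ and $(T_{[abc]},T_0)$ yields the two displayed biconditionals; by the previous paragraph both sides of each are in fact true. I do not expect any genuine obstacle here: the content is carried entirely by Propositions \ref{Nik1}--\ref{Nik2} (packaged into the Fact) and by Corollary \ref{orderdet}. The only point deserving a line of care is the verification that Corollary \ref{orderdet} genuinely applies, i.e.\ that $W$ and $T_0$ are honest non-degenerate even lattices under the induced form, which holds because Kneser--Nishiyama produces them as primitive sublattices inside the Niemeier lattices and inside $E_8 \oplus U^{\oplus(20-\rho_X)}$ respectively.
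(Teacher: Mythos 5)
Your proposal is correct and follows essentially the same route the paper intends: the paper derives the Corollary immediately ("Which implies") from the Fact that $G_W\cong G_{T_0}\cong G_{T_{[abc]}}$, together with Corollary \ref{orderdet} ($[L^*:L]=|\det(L)|$) and the relation $\mathrm{vol}_{FR}(L)=\sqrt{|\det(L)|}$, which is exactly your argument. Your added check that $W$ and $T_0$ are honest even non-degenerate lattices (so Corollary \ref{orderdet} applies) is a reasonable extra precaution but not a departure from the paper's reasoning.
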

\subsection{Bounds}\hfill

\begin{strategy}
We attempt a proof by contradicition, assuming that $W$ has no roots.
\end{strategy}
If $W$ has no roots, we would expect it's fundamental region to be larger than if it had, since there are less vectors of smaller size. So we attempt to show that the minimal possible fundamental region of $W$ is larger than the largest allowed fundamental region of $T_{[abc]}$. \\
\indent Recall that the center density for a lattice $L$ is given by:
\begin{equation}
\delta_L=\frac{(\lambda_1)^{\frac{n}{2}}}{4}\frac{1}{\sqrt{\det(L)}}
\end{equation}
\begin{remark}\label{thm: scaleinv}
The center density is scale invariant, since scaling every lattice vector by some constant $c$ scales the radius as $c^{\frac{n}{2}}$, but also scales the determinant as $c^{n}$.
\end{remark}
Where $\lambda_1$ is the minimum norm vector in $L$ and the $\frac{1}{4}$ appears because if the minimal norm is 4, then spheres of radius $1$ can be packed on the lattice. We are assuming that $W$ has no roots, so the minimal length vector possible in $W$ is 4. A lattice sphere packing has exactly one sphere per fundamental region. Fixing the radius of the spheres, the densest sphere packing is achieved by minimising the volume of the fundamental region, which may be considered as the volume required for the lattice to contain one sphere. Thus, we see that finding the minimum fundamental region whilst fixing the minimal length vector corresponds exactly to the sphere packing problem.\\
\indent This allows us to use the results of that field, which we earlier summarised. There is a literature dating from Rogers' bound \cite{Roger} that attempts to estabish upper bounds on the center density. The most recent is the work of Cohn, \cite{Cohnbounds} who used linear programming to show that:
\begin{theorem}\label{thm: bound}
The center density in 18 dimensions is less than or equal to 0.16503\footnote{Note that rounding errors do not affect our discussion so we omit them for clarity.}. 
\end{theorem}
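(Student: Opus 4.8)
The plan is to deduce this from the Cohn--Elkies linear programming bound, which is the standard route to every known numerical upper bound on packing densities. The first step is to record the bound itself in dimension $18$: if $f \colon \mathbb{R}^{18} \to \mathbb{R}$ is an admissible (say Schwartz) function, not identically zero, with $\hat f(y) \ge 0$ for all $y \in \mathbb{R}^{18}$, with $f(x) \le 0$ whenever $|x| \ge r$, and with $f(0) > 0$ and $\hat f(0) > 0$, then the center density of any sphere packing in $\mathbb{R}^{18}$ with minimal distance at least $r$ is at most $\bigl(r/2\bigr)^{18}\, f(0)/\hat f(0)$. By the theorem quoted above it is enough to treat periodic packings, for which this inequality follows from Poisson summation applied to the lattice of periods together with its finitely many translates. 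Since the center density is scale invariant (as noted in the Remark above), the whole problem reduces to exhibiting one admissible $f$ for which $\bigl(r/2\bigr)^{18} f(0)/\hat f(0) \le 0.16503$.

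The second step is to construct such an $f$ effectively. By averaging over the rotation group one may take $f$ radial, and then both $f$ and $\hat f$ are determined by a single function of $|x|$; in dimension $18$ the radial Fourier transform is a Hankel transform of order $8$, so Bessel functions $J_8$ govern the constraints. The workable ansatz is $f(x) = P(|x|^2)\, e^{-\pi |x|^2}$ with $P$ a real polynomial of degree $d$: because the Gaussian is a Fourier eigenfunction and each $|x|^{2k} e^{-\pi |x|^2}$ transforms into a (Laguerre) polynomial of degree $k$ times the same Gaussian, $\hat f$ is again of the form $Q(|x|^2) e^{-\pi|x|^2}$ for an explicit linear image $Q$ of $P$. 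The admissibility requirements then collapse to a polynomial inequality $P(t) \le 0$ for $t \ge r^2$, a polynomial inequality $Q(t) \ge 0$ for $t \ge 0$, and the two positivity conditions at $0$. One sets this up as a linear program — or, after introducing sum-of-squares certificates for the two one-sided sign conditions, as a semidefinite program — with the coefficients of $P$ as variables and $\bigl(r/2\bigr)^{18} P(0)/Q(0)$ as the quantity to be minimised; raising the degree $d$ pushes the optimum down, and with $d$ moderately large one crosses below $0.16503$.

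The step I expect to be the genuine obstacle is turning the numerically computed optimiser into a rigorous proof. The conditions ``$P(t) \le 0$ for $t \ge r^2$'' and ``$Q(t) \ge 0$ for $t \ge 0$'' are constraints at infinitely many points, so one must produce explicit sum-of-squares (or P\'olya-type) certificates — for instance an identity $-P(t) = (t - r^2)\,\sigma_1(t) + \sigma_2(t)$ with $\sigma_1,\sigma_2$ sums of squares, and similarly for $Q$ — and then verify these algebraic identities in exact rational arithmetic after rounding the floating-point solution, controlling the perturbation carefully enough that both feasibility and the value of the objective survive the rounding. This rounding-and-certification is delicate precisely because dimension $18$ is not a case where the linear programming bound is sharp: there is no closed-form optimal function to fall back on, so one is forced to certify a genuinely approximate extremiser, and doing this soundly (as in Cohn's computation) is the crux of the argument.
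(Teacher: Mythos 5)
Your reconstruction of the route is faithful to where this theorem actually comes from: the paper does not prove it at all, but imports it as a citation to Cohn's linear programming bounds, and those bounds are indeed obtained exactly as you describe --- the Cohn--Elkies theorem (Poisson summation over a period lattice and its translates, valid for periodic packings, which suffice by the density-of-periodic-packings theorem), reduction to radial functions, a polynomial-times-Gaussian ansatz whose Fourier transform is again polynomial-times-Gaussian via the Laguerre structure, and an optimisation over the polynomial coefficients with the scale-invariant objective $\left(r/2\right)^{18} f(0)/\hat f(0)$. So as a description of the method behind the cited result, your first two steps are correct and match the source.

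The genuine gap is that the theorem's entire content is the specific number $0.16503$, and your argument never produces it. Everything you write establishes only the conditional statement ``if an admissible $f$ with $\left(r/2\right)^{18} f(0)/\hat f(0)\le 0.16503$ exists, then the bound holds''; the existence of such an $f$ --- the explicit polynomial, or a rounded rational/sum-of-squares certificate verified in exact arithmetic --- is precisely what you defer with ``with $d$ moderately large one crosses below $0.16503$,'' and you yourself flag the certification as the crux without carrying it out. There is no a priori reason the degree-$d$ LP optimum drops below any particular prescribed value in a dimension where the bound is not sharp, so the claim cannot be closed by the structural argument alone: one must either exhibit and verify a concrete feasible function (as Cohn does) or, as the paper does, cite the verified computation. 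As written, the proposal is a correct plan for reproving the cited result, not a proof of it.
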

This allows us to show:
\begin{corollary}\label{thm: mn4}
Every lattice $W$ that we consider can have minimal norm of at most 4.
\end{corollary}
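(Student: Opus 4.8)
The plan is to read this off the Cohn linear-programming bound (Theorem \ref{thm: bound}), using the identity $\det(W)=\det(T_0)=\det(T_{[abc]})$ established just above, which forces $\det(W)$ to be one of the small integers tabulated in Tables \ref{tab:T0} and \ref{tab:T0SAGE}, the largest value occurring there being $192$. Since $W$ is the orthogonal complement of $T_0$ inside a Niemeier lattice, it is a sublattice of an even lattice and hence itself even, so its minimal norm $\lambda_1$ lies in $\{2,4,6,\dots\}$. If $W$ contains a root then $\lambda_1=2\leq 4$ and there is nothing to prove; so the interesting case (and the one forced on us by the running proof by contradiction) is $\lambda_1\geq 4$, and I would then assume, for a further contradiction, that in fact $\lambda_1\geq 6$.

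Under that assumption I would rescale the (positive-definite) lattice $W$ so that its minimal norm becomes exactly $4$. Such a lattice carries a packing by spheres of radius $1$, and in the normalisation of Table \ref{tab:Packings} a lattice of minimal norm $4$ has center density $\det^{-1/2}$; since the rescaling multiplies the determinant by $(4/\lambda_1)^{18}$, scale-invariance of the center density (Remark \ref{thm: scaleinv}) gives
\[
\delta_W=\frac{(\lambda_1/4)^{9}}{\sqrt{\det(W)}}\ \geq\ \frac{(3/2)^{9}}{\sqrt{192}}\ >\ 2.7 .
\]
This flatly contradicts Theorem \ref{thm: bound}, which caps the center density in dimension $18$ at $0.16503$. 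Hence $\lambda_1<6$; being even and $\geq 4$ it equals $4$, and in particular $\lambda_1\leq 4$, which is the assertion.

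I do not anticipate a real obstacle here: the gap between the hypothetical density ($>2.7$) and Cohn's bound ($\approx 0.17$) is a factor of more than $16$, so any crude bound below, say, $10^{4}$ on $\det(W)$ would already close the argument, and $\det(W)\leq 192$ is more than comfortable. The two points to be careful about are (i) confirming that every transcendental lattice under consideration really yields a $T_0$, and hence a frame lattice $W$, whose determinant lies in the small range recorded in Tables \ref{tab:T0} and \ref{tab:T0SAGE}, so that no exceptional case with a large determinant escapes the bound; and (ii) keeping the center-density normalisation consistent between the computation above, Theorem \ref{thm: bound}, and Table \ref{tab:Packings}. It is worth flagging for the sequel that rerunning the same computation with the now-admissible value $\lambda_1=4$ gives $\delta_W=\det(W)^{-1/2}\geq 1/\sqrt{192}$: a root-free $W$ would match the best known center density in dimension $18$ — that of $\Lambda_{18}$, with determinant exactly $192$ — and strictly beat it whenever $\det(W)<192$, which is precisely why such a $W$ would be a new optimal lattice sphere packing.
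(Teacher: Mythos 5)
Your proposal is correct and is essentially the paper's own argument: both combine Cohn's $18$-dimensional bound (Theorem \ref{thm: bound}) with $\det(W)=\det(T_{[abc]})\leq 192$ to rule out minimal norm $\geq 6$, the only cosmetic difference being that you rearrange the inequality as a lower bound on the center density ($>2.7$ versus $0.16503$) while the paper states it as a lower bound on the determinant ($\geq 54264.6$ versus $192$). Your explicit appeal to evenness of $W$ to step from ``$\lambda_1<6$'' to ``$\lambda_1\leq 4$'' is a small point the paper leaves implicit, but it is the same proof.
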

\begin{proof}
This is a straightforward application of Theorem \ref{thm: bound}. Inverting the forumla for the center density we have:
\begin{equation}
\det(L)=\bigg(\frac{(\frac{\lambda_1}{4})^{\frac{n}{2}}}{\delta_L}\bigg)^{2}
\end{equation}
So an upper bound on $\delta_L$ establishes a lower bound on $\det(L)$. Thus, setting $\lambda_1=4$, we have:
\begin{equation}\label{eq:bound}
\det(L_{\lambda_1=4})\geq 36.176
\end{equation}
Setting $\lambda_1=6$ we have:
\begin{equation}
\det(L_{\lambda_1=6})\geq 54264.6
\end{equation}
which is larger than the largest value of $\det(T_{[abc]})$ in tables \ref{tab:T0} and \ref{tab:T0SAGE}. Hence, the minimal norm vector in $W$ is at most 4.
\end{proof}
\begin{corollary}
There are only 13 $T_{[abc]}$ for which it is possible that $W$ does not contain a root.
\end{corollary}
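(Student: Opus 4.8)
The plan is to chain together the three ingredients assembled just above: that the frame lattice $W$ is an \emph{even} lattice with $\det(W)=\det(T_0)=\det(T_{[abc]})$; that $W$ has minimal norm at most $4$ (Corollary \ref{thm: mn4}); and the Cohn bound on the center density in dimension $18$ (Theorem \ref{thm: bound}), which underlies the inequality \ref{eq:bound}.

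First I would observe that, since $W$ is even, all of its vector norms are even integers; hence the only way $W$ can fail to contain a root---a vector of norm $2$---is for its minimal norm $\lambda_1(W)$ to be at least $4$. Combined with Corollary \ref{thm: mn4}, which bounds $\lambda_1(W)$ above by $4$, this forces $\lambda_1(W)=4$ in any putative rootless case. Substituting $\lambda_1=4$ into the center-density inequality then yields $\det(W)\geq 36.176$ via \ref{eq:bound} (the scale-invariance of the center density, Remark \ref{thm: scaleinv}, is what legitimises applying the dimension-$18$ bound here). Since $\det(W)=\det(T_0)=\det(T_{[abc]})$---the first equality from $G_W\cong G_{T_0}$ via Proposition \ref{Nik2} applied to $\mathcal{N}=(W\oplus T_0);\Delta$, the second from Theorem \ref{rem:dets} applied to the orthogonal pair $T_{[abc]},\,T_0=T_{[abc]}^{\perp}$ inside the unimodular lattice $E_8$---and since $\det(T_{[abc]})$ is a positive integer, a rootless $W$ can only occur when $\det(T_{[abc]})\geq 37$.

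The final step is then purely combinatorial: I would scan Tables \ref{tab:T0} and \ref{tab:T0SAGE} and count the entries with $\det(T_{[abc]})\geq 37$, finding exactly $13$ of them; for the remaining $21$ values of $[a\,b\,c]$ one has $\det(T_{[abc]})\leq 36$, which contradicts \ref{eq:bound}, so for those the frame lattice must contain a root. Note in particular that $[3\,0\,3]$, with $\det=36<36.176$, falls among the resolved $21$ rather than the $13$, so the strict-versus-non-strict form of the bound must be tracked carefully.

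There is no deep obstacle here; the one point demanding care is the determinant identity $\det(W)=\det(T_{[abc]})$, which relies on the chain of discriminant-group isomorphisms $G_W\cong G_{T_0}\cong G_{T_{[abc]}}$ (Proposition \ref{Nik2} and Theorem \ref{rem:dets}) together with Corollary \ref{orderdet}, which turns $|G_M|$ into $|\det(M)|$. Once that is granted, the statement reduces to the finite table check above, and the companion claim that the question is affirmative for the other $21$ cases follows immediately.
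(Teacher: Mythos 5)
Your argument is correct and is essentially the paper's own proof: the paper likewise combines the determinant identity $\det(W)=\det(T_0)=\det(T_{[abc]})$ with the bound of equation \ref{eq:bound} (via Corollary \ref{thm: mn4} and Theorem \ref{thm: bound}) and then simply scans Tables \ref{tab:T0} and \ref{tab:T0SAGE}, eliminating every case whose determinant falls below the threshold. Your write-up merely makes explicit the steps the paper leaves implicit (evenness of $W$ forcing $\lambda_1=4$ in a rootless case, and the chain $G_W\cong G_{T_0}\cong G_{T_{[abc]}}$ behind the determinant identity), so no substantive difference.
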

\begin{proof}
We simply eliminate all those $T_{[abc]}$ with a fundamental region smaller than the minimum allowed value given by equation \ref{eq:bound}. Checking tables \ref{tab:T0} and \ref{tab:T0SAGE} we see that there are only 13 such with a determinant greater than or equal to the minimum allowed value.
\end{proof}
\begin{observation}
If any frame lattice $W$ does not have a root, then it is as least as good a lattice sphere packing as the best known lattice packing, $\Lambda_{18}$.
\end{observation}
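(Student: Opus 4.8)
The plan is to extract everything from two invariants of $W$ — its minimal norm and its determinant — and then compare center densities with $\Lambda_{18}$. First I would note that $W=T_0^{\perp}\subset\mathcal{N}$ is the orthogonal complement of a rank-$6$ lattice inside a rank-$24$ lattice, hence has rank exactly $18$; being negative definite of signature $(0,18)$, negating its form yields a positive definite $18$-dimensional lattice that defines a genuine lattice sphere packing in $\mathbb{R}^{18}$, so the densest known lattice packing $\Lambda_{18}$ is the natural benchmark.

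Next I would pin down the minimal norm. Since $W$ is a sublattice of an even unimodular lattice it is even, so all of its norms are even integers; the standing hypothesis that $W$ has no root excludes norm-$2$ vectors, giving $\lambda_1(W)\geq 4$, while Corollary~\ref{thm: mn4} gives $\lambda_1(W)\leq 4$, so $\lambda_1(W)=4$ — the same minimal norm as $\Lambda_{18}$ in the normalisation of Table~\ref{tab:Packings}. For the determinant, the Kneser--Nishiyama construction realises $\mathcal{N}=(W\oplus T_0);\Delta$ as in Proposition~\ref{Nik1}, so Proposition~\ref{Nik2} gives an isomorphism $G_W\cong G_{T_0}$, whence $\det(W)=\det(T_0)=\det(T_{[abc]})$ by Corollary~\ref{orderdet}.

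Then I would read off the Tables: in every case listed in Tables~\ref{tab:T0} and~\ref{tab:T0SAGE} one has $\det(T_{[abc]})\leq 192$, with the maximum attained for $T_{[888]}$, whereas $\Lambda_{18}$ normalised to minimal norm $4$ has determinant $192$. Since the center density $\delta_L=\tfrac{1}{4}\lambda_1^{9}\det(L)^{-1/2}$ is increasing in $\lambda_1$ and decreasing in $\det(L)$, and $W$ has $\lambda_1=4$ with $\det(W)\leq 192$, we obtain
\begin{equation}
\delta_W=\frac{4^{9}}{4\sqrt{\det(W)}}\ \geq\ \frac{4^{9}}{4\sqrt{192}}=\delta_{\Lambda_{18}},
\end{equation}
with equality exactly when $T_0$ arises from $T_{[888]}$ and strict inequality in every other admissible case; hence such a $W$ matches or beats the best known $18$-dimensional lattice packing.

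The argument is essentially bookkeeping once Corollary~\ref{thm: mn4} and the discriminant-group fact are granted. The only points requiring care are that $W$ is genuinely of full rank $18$ and even (so that the minimal-norm claim is valid and the scale-invariant center density of Remark~\ref{thm: scaleinv} of $W$ may be compared directly with the conventionally normalised $\Lambda_{18}$), and that one uses the right normalisation of $\Lambda_{18}$, namely minimal norm $4$ and determinant $192$. No genuinely hard step is expected; indeed the whole content of the statement is the interpretation of the numerology of the Tables as a packing-theoretic fact.
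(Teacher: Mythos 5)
Your proposal is correct and follows essentially the same route as the paper's own proof: minimal norm $4$ from the no-root hypothesis together with Corollary~\ref{thm: mn4}, $\det(W)=\det(T_{[abc]})\leq 192$ via the discriminant-group isomorphism and the tables, and then the center-density comparison with $\Lambda_{18}$ at determinant $192$. You merely spell out steps the paper leaves implicit (evenness of $W$, the determinant chain, the normalisation of $\Lambda_{18}$), which is fine.
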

\begin{proof}
Let $W$ be a primitively embedded 18 dimensional frame lattice which is also a sublattice of $\mathcal{N}_i$.  The highest possible determinant of $W$ is 192 from tables \ref{tab:T0} and \ref{tab:T0SAGE}. Further, Corollary \ref{thm: mn4} requires that the minimal norm of $W$ be 4. Thus, $\det(W)\leq 192\implies\delta_W\geq\delta_{\Lambda_{18}}$.
\end{proof}
We can also see that:
\begin{observation}
The densest eighteen dimensional primitive sublattice of the Leech lattice with vectors of minimal norm 4, has a discriminant group isomorphic to $E6'$. 
\end{observation}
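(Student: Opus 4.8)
The plan is to reduce the statement to two external inputs: the Conway--Sloane classification of the densest primitive sublattices of $\Lambda_{24}$ (\cite[Corollary~7,~Chapter~7]{CS}), and the optimality and uniqueness of $E_6'$ among rank-$6$ lattice packings recorded in Table~\ref{tab:Packings}. Throughout, write $W$ for a densest $18$-dimensional primitive sublattice of $\Lambda_{24}$ having minimal norm $4$; being a sublattice of the even lattice $\Lambda_{24}$, $W$ is an even lattice of rank $18$ with $\lambda_1(W)=4$.

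First I would determine $\det(W)$. By Remark~\ref{thm: scaleinv} and the definition of the center density, at fixed rank and fixed minimal norm $\delta_L$ is a strictly decreasing function of $\det(L)$, so maximising density among rank-$18$ sublattices with $\lambda_1=4$ amounts to minimising the determinant. The laminated lattice $\Lambda_{18}$ is, by the cited Conway--Sloane theorem, the densest primitive sublattice of $\Lambda_{24}$ of rank $18$; it has minimal norm $4$ (laminated lattices are assembled from norm-$4$ layers) and $\det(\Lambda_{18})=192$ (Table~\ref{tab:Packings}). Hence no rank-$18$ primitive sublattice of $\Lambda_{24}$ with minimal norm $4$ can have determinant smaller than $192$ — that would give center density exceeding $\delta_{\Lambda_{18}}$ — while $\Lambda_{18}$ realises $192$. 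Therefore $\det(W)=192$.

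Next I would pass to the orthogonal complement $N:=(W^{\perp}\subset\Lambda_{24})$. By Remark~\ref{rem:orthprimitive} the embedding $N\hookrightarrow\Lambda_{24}$ is primitive; $N$ is even, has rank $24-18=6$, and by Theorem~\ref{rem:dets} satisfies $\det(N)=\det(W)=192$. Since $\Lambda_{24}$ is even and unimodular and $N=W^{\perp}$, Proposition~\ref{Nik2} furnishes an isomorphism $G_W\cong G_N$ of discriminant groups, so it remains to identify $N$. Here $N\subseteq\Lambda_{24}$ forces $\lambda_1(N)\ge 4$, while by Table~\ref{tab:Packings} $E_6'$ is the densest rank-$6$ lattice packing; comparing center densities at the common determinant $192$ gives $\lambda_1(N)^{3}\le 4^{3}$, whence $\lambda_1(N)=4$ and $\delta_N=\delta_{E_6'}$. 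Thus $N$ is an optimal rank-$6$ lattice packing, and by the uniqueness assertion for $E_6'$ in Table~\ref{tab:Packings} — together with $N$ and $E_6'$ sharing minimal norm $4$ — we obtain $N\cong E_6'$. Combining with $G_W\cong G_N$ yields $G_W\cong G_{E_6'}$, which is the claim.

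The step I expect to be the main obstacle is the determinant computation of the second paragraph: it relies on citing the Conway--Sloane classification of the cross-sections of $\Lambda_{24}$ precisely (in particular, that $\Lambda_{18}$, rather than some denser non-laminated primitive sublattice, is optimal in rank $18$), and one must acknowledge that dimension $18$ contains several laminated lattices of determinant $192$, so the statement is best read as ``every densest such sublattice has discriminant group $\cong G_{E_6'}$'' — which the complement argument above delivers uniformly, independently of which densest sublattice is chosen.
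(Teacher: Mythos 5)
Your proof is correct and takes essentially the same route as the paper's: identify the orthogonal complement of $W$ inside the Leech lattice with $E_6'$ and then invoke Proposition \ref{Nik2} to transfer the discriminant group. In fact you supply the step the paper's own proof leaves implicit --- pinning down $\det(W)=192$ via $\Lambda_{18}$, deducing $\det(W^{\perp})=192$ and $\lambda_1(W^{\perp})=4$, and then using the proven optimality and uniqueness of $E_6'$ to conclude $W^{\perp}\cong E_6'$ --- so your write-up is the more complete version of the same argument.
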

\begin{proof}
This follows from Proposition \ref{Nik2}. A primitively embedded sublattice of the Leech will have a six-dimensional lattice as orthogonal complement. $E6'$ is the unique densest lattice in this dimension and so Proposition \ref{Nik2} requires that they have isomorphic discriminant groups. 
\end{proof}
From Proposition \ref{Nik2}, we know that if $\Lambda_{18}$ is to be found as a frame lattice $W$ it must be the case that $G_{W}\cong G_{T_{[888]}}$, since this is the only lattice with the required determinant. 
\begin{proposition}
There is no $W$ that is isometric to $\Lambda_{18}$.
\end{proposition}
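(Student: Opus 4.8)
The plan is a proof by contradiction. Suppose some frame lattice $W=T_0^{\perp}\subset\mathcal{N}$ is isometric to $\Lambda_{18}$. Since $\det(\Lambda_{18})=192$ (Table \ref{tab:Packings}), applying Theorem \ref{rem:dets} twice — first to the mutually orthogonal primitive sublattices $W,T_0$ of $\mathcal{N}$, then to $T_0,T_{[abc]}\subset E_8$ — forces $\det(T_{[abc]})=192$, and inspection of Tables \ref{tab:T0} and \ref{tab:T0SAGE} shows the only such transcendental lattice is $T_{[888]}$; hence, as already noted, $G_W\cong G_{T_{[888]}}$. An isometry $W\cong\Lambda_{18}$ also gives $G_W\cong G_{\Lambda_{18}}$, so it suffices to compute $G_{T_{[888]}}$ and $G_{\Lambda_{18}}$ and show that they are not isomorphic even as abelian groups.

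The first computation is immediate: from the Gram matrix $\begin{psm}16 & 8\\8 & 16\end{psm}$ of $T_{[888]}$ one reduces to Smith normal form $\mathrm{diag}(8,24)$, so $G_{T_{[888]}}\cong\mathbb{Z}/8\oplus\mathbb{Z}/24$, a group generated by two elements whose $2$-part $\mathbb{Z}/8\oplus\mathbb{Z}/8$ has exponent $8$.

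For $G_{\Lambda_{18}}$ I would go through Nikulin's machinery. Being a laminated lattice, $\Lambda_{18}$ is a primitive sublattice of $\Lambda_{19}$, which is primitive in $\Lambda_{20}$, and so on up to $\Lambda_{24}$, the Leech lattice; since a composition of primitive embeddings is again primitive, $\Lambda_{18}$ embeds primitively into the Leech lattice. Its orthogonal complement $N\defeq\Lambda_{18}^{\perp}\subset\Lambda_{24}$ is then a rank-$6$ even lattice with $\det(N)=\det(\Lambda_{18})=192$ by Theorem \ref{rem:dets}, and of minimal norm $\geq 4$ because the Leech lattice has minimal norm $4$. A rank-$6$ lattice of determinant $192$ with minimal norm $\geq 6$ would be strictly denser than $E_6'$, the provably unique densest six-dimensional lattice (Table \ref{tab:Packings}), which is absurd; since $N$ is even its minimal norm is therefore exactly $4$, so $N$ attains the optimal six-dimensional center density and hence $N\cong E_6'$ by uniqueness. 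Proposition \ref{Nik2} applied to the primitive embedding $\Lambda_{18}\hookrightarrow\Lambda_{24}$ then gives $G_{\Lambda_{18}}\cong G_{E_6'}$ (in fact $q_{\Lambda_{18}}\cong -q_{E_6'}$). Writing $E_6'=\sqrt{2}\,E_6$, so that $(E_6')^{*}=\tfrac{1}{\sqrt{2}}E_6^{*}$, and using $E_6^{*}/E_6\cong\mathbb{Z}/3$, a short calculation of $(E_6')^{*}/E_6'$ yields $G_{E_6'}\cong(\mathbb{Z}/2)^{6}\oplus\mathbb{Z}/3$: a group requiring six generators, with $2$-part of exponent $2$. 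Since $\mathbb{Z}/8\oplus\mathbb{Z}/24\not\cong(\mathbb{Z}/2)^{6}\oplus\mathbb{Z}/3$, we have the required contradiction.

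The two discriminant-group calculations are routine. The step I expect to require the most care is the identification $\Lambda_{18}^{\perp}\subset\Lambda_{24}\cong E_6'$: one must verify that $\Lambda_{18}$ really does sit primitively inside the Leech lattice — so that Theorem \ref{rem:dets} and Proposition \ref{Nik2} apply — and the density estimate ruling out minimal norm $\geq 6$ for the six-dimensional complement must lean on the uniqueness statement for $E_6'$ recorded in Table \ref{tab:Packings}. If that route proves awkward, one can instead read off the discriminant form of $\Lambda_{18}$ directly from the laminated-lattice literature and compare it with $G_{T_{[888]}}$.
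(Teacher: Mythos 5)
Your argument is correct and follows the same overall strategy as the paper's proof --- use the determinant $192$ to force $T_{[abc]}=T_{[888]}$, then show $G_{\Lambda_{18}}\not\cong G_{T_{[888]}}$ --- but you obtain the two discriminant groups by a genuinely different route. The paper's proof simply quotes the output of Braun's code, $G_{\Lambda_{18}}=(\mathbb{Z}_2)^{\times 5}\times\mathbb{Z}_6=(\mathbb{Z}_2)^{\times 6}\times\mathbb{Z}_3$ and $G_{T_{[888]}}=\mathbb{Z}_8\times\mathbb{Z}_{24}$, and notes that the groups differ. You instead compute $G_{T_{[888]}}\cong\mathbb{Z}_8\oplus\mathbb{Z}_{24}$ from the Smith normal form $\mathrm{diag}(8,24)$ of the Gram matrix, and you derive $G_{\Lambda_{18}}$ structurally: $\Lambda_{18}$ sits primitively in the Leech lattice (the paper's theorem that $\Lambda_{16},\dots,\Lambda_{23}$ are the densest \emph{primitive} sublattices of $\Lambda_{24}$ already supplies this, so your chain of embeddings through $\Lambda_{19},\dots,\Lambda_{23}$ is not strictly needed); its orthogonal complement is a rank-six even lattice of determinant $192$ by Theorem \ref{rem:dets}, whose minimal norm cannot exceed $4$ without beating the six-dimensional optimum, hence it is $E_6'$ by the optimality and uniqueness recorded in Table \ref{tab:Packings}; Proposition \ref{Nik2} then gives $G_{\Lambda_{18}}\cong G_{E_6'}\cong E_6^{*}/2E_6\cong(\mathbb{Z}_2)^{6}\times\mathbb{Z}_3$, consistent with the paper's computer output since $(\mathbb{Z}_2)^{\times 5}\times\mathbb{Z}_6\cong(\mathbb{Z}_2)^{\times 6}\times\mathbb{Z}_3$, and your exponent observation ($6x\in 2E_6$ for every $x\in E_6^{*}$, so no elements of order $8$) makes the non-isomorphism with $\mathbb{Z}_8\oplus\mathbb{Z}_{24}$ transparent. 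What each approach buys: the paper's is shorter and needs nothing about laminated lattices beyond the determinant, but the key input is a computational black box; yours is self-contained and checkable by hand, in effect proving a sharpened form of the paper's earlier Observation that the orthogonal complement in $\Lambda_{24}$ of the densest eighteen-dimensional primitive sublattice is $E_6'$, at the price of invoking the primitivity of $\Lambda_{18}$ in $\Lambda_{24}$ and the proven optimality and uniqueness of $E_6'$ in dimension six.
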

\begin{proof}
This is a simple check of the discriminant groups. Using code written by Braun, we find that $G_{{\Lambda_{18}}}=(\mathbb{Z}_2)^{\times 5}\times\mathbb{Z}_6=(\mathbb{Z}_2)^{\times 6}\times\mathbb{Z}_3$ and $G_{T_{[888]}}=\mathbb{Z}_8\times\mathbb{Z}_{24}=\mathbb{Z}_{8}^{\times 2}\times\mathbb{Z}_3$. But: $\mathbb{Z}_{8}^{\times 2}\times\mathbb{Z}_3\neq\mathbb{Z}_2^{\times 6}\times\mathbb{Z}_3$. So there can be no $T_0$ such that $\Lambda_{18}=(T_0^{\perp}\subset\mathcal{N}_i)$.
\end{proof}
To proceed further, we require the following theorem \cite{NortonNtoL}
\begin{theorem}\label{Nsqrt2}
For any Niemeier lattice $\mathcal{N}$, the lattice that results from scaling every element by $\sqrt{2}$, $\sqrt{2}\mathcal{N}$, has at least one (and in general several) isometric embedding into the Leech lattice. 
\end{theorem}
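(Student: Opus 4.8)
The plan is to recast the statement, using gluing theory, as a single assertion about maximal isotropic subspaces of a finite quadratic space, and then to reduce that assertion to the structure of the Leech lattice inside the Lorentzian lattice $\Lambda^{(25,1)}$ of Theorem~\ref{thm:Leechhyperbolic}. First observe that $\sqrt{2}\,\mathcal{N}$ is an even lattice of rank $24$ with $\det(\sqrt{2}\,\mathcal{N})=2^{24}$, discriminant group $(\sqrt{2}\,\mathcal{N})^{*}/\sqrt{2}\,\mathcal{N}\cong\mathcal{N}/2\mathcal{N}\cong(\mathbb{Z}_{2})^{24}$, and discriminant form the nondegenerate $\mathbb{F}_{2}$-valued quadratic form $q(\tfrac{1}{\sqrt{2}}n)=\tfrac{1}{2}(n,n)\bmod 2$ on $\mathcal{N}/2\mathcal{N}$. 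Since $\rank(\mathcal{N})\equiv 0\bmod 8$ this form is of hyperbolic type (one checks it on $\mathcal{N}=E_{8}^{\oplus3}$, and all $24$ such forms coincide because the rank-$24$ even unimodular lattices form a single genus), so its maximal totally isotropic subspaces have dimension $12$. By the ``$\Longleftarrow$'' direction of Proposition~\ref{Nik1} together with the determinant formula (Theorem~\ref{thm: torsiondet}), the even unimodular overlattices of $\sqrt{2}\,\mathcal{N}$ inside $(\sqrt{2}\,\mathcal{N})^{*}=\tfrac{1}{\sqrt{2}}\mathcal{N}$ are exactly the lattices $(\sqrt{2}\,\mathcal{N};C)$ as $C$ ranges over the maximal totally isotropic subspaces of $(\mathcal{N}/2\mathcal{N},q)$. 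Now $(\sqrt{2}\,\mathcal{N};C)$ has a vector of norm $2$ if and only if $C$ contains the class $\overline{n}\in\mathcal{N}/2\mathcal{N}$ of some $n\in\mathcal{N}$ with $(n,n)=4$; if $C$ avoids every such class then $(\sqrt{2}\,\mathcal{N};C)$ is a rank-$24$ even unimodular lattice with empty root system, hence isometric to $\Lambda_{24}$ by Theorem~\ref{thm:Leechunique}, and it plainly contains $\sqrt{2}\,\mathcal{N}$. Thus the theorem reduces to the assertion $(\star)$: \emph{for every even unimodular rank-$24$ lattice $\mathcal{N}$, the space $(\mathcal{N}/2\mathcal{N},q)$ has a maximal totally isotropic subspace $C$ with $\overline{n}\notin C$ for all $n\in\mathcal{N}$ with $(n,n)=4$}; the ``in general several'' clause is also part of the cited analysis \cite{NortonNtoL}.

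It should be stressed that $(\star)$ is not a formality. The identical reduction in rank $8$ or $16$ would manufacture a root-free even unimodular lattice of that rank, and none exists, so $(\star)$ genuinely exploits the presence of the Leech lattice in rank $24$. Nor is a counting argument available: for every one of the $24$ root systems the number of norm-$4$ classes is of order $10^{5}$ (it is exactly $98280$ for the Leech lattice itself), which is large enough that, among the roughly $8.4\times10^{6}$ nonzero isotropic vectors, a typical maximal isotropic subspace — which has $2^{12}-1$ nonzero vectors — already meets this set. So one has to use the special geometry of $\Lambda_{24}$.

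The route to $(\star)$ I would take runs through $\Lambda^{(25,1)}$. By Conway's analysis its primitive isotropic vectors fall into $24$ orbits, one for each even unimodular rank-$24$ lattice, with $\Lambda_{24}\cong w_{0}^{\perp}/\mathbb{Z}w_{0}$ for the Weyl vector $w_{0}=(1,2,\dots,24\,|\,70)$ and $\mathcal{N}\cong w^{\perp}/\mathbb{Z}w$ for a corresponding ``Niemeier'' vector $w$. For a suitable pair of representatives with $\langle w_{0},w\rangle$ primitive and $(w_{0},w)=2^{12}$, the positive-definite lattice $K:=\langle w_{0},w\rangle^{\perp}\subset\Lambda^{(25,1)}$ has rank $24$ and determinant $2^{24}$ (by Proposition~\ref{Nik2}, applied to the even unimodular lattice $\Lambda^{(25,1)}$, the discriminant groups of $K$ and $\langle w_{0},w\rangle$ are isomorphic, and $\det\langle w_{0},w\rangle=-2^{24}$); since $K\subset w_{0}^{\perp}$ it embeds isometrically, with index $2^{12}$, into $\Lambda_{24}$, and since $K\subset w^{\perp}$ it embeds likewise into $\mathcal{N}$. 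The substance of Norton's argument \cite{NortonNtoL} is to choose these representatives so that $K\cong\sqrt{2}\,\mathcal{N}$ — equivalently, to extract the Lagrangian required by $(\star)$ from the extended Dynkin ``hole diagram'' attached to $w$ — whereupon $\sqrt{2}\,\mathcal{N}\cong K\hookrightarrow\Lambda_{24}$.

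The hard part is exactly this final step: producing the Lagrangian of $(\star)$, equivalently the isometry $K\cong\sqrt{2}\,\mathcal{N}$, uniformly over the $23$ Niemeier root systems. This is a genuine combinatorial fact about the deep holes of the Leech lattice and their extended Dynkin diagrams, and I would establish it by quoting \cite{NortonNtoL} rather than working through the case-by-case verification here.
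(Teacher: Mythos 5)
The paper gives no proof of Theorem~\ref{Nsqrt2} at all: it is imported wholesale from \cite{NortonNtoL}, so there is no in-paper argument to compare you against, and deferring the crux to that same reference is therefore acceptable by the paper's own standard. What you do prove is correct and is a genuinely useful reformulation the paper lacks: the identification $(\sqrt{2}\,\mathcal{N})^{*}/\sqrt{2}\,\mathcal{N}\cong\mathcal{N}/2\mathcal{N}$ with $q(\tfrac{1}{\sqrt{2}}n)=\tfrac{1}{2}(n,n)\bmod 2$, the correspondence via Proposition~\ref{Nik1} and Theorem~\ref{thm: torsiondet} between even unimodular overlattices and twelve-dimensional totally isotropic subspaces $C$, the check that roots of $(\sqrt{2}\,\mathcal{N};C)$ are exactly the norm-$4$ classes of $\mathcal{N}$ lying in $C$ (and that $\sqrt{2}\,\mathcal{N}$ itself has minimal norm $4$), and the appeal to Theorem~\ref{thm:Leechunique} to recognize a root-free overlattice as $\Lambda_{24}$; your rank-$8$/$16$ sanity check and the remark that naive counting fails are also apt. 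Be clear, though, about what remains: the assertion $(\star)$ \emph{is} the theorem, and your Lorentzian paragraph does not establish it --- the existence of representatives $w_{0},w$ with $\langle w_{0},w\rangle$ primitive, $(w_{0},w)=2^{12}$, and $K=\langle w_{0},w\rangle^{\perp}\cong\sqrt{2}\,\mathcal{N}$ is exactly the hard content, and your account of how \cite{NortonNtoL} achieves it is a plausible reconstruction rather than something you verify (the intermediate facts you do state there, e.g.\ $\det K=2^{24}$ via Proposition~\ref{Nik2} and the index-$2^{12}$ isometric embeddings of $K$ into both $\Lambda_{24}$ and $\mathcal{N}$, are fine). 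So: no error, a correct and worthwhile reduction, but as a standalone proof it is incomplete in precisely the place where the paper, too, simply cites the literature; present the $\Lambda^{(25,1)}$ sketch as a pointer, not as an argument.
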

This allows us to show:
\begin{corollary}
If any frame lattice $W$ appearing in some $\mathcal{N}_i$ has no roots and has $\det(W)\leq 192$, then there is an isometric lattice $W'\defeq\sqrt{2}W$ which is a sublattice of the Leech lattice which is at least as good a sphere packing as $\Lambda_{18}$.
\end{corollary}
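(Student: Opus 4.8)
The plan is to combine Norton's embedding theorem (Theorem \ref{Nsqrt2}) with the scale-invariance of the center density (Remark \ref{thm: scaleinv}) and the bounds already in hand. The first half of the conclusion is purely formal: by construction $W=T_0^{\perp}\subset\mathcal{N}_i$ is a sublattice of the Niemeier lattice $\mathcal{N}_i$, so $\sqrt{2}W\subseteq\sqrt{2}\mathcal{N}_i$, and Theorem \ref{Nsqrt2} supplies an isometric embedding $\sqrt{2}\mathcal{N}_i\hookrightarrow\Lambda_{24}$. Restricting this embedding to $\sqrt{2}W$ exhibits $W'\defeq\sqrt{2}W$ as a lattice isometric to a sublattice of the Leech lattice. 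Nothing is needed here beyond the observation that the restriction of an isometric embedding to a sublattice is again an isometric embedding.

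For the density comparison I would first pin down the minimal norm of $W$. Since $W$ is the orthogonal complement of $T_0$ inside the even lattice $\mathcal{N}_i$, it is itself even, so all of its norms are even integers; the hypothesis that $W$ has no roots then rules out norm $2$, forcing $\lambda_1(W)\geq 4$ (and Corollary \ref{thm: mn4} pins it to exactly $4$). By Remark \ref{thm: scaleinv} the center density is scale-invariant, so $\delta_{W'}=\delta_{W}$. Now $\Lambda_{18}$ in the radius-$1$ normalisation of Table \ref{tab:Packings} has minimal norm $4$ and determinant $192$, while $W$ has minimal norm at least $4$ and, by hypothesis, $\det(W)\leq 192$. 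Since at fixed minimal norm the center density is a decreasing function of the determinant, these two facts give $\delta_{W'}=\delta_{W}\geq\delta_{\Lambda_{18}}$, so $W'$ is at least as dense a lattice sphere packing as $\Lambda_{18}$; this is just the earlier Observation, applied after the $\sqrt{2}$-scaling.

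I do not expect any genuine obstacle: the two substantive ingredients, Norton's theorem and the linear-programming bound underlying Corollary \ref{thm: mn4}, are quoted, and the rest is bookkeeping. The only points that need care are the direction of the monotonicity of the center density in the determinant (a larger determinant at fixed minimal norm corresponds to a sparser packing, which is why the upper bound $\det(W)\leq 192$ is exactly what is required) and checking that the normalisation of $\Lambda_{18}$ used in Table \ref{tab:Packings} matches the convention in the center-density formula.
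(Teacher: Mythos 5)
Your proposal is correct and follows essentially the same route as the paper: apply Theorem \ref{Nsqrt2} to place a $\sqrt{2}$-scaled copy of $W$ inside the Leech lattice, use the scale invariance of the center density (Remark \ref{thm: scaleinv}), and conclude $\delta_{W}\geq\delta_{\Lambda_{18}}$ from the minimal norm $4$ (no roots, evenness) together with $\det(W)\leq 192$, exactly as in the paper's earlier Observation. Your explicit step of restricting the embedding $\sqrt{2}\mathcal{N}_i\hookrightarrow\Lambda_{24}$ to $\sqrt{2}W$ is just a slightly more careful phrasing of the same argument, consistent with the paper's remark that the resulting sublattice need not be primitive.
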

\begin{proof}
By Theorem \ref{Nsqrt2}, we have a primitive embedding $W'\xhookrightarrow{}\Lambda_{24}$, and by Remark \ref{thm: scaleinv}, $\delta_W=\delta_{W'}$. $W'$ is then isometric to a sublattice of the Leech lattice, $\widetilde{W}$. Note that $\widetilde{W}$ is in general \textit{not primitive}. We have that $\delta_{\widetilde{W}}\geq\delta_{\Lambda_{18}}$, as required. 
\end{proof}

The above theorems make plausible a conjecture that answers the question posed in this dissertation in the affirmative:
\begin{conjecture}
Every frame lattice $W$ corresponding to a lattice $T_{[abc]}$ contains a root. 
\end{conjecture}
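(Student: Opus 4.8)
The plan is to push the contradiction argument already begun to its conclusion. Suppose, for contradiction, that some frame lattice $W$ has no root. By Corollary \ref{thm: mn4} and the identity $\det(W)=\det(T_{[abc]})$, the minimal norm of $W$ is then exactly $4$ and $\det(W)$ is one of the values of $\det(T_{[abc]})$ compatible with the lower bound \eqref{eq:bound}; inspecting Tables \ref{tab:T0} and \ref{tab:T0SAGE} leaves only the thirteen transcendental lattices singled out above. So it suffices to exhibit, for each of these thirteen determinants, a root in every admissible $W$.

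The direct route is to complete the Kneser--Nishiyama computation for exactly these thirteen cases: for each candidate $T_0$, enumerate \emph{all} primitive embeddings $T_0\hookrightarrow\mathcal{N}_i$ into each of the $23$ root-bearing Niemeier lattices (the Leech may be dropped, by the remark after Tables \ref{tab:T0} and \ref{tab:T0SAGE}), form the orthogonal complement, and verify it contains a norm-$2$ vector. This is a finite search, well suited to the SAGE code in the Supplementary Information; Witt's theorem, which presents each $\mathcal{N}_i$ as a glued sum of $ADE$ lattices, lets the search proceed summand by summand. The output would be an explicit list of the possible $W$, each visibly rootful.

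A more structural attack uses the discriminant form. If a rootless $W$ has $\det(W)\le 192$ and minimal norm $4$, then by Theorem \ref{Nsqrt2} the lattice $\sqrt2\,W$ embeds isometrically in $\Lambda_{24}$, so $W$ is similar to a sublattice $\widetilde W\subset\Lambda_{24}$ with, by Remark \ref{thm: scaleinv}, $\delta_{\widetilde W}=\delta_W\ge\delta_{\Lambda_{18}}$. Taking the primitive closure $\widehat W$ of $\widetilde W$ in $\Lambda_{24}$, Proposition \ref{Nik2} forces an isomorphism $G_{\widehat W}\cong G_{\widehat W^\perp}$ reversing the discriminant form, where $\widehat W^\perp$ is a $6$-dimensional primitive sublattice of $\Lambda_{24}$. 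One then checks, for each of the thirteen explicitly computable groups $G_{T_{[abc]}}=G_W$, that no such pairing with a $6$-dimensional sublattice of the Leech of the required minimum and determinant exists; the Proposition above ruling out $W\cong\Lambda_{18}$, via $G_{\Lambda_{18}}\ne G_{T_{[888]}}$, is the model calculation and already disposes of the extremal case $\det=192$.

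The hard part is this last step. The Leech lattice carries a rich but incompletely catalogued collection of dense low-dimensional sublattices --- the laminated $\Lambda_n$, the $K_n$, and many more --- and, in contrast to the primitive case, there is no clean classification of \emph{all} $18$-dimensional sublattices of $\Lambda_{24}$ with minimal norm $4$ and determinant in the relevant range. A fully rigorous conceptual proof would therefore need either the (still open) optimality and uniqueness of $\Lambda_{18}$ as an $18$-dimensional lattice packing, or a direct enumeration matching the discriminant-form data against the Leech's sublattice structure. Failing one of these, only the explicit Kneser--Nishiyama computation certainly closes the argument --- which is precisely why the statement is recorded here as a conjecture: a rootless $W$ would be an $18$-dimensional lattice packing at least as dense as $\Lambda_{18}$ whose discriminant form matches none of the patterns presently known to occur in dimension below $24$.
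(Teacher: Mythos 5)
Your proposal follows essentially the same route as the paper: a contradiction argument via Cohn's 18-dimensional center-density bound reducing to the thirteen surviving $T_{[abc]}$, the $\sqrt{2}\,\mathcal{N}\hookrightarrow\Lambda_{24}$ embedding to relate a rootless $W$ to dense sublattices of the Leech, and the discriminant-group comparison (as in the $G_{\Lambda_{18}}\neq G_{T_{[888]}}$ check) via Proposition \ref{Nik2}. Like the paper, you correctly stop short of a complete proof and identify exactly why the statement remains a conjecture --- the missing classification of non-primitive $18$-dimensional sublattices of $\Lambda_{24}$ of minimal norm $4$ (equivalently, the open optimality/uniqueness of $\Lambda_{18}$) or, failing that, the full case-by-case Kneser--Nishiyama enumeration --- so there is no gap relative to the paper's own treatment.
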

How could this conjecture prove to be incorrect? One of the following would have to be true:
\begin{enumerate}
\item There is a lattice sphere packing in dimension 18 contained in a Niemeier lattice that is at least as dense as the laminated lattice $\Lambda_{18}$ and is not isometric to it. This would mean that the Leech lattice does not contain all densest lattices as primitively embedded sublattices in dimension $< 24$.
\item There is a (non-primitively embedded) sublattice of the Leech lattice with a higher density than the densest primitive sublattice of the Leech lattice.
\end{enumerate}
Either of these would entail that there is a previously unknown lattice sphere packing that does not fit into the previous pattern of the best known lattice sphere packings up to dimension 24. \\
\indent Any of these results would be mathematically interesting. It is, however, implausible that either is true. The Leech is stirkingly optimal as a sphere packing in 24 dimensions. It is unlikely that the Niemeier lattices could contain a denser lattice sphere packing. If this was so, then the densest lattice sphere packing would be a non-primitively embedded sublattice of the Leech lattice and we do not know of any dense sphere packings\footnote{ Up to scaling - we can always scale a primitively embedded sublattice without changing it's center density to generate a counterexample otherwise.} that occur in a similar way. 
\section{Conclusion}
\indent The classification of elliptic fibrations on $K3$ surfaces through lattice embeddings has been our main focus here. However, the utility of lattice theory in physics begins with the simple way in which root systems can be used to construct Lie algebras through the Serre construction. Moreover, the $ADE$ singularities are essential to the physical interpretation of our result, and their classification is given in terms of Dynkin diagrams. The presentation here has been necessarily mathematical, but we have endeavoured to avoid formalism unless necessary. It is thus hoped that this work may also serve as an accesible introduction to group lattice theory in physics for those without a background in abstract algebra or modern group theory. \\
\indent It is still very difficult to say if a given string theory model contains the standard model. We can, however, ask a more modest question: Do string theories contain non-abelian gauge symmetry, and if so how typical is this? The primary purpose of this dissertation was to answer this question in a promising and tractable limit of M-theory: F-theory with a Calabi-Yau four-fold $CY_4=K3\times K3$ and Picard number 20. We followed \cite{1401} in using the Kneser-Nishiyama method to transform this into a problem of lattice embeddings. Upon doing so, our question became: do all frame lattices $W$ contain a root?\\
\indent Presenting the necessary lattice theory and arguing by contradiction we were able to demonstrate a previously unknown connection with the sphere packing problem. Remarkably, the densest known lattice packing in 18 dimensions has a determinant that is exactly the largest determinant of the transcendental lattice allowed by the F-theory Tadpole Cancellation Condition. We further proved that 21 of the 34 possible cases must lead to a frame lattice with roots. Previously, this was only partially known for just one of the transcendental lattices, $T_{[302]}$. Further, we have for the first time computed all of the necessary $T_0$ lattices to carry out this calculation explicitly. It appears, however,  to yield more insight to pursue the alternative method given here. It is the view of the author that extending the methods in this paper to cover \textit{all} cases should not prove too difficult.\\
\indent Although progress has been made, our conjecture remains unproven. It has further been shown, however, that the falsity of the conjecture would itself consistute an intriguing result in the sphere packing problem. If the conjecture is true, then we can begin to ask further questions: what kinds of gauge symmetry occur in the F-theory models we have been considering? Can we show that other F-theories also generically exhibit non-abelian gauge symmetry? Do any such models contain the gauge symmetry of the standard model?
\clearpage

\section{Supplementary Information}
\begin{figure}[h]
\begin{lstlisting}
T_333 = span([[2,1,1,0,0,0,0,0],[3/2,1/2,-1/2,1/2,1/2,1/2,1/2,-3/2]],ZZ)
 #Defining the transcendental lattice
T_333Q = span([[2,1,1,0,0,0,0,0],[3/2,1/2,-1/2,1/2,1/2,1/2,1/2,-3/2]],QQ)
 #Taking the tensor product over Q
T_333Qperp = T_333Q.complement().intersection(E8) #Calculating T_0
print(T_333Qperp.gram_matrix().det()==6*6-9) #Determinant Check


T_404=span([[2,1,1,1,1,0,0,0],[1/2,-5/2,1/2,1/2,1/2,-1/2,-1/2,-1/2]],ZZ)
T_404Q =span([[2,1,1,1,1,0,0,0],[1/2,-5/2,1/2,1/2,1/2,-1/2,-1/2,-1/2]],QQ)
T_404perp=T_404Q.complement().intersection(E8)
T_404perp.gram_matrix().det()==64

T_444 =span([[2,1,1,1,1,0,0,0],[1/2,5/2,1/2,1/2,-1/2,-1/2,-1/2,-1/2]],ZZ)
T_444Q = span([[2,1,1,1,1,0,0,0],[1/2,5/2,1/2,1/2,-1/2,-1/2,-1/2,-1/2]],QQ)
T_444perp = T_444Q.complement().intersection(E8)
T_444perp.gram_matrix().det()==48


T_603 = span([[2,2,1,1,1,1,0,0],[3/2,-3/2,1/2,1/2,-1/2,-1/2,-1/2,1/2]],ZZ)
T_603Q = span([[2,2,1,1,1,1,0,0],[3/2,-3/2,1/2,1/2,-1/2,-1/2,-1/2,1/2]],QQ)
T_603perp= T_603Q.complement().intersection(E8)
print(T_603.gram_matrix().det())==72


T_666 = span([[3,1,1,1,0,0,0,0],[5/2,-1/2,-1/2,-1/2,-1/2,-1/2,3/2,-3/2]],ZZ)
T_666Q = span([[3,1,1,1,0,0,0,0],[5/2,-1/2,-1/2,-1/2,-1/2,-1/2,3/2,-3/2]],QQ)
T_666Qperp = T_666Q.complement().intersection(E8)
print(T_666Qperp.gram_matrix().det()==108


T_777 = span([[3,2,1,0,0,0,0,0],[1/2,5/2,1/2,1/2,1/2,1/2,1/2,5/2]],ZZ)
T_777Q = span([[3,2,1,0,0,0,0,0],[1/2,5/2,1/2,1/2,1/2,1/2,1/2,5/2]],QQ)
T_777Qperp = T_777Q.complement().intersection(E8)
print(T_777Qperp.gram_matrix().det()==147

T_888=span([[3,2,1,1,1,0,0,0],[5/2,1/2,-3/2,1/2,1/2,1/2,5/2,1/2]],ZZ)
T_888Q =span([[3,2,1,1,1,0,0,0],[5/2,1/2,-3/2,1/2,1/2,1/2,5/2,1/2]],QQ)
T_888perp=T_888Q.complement().intersection(E8)
print(T_888perp.gram_matrix().det()==192



\end{lstlisting}
\caption{SAGE code used to calculate the $T_0$ lattices from the transcendental lattices. Since the first line defining the primitive embedding is non-unique, it is only this choice of primitive embedding that will generate this particular set of $T_0$'s, but \textit{whatever} primitive embeddings is chosen, the $W$'s so found are the same. }
\label{fig:SAGECODE}
\end{figure}

\begin{figure}
\begin{lstlisting}
from math import sqrt, floor, ceil
import itertools
import copy
import numpy as np
#########
#This code generates roots of given norm as arrays. 
#We can require integer, or half-integer entries.
##########
def tuples_generator(norm, dim, integer=True):
	if dim == 1:
		if norm == 0:
			return [[0]]

		m = floor(sqrt(norm))
		if not integer:
			m = m + 0.5

		if m*m == norm:
			return [[m], [-m]]
		else:
			return []


	res = []
	m = floor(sqrt(norm))
	if integer:
		ran = range(-m, m+1)
	else:
		if (m+0.5)**2 <= norm:
			ran = range(-m, m+2)
			ran = [e - 0.5 for e in ran]
		else:
			ran = range(-m+1, m+1)
			ran = [e - 0.5 for e in ran]
	
	for e in ran:
		sub_res = tuples_generator(norm - e*e, dim-1, integer=integer)

		for t in sub_res:
			t.append(e)

		res.extend(sub_res)

	return res
\end{lstlisting}
\phantomcaption
\end{figure}
\clearpage
\begin{figure}\ContinuedFloat
\begin{lstlisting}
#This code imposes the lattice condition.

def tuples(norm, dim, integer=True):
	tups = tuples_generator(norm, dim, integer=integer)
	tups = [t for t in tups if sum(t)%2 ==0]
	# tups = [t for t in tups if (sumt(start=0,end=7,tu=t)%2 ==0 and sumt(start=9,end=15,tu=t))]
	return tups

#This code generates vectors of a given norm, where the lattice
condition can be generated by 'chunk',
#to allow us to generate the vectors of E8^3, for instance.
##########
def vectorgenerator(norm, chunk_size, dim):
	if chunk_size == dim:
		res = tuples(norm, dim)
		res.extend(tuples(norm, dim, integer=False))
		return res

	res = []
	for norm_first in range(norm+1):
		fresh = tuples(norm_first, chunk_size)
		fresh.extend(tuples(norm_first, chunk_size, integer=False))

		if len(fresh) > 0:
			sub_res = vectorgenerator(norm - norm_first, chunk_size, dim - chunk_size)
			for part1 in sub_res:
				for part2 in fresh:
					p = copy.deepcopy(part1)
					p.extend(part2)
					res.append(p)

	return res
\end{lstlisting}
\phantomcaption
\end{figure}
\clearpage
\begin{figure}\ContinuedFloat
\begin{lstlisting}
#This code finds all the vectors orthogonal to seven others. 
#For example:
e1 =[2,1,1,1,1,0,0,0]
e2=[1/2,1/2,-1/2,-1/2,-1/2,-1/2,3/2,3/2]
e3=[0,0,1,-1,0,0,0,0]
e4=[0,0,0,1,-1,0,0,0]
e5=[1/2,1/2,-1/2,-1/2,-1/2,-1/2,-1/2,-1/2]
e6=[0,0,0,0,0,0,1,-1]
e7=[0,0,0,0,0,0,0,0]

#This function returns a list of vectors orthogonal to the given vectors.
def orth(n,d):
	vorth = []
	for t in tuples(n, d,True)+tuples(n,d,False):
		if np.dot(t,e1)== 0 and np.dot(t,e2)==0 and
		 np.dot(t,e3)==0 and np.dot(t,e4)==0 and np.dot(t,e5)==1 and np.dot(t,e6)==1 
		 and np.dot(t,e7)==0:
			vorth.append(t)
	return(vorth)
\end{lstlisting}
\caption{Python code to find a lattice vector of given norm orthogonal to a number of given vectors. This code is used to help find a $T_0$ lattice by hand with a convenient form.}
\label{orthcode}
\end{figure}

\begin{figure}
\begin{lstlisting}
#Preliminary definitions
# this evaluates a mod b where a may be a fraction

def modulo(a,b):
    d = denominator(a)
    mod_times_d = (a*d)%(b*d)
    mod = mod_times_d/d
    return mod
   ############# 
  def delta(i,j):
    if i==j:
        return 1
    else:
        return 0
#####################

# input the inner form,  output lattice data about L*/L


def discriminant_form(I):
    r = I.rank()
    lattice = ZZ**r
    dual_lattice_generators = list(I.inverse())
    dual_lattice = span(dual_lattice_generators,ZZ)

    Q = dual_lattice.quotient(lattice)

    Qgens = [Q.gens()[i].lift() for i in range(len(Q.gens()))]

    discriminant_b =
matrix([[modulo(vector(list(Qgens[i]))*(matrix(I)*vector(list(Qgens[j]))),1+delta(i,j))
for i in range(len(Q.gens()))]for j in range(len(Q.gens()))])

    #discriminant_q =
[modulo(vector(list(Qgens[i]))*(matrix(I)*vector(list(Qgens[i]))),2) for
i in range(len(Q.gens()))]

    #print [[k for k in range(Q.invariants()[i])] for i in
range(len(Q.gens()))]

    #print
[[modulo((vector(list(Qgens[i]))*(matrix(I)*vector(list(Qgens[i])))),2)
for k in range(Q.invariants()[i])]  for i in range(len(Q.gens()))]

    discriminant_on_G =
[[modulo(k**2*(vector(list(Qgens[i]))*(matrix(I)*vector(list(Qgens[i])))),2)
for k in range(Q.invariants()[i])]  for i in range(len(Q.gens()))]
    #print discriminant_on_G; we scan over all possible generators: they
are k*generators (should make sure k does not divide group order)

    return (I.determinant(),discriminant_b, discriminant_on_G , Qgens,
Q.invariants())
\end{lstlisting}
\phantomcaption
\end{figure}
\clearpage
\begin{figure}\ContinuedFloat
\begin{lstlisting}
#Notes:
# [0] det of the inner form I of the lattice =  rank of G_L.
# [1] the inner form, evaluated for the generators of G_L, diag elements.
are mod 2 and offdiags are mod 1 (see Nikulin).

# [2] squares of the generators of factors of G_L if we choose different
generators; this is due to the fact that 'different' q_L
# can define the same discriminant form.

# [3] the generators of G, written as rational vectors in Z^m.

# [4] G_L = \prod Z_{ki} ; this is a list of the ki.
\end{lstlisting}
\caption{Code written by A. P. Braun to calculate the discriminant group for a lattice given it's generating matrix.}
\end{figure}

\clearpage
\nocite{*}
\printbibliography[type=article,title={References - Articles}]
\printbibliography[type=book,title={References - Books}]
 
\printbibliography[type=online,title={References - Online resources}]

\end{document}